\pgfplotsset{compat=1.18}
\definecolor{color1}{HTML}{377eb8}
\definecolor{color2}{HTML}{4daf4a}
\definecolor{color3}{HTML}{984ea3}
\definecolor{color4}{HTML}{ff7f00}
\definecolor{color5}{HTML}{a65628}
\definecolor{color6}{HTML}{f781bf}
\pgfplotsset{
filter discard warning=false,
cycle multiindex* list={
mark=square,mark=x,mark=o,mark=+,mark=diamond\nextlist
color1,color2,color3,color4,color5,color6\nextlist
solid,dashed,dotted,dashdotted,loosely dashed,dashdotdotted,densely dotted\nextlist
},
/pgfplots/xmajorgrids=true,
/pgfplots/ymajorgrids=true,
}
\pgfplotsset{
  every axis/.append style={legend cell align=left, line width=0.5pt},
  every axis plot/.append style={line width=1pt}
}
\pgfplotsset{
    node near coord/.style={
        nodes near coords*={
            \ifnum\coordindex=#1\pgfmathprintnumber{\pgfplotspointmeta}\fi
        }
    },
    nodes near some coords/.style={ 
        scatter/@pre marker code/.code={},
        scatter/@post marker code/.code={}, 
        node near coord/.list={#1}
    }
}
\newcommand{\eqreft}[1]{Eq.~\ref{#1}}
\newtheorem{definition}{Definition}
\newtheorem{theorem}{Theorem}
\newtheorem{lemma}{Lemma}
\newtheorem{corollary}{Corollary}
\newtheorem{proposition}{Proposition}
\newcommand{\paragraphEmph}[1]{\paragraph{\textbf{#1}}}
\newcommand{\integers}{\ensuremath{\mathbb{Z}}}
\newcommand{\realNumbers}{\ensuremath{\mathbb{R}}}
\newcommand{\positiveRealNumbers}{\ensuremath{\mathbb{R}_{> 0}}}
\newcommand{\nonNegativeRealNumbers}{\ensuremath{\mathbb{R}_{\geq 0}}}
\newcommand{\validator}{\ensuremath{v}}
\newcommand{\allValidators}{\ensuremath{V}}
\newcommand{\allValidatorsAt}[1]{\ensuremath{\allValidators_{#1}}}
\newcommand{\attackingValidators}{\ensuremath{{\allValidators_{\allAttackStakes}}}}
\newcommand{\service}{\ensuremath{s}}
\newcommand{\allServices}{\ensuremath{S}}
\newcommand{\allServicesAt}[1]{\ensuremath{\allServices_{#1}}}
\newcommand{\attackedServices}{\ensuremath{{\allServices_\allAttackStakes}}}
\newcommand{\attackedServicesAt}[1]{\ensuremath{\allServices_{\allAttackStakes_{#1}}}}
\newcommand{\baseServices}{\ensuremath{S_{\textit{base}}}}
\newcommand{\allStakes}{\ensuremath{\sigma}}
\newcommand{\allStakesAt}[1]{\ensuremath{\allStakes_{#1}}}
\newcommand{\stake}[1]{\ensuremath{\allStakes\!\left({#1}\right)}}
\newcommand{\symmetricStake}{\ensuremath{\allStakes}}
\newcommand{\symmetricStakeAt}[1]{\ensuremath{\allStakesAt{#1}}}
\newcommand{\stakeAt}[2]{\ensuremath{\allStakesAt{#1}\!\left({#2}\right)}}
\newcommand{\allAttackThresholds}{\ensuremath{\theta}}
\newcommand{\allAttackThresholdsAt}[1]{\ensuremath{\allAttackThresholds_{#1}}}
\newcommand{\attackThreshold}[1]{\ensuremath{\allAttackThresholds\!\left({#1}\right)}}
\newcommand{\symmetricAttackThreshold}{\ensuremath{\allAttackThresholds}}
\newcommand{\allAttackPrizes}{\ensuremath{\pi}}
\newcommand{\allAttackPrizesAt}[1]{\ensuremath{\allAttackPrizes_{#1}}}
\newcommand{\attackPrize}[1]{\ensuremath{\allAttackPrizes\!\left({#1}\right)}}
\newcommand{\attackPrizeAt}[2]{\ensuremath{\allAttackPrizesAt{#1}\!\left({#2}\right)}}
\newcommand{\totalAttackPrize}[1]{\ensuremath{\Pi_{\networkState}\left({#1}\right)}}
\newcommand{\totalAttackPrizeAt}[2]{\ensuremath{\Pi_{\networkStateAt{#1}}\left({#2}\right)}}
\newcommand{\allAllocations}{\ensuremath{w}}
\newcommand{\allAllocationsAt}[1]{\ensuremath{\allAllocations_{#1}}}
\newcommand{\allocation}[2]{\ensuremath{\allAllocations\!\left({#1},{#2}\right)}}
\newcommand{\allocationAt}[3]{\ensuremath{\allAllocationsAt{#1}\!\left({#2},{#3}\right)}}
\newcommand{\symmetricAllocation}[1]{\ensuremath{\allAllocations\!\left({#1}\right)}}
\newcommand{\symmetricAllocationAt}[2]{\ensuremath{\allAllocationsAt{#1}\!\left({#2}\right)}}
\newcommand{\allAttackStakes}{\ensuremath{\alpha}}
\newcommand{\allAttackStakesAt}[1]{\ensuremath{\allAttackStakes_{#1}}}
\newcommand{\attackStake}[2]{\ensuremath{\allAttackStakes\!\left({#1},{#2}\right)}}
\newcommand{\attackStakeAt}[3]{\ensuremath{\allAttackStakesAt{#1}\!\left({#2},{#3}\right)}}
\newcommand{\validatorAttackCost}[2]{\ensuremath{c_{\networkState}\!\left({#1}, {#2}\right)}}
\newcommand{\validatorAttackCostAt}[3]{\ensuremath{c_{\networkStateAt{#1}}\!\left({#2}, {#3}\right)}}
\newcommand{\attackCost}[1]{\ensuremath{C_{\networkState}\!\left({#1}\right)}}
\newcommand{\attackCostAt}[2]{\ensuremath{C_{\networkStateAt{#1}}\!\left({#2}\right)}}
\newcommand{\maxByzantineServices}{\ensuremath{{f}}}
\newcommand{\networkState}{\ensuremath{G}}
\newcommand{\networkStateAt}[1]{\ensuremath{\networkState_{#1}}}
\newcommand{\networkAdvance}{\ensuremath{\searrow}}
\newcommand{\byzantineServices}{\ensuremath{\allServices^B}}
\newcommand{\byzantineSubsets}[1]{\ensuremath{\mathbb{B}_{\networkState}\!\left({#1}\right)}}
\newcommand{\validatorUtilitySecurityGame}[2]{\ensuremath{u_{#1}\!\left({#2}\right)}}
\newcommand{\validatorPrizeShareSecurityGame}[2]{\ensuremath{\gamma_{\networkState}\!\left({#1}, {#2}\right)}}
\newcommand{\strategyProfileSecurityGame}{\ensuremath{\sigma}}
\newcommand{\allStrategySpacesSecurityGame}{\ensuremath{\Sigma_{\networkState}}}
\newcommand{\adversaryBudget}{\ensuremath{\beta}}
\newcommand{\restakingDegree}[1]{\ensuremath{\text{deg}_{\networkState}\!\left({#1}\right)}}
\newcommand{\restakingDegreeSymmetric}{\ensuremath{\deg_{\networkState}}}
\newcommand{\targetRestakingDegree}{\ensuremath{\text{d}^\ast}}
\newcommand{\serviceReward}[1]{\ensuremath{\allServiceRewards(#1)}}
\newcommand{\allServiceRewards}{\ensuremath{R}}
\newcommand{\validatorReward}[2]{\ensuremath{r\!\left({#1}, {#2}\right)}}
\newcommand{\validatorUtility}[2]{\ensuremath{u_{#1}\!\left({#2}\right)}}
\newcommand{\allNashAllocations}{\ensuremath{\allAllocations^\ast}}
\newcommand{\nashAllocation}[2]{\ensuremath{\allNashAllocations\!\left({#1}, {#2}\right)}}
\newcommand{\allocationStrategy}[1]{\ensuremath{\omega_{#1}}}
\newcommand{\allocationStrategyPrime}[1]{\ensuremath{\omega'_{#1}}}
\newcommand{\utilityFunction}{\ensuremath{U}}
\newcommand{\nashConstant}[1]{\ensuremath{c_{#1}}}
\newcommand\restr[2]{{
  \left.\kern-\nulldelimiterspace 
  #1 
  \littletaller 
  \right|_{#2} 
  }}
\newcommand{\littletaller}{\mathchoice{\vphantom{\bigr|}}{}{}{}}
\newcommand{\floor}[1]{\ensuremath{\left\lfloor{#1}\right\rfloor}}
\newcommand{\sspElement}[1]{\ensuremath{b_{#1}}}
\newcommand{\sspElementCount}{\ensuremath{n}}
\newcommand{\sspElementSum}{\ensuremath{B}}
\newcommand{\sspTarget}{\ensuremath{T}}
\newcommand{\milpVariableLetter}{\ensuremath{x}}
\newcommand{\milpLargeNumberFeasibility}{\ensuremath{M_1}}
\newcommand{\milpLargeNumberCost}{\ensuremath{M_2}}
\newcommand{\milpAttackStakeVariable}[2]{\ensuremath{\milpVariableLetter_{{#1}, {#2}}^{\allAttackStakes}}}
\newcommand{\milpValidatorCostVariable}[1]{\ensuremath{\milpVariableLetter_{#1}^{c}}}
\newcommand{\milpValidatorCostAuxiliaryVariable}[1]{\ensuremath{\milpVariableLetter_{#1}^{c, \text{aux}}}}
\newcommand{\milpAttackedServiceVariable}[1]{\ensuremath{\milpVariableLetter_{#1}^{\allServices}}}
\newcommand{\milpVariables}{\ensuremath{\vec{\milpVariableLetter}}}
\newcommand{\milpLargeNumberRemainingStake}{\ensuremath{M_3}}
\newcommand{\milpLargeNumberRemainingAllocation}{\ensuremath{M_4}}
\newcommand{\milpLargeNumberByzantineServices}{\ensuremath{M_5}}
\newcommand{\milpByzantineServiceVariable}[1]{\ensuremath{\milpVariableLetter_{#1}^{\allServices, \text{byz}}}}
\newcommand{\milpByzantineServiceAuxiliaryVariable}[0]{\ensuremath{\milpVariableLetter^{\allServices, \text{aux}}}}
\newcommand{\milpRemainingStakeVariable}[1]{\ensuremath{\milpVariableLetter_{#1}^{\allStakes}}}
\newcommand{\milpRemainingStakeAuxiliaryVariable}[1]{\ensuremath{\milpVariableLetter_{#1}^{\allStakes, \text{aux}}}}
\newcommand{\milpRemainingAllocationVariable}[2]{\ensuremath{\milpVariableLetter_{{#1}, {#2}}^{\allAllocations}}}
\newcommand{\milpRemainingAllocationAuxiliaryVariable}[2]{\ensuremath{\milpVariableLetter_{{#1}, {#2}}^{\allAllocations, \text{aux}}}}
\begin{document}

\title{Elastic Restaking Networks}
\subtitle{United we fall, (partially) divided we stand}

\author{Roi Bar-Zur}
\affiliation{%
  \institution{Technion}
  \city{}
  \state{}
  \country{Israel}
}

\author{Ittay Eyal}
\affiliation{%
  \institution{Technion}
  \city{}
  \state{}
  \country{Israel}
}

\renewcommand{\shortauthors}{Roi Bar-Zur and Ittay Eyal}

\begin{abstract}
Many blockchain-based decentralized \emph{services} require their \emph{validators} (operators) to deposit \emph{stake} (collateral), which is forfeited (slashed) if they misbehave.
\emph{Restaking networks} let validators secure multiple services by reusing stake.
These networks have quickly gained traction, leveraging over~\$20 billion in stake.
However, restaking introduces a new attack vector where validators can coordinate to misbehave across multiple services simultaneously, extracting digital assets while forfeiting their stake only once.

Previous work focused either on preventing coordinated misbehavior or on protecting services if all other services are \emph{Byzantine} and might unjustly cause slashing due to bugs or malice.
The first model overlooks how a single Byzantine service can collapse the network, while the second ignores shared-stake benefits.

To bridge the gap, we analyze the system as a strategic game of coordinated misbehavior, when a given fraction of the services are Byzantine.
We introduce \emph{elastic} restaking networks, where validators can allocate portions of their stake that may cumulatively exceed their total stake, and when allocations are lost, the remaining stake stretches to cover remaining allocations.
We show that elastic networks exhibit superior robustness compared to previous approaches, and demonstrate a synergistic effect where an elastic restaking network enhances its blockchain's security, contrary to community concerns of an opposite effect in existing networks.
We then design incentives for tuning validators' allocations.
  
Our elastic restaking system and incentive design have immediate practical implications for deployed restaking networks.
\end{abstract}

\begin{CCSXML}
  <ccs2012>
     <concept>
         <concept_id>10003752.10010070.10010099.10010100</concept_id>
         <concept_desc>Theory of computation~Algorithmic game theory</concept_desc>
         <concept_significance>500</concept_significance>
         </concept>
     <concept>
         <concept_id>10003752.10010070.10010099.10010101</concept_id>
         <concept_desc>Theory of computation~Algorithmic mechanism design</concept_desc>
         <concept_significance>500</concept_significance>
         </concept>
     <concept>
         <concept_id>10002978.10003006.10003013</concept_id>
         <concept_desc>Security and privacy~Distributed systems security</concept_desc>
         <concept_significance>500</concept_significance>
         </concept>
   </ccs2012>
\end{CCSXML}

\ccsdesc[500]{Theory of computation~Algorithmic game theory}
\ccsdesc[500]{Theory of computation~Algorithmic mechanism design}
\ccsdesc[500]{Security and privacy~Distributed systems security}

\keywords{Restaking Network, Blockchain, Security, Incentives}

\maketitle

\section{Introduction}
\label{section:introduction}


Blockchains are distributed-computing protocols executed by a set of validators to facilitate digital-asset ownership. 
To secure the system in a decentralized fashion, without privileged entities, many blockchains~(e.g.,~\cite{buterin2014ethereum,yakovenko2018solana,rocket2018snowflake}) require validators to deposit \emph{stake} (collateral), which can be \emph{slashed} (lost)~\cite{buterin2014slasher} if they misbehave.
This approach, known as \emph{cryptoeconomic security}, is effective if the potential slashing is greater than any possible gains from misbehavior.

In addition to simple asset transfers, many blockchains support \emph{smart contracts}, which are stateful programs enabling automated interactions~\cite{buterin2014ethereum}.
To overcome their native limitations, many decentralized \emph{services} employ external validators alongside smart contracts.
Examples include \emph{rollups}~\cite{kalodner2018arbitrum,optimism}, which offload computations; \emph{bridges}~\cite{mccorry2021sok}, which transfer assets and data among blockchains; \emph{data availability solutions}~\cite{celestia,sheng2021aced}, which offload data storage; and \emph{oracle networks}~\cite{breidenbach2021chainlink,eskandari2021sok}, which import external data.
These services rely on cryptoeconomic security as well, requiring their external validators to deposit slashable stake.

To improve the efficiency of stake usage across the ecosystem, \emph{restaking networks} have emerged.
They allow validators to deposit stake and allocate it to multiple services, any of which can slash it.
A restaking network can either include the underlying blockchain's stake~\cite{eigenlayer2024restaking,symbiotic} or not~\cite{babylon2023bitcoinStaking}. 
There have been concerns about restaking risking the underlying blockchain's security~\cite{parasol2023ethereum,kubinec2024eigenlayer,mallesh2024eigenlayer,chitra2024much}, but nevertheless restaking has gained significant traction, with EigenLayer~\cite{restakingInfo} and other restaking networks~\cite{defillama2025restaking} collectively holding over \$20 billion in deposits.

While restaking networks make stake more accessible and allow validators to earn rewards from each service they validate, they introduce new security challenges.
When multiple services share the same stake, each additional service creates another opportunity for validators to extract value while risking their stake only once.
This gives rise to a strategic game where a coalition of validators can \emph{attack} by misbehaving in a subset of services.

Previous work~(\S\ref{sec:related}) took two distinct approaches. 
One focused on preventing coordinated misbehavior; following this approach implies over-allocation of stake is desirable, but that may leave the network vulnerable to even a single \emph{Byzantine} fault---a service that unjustly causes slashing due to bugs or malice. 
The second approach focused on protecting services if all other services are Byzantine; following this approach means not to use restaking, losing its robustness benefits. 

In this paper we present \emph{elastic restaking}~(\S\ref{section:model}), a restaking network architecture for handling both validator strategic behavior and Byzantine service faults.
In elastic restaking, validators deposit stake and allocate a portion to each service such that the sum of portions may be larger than their total stake.
Each service has an \emph{attack threshold}, the fraction of stake that must be used to attack it, and an \emph{attack prize}, the value that can be extracted from the service.

We analyze the system as a strategic \emph{cryptoeconomic security game} that proceeds as follows:
Each validator decides how much stake to use to attack each service, up to their allocated stake to that service.
Notably, validators can choose to use only a portion of their allocated stake, providing them with more granular attack strategies, a realistic but novel aspect of our model.
Each validator then loses the sum of the used portions, up to their entire stake.
If attacking validators dedicate enough stake to attack a service (above its threshold), they share the service's attack prize proportionally to the cost they paid.
Each validator's utility is their share of the prizes minus their lost stake.
We say the network is \emph{cryptoeconomically secure} if not using any stake to attack is a strong\footnote{We use a modified version of a strong Nash equilibrium where we require that there exists no coalition such that all its members non-strictly improve their utility by deviating (as opposed to the strict requirement of~\citet{aumann1959acceptable}).} Nash equilibrium~\cite{aumann1959acceptable}.

But even if cryptoeconomic security holds, the system might be brittle. 
We therefore extend this game by introducing another realistic but novel notion of restaking-network \emph{robustness}. 
First, we consider an adversary with a budget~$\adversaryBudget$ who uses it to subsidize validators to attack the network.
That is, the adversary supplements the total prize that attacking validators' can gain in the security game provided they attack at least one service.
We say the network is \emph{$\adversaryBudget$-cryptoeconomically robust} if not using any stake to attack is a strong Nash equilibrium in the resultant game.

We also consider the restaking network's robustness against \emph{Byzantine} services. 
Byzantine services can arbitrarily slash all stake allocated to them, reducing the total stake securing the network and potentially degrading its cryptoeconomic robustness.
In our model, the adversary first chooses some fraction of services to be Byzantine, and we then consider the \mbox{$\beta$-cryptoeconomic} robustness of the resulting network.

Unlike previous work that slashed an entire validator's stake, to support partial stake allocation we present \emph{elastic slashing}: when a validator's stake is slashed, the remaining stake is stretched to cover the rest of the validator's allocations. 
This makes elastic restaking networks strictly more expressive than previous models~(\S\ref{section:model:elastic_restaking_networks}).

Before addressing robustness, we analyze when networks are secure~(\S\ref{section:security_analysis}), meaning no coalition of validators will attack services.
Security holds when not attacking is a strong Nash equilibrium in the network's cryptoeconomic security game.
This equilibrium occurs precisely when there are no profitable attacks---those where the total prizes exceed the collective stake losses of the attacking validators.
To verify security, we develop sufficient conditions that generalize previous work~\cite{eigenlayer2024restaking}: a network is secure if (1)~each service has more stake allocated than it would need in isolation and (2)~for each validator, the sum of potential prize fractions across services is less than their stake.
While these conditions are useful, they only give us a partial picture.

We show that searching for profitable attacks in general restaking networks is NP-complete.
Hence, the complementary problem of checking security is co-NP-complete, and there is no efficient algorithm for it (unless~${\textit{P}=\textit{NP}}$).
We thus focus on symmetric networks, which are simpler to analyze yet rich enough to demonstrate the key mechanisms that govern restaking network robustness.
We develop an efficient algorithm to identify profitable attacks in symmetric networks.
We demonstrate our algorithm by calculating the minimum stake requirements for security in sample networks.
The implementation of our algorithm is available online~\anon{\cite{barzur2025code}}.


Next, we analyze robustness~(\S\ref{section:theoretical_robustness_analysis}) and follow a similar approach to our security analysis.
First, we present a simple yet non-efficiently computable condition for cryptoeconomic robustness:
A network is~$\adversaryBudget$-cryptoeconomically robust if there is no~\emph{$\adversaryBudget$-costly} attack, that is, there is no attack for which the total costs minus the total prizes is less than~$\adversaryBudget$.
We then extend our efficient algorithm to find profitable attacks in the symmetric case to find $\adversaryBudget$-costly attacks. 

We gain two significant insights by using our algorithm for several sample networks. 
First, elastic networks are in many cases more robust than existing restaking networks.
Second, we demonstrate a synergistic effect where a restaking network (like EigenLayer) can benefit the blockchain it is built on (Ethereum) by increasing its robustness: 
Consider a restaking network with a \emph{base} service (like Ethereum) to which all stake is allocated. 
Compare that with splitting the restaking network into two, a network without the base service and a (degenerate) restaking network with only the base service.
We find concrete cases where, using the same amount of stake overall, the combined restaking network is more robust compared to the two separate networks.

For asymmetric restaking networks, we resort to a computational approach using \emph{mixed-integer programming}~\cite{junger200950}~(\S\ref{section:mip_robustness_analysis}), as the heterogeneity of real restaking networks requires more general analysis methods. 
We solve the program with a state-of-the-art solver~\cite{hall2023highs} and validate our theoretical analysis for symmetric networks.
Furthermore, we illustrate similar effects to those of symmetric networks, suggesting that the mechanisms underlying these effects apply broadly beyond the symmetric settings we analyze.
However, the full complexity of asymmetric networks warrants further research.

We call the ratio between the sum of the validator's allocations to their stake its \emph{restaking degree}. 
Our analysis above shows that a certain restaking degree results in optimal robustness. 
The system designer should therefore encourage the validators to restake at this degree. 
We present the \emph{network formation game}~(\S\ref{section:incentives_for_a_target_restaking_degree}), in which services distribute rewards to their validators and validators choose their allocations to maximize their rewards.
We design a reward scheme that leads to a Nash equilibrium in which validators keep their restaking degree at a network-wide target value.

In conclusion~(\S\ref{section:conclusion}), our main contributions are: 
\begin{enumerate}
    \item presentation of elastic restaking networks, which are more expressive than atomic ones; 
    \item formalization of the security and robustness games; 
    \item proof that determining whether a network is secure is NP-complete; 
    \item efficient algorithms for security and robustness analysis in symmetric networks; 
    \item demonstration that elastic networks have superior robustness and may benefit their underlying blockchains; 
    \item robustness analysis in general networks using mixed-integer programming; and 
    \item a mechanism to incentivize a desired restaking degree.
\end{enumerate}

Our work raises further questions, e.g., on alternative slashing algorithms that maximize robustness, but is immediately applicable to improve the security of numerous deployed systems. 


\section{Related Work}\label{sec:related}


\paragraph{Restaking Networks}
\citet{eigenlayer2024restaking} introduced the first formal model for restaking networks, establishing sufficient conditions for cryptoeconomic security.
Their model requires validators to commit their entire stake to each service they validate, creating what we call \emph{atomic} restaking networks.
Their analysis focuses solely on coordinated misbehavior by validators, proving conditions under which no profitable attacks exist.
We build upon their security game framework but extend it in several crucial ways.
First, our elastic model allows validators to commit portions of their stake and potentially exceed their total stake across allocations.
We also consider allocation-divisible attacks where validators can use portions of their allocated stake, reflecting real-world services like Ethereum~\cite{buterin2014ethereum} where validators can be slashed for only a portion of their stake if that portion misbehaves.
Most importantly, we consider both network robustness and Byzantine services, two critical aspects absent from their initial model.

\citet{durvasula2024robust} expanded EigenLayer's analysis in two directions.
First, they examined cascading failures, showing how initial stake losses can trigger further attacks.
They show that any cascade of attacks following an initial stake loss is equivalent to a single attack, and that sufficient stake reserves can ensure the network is robust to such cascades.
Second, they studied how services might protect themselves by assuming all other services are Byzantine.
Our analysis differs from the analysis of \citeauthor{durvasula2024robust} in several ways.
(1) While we share their focus on robustness, our definitions of robustness differ.
In their model, some stake is first lost, and then the remaining stake is used to attack services;
in our model, stake is first used to attack services, and then an adversary reimburses the stake loss.
(2) Rather than considering only extremes (no services or all services being Byzantine), in this paper, we model scenarios where a weighted fraction of services are Byzantine, as is common in distributed-systems analysis.
(3) While they focus on analyzing the robustness of a given restaking network, we compare different structures to identify which are more robust.

\citet{chitra2024much} also analyze restaking networks and incentivizing allocation, but they do not address service faults and they make two additional assumptions: First, they assume coalition profits from an attack drop with the number of attacked services, whereas we consider the worse case without diminishing returns. 
Second, they assume honest validators can immediately  rebalance their remaining allocations after an attack; this is a strong assumption that neglects blockchain congestion and censorship attacks~\cite{mccorry2019smart, judmayer2021sok}, whereas our elastic restaking mechanism achieves this automatically. 
We note that unlike \citeauthor{chitra2024much} we neglect validator costs, since services often require validators to run only a single server, regardless of how much stake they have (even millions of dollars worth)~\cite{eigenda2025system,eoracle2024installation,witness2024node}. 


Community concerns~\cite{parasol2023ethereum,kubinec2024eigenlayer,mallesh2024eigenlayer,chitra2024much} that a single Byzantine service could compromise both EigenLayer and Ethereum, are perhaps what led EigenLayer to propose a significant revision~\cite{eigenlayer2024introducing}: 
Validators partition their stake among services without exceeding total stake.
In addition, they suggest services to consider both allocated and total validator stake for the services' operation, though this provides little benefit since attackers can accumulate nominal (non-slashable) stake through loans.
Setting this aside, while their model shares with ours the possibility of partial allocations, it differs crucially.
Their approach aims to eliminate stake reuse between services, while our elastic model demonstrates that carefully managed stake reuse can enhance overall network security.

Mamageishvili and Sudakov~\cite{mamageishvili2025cost} analyze the efficiency tradeoffs between restaking and vanilla Proof-of-Stake protocols by comparing their stake requirements, showing that restaking can provide significant savings.
While they focus on efficiency comparisons, our work purposes a more robust mechanism and analyzes the security and robustness of restaking networks against coordinated attacks and Byzantine failures.

\paragraph{Liquid Restaking Tokens}
\emph{Liquid restaking tokens (LRTs)}~\cite{gogol2024sok} are fungible tokens that represent restaked positions, allowing holders to maintain liquidity while their stake secures multiple services.
While recent work has examined LRTs' market risks \citep{alexander2024leveraged} and financial properties \citep{neuder2024risks}, we focus on the cryptoeconomic security and robustness of their underlying restaking networks.

\paragraph{Security Through Incentives}
The study of security from the perspective of incentives is common in the blockchain literature~\cite{liu2019survey}.
Examples span the consensus-layer: incentive-compatible protocol design~\cite{pass2017fruitchains,abraham2023colordag}, selfish mining~\cite{eyal2018majority,sapirshtein2017optimal,carlsten2016instability}, and other attacks~\cite{eyal2015miner,mirkin2020bdos,judmayer2021sok,yaish2023uncle,karakostas2024blockchain};
payment channels: attack discovery~\cite{brugger2023checkmate}, and secure design~\cite{tsabary2021mad,aumayr2022sleepy,aumayr2024securing};
and applications: attack discovery~\cite{cui2022vrust,babel2023lanturn,li2023demystifying}, and secure design~\cite{dong2017betrayal,tsabary2024ledgerhedger,wadhwa2024data}.

\paragraph{Systemic Risk}
Previous work on systemic risk in financial networks, where entities are connected by debt obligations, has studied both factors affecting risk propagation~\cite{acemoglu2015networks,acemoglu2015systemic,glasserman2016contagion} and frameworks for measuring these risks~\cite{brunnermeier2012risk,chen2013axiomatic,battiston2016price}.
Our model extends these ideas to restaking networks where security dependencies arise from shared stake rather than debt obligations, though with different dynamics since stake can be reused across multiple services simultaneously.


\section{Restaking Networks and Elastic Restaking}
\label{section:model}


We begin by presenting the components of a restaking network: validators allocate stake to services, which secure assets~(\S\ref{section:model:restaking_networks}).
We then present how a coalition validators can \emph{attack} services, and the \emph{cryptoeconomic security game} that arises~(\S\ref{section:model:cryptoeconomic_security_game}).
Later, we present the \emph{cryptoeconomic robustness game} that arises when an adversary with a budget pays validators to attack services~(\S\ref{section:model:cryptoeconomic_robustness_game}).
Finally, we consider robustness against Byzantine services that slash their validators, and leave the network more vulnerable in the cryptoeconomic robustness game~(\S\ref{section:model:robustness_against_byzantine_services}).


\subsection{Principals and Stake Allocation}
\label{section:model:restaking_networks}


A restaking network comprises a set of~$n$ services~${\allServices=\{\service_1, \service_2, \dots, \service_n\}}$ and a set of~$m$ validators~${\allValidators=\{\validator_1, \validator_2, \dots, \validator_m\}}$.
Each validator~$\validator \in \allValidators$ has a \emph{stake}~$\stake{\validator} \in \positiveRealNumbers$.
Each validator~$\validator \in \allValidators$ also has an \emph{allocation}~$\allocation{\validator}{\service}$ in the closed interval~$\left[ 0, \stake{\validator} \right]$ to each service~$\service \in \allServices$.
The allocation~$\allocation{\validator}{\service}$ represents validator~$\validator$'s stake dedicated to service~$\service$, determining their maximum possible loss from misbehavior or service failure, and affecting their reward from validating the service.
Formally,~${\allStakes: \allValidators \to \positiveRealNumbers}$ and~${\allAllocations: \allValidators \times \allServices \to \nonNegativeRealNumbers}$ are the stake and allocation functions.

This creates a weighted bipartite graph~$(\allValidators, \allServices, \allAllocations)$ where validators and services are the two sets of vertices.
The weight of an edge from a validator~$\validator$ to a service~$\service$ is the validator's allocation to the service~$\allocation{\validator}{\service}$.
A weight can be zero, meaning the validator does not allocate any stake to the service and does not validate it.
And the sum of the weights of the edges from a validator to all services can exceed the validator's stake.

A network is~\emph{atomic} if validators can only allocate their entire stake or none to a service.
That is, for each validator~$\validator \in \allValidators$ and service~$\service \in \allServices$,~$\allocation{\validator}{\service} \in \{0, \stake{\validator}\}$.
Otherwise, the network is~\emph{elastic}.

A validator's \emph{restaking degree} measures how heavily encumbered their stake is to the services they allocate to.
\begin{definition}[Restaking Degree]
  \label{definition:restaking_degree}
  In a restaking network~$\networkState$, the restaking degree of a validator~$\validator$ is the ratio of the sum of their allocations and their stake, that is,
  \begin{equation}
    \label{equation:model:restaking_degree}
    \restakingDegree{\validator} = \frac{\sum_{\service \in \allServices} \allocation{\validator}{\service}}{\stake{\validator}} .
  \end{equation}
\end{definition}
In symmetric restaking networks, where all validators share the same restaking degree, we refer to this common restaking degree as the network's restaking degree, denoted~$\restakingDegreeSymmetric$.

Each service~$\service \in \allServices$ has an \emph{attack prize}~$\attackPrize{\service} \in \positiveRealNumbers$ and an \emph{attack threshold}~$\attackThreshold{\service} \in [0,1]$.
When validators collectively allocate more than~$\attackThreshold{\service}$ of service~$\service$'s stake, they can misbehave and extract assets worth~$\attackPrize{\service}$ from it.
Formally,~$\allAttackThresholds: \allServices \to [0,1]$ and~$\allAttackPrizes: \allServices \to \positiveRealNumbers$ are the attack threshold and prize functions.

Together with the previous elements, a restaking network is defined by the tuple~${\networkState=(\allValidators, \allServices, \allStakes, \allAllocations, \allAttackThresholds, \allAttackPrizes)}$.


\subsection{The Cryptoeconomic Security Game}
\label{section:model:cryptoeconomic_security_game}


The cryptoeconomic security game is a game played between the validators~$\allValidators$.
Each validator~$\validator \in \allValidators$ can choose to use~$\attackStake{\validator}{\service} \in \left[ 0, \allocation{\validator}{\service} \right]$ of their stake to attack service~$\service \in \allServices$.
We call~$\allAttackStakes: \allValidators \times \allServices \to \nonNegativeRealNumbers$ the~\emph{attacking stake function} or simply an~\emph{attack}.
Formally, the strategy space for all validators is all legal attacking stake functions, that is,~${\allStrategySpacesSecurityGame = \left\{ \allAttackStakes: \allValidators \times \allServices \to \nonNegativeRealNumbers \middle| \attackStake{\validator}{\service} \leq \allocation{\validator}{\service} \right\}}$.

We call such attacks \emph{allocation-divisible}, as validators can choose to use only portions of their allocations.
If in an attack, validators either use their allocations in their entirety or not at all, we call the attack~\emph{allocation-indivisible}.
That is, if for all validators~$\validator \in \allValidators$ and services~$\service \in \allServices$,~${\attackStake{\validator}{\service} \in \{0, \allocation{\validator}{\service}\}}$.

For an attacking stake function~$\allAttackStakes$, let~$\attackedServices$ be all attacked services, services for which enough stake is dedicated to attacking them.
\begin{definition}[Attacked Services]
    \label{definition:attacked_services}
    Given an attack~$\allAttackStakes$, the set of attacked services is
    \begin{equation}
        \attackedServices = \left\{ \service \in \allServices \middle| \sum_{\validator \in \allValidators} \attackStake{\validator}{\service}
        \geq \attackThreshold{\service} \cdot \sum_{\validator \in \allValidators} \allocation{\validator}{\service} \right\} .
    \end{equation}
\end{definition}

As the same stake may secure several services, calculating the cost of using the stake to attack the services is more involved than simply summing the~$\attackStake{\validator}{\service}$ values.
A validator can only be slashed up to the stake they have, even if the sum of their allocations exceeds it.
Denote by~$\validatorAttackCost{\validator}{\allAttackStakes}$ the cost of validator~$\validator$ for the attack~$\allAttackStakes$:
The sum of the portions of the stake they use to attack the services, capped at the validator's stake, namely,
\begin{equation}
    \label{equation:validator_attack_cost}
    \validatorAttackCost{\validator}{\allAttackStakes} = \min \left( \stake{\validator}, \sum_{\service \in \attackedServices} \attackStake{\validator}{\service} \right) .
\end{equation}
Then, denote by~$\attackCost{\allAttackStakes}$ the total cost of the attack:
The sum of the costs of the validators in the coalition, namely,
\begin{equation}
    \label{equation:total_attack_cost}
    \attackCost{\allAttackStakes} = \sum_{\validator \in \allValidators} \validatorAttackCost{\validator}{\allAttackStakes} .
\end{equation}
And denote by~$\totalAttackPrize{\allAttackStakes}$ the prize of the attack:
The sum of the prizes of the attacked services, namely,
\begin{equation}
    \label{equation:total_attack_prize}
    \totalAttackPrize{\allAttackStakes} = \sum_{\service \in \attackedServices} \attackPrize{\service} .
\end{equation}
If the set~$\attackedServices$ is empty, the prize is~$0$.

We are now ready to present the utilities of players in the cryptoeconomic security game.
All validators lose the cost of the stake they use, and split the prizes (if any) among themselves according to the cost of each validator.
If the cost was~0 (perhaps the result of a service with no stake allocated to it), we simply split it evenly.
Denote by~$\validatorPrizeShareSecurityGame{\validator}{\allAttackStakes}$ the share of validator~$\validator$ out of the total prize of the attack:
\begin{equation}
    \label{equation:validator_prize_share_security_game}
    \validatorPrizeShareSecurityGame{\validator}{\allAttackStakes}
    = \begin{cases}
        \frac{\validatorAttackCost{\validator}{\allAttackStakes}}{\attackCost{\allAttackStakes}} & \text{if } \attackCost{\allAttackStakes} > 0 ; \\
        \frac{1}{|\allValidators|} & \text{if } \attackCost{\allAttackStakes} = 0 .
    \end{cases}
\end{equation}
Then, given an attack~$\allAttackStakes$, the utility of validator~$\validator$ is
\begin{equation}
    \label{equation:validator_utility_security_game}
    \validatorUtilitySecurityGame{\validator}{\allAttackStakes} = \validatorPrizeShareSecurityGame{\validator}{\allAttackStakes} \cdot \totalAttackPrize{\allAttackStakes} - \validatorAttackCost{\validator}{\allAttackStakes} .
\end{equation}

To define when a network is considered \emph{cryptoeconomically secure}, we use a modified notion of a strong Nash equilibrium.
Instead of requiring that there exists no coalition that can deviate and strictly increase the utility of each of its participants~\cite{aumann1959acceptable}, we require that no coalition can non-strictly increase their utilities.
Our notion is equivalent to the following definition.
\begin{definition}[Strong* Nash Equilibrium]
    \label{definition:strong_nash_equilibrium}
    Let~$(P, \Sigma, u)$ be a strategic form game.
    A strategy profile~$\sigma_{\textit{sne}} \in \Sigma$ is a strong* Nash equilibrium if for all coalitions of players~${P' \subseteq P}$ all possible deviations from~$\sigma_{\textit{sne}}$ leading to an alternative strategy profile~$\sigma \in \Sigma$ result in at least one player~$p \in P'$ being strictly worse off:~$u_p(\sigma) < u_p(\sigma_{\textit{sne}})$.
\end{definition}
For brevity, we refer to this modified notion as simply a strong Nash equilibrium throughout the rest of the paper.

Now, we are ready to present the condition under which a restaking network is considered \emph{cryptoeconomically secure}:
\begin{definition}[Restaking Network Cryptoeconomic Security]
    \label{definition:restaking_network_security}
    Let~$\networkState$ be a restaking network and consider the attacking stake function~$\allAttackStakesAt{0}$ such that for all validators~$\validator \in \allValidators$ and services~$\service \in \allServices$: $\attackStake{\validator}{\service} = 0$.
    Then,~$\networkState$ is \emph{cryptoeconomically secure} (or simply~\emph{secure}) if~$\allAttackStakesAt{0}$ is a strong Nash equilibrium of the cryptoeconomic security game for~$\networkState$ and no services are attacked, that is,~$\attackedServicesAt{0} = \emptyset$.
\end{definition}

We now precisely define the conditions under which an attack is considered \emph{profitable}, which will be useful when analyzing the cryptoeconomic security game.
\begin{definition}[Attack Profitability]
    \label{definition:attack_profitability}
    An attack~$\allAttackStakes$ is~\emph{profitable} if it results with at least one attacked service, namely,~$\attackedServices \neq \emptyset$, and
    \begin{equation}
        \attackCost{\allAttackStakes} \leq \totalAttackPrize{\allAttackStakes} .
    \end{equation}
\end{definition}


\subsection{The Cryptoeconomic Robustness Game}
\label{section:model:cryptoeconomic_robustness_game}


The cryptoeconomic robustness game is similar to the cryptoeconomic security game except one key difference.
An adversary has a budget~$\adversaryBudget \in \nonNegativeRealNumbers$ for attacking the network and if there is at least one attacked service, the adversary pays their budget to validators.
Thus, the prizes from attacking services may only partially reimburse the cost of the stake used in the attack.

The set of players and their strategies remains the same as in the cryptoeconomic security game, but the utilities are different.
Given an attack~$\allAttackStakes$, the utility of validator~$\validator$ is
\begin{equation}
    \label{equation:validator_utility_robustness_game}
    \validatorUtilitySecurityGame{\validator}{\allAttackStakes}
    = \begin{cases}
        \validatorPrizeShareSecurityGame{\validator}{\allAttackStakes} \left( \totalAttackPrize{\allAttackStakes} + \adversaryBudget \right) - \validatorAttackCost{\validator}{\allAttackStakes} & \text{if } \attackedServices \neq \emptyset ; \\ 
        - \validatorAttackCost{\validator}{\allAttackStakes} & \text{otherwise} .
    \end{cases}
\end{equation}

Complementary to the cryptoeconomic security game, we present the condition under which a restaking network is considered \emph{cryptoeconomically robust}.
\begin{definition}[Restaking Network Cryptoeconomic Robustness]
    \label{definition:restaking_network_robustness}
    Let~$\networkState$ be a restaking network and consider the attacking stake function~$\allAttackStakesAt{0}$ such that for all validators~$\validator \in \allValidators$ and services~$\service \in \allServices$.~$\attackStake{\validator}{\service} = 0$.
    Then,~$\networkState$ is \emph{$\adversaryBudget$-cryptoeconomically robust} (or~\emph{$\adversaryBudget$-budget robust}) if~$\allAttackStakesAt{0}$ is a strong Nash equilibrium of the cryptoeconomic robustness game for~$\networkState$ with an adversary budget of~$\adversaryBudget$ and no services are attacked, that is,~$\attackedServicesAt{0} = \emptyset$.
\end{definition}

In addition, we define a \emph{$\adversaryBudget$-costly} attack, which will be useful when analyzing the cryptoeconomic robustness game.
\begin{definition}[$\adversaryBudget$-costly Attack]
    \label{definition:beta_costly_attack}
    An attack is \emph{$\adversaryBudget$-costly} if it results with at least one attacked service, i.e.,~$\attackedServices \neq \emptyset$, and
    \begin{equation}
        \attackCost{\allAttackStakes} \leq \totalAttackPrize{\allAttackStakes} + \adversaryBudget .
    \end{equation}
\end{definition}
Note that a $0$-costly attack is a profitable attack.

\begin{figure}[t]
    \centering
    \begin{subfigure}[b]{0.22\textwidth}
        \centering
        \begin{tikzpicture}[
            validator/.style={circle, draw, minimum size=0.6cm, font=\scriptsize},
            service/.style={circle, draw, minimum size=0.6cm, font=\scriptsize, inner sep=0.035cm},
            scale=0.5
        ]
            \node[align=center] at (0,3) {$\allValidators$ \\ $(\allStakes)$};
            \node[align=center] at (1.5, 3) {---};
            \node[align=center] at (3,3) {$\allServices$ \\ $(\allAttackThresholds | \allAttackPrizes)$};
            
            \node[validator] (V1) at (0,1) {20};
            \node[validator] (V2) at (0,-1) {20};
            
            \node[service] (S) at (3,0) {$\frac{1}{2}~|~5$};

            \draw (V1) -- (S) ;
            \draw (V2) -- (S) ;
        \end{tikzpicture}
        \caption{Initial state}
        \label{figure:robustness_notion_comparison_network}
    \end{subfigure}
    \hfill
    \begin{subfigure}[b]{0.22\textwidth}
        \centering
        \begin{tikzpicture}[
            validator/.style={circle, draw, minimum size=0.6cm, font=\scriptsize},
            service/.style={circle, draw, minimum size=0.6cm, font=\scriptsize, inner sep=0.035cm},
            scale=0.5
        ]
            \node[align=center] at (0,3) {$\allValidators$ \\ $(\allStakes)$};
            \node[align=center] at (1.5, 3) {---};
            \node[align=center] at (3,3) {$\allServices$ \\ $(\allAttackThresholds | \allAttackPrizes)$};
            
            \node[validator,fill=red!20] (V1) at (0,1) {20};
            \node[validator] (V2) at (0,-1) {20};
            
            \node[service,fill=red!20] (S) at (3,0) {$\frac{1}{2}~|~5$};

            \draw (V1) -- (S) ;
            \draw (V2) -- (S) ;
        \end{tikzpicture}
        \caption{Attack in our model}
        \label{figure:robustness_notion_comparison_our_attack}
    \end{subfigure}
    \hfill
    \begin{subfigure}[b]{0.22\textwidth}
        \centering
        \begin{tikzpicture}[
            validator/.style={circle, draw, minimum size=0.6cm, font=\scriptsize},
            service/.style={circle, draw, minimum size=0.6cm, font=\scriptsize, inner sep=0.035cm},
            scale=0.5
        ]
            \node[align=center] at (0,3) {$\allValidators$ \\ $(\allStakes)$};
            \node[align=center] at (1.5, 3) {---};
            \node[align=center] at (3,3) {$\allServices$ \\ $(\allAttackThresholds | \allAttackPrizes)$};
            
            \node[validator] (V1) at (0,1) {5};
            \node[validator] (V2) at (0,-1) {5};
            
            \node[service] (S) at (3,0) {$\frac{1}{2}~|~5$};

            \draw (V1) -- (S) ;
            \draw (V2) -- (S) ;
        \end{tikzpicture}
        \caption{Initial stake loss~\cite{durvasula2024robust}}
        \label{figure:robustness_notion_comparison_their_initial_loss}
    \end{subfigure}
    \hfill
    \begin{subfigure}[b]{0.22\textwidth}
        \centering
        \begin{tikzpicture}[
            validator/.style={circle, draw, minimum size=0.6cm, font=\scriptsize},
            service/.style={circle, draw, minimum size=0.6cm, font=\scriptsize, inner sep=0.035cm},
            scale=0.5
        ]
            \node[align=center] at (0,3) {$\allValidators$ \\ $(\allStakes)$};
            \node[align=center] at (1.5, 3) {---};
            \node[align=center] at (3,3) {$\allServices$ \\ $(\allAttackThresholds | \allAttackPrizes)$};
            
            \node[validator,fill=red!20] (V1) at (0,1) {5};
            \node[validator] (V2) at (0,-1) {5};
            
            \node[service,fill=red!20] (S) at (3,0) {$\frac{1}{2}~|~5$};

            \draw (V1) -- (S) ;
            \draw (V2) -- (S) ;
        \end{tikzpicture}
        \caption{Attack after stake loss~\cite{durvasula2024robust}}
        \label{figure:robustness_notion_comparison_their_attack}
    \end{subfigure}
    \caption{Comparison of our robustness notion with the one of \citet{durvasula2024robust}.}
    \label{figure:robustness_notion_comparison}
    \Description{Illustration showing the order of operations in our robustness notion and the one in previous work.}
\end{figure}

The robustness notion in our model diverges from the one of \citet{durvasula2024robust}.
While they consider an initial stake loss followed by an attack, we consider an attack that may be partially reimbursed by an adversary.
For example, suppose a service has~40 units of stake and requires validators to attack with half of the service's stake to capture a prize of~5 units in an atomic restaking network~(Fig.~\ref{figure:robustness_notion_comparison_network}).
In their model, an attack becomes profitable only after the network suffers an initial stake loss of~30 units~(Fig.~\ref{figure:robustness_notion_comparison_their_initial_loss}), which reduces the service's total stake to~10 units, making it vulnerable to validators with~5 units who can capture the prize~(Fig.~\ref{figure:robustness_notion_comparison_their_attack}).
In contrast, our model enables validators to use~20 units of stake to attack the service from the outset~(Fig.~\ref{figure:robustness_notion_comparison_our_attack}).
They then capture~5 units of stake and the adversary directly reimburses the validators for their losses---15 units of stake, which is significantly lower than the~30 units required in their model.
Thus, although both models ultimately balance the attack cost with the prize, our approach realistically requires a smaller adversarial investment than the initial stake losses needed in their model.


\subsection{Elastic Restaking Against Byzantine Services}
\label{section:model:robustness_against_byzantine_services}


We also aim to capture the robustness of a restaking network to Byzantine services.
A Byzantine service causes a mass slashing of all the stake that was allocated to it, as if all validators attacked the Byzantine service with their entire allocations~\cite{durvasula2024robust}.
In practice, this could be the result of a benign design flaw, or a malicious service design.

Consider a restaking network~$\networkStateAt{0} = (\allValidatorsAt{0}, \allServicesAt{0}, \allStakesAt{0}, \allAllocationsAt{0}, \allAttackThresholdsAt{0}, \allAttackPrizesAt{0})$.
An adversary chooses a subset~$\byzantineServices \subseteq \allServicesAt{0}$ of the services to be Byzantine, causing the network to transition to a new state, denoted by~${\networkStateAt{1} = \networkStateAt{0} \networkAdvance \byzantineServices}$.
The transition occurs as follows.

Let~${\networkStateAt{1} = (\allValidatorsAt{1}, \allServicesAt{1}, \allStakesAt{1}, \allAllocationsAt{1}, \allAttackThresholdsAt{1}, \allAttackPrizesAt{1})}$ be the new state.
First, validators remain the same, namely,~$\allValidatorsAt{1} = \allValidatorsAt{0}$.
Second, Byzantine services are removed from the network; the new set of services is~$\allServicesAt{1} = \allServicesAt{0} \setminus \byzantineServices$.
Third, each validator~$\validator \in \allValidatorsAt{0}$ is slashed for the stake they allocated to the Byzantine services~$\byzantineServices$, capped by their total stake~$\stakeAt{0}{\validator}$.
To specify these dynamics, we use the notation of function restriction.
Let~$f: A \to B$ be a function from set~$A$ to set~$B$ and let set~$C \subseteq A$ be a subset of~$A$.
Then, the function restriction of~$f$ to~$C$ is the function~$\restr{f}{C}: C \to B$ defined as~$\restr{f}{C}(x) = f(x)$ for all~$x \in C$.
The new stake is given by    
\begin{multline}
    \label{equation:stake_after_byzantine_services_cause_slashing}
    \stakeAt{1}{\validator}
    = \stakeAt{0}{\validator} - \validatorAttackCostAt{0}{\validator}{\byzantineServices}{\restr{\allAllocationsAt{0}}{\allValidatorsAt{0} \times \byzantineServices}} \\
    \underset{\eqref{equation:validator_attack_cost}}{=} \stakeAt{0}{\validator} - \min \left( \stakeAt{0}{\validator}, \sum_{\service \in \byzantineServices} \allocationAt{0}{\validator}{\service} \right) \\
    = \max \left( 0, \stakeAt{0}{\validator} - \sum_{\service \in \byzantineServices} \allocationAt{0}{\validator}{\service} \right) .
\end{multline}
Since a validator cannot allocate more stake to a service than their entire stake, allocations are adjusted in the following way.
Allocations of validators with sufficient stake remain the same, while allocations of validators with insufficient stake are reduced to be equal to the remaining stake.
Formally, the new allocation function is given by
\begin{equation}
    \label{equation:allocation_after_byzantine_services_cause_slashing}
    \allocationAt{1}{\validator}{\service} = \min \left( \allocationAt{0}{\validator}{\service}, \stakeAt{1}{\validator} \right) .
\end{equation}
And lastly, attack thresholds and attack prizes of Byzantine services are removed, and the new attack thresholds and attack prizes are given by~$\allAttackThresholdsAt{1} = \restr{\allAttackThresholdsAt{0}}{\allServicesAt{1}}$ and~$\allAttackPrizesAt{1} = \restr{\allAttackPrizesAt{0}}{\allServicesAt{1}}$.

\begin{figure}[t]
    \centering
    \begin{subfigure}[b]{0.22\textwidth}
        \centering
        \begin{tikzpicture}[
            validator/.style={circle, draw, minimum size=0.5cm},
            service/.style={circle, draw, minimum size=0.5cm},
            scale=0.5
        ]
            \node[align=center] at (0,4) {$\allValidators$ \\ $(\allStakes)$};
            \node[align=center] at (1.5, 4) {--- \\ $\allAllocations$};
            \node[align=center] at (3,4) {$\allServices$ \\ $( )$};
            
            \node[validator] (V) at (0,0) {2};
            
            \node[service] (S1) at (3,2) {};
            \node[service] (S2) at (3,0) {};
            \node[service] (S3) at (3,-2) {};
            
            \draw (V) -- (S1) node[pos=0.3, above left] {1};
            \draw (V) -- (S2) node[pos=0.3, above] {1};
            \draw (V) -- (S3) node[pos=0.3, below left] {1};
        \end{tikzpicture}
        \caption{Before slashing}
        \label{figure:expressiveness_example_before}
    \end{subfigure}
    \hfill
    \begin{subfigure}[b]{0.22\textwidth}
        \centering
        \begin{tikzpicture}[
            validator/.style={circle, draw, minimum size=0.5cm},
            service/.style={circle, draw, minimum size=0.5cm},
            scale=0.5
        ]
            \node[align=center] at (0,4) {$\allValidators$ \\ $(\allStakes)$};
            \node[align=center] at (1.5, 4) {--- \\ $\allAllocations$};
            \node[align=center] at (3,4) {$\allServices$ \\ $( )$};
            
            \node[validator] (V) at (0,0) {1};
            
            \node[service] (S2) at (3,0) {};
            \node[service] (S3) at (3,-2) {};
            
            \draw (V) -- (S2) node[pos=0.3, above] {1};
            \draw (V) -- (S3) node[pos=0.3, below left] {1};
            
            \node[service,dashed,fill=gray!20] (S1) at (3,2) {};
        \end{tikzpicture}
        \caption{After $\service_1$ slashing}
        \label{figure:expressiveness_example_after}
    \end{subfigure}
    \hfill
    \begin{subfigure}[b]{0.22\textwidth}
        \centering
        \begin{tikzpicture}[
            validator/.style={circle, draw, minimum size=0.5cm},
            service/.style={circle, draw, minimum size=0.5cm},
            scale=0.5
        ]
            \node[align=center] at (0,4) {$\allValidators$ \\ $(\allStakes)$};
            \node[align=center] at (1.5, 4) {--- \\ $\allAllocations$};
            \node[align=center] at (3,4) {$\allServices$ \\ $( )$};
            
            \node[validator] (V) at (0,0) {5};
            
            \node[service] (S1) at (3,2) {};
            \node[service] (S2) at (3,0) {};
            \node[service] (S3) at (3,-2) {};
            
            \draw (V) -- (S1) node[pos=0.3, above left] {3};
            \draw (V) -- (S2) node[pos=0.3, above] {3};
            \draw (V) -- (S3) node[pos=0.3, below left] {1};
        \end{tikzpicture}
        \caption{Before slashing}
        \label{figure:byzantine_service_example_before}
    \end{subfigure}
    \hfill
    \begin{subfigure}[b]{0.22\textwidth}
        \centering
        \begin{tikzpicture}[
            validator/.style={circle, draw, minimum size=0.5cm},
            service/.style={circle, draw, minimum size=0.5cm},
            scale=0.5
        ]
            \node[align=center] at (0,4) {$\allValidators$ \\ $(\allStakes)$};
            \node[align=center] at (1.5, 4) {--- \\ $\allAllocations$};
            \node[align=center] at (3,4) {$\allServices$ \\ $( )$};
            
            \node[validator] (V) at (0,0) {2};
            
            \node[service] (S2) at (3,0) {};
            \node[service] (S3) at (3,-2) {};
            
            \draw (V) -- (S2) node[pos=0.3, above] {2};
            \draw (V) -- (S3) node[pos=0.3, below left] {1};
            
            \node[service,dashed,fill=gray!20] (S1) at (3,2) {};
        \end{tikzpicture}
        \caption{After $\service_1$ slashing}
        \label{figure:byzantine_service_example_after}
    \end{subfigure}
    \caption{Illustration of 2 elastic restaking networks stretching stake after 1 allocation is slashed.}
    \label{figure:elastic_restaking_networks_examples}
    \Description{Illustration showing two elastic restaking networks with stake adjustments after one allocation is slashed.}
\end{figure}

Let us consider two examples.
Take the network in Fig.~\ref{figure:expressiveness_example_before} with a Byzantine service~$\service_1$.
After the service causes a mass slashing, the network transitions to the state in Fig.~\ref{figure:expressiveness_example_after}.
The validator loses~$1$ unit of stake while allocations to remaining services remain the same since there's sufficient stake remaining.
Now take the network in Fig.~\ref{figure:byzantine_service_example_before}.
In this case, the validator would lose~$3$ units of stake from~$\service_1$'s slashing, leaving only~$2$ units of stake.
Since a validator cannot allocate more than their remaining stake, their allocation to~$\service_2$ would be reduced to~$2$~(Fig.~\ref{figure:byzantine_service_example_after}).

Following service failures, we check their impact on the security of the resultant network.
In general, the more Byzantine services required to reach an insecure network, the more robust the network.
But, it is necessary to account for the different magnitudes of the services that coexist in the network.
We assume that the adversary can choose up to a weighted fraction~$\maxByzantineServices$ of the services to be Byzantine, where each service is weighted by the ratio of its attack prize to its attack threshold;
this is the stake required to secure the service in isolation.

Some restaking networks may contain what we call a~\emph{base} service:
A service that cannot be made Byzantine.
In the EigenLayer restaking model, Ethereum is a base service.
If Ethereum fails, all EigenLayer's infrastructure collapses, and the restaking network would no longer be functional.
Thus, we restrict the adversary's choice of Byzantine services to only include services that are not base services.
Let~$\baseServices(\networkState)$ be the set of base services in~$\networkState$.
For brevity, we omit this detail in the notation of a restaking network~$\networkState$, and unless stated otherwise, we assume that there are no base services.

Formally, for a restaking network~$\networkState=(\allValidators, \allServices, \allStakes, \allAllocations, \allAttackThresholds, \allAttackPrizes)$, the adversary can choose any subset in
\begin{equation}
    \byzantineSubsets{\maxByzantineServices}
    = \left\{ \byzantineServices \subseteq \allServices \setminus \baseServices(\networkState) \middle| \sum_{\service \in \byzantineServices} \frac{\attackPrize{\service}}{\attackThreshold{\service}} \leq \maxByzantineServices \right\} .
\end{equation}

We are now ready to define the robustness of a network to both adversarial subsidy and Byzantine services.
\begin{definition}[$(\maxByzantineServices, \adversaryBudget)$-robust Network]
    \label{definition:overall_robust_network}
    A network~$\networkState$ is \emph{$(\maxByzantineServices, \adversaryBudget)$-robust} if  for all~$\byzantineServices \in \byzantineSubsets{\maxByzantineServices}$ the network~${\networkState \networkAdvance \byzantineServices}$ is $\adversaryBudget$-budget robust.
\end{definition}


\section{Elastic Restaking Networks Are More Expressive}
\label{section:model:elastic_restaking_networks}


Elastic restaking networks allow validators to allocate only a portion of their stake to a service and simultaneously have more stake allocated to services than their total stake.
We show that elastic networks allow us to express behavior that cannot be simulated in atomic networks.

For example, consider the previous example, illustrated in Fig.~\ref{figure:expressiveness_example_before}, where an elastic restaking network stretches its stake to cover remaining allocations.
The next proposition shows that atomic restaking networks cannot express the behavior in the example, since the allocations to the remaining services are already determined.
This holds even if we allow the validator to partition their stake and treat each portion as an individual validator with their own allocations.
\begin{proposition}
    \label{proposition:expressiveness_of_atomic_networks}
    Let~$x \in \positiveRealNumbers$.
    There exists no atomic restaking network~${\networkState = (\allValidators, \allServices, \allStakes, \allAllocations, \allAttackThresholds, \allAttackPrizes)}$ that satisfies the following conditions:
    (1) The total stake in the network is less than~$x$ times the number of services;
    (2) each service has exactly~$x$ units of stake allocated to it; and
    (3) after any service fails and slashes its allocated stake, each remaining service maintains exactly~$x$ units of stake.
\end{proposition}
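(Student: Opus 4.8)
The plan is to argue by contradiction: assume an atomic network $\networkState = (\allValidators, \allServices, \allStakes, \allAllocations, \allAttackThresholds, \allAttackPrizes)$ with $|\allServices| = n$ satisfies conditions~(1)--(3), and show that its total stake is then at least $n x$, contradicting~(1). Because the network is atomic, each validator~$\validator$ is fully described by the set $A(\validator) = \{\, \service \in \allServices : \allocation{\validator}{\service} = \stake{\validator} \,\}$ of services it entirely backs (it allocates $0$ to every other service), so condition~(2) becomes $\sum_{\validator :\, \service \in A(\validator)} \stake{\validator} = x$ for every $\service \in \allServices$. The parenthetical about partitioning a validator into portions changes nothing: a portion carrying its own stake and its own atomic allocations is simply another validator of an atomic network, so reasoning about arbitrary atomic networks already covers that case.

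The core step is to compute the effect of a single service failure $\byzantineServices = \{\service^\ast\}$ via Eqs.~\eqref{equation:stake_after_byzantine_services_cause_slashing} and~\eqref{equation:allocation_after_byzantine_services_cause_slashing}. In an atomic network a validator with $\service^\ast \in A(\validator)$ loses all its stake, so $\stakeAt{1}{\validator} = 0$ and $\allocationAt{1}{\validator}{\service} = 0$ for every~$\service$; a validator with $\service^\ast \notin A(\validator)$ is untouched, so $\stakeAt{1}{\validator} = \stake{\validator}$ and its allocations are unchanged. Hence the stake allocated to a surviving service $\service \neq \service^\ast$ after the failure equals $\sum_{\validator :\, \service \in A(\validator),\ \service^\ast \notin A(\validator)} \stake{\validator}$, which condition~(3) forces to equal $x$. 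Subtracting this from condition~(2) for $\service$ gives $\sum_{\validator :\, \{\service, \service^\ast\} \subseteq A(\validator)} \stake{\validator} = 0$ for every ordered pair of distinct services $\service, \service^\ast$; since all stakes are strictly positive, no validator backs two distinct services, i.e.\ $|A(\validator)| \le 1$ for every~$\validator$. (The degenerate case $n \le 1$ needs no failure argument: condition~(2) already puts at least $x = nx$ units of stake in the network.) I expect this counting step to be the main obstacle, since it is precisely where the rigidity of atomic allocations collides with the invariance demanded by condition~(3); the bookkeeping must be arranged so the subtraction is legitimate for every pair, and one must be careful that the post-failure allocation adjustment of Eq.~\eqref{equation:allocation_after_byzantine_services_cause_slashing} does indeed zero out exactly the validators of the dropped service and no others.

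To conclude, with $|A(\validator)| \le 1$ for every validator the total allocated stake is $\sum_{\service \in \allServices} \sum_{\validator :\, \service \in A(\validator)} \stake{\validator} = \sum_{\service \in \allServices} x = n x$ by condition~(2), and this sum is bounded above by $\sum_{\validator \in \allValidators} \stake{\validator}$, the network's total stake. Therefore the total stake is at least $n x$, contradicting condition~(1), so no atomic network satisfying (1)--(3) can exist.
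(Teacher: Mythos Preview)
Your proposal is correct and follows essentially the same approach as the paper's proof: both argue by contradiction, use the subtraction of condition~(3) from condition~(2) after a single service failure to deduce that no validator can back two distinct services (the paper phrases this dually via the sets $V_s$ of validators backing~$s$ being pairwise disjoint, whereas you use $|A(v)|\le 1$), and then sum to obtain total stake $\ge nx$. Your explicit handling of the $n\le 1$ case and the remark on validator partitioning are minor additions not spelled out in the paper, but the core argument is identical.
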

The proof is deferred to Appendix~\ref{appendix:proofs_from_section_elastic_restaking_networks_are_more_expressive}.
The proposition yields the following corollary.
\begin{corollary}
    Elastic restaking networks are strictly more expressive than atomic ones.
\end{corollary}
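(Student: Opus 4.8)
The plan is to argue by contradiction, reading off the consequences of atomicity together with the Byzantine-slashing dynamics of~\eqref{equation:stake_after_byzantine_services_cause_slashing}--\eqref{equation:allocation_after_byzantine_services_cause_slashing}. Suppose an atomic network~$\networkState = (\allValidators, \allServices, \allStakes, \allAllocations, \allAttackThresholds, \allAttackPrizes)$ satisfies conditions~(1)--(3). For each service~$\service \in \allServices$ put~$A_\service = \{\validator \in \allValidators : \allocation{\validator}{\service} = \stake{\validator}\}$. Because every stake is strictly positive and the network is atomic, $A_\service$ is precisely the set of validators with a nonzero allocation to~$\service$, and the stake allocated to~$\service$ equals~$\sum_{\validator \in A_\service}\stake{\validator}$; so condition~(2) reads~$\sum_{\validator \in A_\service}\stake{\validator} = x$ for every~$\service$.

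I would then analyze the failure of a single service~$\service$, i.e., the transition~${\networkState \networkAdvance \{\service\}}$. By~\eqref{equation:stake_after_byzantine_services_cause_slashing}, each validator in~$A_\service$ is slashed for its whole stake and is left with~$0$, whereas each validator outside~$A_\service$ keeps its stake and---by~\eqref{equation:allocation_after_byzantine_services_cause_slashing}, since the remaining stake still dominates the unchanged allocations---keeps all of its other allocations, while validators in~$A_\service$ now contribute~$0$ everywhere. Hence for any other service~$\service' \neq \service$ the stake it carries in the post-failure network is exactly~$\sum_{\validator \in A_{\service'} \setminus A_\service}\stake{\validator}$, which condition~(3) forces to equal~$x$.

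Subtracting the two identities gives~$\sum_{\validator \in A_\service \cap A_{\service'}}\stake{\validator} = 0$ for all~$\service \neq \service'$, so by positivity of stakes~$A_\service \cap A_{\service'} = \emptyset$: under conditions~(1)--(3) an atomic network can allocate each validator to at most one service. The family~$\{A_\service\}_{\service \in \allServices}$ is therefore pairwise disjoint, and summing condition~(2) over all services yields a total stake of at least~$|\allServices| \cdot x$, contradicting condition~(1). The argument is short; the only real care needed is the bookkeeping step that makes ``allocated to~$\service$'' and ``fully allocated to~$\service$'' coincide in an atomic network (so that the failure of~$\service$ zeroes out exactly the validators backing it), together with the observation that~\eqref{equation:allocation_after_byzantine_services_cause_slashing} leaves the survivors' allocations intact. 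I would close by noting that the strengthening mentioned just before the proposition---splitting a validator's stake and treating the pieces as separate validators---changes nothing, since the result is still an atomic network to which the same count applies, and that the degenerate case~$|\allServices| \le 1$ follows immediately from condition~(2) alone.
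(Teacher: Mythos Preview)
Your argument is correct and essentially identical to the paper's proof of Proposition~\ref{proposition:expressiveness_of_atomic_networks}: define the set of validators backing each service, use the slashing dynamics to show the post-failure allocation to any other service drops by exactly the overlap, conclude the backing sets are pairwise disjoint, and sum to contradict condition~(1). The bookkeeping you flag (atomic $\Rightarrow$ ``allocated'' coincides with ``fully allocated'', survivors' allocations unchanged by~\eqref{equation:allocation_after_byzantine_services_cause_slashing}) is exactly what the paper handles.

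However, the statement you were asked to prove is the \emph{Corollary}, not Proposition~\ref{proposition:expressiveness_of_atomic_networks}. The paper's proof of the Corollary is a two-line wrapper around the proposition: (i)~every atomic network is trivially an elastic network (all-or-nothing allocations are a special case), and (ii)~the explicit elastic example of Figures~\ref{figure:expressiveness_example_before}--\ref{figure:expressiveness_example_after} satisfies conditions~(1)--(3), which Proposition~\ref{proposition:expressiveness_of_atomic_networks} shows no atomic network can. Your proposal supplies the hard ingredient but omits both wrapping observations, so as written it proves the proposition rather than the corollary. Add one sentence for the trivial inclusion and one pointing to the elastic witness, and you are done.
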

\begin{proof}
    First, any atomic restaking network is trivially an elastic restaking network where validators happen to only make all-or-nothing allocations.
    Second, there exist behaviors possible in elastic networks that are impossible in atomic networks: Figures~\ref{figure:expressiveness_example_before} and~\ref{figure:expressiveness_example_after} show a network where each service maintains equal stake before and after failures, which Proposition~\ref{proposition:expressiveness_of_atomic_networks} proves is impossible for any atomic network.
\end{proof}


\section{Security Analysis}
\label{section:security_analysis}


We first show that in the restaking network security game not attacking is a strong Nash equilibrium, if and only if there are no profitable attacks in the network.
We identify sufficient conditions for security in elastic restaking networks, which are analogous to conditions previously identified by EigenLayer~(\S\ref{section:security_analysis:sufficient_conditions_for_security}).
However, to learn about a network's robustness---which is one of the major goals in this paper---sufficient conditions are not enough;
we must accurately determine whether a network is secure or not with respect to a given adversary.
We prove that in the general case this is NP-hard~(\S\ref{section:security_analysis:searching_for_attacks_is_np_complete}) and solve the symmetric case~(\S\ref{section:security_analysis:symmetric_case}).
We defer all proofs to Appendix~\ref{appendix:proofs_from_section_security_analysis}.


We begin by presenting a computable condition for restaking network security.
\begin{proposition}
    \label{proposition:restaking_network_security}
    A restaking network~$\networkState$ is cryptoeconomically secure if and only if there exists no profitable attack.
\end{proposition}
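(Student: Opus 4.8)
The plan is to prove both directions by contraposition, with everything resting on one bookkeeping identity: for any attack~$\allAttackStakes$, summing~\eqref{equation:validator_utility_security_game} over~$\validator$ and using~$\sum_{\validator \in \allValidators} \validatorPrizeShareSecurityGame{\validator}{\allAttackStakes} = 1$ (immediate from both branches of~\eqref{equation:validator_prize_share_security_game}) together with~\eqref{equation:total_attack_cost} yields~$\sum_{\validator \in \allValidators} \validatorUtilitySecurityGame{\validator}{\allAttackStakes} = \totalAttackPrize{\allAttackStakes} - \attackCost{\allAttackStakes}$. I would also record the trivial observation that whenever~$\attackedServices = \emptyset$ both sides of this identity vanish and, by~\eqref{equation:validator_attack_cost} (a minimum over the empty index set, and stakes are strictly positive), every individual utility equals~$0$; in particular, under~$\allAttackStakesAt{0}$, once we know~$\attackedServicesAt{0} = \emptyset$, every validator's utility is~$0$.

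For \emph{secure~$\Rightarrow$ no profitable attack} I argue the contrapositive: suppose a profitable attack~$\allAttackStakes$ exists, so~$\attackedServices \neq \emptyset$ and~$\attackCost{\allAttackStakes} \leq \totalAttackPrize{\allAttackStakes}$. If~$\attackedServicesAt{0} \neq \emptyset$ the network already violates Definition~\ref{definition:restaking_network_security}; otherwise~$\allAttackStakes \neq \allAttackStakesAt{0}$, and I claim the grand coalition~$P' = \allValidators$ deviating from~$\allAttackStakesAt{0}$ to~$\allAttackStakes$ leaves nobody strictly worse off. A validator with zero cost keeps utility~$0$ (nonnegative prize share times nonnegative prize minus zero), while a validator~$\validator$ with positive cost has~$\validatorUtilitySecurityGame{\validator}{\allAttackStakes} = \validatorAttackCost{\validator}{\allAttackStakes}\bigl(\totalAttackPrize{\allAttackStakes}/\attackCost{\allAttackStakes} - 1\bigr) \geq 0$; and in the degenerate case~$\attackCost{\allAttackStakes} = 0$ the even-split branch of~\eqref{equation:validator_prize_share_security_game} gives everyone~$\totalAttackPrize{\allAttackStakes}/|\allValidators| > 0$ (prizes are strictly positive and~$\attackedServices \neq \emptyset$). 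Hence~$\allAttackStakesAt{0}$ is not a strong Nash equilibrium, so the network is insecure.

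For the converse, \emph{no profitable attack~$\Rightarrow$ secure}, assume the network is insecure. If~$\attackedServicesAt{0} \neq \emptyset$, then~$\allAttackStakesAt{0}$ has cost~$0 \leq \totalAttackPrize{\allAttackStakesAt{0}}$ and is itself a profitable attack. Otherwise~$\allAttackStakesAt{0}$ fails to be a strong Nash equilibrium, so there is a coalition~$P'$ and a reachable profile~$\allAttackStakes$ (all-zero outside~$P'$) with~$\validatorUtilitySecurityGame{\validator}{\allAttackStakes} \geq \validatorUtilitySecurityGame{\validator}{\allAttackStakesAt{0}} = 0$ for every~$\validator \in P'$. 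Validators outside~$P'$ contribute no stake, so by~\eqref{equation:validator_attack_cost} they have zero cost and thus utility~$\validatorPrizeShareSecurityGame{\validator}{\allAttackStakes}\totalAttackPrize{\allAttackStakes} \geq 0$. Summing over all validators and applying the identity gives~$\totalAttackPrize{\allAttackStakes} - \attackCost{\allAttackStakes} = \sum_{\validator} \validatorUtilitySecurityGame{\validator}{\allAttackStakes} \geq 0$, so~$\allAttackStakes$ is a profitable attack provided~$\attackedServices \neq \emptyset$.

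The step I expect to need the most care is the residual case in the converse where the witnessing deviation~$\allAttackStakes$ has~$\attackedServices = \emptyset$: then~$\totalAttackPrize{\allAttackStakes} = 0$ and every cost in~\eqref{equation:validator_attack_cost} is a minimum over the empty set, so every utility is~$0$, exactly as under~$\allAttackStakesAt{0}$. Such a deviation changes no payoff, so it does not substantively witness a broken equilibrium; I would dispatch it by noting that any deviation with~$\attackedServices = \emptyset$ induces the same utility vector as~$\allAttackStakesAt{0}$, hence a genuine violation can always be taken to trigger at least one service. The remaining subtleties are purely local --- keeping track of the two branches of~\eqref{equation:validator_prize_share_security_game} and the~$\min$ in~\eqref{equation:validator_attack_cost}, and using strict positivity of prizes together with~$\attackedServices \neq \emptyset$ to see that the relevant inequalities stay in the right direction even when~$\attackCost{\allAttackStakes} = 0$.
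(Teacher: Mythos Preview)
Your first part (profitable attack $\Rightarrow$ not secure, establishing secure $\Rightarrow$ no profitable attack by contraposition) matches the paper's ``second direction'' essentially line for line. The paper's ``first direction'' reaches the same implication from the other side: it assumes secure, takes any $\allAttackStakes\neq\allAttackStakesAt{0}$, invokes the strong Nash equilibrium hypothesis to obtain a single validator with $\validatorUtilitySecurityGame{\validator}{\allAttackStakes}<0$, and from that one validator's utility formula extracts $\attackCost{\allAttackStakes}>\totalAttackPrize{\allAttackStakes}$. So both of the paper's ``directions'' in fact establish secure $\Rightarrow$ no profitable attack; the converse you attempt in your second part is not separately argued there. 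Your summing identity $\sum_{\validator\in\allValidators}\validatorUtilitySecurityGame{\validator}{\allAttackStakes}=\totalAttackPrize{\allAttackStakes}-\attackCost{\allAttackStakes}$ is a clean global device the paper does not use.

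Your dispatch of the residual case $\attackedServices=\emptyset$ is where the gap lies. You correctly compute that such a deviation leaves every utility at~$0$, identical to~$\allAttackStakesAt{0}$; but under Definition~\ref{definition:strong_nash_equilibrium} as written that is precisely a coalition deviation in which no member is strictly worse off, and therefore it \emph{does} witness failure of the strong Nash equilibrium --- it is not a non-``genuine'' violation you can discard. Nothing guarantees a second, service-triggering witness: in any network with some $\allocation{\validator}{\service}>0$ and $\attackThreshold{\service}\sum_{\validator'\in\allValidators}\allocation{\validator'}{\service}>0$, that validator alone committing $\varepsilon$ below threshold to~$\service$ is already such a deviation, so under the literal definitions $\allAttackStakesAt{0}$ is essentially never a strong Nash equilibrium and ``no profitable attack $\Rightarrow$ secure'' fails outright. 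The paper's direct route sidesteps this only because it never needs to manufacture a profitable attack from an arbitrary equilibrium-breaking witness; the tension sits in the definitions rather than in your reasoning.
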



\subsection{Sufficient Conditions for Security}
\label{section:security_analysis:sufficient_conditions_for_security}




A sufficient condition for a network to be secure was identified by EigenLayer~\cite{eigenlayer2024restaking}
under the (very strong) assumption that misbehaving validators are slashed not only for stake allocated and used for misbehavior, but for all their stake.
Instead of our cost function~(\eqreft{equation:validator_attack_cost}), the cost for a misbehaving validator is their entire stake:
\begin{equation}
    \label{equation:eigenlayer_condition_simplified_assumption}
    \validatorAttackCost{\validator}{\allAttackStakes}
    = \begin{cases}
        \stake{\validator} & \text{if } \sum_{\service \in \attackedServices} \attackStake{\validator}{\service} > 0 ; \\
        0 & \text{otherwise} .
    \end{cases}
\end{equation}

This is the case for atomic restaking networks when only allocation-indivisible attacks are considered, which was the case considered in previous work~\cite{eigenlayer2024restaking,durvasula2024robust}.
We extend this result to include allocation-divisible attacks in elastic restaking networks using the above cost function.

\begin{theorem}[EigenLayer Condition]
    \label{theorem:eigenlayer_condition}
    A network~$\networkState$ is secure if a misbehaving validator is slashed for their stake~(\eqreft{equation:eigenlayer_condition_simplified_assumption}), and for all validators~$\validator \in \allValidators$:
    \begin{equation}
        \sum_{\service \in \allServices} \frac{\allocation{\validator}{\service}}{\sum_{\validator' \in \allValidators} \allocation{\validator'}{\service}} \cdot \frac{\attackPrize{\service}}{\attackThreshold{\service}} < \stake{\validator} .
    \end{equation}
\end{theorem}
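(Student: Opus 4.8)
The plan is to reduce the statement, via Proposition~\ref{proposition:restaking_network_security}, to showing that when the cost of a misbehaving validator is their full stake~(\eqreft{equation:eigenlayer_condition_simplified_assumption}) and the displayed per-validator inequality holds, the network admits no profitable attack. (The same reasoning that proves Proposition~\ref{proposition:restaking_network_security} is insensitive to which of the two cost functions is in force; one also notes that well-definedness of the left-hand side forces $\attackThreshold{\service}>0$ and $\sum_{\validator'\in\allValidators}\allocation{\validator'}{\service}>0$ for every service, which is exactly the condition $\attackedServicesAt{0}=\emptyset$ from the definition of security.) So I would argue by contradiction: suppose $\allAttackStakes$ is profitable, i.e.\ $\attackedServices\neq\emptyset$ and $\attackCost{\allAttackStakes}\le\totalAttackPrize{\allAttackStakes}$.

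First I would isolate the set of \emph{active} attackers $A=\{\validator\in\allValidators:\sum_{\service\in\attackedServices}\attackStake{\validator}{\service}>0\}$. Under~(\eqreft{equation:eigenlayer_condition_simplified_assumption}) each $\validator\in A$ pays exactly $\stake{\validator}$ and each other validator pays $0$, so $\attackCost{\allAttackStakes}=\sum_{\validator\in A}\stake{\validator}$; also, any $\validator\notin A$ has $\attackStake{\validator}{\service}=0$ for every $\service\in\attackedServices$, hence $\sum_{\validator\in A}\attackStake{\validator}{\service}=\sum_{\validator\in\allValidators}\attackStake{\validator}{\service}$ for those services. Now sum the theorem's hypothesis over $\validator\in A$, restrict the inner sum from $\allServices$ down to $\attackedServices$ (discarding only nonnegative terms, so the strict inequality survives), exchange the order of summation, and then use for each $\service\in\attackedServices$ the bound $\sum_{\validator\in A}\allocation{\validator}{\service}\ge\sum_{\validator\in A}\attackStake{\validator}{\service}=\sum_{\validator\in\allValidators}\attackStake{\validator}{\service}\ge\attackThreshold{\service}\sum_{\validator'\in\allValidators}\allocation{\validator'}{\service}$, where the three steps are $\attackStake{\validator}{\service}\le\allocation{\validator}{\service}$, the observation above, and Definition~\ref{definition:attacked_services}. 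This yields
\begin{align*}
  \attackCost{\allAttackStakes}
  &= \sum_{\validator \in A} \stake{\validator}
  \;>\; \sum_{\validator \in A} \sum_{\service \in \allServices} \frac{\allocation{\validator}{\service}}{\sum_{\validator' \in \allValidators} \allocation{\validator'}{\service}} \cdot \frac{\attackPrize{\service}}{\attackThreshold{\service}} \\
  &\geq \sum_{\service \in \attackedServices} \frac{\attackPrize{\service}}{\attackThreshold{\service}} \cdot \frac{\sum_{\validator \in A} \allocation{\validator}{\service}}{\sum_{\validator' \in \allValidators} \allocation{\validator'}{\service}}
  \;\geq\; \sum_{\service \in \attackedServices} \frac{\attackPrize{\service}}{\attackThreshold{\service}} \cdot \attackThreshold{\service}
  = \totalAttackPrize{\allAttackStakes} ,
\end{align*}
contradicting $\attackCost{\allAttackStakes}\le\totalAttackPrize{\allAttackStakes}$. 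Hence no profitable attack exists and $\networkState$ is secure.

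The computation itself is a short double-counting estimate, so the ``hard part'' is really just the bookkeeping around degenerate services: I must dispose up front of services with zero attack threshold or zero total allocated stake (which make the hypothesis ill-posed and, by themselves, make $\networkState$ insecure), and I should record that with those excluded, $\attackedServices\neq\emptyset$ forces $A\neq\emptyset$, so the strict inequality coming from the hypothesis is not vacuous. The only genuinely load-bearing facts are that an active attacker's cost is their entire stake and that an attacked service absorbs at least an $\attackThreshold{\service}$-fraction of the stake allocated to it; the argument is exactly the elastic, allocation-divisible generalization of EigenLayer's original sufficient-condition argument~\cite{eigenlayer2024restaking}.
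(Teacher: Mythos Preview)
Your proposal is correct and follows essentially the same approach as the paper: assume a profitable attack, identify the set of active attackers, sum the per-validator hypothesis over that set, restrict to attacked services, and use the threshold condition from Definition~\ref{definition:attacked_services} together with $\attackStake{\validator}{\service}\le\allocation{\validator}{\service}$ to bound the allocation-fraction sum below by~$\attackThreshold{\service}$, yielding $\attackCost{\allAttackStakes}>\totalAttackPrize{\allAttackStakes}$. Your bookkeeping is arguably a bit tidier than the paper's---you define $A$ via $\sum_{\service\in\attackedServices}\attackStake{\validator}{\service}>0$ (matching the cost function~(\ref{equation:eigenlayer_condition_simplified_assumption}) exactly) rather than $\sum_{\service\in\allServices}$, and you explicitly dispose of the degenerate-service and $A\neq\emptyset$ issues---but the substance is identical.
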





The previous result does not apply in our model, where slashing of misbehaving validators is more nuanced.
For example, consider a network with one validator~$\validator$ with~$\stake{\validator} = 2$ and one service~$\service$ with~$\attackPrize{\service} = 1$ and~$\attackThreshold{\service} = 1$.
If the validator allocates only one unit of stake to the service, i.e.,~$\allocation{\validator}{\service} = 1$, the network is not secure, as the attack~$\allAttackStakes$ where~$\allAttackStakes(\validator, \service) = 1$ is profitable.
Since validator~$\validator$ controls all the stake that secures service~$\service$, and uses their entire allocation to attack it as~$\attackStake{\validator}{\service} = 1$,~${\attackedServices = \{\service\}}$.
And since the cost of the attack is 1 unit of stake, while the prize is also 1 unit, the attack is profitable.
Nonetheless, the condition of Theorem~\ref{theorem:eigenlayer_condition} is satisfied, as
\begin{equation}
    \frac{\allocation{\validator}{\service}}{\sum_{\validator' \in \allValidators} \allocation{\validator'}{\service}} \cdot \frac{\attackPrize{\service}}{\attackThreshold{\service}} = \frac{1}{1} \cdot \frac{1}{1} = 1 < \stake{\validator} = 2 .
\end{equation}

To overcome this issue, we generalize the condition of Theorem~\ref{theorem:eigenlayer_condition}, where networks may be elastic and attacks may be allocation-divisible.
We propose the following sufficient condition for network security.
\begin{proposition}[Generalized EigenLayer Condition]
    \label{proposition:generalized_eigenlayer_condition}
    A network~$\networkState$ is secure if all validators~$\validator \in \allValidators$ should be slashed by less than their total stake:
    \begin{equation}
        \sum_{\service \in \allServices} \frac{\allocation{\validator}{\service}}{\sum_{\validator' \in \allValidators} \allocation{\validator'}{\service}} \cdot \frac{\attackPrize{\service}}{\attackThreshold{\service}} < \stake{\validator} ,
    \end{equation}
    and all services~$\service \in \allServices$ have sufficient stake to cover their prizes:
    \begin{equation}
        \sum_{\validator \in \allValidators} \allocation{\validator}{\service} > \frac{\attackPrize{\service}}{\attackThreshold{\service}} .
    \end{equation}
\end{proposition}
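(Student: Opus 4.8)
The plan is to apply Proposition~\ref{proposition:restaking_network_security}, reducing the claim to showing that the two stated conditions rule out every profitable attack. So fix an attack~$\allAttackStakes$ with~$\attackedServices \neq \emptyset$; I must prove~$\attackCost{\allAttackStakes} > \totalAttackPrize{\allAttackStakes}$, and the inequality must be \emph{strict}, since a break-even attack already qualifies as profitable in Definition~\ref{definition:attack_profitability}. Write~$W_\service := \sum_{\validator' \in \allValidators}\allocation{\validator'}{\service}$. The second hypothesis forces~$\attackThreshold{\service} > 0$ and~$W_\service > 0$ for every service (otherwise~$\attackPrize{\service}/\attackThreshold{\service}$ is infinite or the required strict inequality cannot hold), so I may set
\[
  \kappa_\service := \frac{\attackPrize{\service}/\attackThreshold{\service}}{W_\service} ,
\]
which the second hypothesis makes strictly smaller than~$1$. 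Writing~$t(\validator) := \sum_{\service \in \attackedServices}\attackStake{\validator}{\service}$, Equations~\eqref{equation:validator_attack_cost}--\eqref{equation:total_attack_cost} give~$\attackCost{\allAttackStakes} = \sum_{\validator \in \allValidators}\min(\stake{\validator}, t(\validator))$.

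The heart of the argument is the per-validator estimate
\[
  \sum_{\service \in \attackedServices}\attackStake{\validator}{\service}\,\kappa_\service \;\le\; \min\!\left(\stake{\validator},\, t(\validator)\right) ,
\]
valid for every~$\validator$ and strict whenever~$t(\validator) > 0$. The bound against~$t(\validator)$ is simply~$\kappa_\service < 1$; the bound against~$\stake{\validator}$ uses~$\attackStake{\validator}{\service} \le \allocation{\validator}{\service}$ and enlarges the index set from~$\attackedServices$ to all of~$\allServices$, at which point~$\sum_{\service \in \allServices}\allocation{\validator}{\service}\,\kappa_\service$ is exactly the left-hand side of condition~(1) and hence~$< \stake{\validator}$.

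Summing this estimate over validators, swapping the order of summation, and applying Definition~\ref{definition:attacked_services} (every attacked service has~$\sum_{\validator}\attackStake{\validator}{\service} \ge \attackThreshold{\service}W_\service$) collapses the right-hand side to~$\sum_{\service \in \attackedServices}\kappa_\service\,\attackThreshold{\service}W_\service = \sum_{\service \in \attackedServices}\attackPrize{\service} = \totalAttackPrize{\allAttackStakes}$, yielding~$\attackCost{\allAttackStakes} > \totalAttackPrize{\allAttackStakes}$. By Proposition~\ref{proposition:restaking_network_security} the network is then secure.

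The subtle point is strictness: every inequality in the chain above is individually only weak (the attack may meet each threshold with equality, and~$\kappa_\service\,\attackThreshold{\service}W_\service = \attackPrize{\service}$ exactly), so the strict gap must come from the per-validator estimate being strict for at least one~$\validator$. Such a validator exists because~$\attackedServices \neq \emptyset$: any~$\service \in \attackedServices$ has~$\sum_{\validator}\attackStake{\validator}{\service} \ge \attackThreshold{\service}W_\service > 0$, so some~$\validator$ has~$t(\validator) > 0$. I expect the main obstacle to be the bookkeeping around the~$\min$---showing the single~$\kappa$-weighted sum lies below \emph{both} arguments of the minimum simultaneously, which is precisely where the two hypotheses enter separately, and why (as the example just before the proposition shows) neither condition alone is enough.
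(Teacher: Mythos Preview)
Your proof is correct and follows essentially the same route as the paper: both bound each validator's cost~$\min(\stake{\validator},t(\validator))$ from below by the $\kappa$-weighted sum $\sum_{\service \in \attackedServices}\attackStake{\validator}{\service}\,\kappa_\service$ (condition~(1) handles the~$\stake{\validator}$ arm, condition~(2) the~$t(\validator)$ arm), then sum over validators and invoke the threshold inequality for attacked services to collapse to~$\totalAttackPrize{\allAttackStakes}$. Your explicit treatment of where strictness enters (namely, that some validator must have~$t(\validator)>0$) is in fact slightly more careful than the paper's write-up, which asserts the per-validator bound as strict for every~$\validator$ without singling out that case.
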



\subsection{Searching for Attacks is NP-Complete}
\label{section:security_analysis:searching_for_attacks_is_np_complete}


If a network does not fulfill the sufficient conditions, to check whether it is cryptoeconomically secure we ask whether there exists a profitable attack. 
However, in general, we show this problem is NP-complete, namely: (1) The problem is in NP and (2) there exists a polynomial-time reduction from some known NP-complete problem.

We first prove for allocation-indivisible attacks.

\begin{proposition}
    \label{proposition:allocation_indivisible_attack_np_complete}
    Determining whether a restaking network has a profitable allocation-indivisible attack is NP-complete.
\end{proposition}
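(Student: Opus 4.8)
The plan is to prove both directions of NP-completeness: first show membership in NP, then give a polynomial-time reduction from \textsc{Subset Sum}, for which the paper already reserves notation---elements $\sspElement{1},\dots,\sspElement{\sspElementCount}$ (which we may take to be positive integers) with sum $\sspElementSum$ and target $\sspTarget$, asking whether some $I\subseteq\{1,\dots,\sspElementCount\}$ satisfies $\sum_{i\in I}\sspElement{i}=\sspTarget$. Since an atomic restaking network under allocation-indivisible attacks is a special case of the problem, it suffices to produce an atomic instance, which only strengthens the hardness claim.

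For membership in NP, note that an allocation-indivisible attack is described by one bit per validator--service pair, recording whether validator $\validator$ commits its full allocation $\allocation{\validator}{\service}$ to service $\service$; this witness has size $O(|\allValidators|\cdot|\allServices|)$. Given it, one computes in polynomial time the induced attack $\allAttackStakes$, the attacked set $\attackedServices$ (Def.~\ref{definition:attacked_services}), the cost $\attackCost{\allAttackStakes}$ via \eqreft{equation:validator_attack_cost} and \eqreft{equation:total_attack_cost}, and the prize $\totalAttackPrize{\allAttackStakes}$ via \eqreft{equation:total_attack_prize}, then verifies the two conditions of Def.~\ref{definition:attack_profitability}, namely $\attackedServices\neq\emptyset$ and $\attackCost{\allAttackStakes}\le\totalAttackPrize{\allAttackStakes}$.

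For NP-hardness, given a \textsc{Subset Sum} instance I would assume WLOG $0<\sspTarget\le\sspElementSum$ (if $\sspTarget>\sspElementSum$ it is a trivial no-instance, and if $\sspTarget\le 0$ it is decided directly) and build $\networkState$ with validators $\validator_1,\dots,\validator_{\sspElementCount}$ where $\stake{\validator_i}=\sspElement{i}$, a single service $\service$ with $\allocation{\validator_i}{\service}=\sspElement{i}$ for every $i$ (so the network is atomic and each validator's allocation equals its stake), attack threshold $\attackThreshold{\service}=\sspTarget/\sspElementSum\in(0,1]$, and attack prize $\attackPrize{\service}=\sspTarget>0$; this is polynomial in the input length. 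An allocation-indivisible attack on $\networkState$ corresponds to the set $I$ of validators that attack $\service$ with their full allocation. Because there is one service, $\attackedServices=\{\service\}$ exactly when $\sum_{i\in I}\sspElement{i}\ge\attackThreshold{\service}\cdot\sspElementSum=\sspTarget$, and when $\service$ is attacked each attacking validator pays $\min(\stake{\validator_i},\allocation{\validator_i}{\service})=\sspElement{i}$, so $\attackCost{\allAttackStakes}=\sum_{i\in I}\sspElement{i}$ and $\totalAttackPrize{\allAttackStakes}=\sspTarget$. Hence the attack is profitable iff $\sspTarget\le\sum_{i\in I}\sspElement{i}\le\sspTarget$, i.e. iff $\sum_{i\in I}\sspElement{i}=\sspTarget$ (and the empty coalition gives $\attackedServices=\emptyset$, which is not profitable, consistent with the empty subset not summing to $\sspTarget>0$). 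Therefore $\networkState$ admits a profitable allocation-indivisible attack iff the \textsc{Subset Sum} instance is a yes-instance, and together with membership in NP this yields NP-completeness.

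The construction is short, so the delicate point is arranging the profitability test to bite exactly: choosing $\attackPrize{\service}$ and $\attackThreshold{\service}$ so that ``$\service$ is attacked'' pins the coalition's stake sum from below (to $\ge\sspTarget$) while ``profitable'' pins it from above (to $\le\sspTarget$), and verifying that the stake cap $\min(\cdot,\cdot)$ in \eqreft{equation:validator_attack_cost} is never active in this instance. A secondary chore is dispatching the degenerate regimes of $\sspTarget$ and the empty-coalition case cleanly so the equivalence has no exceptions.
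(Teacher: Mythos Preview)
Your proposal is correct and follows essentially the same approach as the paper: both establish membership in NP via direct verification of Definition~\ref{definition:attack_profitability}, and both reduce from \textsc{Subset Sum} by building an atomic network with a single service, one validator per element with stake and allocation $\sspElement{i}$, threshold $\sspTarget/\sspElementSum$, and prize $\sspTarget$, then argue that the attack-feasibility condition forces $\sum_{i\in I}\sspElement{i}\ge\sspTarget$ while profitability forces $\sum_{i\in I}\sspElement{i}\le\sspTarget$. Your handling of the degenerate regimes of $\sspTarget$ and of the cap in \eqreft{equation:validator_attack_cost} matches the paper's treatment.
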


At first glance, it may seem that allowing for allocation-divisible attacks makes the problem easier, similarly to how searching for a Subset Sum problem would not be hard if we were allowed to take fractional values of the elements.
And indeed, when we allow allocation-divisible attacks, the previous reduction does not work, as all validators can allocate~$\frac{\sspTarget}{\sspElementSum}$ of their stake to each service, to get a profitable attack.

But, perhaps surprisingly, even when we allow for allocation-divisible attacks, the problem is NP-complete.
In the following proposition, we show a reduction from the Subset Sum problem to the problem of searching for an allocation-divisible attack.

\begin{proposition}
    \label{proposition:allocation_divisible_attack_np_complete}
    Determining whether a retaking network has a profitable allocation-divisible attack is NP-complete.
\end{proposition}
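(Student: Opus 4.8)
The plan is to show the problem is in NP and then give a polynomial-time reduction from Subset Sum, being careful that the reduction is robust to the extra freedom that allocation-divisible attacks give the adversary. Membership in NP is routine: given an attack $\allAttackStakes$ as a witness (which we may assume is encoded with polynomially many bits, since at an optimal profitable attack each attacked service's dedicated stake can be taken to exactly meet its threshold), we compute $\attackedServices$, the costs $\validatorAttackCost{\validator}{\allAttackStakes}$, the total cost $\attackCost{\allAttackStakes}$, and the prize $\totalAttackPrize{\allAttackStakes}$ in polynomial time, and check $\attackedServices \neq \emptyset$ and $\attackCost{\allAttackStakes} \leq \totalAttackPrize{\allAttackStakes}$. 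So the work is the hardness direction.

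For hardness I would start from a Subset Sum instance: elements $\sspElement{1}, \dots, \sspElement{\sspElementCount}$ with sum $\sspElementSum$ and target $\sspTarget$, asking whether some subset sums to exactly $\sspTarget$. As noted in the excerpt, the naive per-service encoding used for the allocation-indivisible case breaks because validators can fractionally hit each threshold and assemble a profitable attack regardless of whether an exact subset exists; the key is to design a gadget where the \emph{only} way to break even is to have each service either fully attacked or fully untouched, and where the attacked services' parameters force an exact-sum condition. The natural mechanism is the $\min(\cdot)$ cap in the cost function~(\eqreft{equation:validator_attack_cost}): make a single validator (or a small fixed set) whose stake equals $\sspElementSum$ and who allocates $\sspElement{i}$ to a service $\service_i$ representing element $i$, with attack threshold $\attackThresholdAt{}{\service_i}=1$ so that attacking $\service_i$ requires committing the full $\sspElement{i}$. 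Add an extra ``jackpot'' service $\service_0$ with threshold $1$ allocated some amount $t$ by a second validator whose stake is exactly $t$, so that attacking $\service_0$ costs exactly $t$ and yields a large prize, but $\service_0$ is only worth attacking in combination with a subset of the $\service_i$'s that, together with $t$, lets the cap bite at exactly the right point. Tuning the prizes so that the total prize equals the total cost if and only if the committed $\sspElement{i}$'s sum to exactly $\sspTarget$ is the crux; one typically sets $\attackPrizeAt{}{\service_i}=\sspElement{i}$ plus a bonus that is only unlocked (via a shared validator whose stake cap is exceeded by exactly $\sspTarget$) when the attacked subset hits the target.

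The main obstacle — and where I would spend the most care — is ruling out ``cheating'' attacks that exploit fractional commitments or attack strange combinations of services to break even without solving Subset Sum. I expect to need a structural lemma characterizing profitable attacks in the gadget: argue that at any profitable attack we may assume each service is attacked with exactly its threshold fraction (committing more is wasteful), that for services with threshold $1$ this means all-or-nothing, and that the cost function's cap can only produce savings when a validator's allocations to the attacked services strictly exceed its stake — pinning down exactly which subsets yield $\attackCost{\allAttackStakes}\le\totalAttackPrize{\allAttackStakes}$. Then both directions follow: an exact subset gives a break-even attack by committing precisely those $\sspElement{i}$'s together with $\service_0$, and conversely any profitable attack in the gadget forces the committed elements to sum to $\sspTarget$. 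If the single-validator cap turns out too weak to force integrality, a fallback is to layer two caps (two validators sharing the element-services in an overlapping pattern) so that break-even requires \emph{two} linked exact-sum conditions, which rigidifies the solution set further; I would present whichever of these is cleanest once the arithmetic is worked out.
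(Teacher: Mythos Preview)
Your intuitions are sound—forcing threshold $1$ on the element services to get all-or-nothing commitments, and exploiting the $\min$ cap in~\eqreft{equation:validator_attack_cost} to create the nonlinearity that pins down an exact sum—but the gadget you describe does not actually create the needed linkage. The cap is \emph{per validator}, and in your construction the validator who allocates to the element services $\service_i$ is disjoint from the validator who allocates to the jackpot $\service_0$; hence attacking $\service_0$ never interacts with the element-validator's cap, and attacking the $\service_i$'s never interacts with the jackpot-validator's cap. Moreover, with a single element-validator of stake $\sspElementSum$ allocating exactly $\sspElement{i}$ to each $\service_i$, the sum of its allocations equals its stake, so the cap never bites on any proper subset: the cost of attacking $I\subseteq\{1,\dots,\sspElementCount\}$ is simply $\sum_{i\in I}\sspElement{i}$, independently of $\service_0$. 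Without an overlap \emph{inside some validator} between the element services and the jackpot, nothing in the cost/prize arithmetic can distinguish $\sum_{i\in I}\sspElement{i}=\sspTarget$ from nearby values; your fallback of ``layering two caps'' inherits the same issue unless you specify how the layers overlap.

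The paper's construction supplies exactly this overlap: it uses $\sspElementCount$ validators (one per element, $\stake{\validator_i}=\sspElement{i}$) and $\sspElementCount+1$ services. Validator $\validator_i$ allocates its full stake $\sspElement{i}$ to \emph{both} a personal service $\service_i$ (threshold $1$, prize $\sspElement{i}/2$) and a shared target $\service_{\sspElementCount+1}$ (threshold $\sspTarget/\sspElementSum$, prize $\sspTarget/2$). Now $\validator_i$'s cap bites precisely when it attacks both its personal service and the shared one, making its contribution to $\service_{\sspElementCount+1}$ ``free'' once $\service_i$ is attacked. Since each personal service is strictly unprofitable alone (cost $\sspElement{i}$, prize $\sspElement{i}/2$), and any stake committed to $\service_{\sspElementCount+1}$ by a validator \emph{not} attacking its personal service is pure added cost, a short calculation shows profitability forces $\sum_{j}\sspElement{i_j}\leq\sspTarget$ while feasibility of $\service_{\sspElementCount+1}$ forces $\sum_{j}\sspElement{i_j}\geq\sspTarget$. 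The per-validator overlap between a personal service and the shared target is the idea your sketch is missing.
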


Since a network that has no profitable attack is secure, the complement of the problem we considered is verifying the security of a network;
we immediately get the following corollary.
\begin{corollary}
    Determining whether an~\emph{elastic} restaking network is secure is co-NP-complete.
\end{corollary}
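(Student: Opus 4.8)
The plan is to obtain the corollary as a direct consequence of the two preceding results, so the argument is short. First I would invoke Proposition~\ref{proposition:restaking_network_security}: a restaking network is cryptoeconomically secure if and only if it has no profitable attack. Hence, within the class of (encodings of) elastic restaking networks, the language of \emph{insecure} networks is exactly the language of networks admitting a profitable allocation-divisible attack, and the language of \emph{secure} networks is its complement. This reduces the corollary to a statement about the complexity of the "profitable attack exists" problem, which we already know.

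Next I would appeal to Proposition~\ref{proposition:allocation_divisible_attack_np_complete}: deciding whether an elastic restaking network has a profitable allocation-divisible attack is NP-complete. In particular, this language is in NP --- a profitable attack~$\allAttackStakes$ together with its induced set~$\attackedServices$ is a polynomially checkable certificate, using the fact (implicit in the NP-membership half of that proposition) that whenever a profitable attack exists there is one whose entries~$\attackStake{\validator}{\service}$ are rationals of polynomially bounded bit-length: at an extreme attack each~$\attackStake{\validator}{\service}$ may be taken to be~$0$,~$\allocation{\validator}{\service}$, or exactly the amount needed to push a service to its threshold. It is also NP-hard via the polynomial-time many-one reduction from Subset Sum given there; and crucially, the instances produced by that reduction are restaking networks in our (elastic) model, so the hardness applies to the class named in the corollary.

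The conclusion then follows from elementary complexity theory. Since the complement of a language in NP lies in co-NP, the "is secure" language lies in co-NP. Since the complement of an NP-hard language (under polynomial-time many-one reductions) is co-NP-hard, the "is secure" language is co-NP-hard. Together these give co-NP-completeness, which is exactly the claim.

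I expect essentially no obstacle: the corollary is a formal consequence of Propositions~\ref{proposition:restaking_network_security} and~\ref{proposition:allocation_divisible_attack_np_complete}, and the bulk of the work --- both the NP-membership (succinct rational certificates for profitable attacks) and the NP-hardness reduction --- is discharged inside those propositions. The only point deserving a sentence of care in the write-up is to note explicitly that taking complements swaps NP-membership for co-NP-membership and NP-hardness for co-NP-hardness, and that the reduction's output instances are indeed elastic restaking networks, so that the stated restriction to elastic networks in the corollary does not weaken the hardness claim.
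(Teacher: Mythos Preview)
Your proposal is correct and follows the same approach as the paper: the paper presents this as an immediate corollary of Proposition~\ref{proposition:restaking_network_security} (security $\Leftrightarrow$ no profitable attack) and Proposition~\ref{proposition:allocation_divisible_attack_np_complete} (existence of a profitable attack is NP-complete), with the complement yielding co-NP-completeness. Your write-up is in fact more detailed than the paper's one-line justification, and your observation that the reduction's output instances lie in the elastic class is exactly the point the paper makes immediately after the corollary (noting the reductions are to atomic networks, hence \emph{a fortiori} elastic ones).
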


Both reductions we show are in fact to an~\emph{atomic} restaking network.
So, in addition, we get that the problem of searching for attacks and the complementary problem of verifying security cannot be eased by considering atomic restaking networks alone.


\begin{figure*}[t]
    \begin{subfigure}[b]{0.32\textwidth}
        \centering
        \begin{tikzpicture}
        \pgfplotsset{height=0.6\textwidth,width=1\textwidth}
        \begin{axis}[
            xlabel={Restaking Degree},
            ylabel={Minimum Stake},
            xmin=1,
            xmax=10,
            ymin=1.75,
            ymax=3.25
        ]
        \addplot+[fill opacity=0.9, draw opacity=0.9, mark=none]
        table [col sep=comma, x=restaking_degree, y=min_stake_threshold_0.33] {data/figure3_n10.csv};
        
        \addplot+[fill opacity=0.9, draw opacity=0.9, mark=none]
        table [col sep=comma, x=restaking_degree, y=min_stake_threshold_0.50] {data/figure3_n10.csv};
        \end{axis}
        \end{tikzpicture}
        \caption{$|\allValidators| = |\allServices| = 10$.}
        \label{figure:security_analysis_sample_networks_n10}
        \Description{Graph showing stake required for cryptoeconomic security for different restaking degrees with 10 validators and services.}
    \end{subfigure}
    ~
    \begin{subfigure}[b]{0.32\textwidth}
        \centering
        \begin{tikzpicture}
        \pgfplotsset{height=0.6\textwidth,width=1\textwidth}
        \begin{axis}[
            xlabel={Restaking Degree},
            ylabel={Minimum Stake},
            xmin=1,
            xmax=11,
            ymin=1.75,
            ymax=3.25
        ]
        \addplot+[fill opacity=0.9, draw opacity=0.9, mark=none]
        table [col sep=comma, x=restaking_degree, y=min_stake_threshold_0.33] {data/figure3_n11.csv};
        
        \addplot+[fill opacity=0.9, draw opacity=0.9, mark=none]
        table [col sep=comma, x=restaking_degree, y=min_stake_threshold_0.50] {data/figure3_n11.csv};
        \end{axis}
        \end{tikzpicture}
        \caption{$|\allValidators| = |\allServices| = 11$.}
        \label{figure:security_analysis_sample_networks_n11}
        \Description{Graph showing stake required for cryptoeconomic security for different restaking degrees with 11 validators and services.}
    \end{subfigure}
    ~
    \begin{subfigure}[b]{0.32\textwidth}
        \centering
        \begin{tikzpicture}
        \pgfplotsset{height=0.6\textwidth,width=1\textwidth}
        \begin{axis}[
            legend style={at={(0.98,0.5)},anchor=east},
            legend image post style={line width=1.25pt},
            xlabel={Restaking Degree},
            ylabel={Minimum Stake},
            xmin=1,
            xmax=12,
            ymin=1.75,
            ymax=3.25
        ]
        \addplot+[fill opacity=0.9, draw opacity=0.9, mark=none]
        table [col sep=comma, x=restaking_degree, y=min_stake_threshold_0.33] {data/figure3_n12.csv};
        \addlegendentry{$\symmetricAttackThreshold = 0.33$}
        
        \addplot+[fill opacity=0.9, draw opacity=0.9, mark=none]
        table [col sep=comma, x=restaking_degree, y=min_stake_threshold_0.50] {data/figure3_n12.csv};
        \addlegendentry{$\symmetricAttackThreshold = 0.50$}
        \end{axis}
        \end{tikzpicture}
        \caption{$|\allValidators| = |\allServices| = 12$.}
        \label{figure:security_analysis_sample_networks_n12}
        \Description{Graph showing stake required for cryptoeconomic security for different restaking degrees with 12 validators and services.}
    \end{subfigure}
    \caption{Stake required for cryptoeconomic security for different restaking degrees.}
    \label{figure:security_analysis_sample_networks}
    \Description{Graphs showing stake required for cryptoeconomic security for different restaking degrees across different numbers of validators and services.}
\end{figure*}


\subsection{The Symmetric Case}
\label{section:security_analysis:symmetric_case}


Given that searching for attacks is NP-complete in the general case, we now focus on \emph{symmetric} networks where the problem becomes more tractable.
This restriction enables efficient analysis while preserving the fundamental mechanisms that determine whether restaking networks are secure.

\begin{definition}[Symmetric Network]
    A restaking network~$\networkState=(\allValidators, \allServices, \allStakes, \allAllocations, \allAttackThresholds, \allAttackPrizes)$ is \emph{symmetric} if:
    (1) All validators have equal stake, that is, for any two validators~$\validator_1, \validator_2 \in \allValidators$,~${\stake{\validator_1} = \stake{\validator_2}}$;
    (2) allocations of all validators to each service are equal, that is, for any two validators~$\validator_1, \validator_2 \in \allValidators$ and any service~$\service \in \allServices$,~$\allocation{\validator_1}{\service} = \allocation{\validator_2}{\service}$; and
    (3) all attack thresholds are equal, that is, for any two services~$\service_1, \service_2 \in \allServices$,~$\attackThreshold{\service_1} = \attackThreshold{\service_2}$.
\end{definition}
For brevity, in symmetric networks, we omit validators from the notation of the stake~$\allStakes$ and allocations to services~$\allAllocations(\service)$, and omit services from the notation of the attack thresholds~$\allAttackThresholds$.

We show a two-step reduction from an attack in a symmetric network to another simpler attack with the same prize but a (non-strictly) lower cost.
This allows us to restrict the search space of profitable attacks to those of the simpler form.
The first step is that any attack can be \emph{tightened} to use only the stake that is necessary to achieve the threshold~$\symmetricAttackThreshold$.
\begin{definition}[Tight Attack]
    Consider a symmetric restaking network~${\networkState = (\allValidators, \allServices, \allStakes, \allAllocations, \allAttackThresholds, \allAttackPrizes)}$.
    An attack~$\allAttackStakes$ is \emph{tight} if for all services~$\service \in \attackedServices$
    \begin{equation}
        \label{equation:tight_attack}
        \sum_{\validator \in \allValidators} \attackStake{\validator}{\service} = \symmetricAttackThreshold \cdot | \allValidators | \cdot \symmetricAllocation{\service} .
    \end{equation}
\end{definition}
Second, a tight attack can be \emph{consolidated} by shifting attacking stake from validators with less stake to validators with more stake until it is impossible to shift more.
\begin{definition}[Consolidated Attack]
    \label{definition:consolidated_attack}
    Consider a symmetric restaking network~${\networkState = (\allValidators, \allServices, \allStakes, \allAllocations, \allAttackThresholds, \allAttackPrizes)}$.
    Let~$\floor{\symmetricAttackThreshold |\allValidators|}$ be the integer part of~$\symmetricAttackThreshold |\allValidators|$.
    An attack~$\allAttackStakes$ is \emph{consolidated} if for all services~$\service \in \attackedServices$ it holds that for all~$i \in \left\{ 1, \ldots, \floor{\symmetricAttackThreshold |\allValidators|} \right\}$
    \begin{equation}
        \label{equation:consolidated_attack}
        \attackStake{\validator_i}{\service} = \begin{cases}
            \symmetricAllocation{\service} & \text{if } i \leq \floor{\symmetricAttackThreshold |\allValidators|} ; \\
            \left( \symmetricAttackThreshold |\allValidators| - \floor{\symmetricAttackThreshold |\allValidators|} \right) \symmetricAllocation{\service} & \text{if } i = \floor{\symmetricAttackThreshold |\allValidators|} + 1 ; \\
            0 & \text{otherwise} .
        \end{cases}
    \end{equation}
\end{definition}

Note that for each subset of services~$\allServicesAt{c}$, there is exactly one consolidated attack~$\allAttackStakesAt{c}$ for which~${\allServicesAt{c} = \attackedServicesAt{c}}$, that is, it attacks exactly the services in~$\allServicesAt{c}$. 
We can efficiently calculate the cost of~$\allAttackStakesAt{c}$ using the following proposition.
\begin{proposition}
    \label{proposition:consolidated_attack_cost}
    Let~${\networkState = (\allValidators, \allServices, \allStakes, \allAllocations, \allAttackThresholds, \allAttackPrizes)}$ be a symmetric restaking network, and let~$\allAttackStakesAt{c}$ be a consolidated attack on services~$\attackedServicesAt{c}$.
    Then, the cost of~$\allAttackStakesAt{c}$,~$\attackCost{\allAttackStakesAt{c}}$, equals
    \begin{equation}
        \floor{\symmetricAttackThreshold | \allValidators |} \cdot \min \left( \symmetricStake, \sum_{\service \in \attackedServicesAt{c}} \symmetricAllocation{\service} \right) + \min \left( \symmetricStake, \left( \symmetricAttackThreshold | \allValidators | - \floor{\symmetricAttackThreshold | \allValidators |} \right) \sum_{\service \in \attackedServicesAt{c}} \symmetricAllocation{\service} \right) .
    \end{equation}
\end{proposition}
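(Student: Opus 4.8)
The plan is a direct evaluation of the attack cost from its definition, exploiting that in a consolidated attack each validator's total attacking stake across the attacked services takes one of only three values. As a preliminary step I would note that $\allAttackStakesAt{c}$ places nonzero attacking stake only on services in $\attackedServicesAt{c}$, so for every validator $\validator$ the sum $\sum_{\service \in \attackedServices} \attackStake{\validator}{\service}$ appearing in \eqref{equation:validator_attack_cost} equals $\sum_{\service \in \attackedServicesAt{c}} \attackStakeAt{c}{\validator}{\service}$, irrespective of the exact set $\attackedServices$ of services the attack attacks; and summing the branches of \eqref{equation:consolidated_attack} over all validators, together with symmetry, confirms that each service of $\attackedServicesAt{c}$ indeed meets its threshold (with equality, since $\floor{\symmetricAttackThreshold|\allValidators|}\symmetricAllocation{\service} + (\symmetricAttackThreshold|\allValidators| - \floor{\symmetricAttackThreshold|\allValidators|})\symmetricAllocation{\service} = \symmetricAttackThreshold|\allValidators|\symmetricAllocation{\service} = \symmetricAttackThreshold \sum_{\validator \in \allValidators} \symmetricAllocation{\service}$).

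Next I would compute $\validatorAttackCost{\validator_i}{\allAttackStakesAt{c}} = \min(\symmetricStake, \sum_{\service \in \attackedServicesAt{c}} \attackStakeAt{c}{\validator_i}{\service})$ by cases on the validator index $i$, reading off the three branches of \eqref{equation:consolidated_attack}: for $1 \le i \le \floor{\symmetricAttackThreshold|\allValidators|}$ the inner sum is $\sum_{\service \in \attackedServicesAt{c}} \symmetricAllocation{\service}$; for $i = \floor{\symmetricAttackThreshold|\allValidators|} + 1$ it is $(\symmetricAttackThreshold|\allValidators| - \floor{\symmetricAttackThreshold|\allValidators|}) \sum_{\service \in \attackedServicesAt{c}} \symmetricAllocation{\service}$; and for $i > \floor{\symmetricAttackThreshold|\allValidators|} + 1$ it is $0$, so that validator's cost is $\min(\symmetricStake, 0) = 0$. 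Plugging these into \eqref{equation:total_attack_cost} --- the first group contributes $\floor{\symmetricAttackThreshold|\allValidators|}$ copies of $\min(\symmetricStake, \sum_{\service \in \attackedServicesAt{c}} \symmetricAllocation{\service})$, the single boundary validator $\validator_{\floor{\symmetricAttackThreshold|\allValidators|}+1}$ contributes $\min(\symmetricStake, (\symmetricAttackThreshold|\allValidators| - \floor{\symmetricAttackThreshold|\allValidators|}) \sum_{\service \in \attackedServicesAt{c}} \symmetricAllocation{\service})$, and the remaining validators contribute $0$ --- which is exactly the claimed expression.

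Finally I would dispatch the boundary cases so the statement holds literally: when $\symmetricAttackThreshold|\allValidators|$ is an integer, $\floor{\symmetricAttackThreshold|\allValidators|} = \symmetricAttackThreshold|\allValidators|$, the fractional coefficient is $0$, and the second $\min$ term is $\min(\symmetricStake, 0) = 0$ --- consistent with $\validator_{\floor{\symmetricAttackThreshold|\allValidators|}+1}$ contributing nothing, and covering the sub-case $\floor{\symmetricAttackThreshold|\allValidators|} = |\allValidators|$ where that validator does not exist. I do not anticipate a real obstacle: the argument is bookkeeping, and the only points that need care are the support observation that reduces each validator's cost sum to a sum over $\attackedServicesAt{c}$, and keeping the boundary index $\floor{\symmetricAttackThreshold|\allValidators|}+1$ consistent with Definition~\ref{definition:consolidated_attack}.
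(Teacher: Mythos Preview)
Your proposal is correct and follows essentially the same approach as the paper's proof: compute each validator's cost by the three index cases induced by Definition~\ref{definition:consolidated_attack}, then sum via \eqref{equation:total_attack_cost}. Your treatment is slightly more careful than the paper's (you explicitly justify restricting the cost sum to $\attackedServicesAt{c}$ and dispatch the integer-$\symmetricAttackThreshold|\allValidators|$ boundary), but the argument is otherwise the same bookkeeping.
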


The following proposition performs the two-step reduction on profitable attacks.
\begin{proposition}
    \label{proposition:profitable_attack_consolidated}
    If there is a profitable attack in a symmetric network, then there is a profitable attack that is consolidated.
\end{proposition}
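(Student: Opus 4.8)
The plan is to prove the statement by two successive reductions mirroring the definitions of a tight and of a consolidated attack: from a profitable attack I would first build a profitable tight attack, and then from it a profitable consolidated attack. At each step I would keep the set of attacked services unchanged (so the prize, Eq.~\eqref{equation:total_attack_prize}, is unchanged) while not increasing the total cost (Eq.~\eqref{equation:total_attack_cost}); profitability (Definition~\ref{definition:attack_profitability}) then carries over automatically.

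\emph{Step 1 (tighten).} Given a profitable attack $\allAttackStakes$, so $\attackedServices \neq \emptyset$ and $\attackCost{\allAttackStakes} \le \totalAttackPrize{\allAttackStakes}$, I would define $\allAttackStakes'$ by scaling, on each attacked service $\service \in \attackedServices$, every validator's contribution by the common factor $\lambda_\service = \symmetricAttackThreshold \, |\allValidators| \, \symmetricAllocation{\service} / \sum_{\validator \in \allValidators} \attackStake{\validator}{\service} \le 1$ (with $\lambda_\service := 0$ when the denominator is $0$, which forces $\symmetricAllocation{\service} = 0$), and by zeroing the attack on every $\service \notin \attackedServices$. Since $\lambda_\service \le 1$ this is still a legal attack; on each $\service \in \attackedServices$ the total attacking stake becomes exactly $\symmetricAttackThreshold |\allValidators| \symmetricAllocation{\service}$, so $\service$ stays attacked, and a zeroed service has $\symmetricAllocation{\service} > 0$ (a zero-allocation service is attacked under every attack, so it would already lie in $\attackedServices$), so it stays non-attacked. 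Hence $\allAttackStakes'$ attacks exactly $\attackedServices$, giving $\totalAttackPrize{\allAttackStakes'} = \totalAttackPrize{\allAttackStakes}$; and since every entry only decreased and $\min(\symmetricStake, \cdot)$ is nondecreasing (Eq.~\eqref{equation:validator_attack_cost}), $\attackCost{\allAttackStakes'} \le \attackCost{\allAttackStakes}$. So $\allAttackStakes'$ is a profitable tight attack.

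\emph{Step 2 (consolidate).} Let $\allServicesAt{c} := \attackedServices$ be the attacked set of $\allAttackStakes'$ and let $\allAttackStakesAt{c}$ be the consolidated attack with $\attackedServicesAt{c} = \allServicesAt{c}$, extended by $0$ on the remaining services; as in Step 1 it attacks exactly $\allServicesAt{c}$, so $\totalAttackPrize{\allAttackStakesAt{c}} = \totalAttackPrize{\allAttackStakes'}$ and it suffices to show $\attackCost{\allAttackStakesAt{c}} \le \attackCost{\allAttackStakes'}$. Writing $W = \sum_{\service \in \allServicesAt{c}} \symmetricAllocation{\service}$ and $t_\validator = \sum_{\service \in \allServicesAt{c}} \attackStake{\validator}{\service} \in [0, W]$, tightness gives $\sum_{\validator \in \allValidators} t_\validator = \symmetricAttackThreshold |\allValidators| W$ and Eq.~\eqref{equation:validator_attack_cost} gives $\attackCost{\allAttackStakes'} = \sum_{\validator \in \allValidators} \min(\symmetricStake, t_\validator)$. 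I would then use that $\vec t \mapsto \sum_{\validator \in \allValidators} \min(\symmetricStake, t_\validator)$ is concave, so its minimum over the polytope $\{\vec t : 0 \le t_\validator \le W, \ \sum_{\validator \in \allValidators} t_\validator = \symmetricAttackThreshold |\allValidators| W\}$ is attained at a vertex, every vertex having all but at most one coordinate in $\{0, W\}$; the fixed sum then forces exactly $\floor{\symmetricAttackThreshold |\allValidators|}$ coordinates equal to $W$, one equal to $(\symmetricAttackThreshold |\allValidators| - \floor{\symmetricAttackThreshold |\allValidators|})W$, and the rest $0$, which is precisely the row-sum profile of $\allAttackStakesAt{c}$, whose cost equals the expression in Proposition~\ref{proposition:consolidated_attack_cost}. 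Hence $\attackCost{\allAttackStakesAt{c}} \le \attackCost{\allAttackStakes'} \le \totalAttackPrize{\allAttackStakes'} = \totalAttackPrize{\allAttackStakesAt{c}}$, so $\allAttackStakesAt{c}$ is profitable and, by construction, consolidated.

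The monotonicity in Step 1 and the bookkeeping that the attacked set is preserved throughout are routine. I expect the main obstacle to be the minimization in Step 2 --- establishing that the staircase profile of $\allAttackStakesAt{c}$ minimizes $\sum_{\validator \in \allValidators} \min(\symmetricStake, t_\validator)$ over the transportation polytope. I would argue this either via the concave-minimization-attained-at-a-vertex fact above, or, more in the spirit of the informal description preceding Definition~\ref{definition:consolidated_attack}, via a direct exchange argument: repeatedly shift a little attacking stake on some service from a validator with a smaller $t_\validator$ to one with a larger $t_\validator$, which preserves every column sum (hence $\attackedServices$) and weakly lowers the cost because $x \mapsto \min(\symmetricStake, x)$ has non-increasing slope, and then show this process terminates at $\allAttackStakesAt{c}$. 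Either way, the degenerate cases $\symmetricAttackThreshold |\allValidators| \in \integers$ and $W \le \symmetricStake$, where the cost does not depend on the redistribution at all, should be dispatched separately since there the inequality is immediate.
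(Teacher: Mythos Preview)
Your proof is correct and follows the same two-step skeleton as the paper: first tighten, then consolidate, preserving the attacked set (hence the prize) and not increasing the cost at each step. Step~1 is essentially identical to the paper's Proposition~\ref{proposition:tight_attack}; you scale uniformly on each attacked service where the paper subtracts the excess iteratively, but the outcome and the justification are the same.

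Step~2 is where you genuinely diverge. The paper proves a pairwise exchange lemma (Lemma~\ref{lemma:attack_stake_shifting}): shifting attacking stake on any service from a validator with smaller total to one with larger total never increases $\sum_\validator \min(\symmetricStake, t_\validator)$, because the stakes are equal. It then applies this lemma repeatedly via an explicit sorting-and-pouring procedure (Proposition~\ref{proposition:consolidated_attack}) to reach the consolidated profile. Your main route instead passes to the aggregated variables $t_\validator$, observes that the cost is a concave function of $\vec t$ on the box-sum polytope $\{0 \le t_\validator \le W,\ \sum_\validator t_\validator = \symmetricAttackThreshold |\allValidators| W\}$, and uses that concave minima are attained at vertices, all of which are permutations of the consolidated staircase. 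This is a cleaner and more global argument: it dispenses with the algorithmic bookkeeping and with tracking per-service allocations during the shifting, at the price of invoking the (standard) vertex characterization of that polytope. Your alternative exchange sketch is exactly the paper's route, so you have identified both approaches.
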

 
We reach the following corollary stating that to check cryptoeconomic security, it suffices to consider only consolidated attacks.
\begin{corollary}
    A symmetric restaking network is cryptoeconomically secure if and only if for each subset of services~$\allServicesAt{c}$, the cost of the consolidated attack~$\allAttackStakesAt{c}$ that attacks exactly the services in~$\allServicesAt{c}$ is strictly higher than its prize.
\end{corollary}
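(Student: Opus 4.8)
The plan is to obtain the corollary by chaining three results already in hand: Proposition~\ref{proposition:restaking_network_security}, which equates cryptoeconomic security with the absence of a profitable attack; Proposition~\ref{proposition:profitable_attack_consolidated}, which lets us replace any profitable attack by a profitable \emph{consolidated} one; and the remark just before Proposition~\ref{proposition:consolidated_attack_cost}, that a consolidated attack is pinned down by the set of services it attacks (Proposition~\ref{proposition:consolidated_attack_cost} itself then supplies the explicit cost, which is what turns the resulting condition into an effective test). As a preliminary step I would fix the reading of the quantifier ``for each subset of services $\allServicesAt{c}$'' as ranging over \emph{non-empty} subsets: the empty subset corresponds to the all-zero strategy profile, which is not an attack in the sense of Definition~\ref{definition:attack_profitability} (profitability requires $\attackedServices \neq \emptyset$) and for which cost and prize are both zero, so including it would make the right-hand side vacuously false.

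For the forward implication, I would assume $\networkState$ is secure, invoke Proposition~\ref{proposition:restaking_network_security} to conclude that no profitable attack exists, then fix an arbitrary non-empty $\allServicesAt{c}$ and let $\allAttackStakesAt{c}$ be the consolidated attack with $\attackedServicesAt{c} = \allServicesAt{c}$. Since $\attackedServicesAt{c}$ is non-empty yet $\allAttackStakesAt{c}$ is not profitable, Definition~\ref{definition:attack_profitability} forces $\attackCost{\allAttackStakesAt{c}} > \totalAttackPrize{\allAttackStakesAt{c}}$, which is exactly the claimed inequality.

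For the converse I would argue by contradiction. If $\networkState$ were not secure, Proposition~\ref{proposition:restaking_network_security} yields a profitable attack, and Proposition~\ref{proposition:profitable_attack_consolidated} upgrades it to a profitable consolidated attack $\allAttackStakes$. Profitability gives $\attackedServices \neq \emptyset$, so set $\allServicesAt{c} := \attackedServices$; by uniqueness of the consolidated attack that attacks exactly $\allServicesAt{c}$, we have $\allAttackStakes = \allAttackStakesAt{c}$, and hence $\attackCost{\allAttackStakesAt{c}} = \attackCost{\allAttackStakes} \le \totalAttackPrize{\allAttackStakes} = \totalAttackPrize{\allAttackStakesAt{c}}$, contradicting the hypothesis. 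Therefore $\networkState$ is secure.

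I do not expect a genuine obstacle: the corollary is a repackaging of the cited propositions, and all the real work lives in Propositions~\ref{proposition:profitable_attack_consolidated} and~\ref{proposition:consolidated_attack_cost}. The two points I would be careful to pin down are (i) the empty-subset convention noted above, and (ii) that ``the consolidated attack that attacks exactly the services in $\allServicesAt{c}$'' is well defined --- this is the observation preceding Proposition~\ref{proposition:consolidated_attack_cost}, and in any case any two consolidated attacks with the same attacked set have identical cost and prize by Proposition~\ref{proposition:consolidated_attack_cost}, so the stated condition does not depend on the choice of representative.
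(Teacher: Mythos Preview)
Your proposal is correct and takes essentially the same approach as the paper: both chain Proposition~\ref{proposition:restaking_network_security} with Proposition~\ref{proposition:profitable_attack_consolidated} and the uniqueness of consolidated attacks to reduce the search to consolidated attacks indexed by their attacked set. Your write-up is more explicit than the paper's one-sentence proof, and your care with the empty-subset convention is well placed; the paper itself inserts ``non-empty'' in the analogous robustness corollary (Corollary~\ref{corollary:symmetric_network_cryptoeconomic_robustness}) but omits it here.
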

\begin{proof}
    This follows from the Proposition~\ref{proposition:restaking_network_security}, the definition of a profitable attack and the fact that if there is a profitable attack there is also a consolidated profitable attack (Proposition~\ref{proposition:profitable_attack_consolidated}), so we can restrict our search to consolidated attacks.
\end{proof}

In general, this method has exponential complexity in the number of services, but we can significantly reduce the search space by assuming that service prizes and allocations to services are also symmetric, or that there only a few values that they can take, as we see next. 


\subsection{Sample Networks}
\label{section:security_analysis:sample_networks}


We further narrow our focus to cases where all validators allocate exactly the same amount of stake to each service, so the allocation is fully defined by the restaking degree. 
We can therefore find the minimum required stake for a given restaking degree with a binary search on the restaking degree. 

We analyze symmetric cases where the number of validators and the number of services are both~$10$,~$11$, and~$12$, and each service has a prize of~$1$ and an attack threshold~$\symmetricAttackThreshold$ of either~$1/2$ or~$1/3$. 
Fig.~\ref{figure:security_analysis_sample_networks} shows the minimum stake for cryptoeconomic security with different restaking degrees. 

When~$\symmetricAttackThreshold |\allValidators|$ is an integer, the minimum stake required for cryptoeconomic security remains constant across all restaking degrees.
Specifically, it equals the prize divided by the attack threshold--the same amount of stake each service would need in isolation.
This occurs because in a consolidated attack, exactly~$\symmetricAttackThreshold |\allValidators|$ validators can fully utilize their allocations to attack services.
When~$\symmetricAttackThreshold |\allValidators|$ is not an integer, the attack requires an additional validator who can only partially use their allocations.
At low restaking degrees, this validator cannot reach their stake limit, which increases the cost of the attack.
Then, the network is secure with a lower total stake.


\section{Theoretical Robustness Analysis}
\label{section:theoretical_robustness_analysis}


Cryptoeconomic security means that correct behavior is an equilibrium, but it could be brittle, easily destabilized by an attacker with an exogenous motivation or service faults. 
We therefore expand the game to include such scenarios, allowing us to evaluate the staking-network robustness.
We again focus on the symmetric case~(\S\ref{section:theoretical_robustness_analysis:symmetric_case}) and showcase the robustness of a few sample networks~(\S\ref{section:theoretical_robustness_analysis:results}).
We defer all proofs to Appendix~\ref{appendix:proofs_from_section_theoretical_robustness_analysis}.

We begin by presenting a computable condition for restaking network robustness.
\begin{proposition}
    \label{proposition:restaking_network_robustness}
    A restaking network~$\networkState$ is~$\adversaryBudget$-cryptoeconomically robust if and only if there exists no~${\adversaryBudget}$-costly attack.
\end{proposition}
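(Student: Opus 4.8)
The plan is to mirror the proof of Proposition~\ref{proposition:restaking_network_security}: the cryptoeconomic robustness game is exactly the security game with payoffs~\eqref{equation:validator_utility_robustness_game} replacing~\eqref{equation:validator_utility_security_game}, and since a $0$-costly attack is a profitable attack, the whole argument should transfer once $\totalAttackPrize{\allAttackStakes}$ is replaced by $\totalAttackPrize{\allAttackStakes}+\adversaryBudget$. I would prove both directions by contraposition, leaning on one structural fact that does all the work. For any attack $\allAttackStakes$ with $\attackedServices\neq\emptyset$, summing~\eqref{equation:validator_utility_robustness_game} and using that the prize shares $\validatorPrizeShareSecurityGame{\validator}{\allAttackStakes}$ are nonnegative and sum to $1$, together with $\validatorAttackCost{\validator}{\allAttackStakes}=\validatorPrizeShareSecurityGame{\validator}{\allAttackStakes}\cdot\attackCost{\allAttackStakes}$, gives
\[
  \validatorUtilitySecurityGame{\validator}{\allAttackStakes}=\validatorPrizeShareSecurityGame{\validator}{\allAttackStakes}\left(\totalAttackPrize{\allAttackStakes}+\adversaryBudget-\attackCost{\allAttackStakes}\right)\quad\text{for every }\validator\in\allValidators ,
\]
which holds in the $\attackCost{\allAttackStakes}=0$ corner as well (uniform shares, $\totalAttackPrize{\allAttackStakes}>0$ since prizes are in $\positiveRealNumbers$ and $\attackedServices\neq\emptyset$). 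In particular all validators' net payoffs share the single sign of $\totalAttackPrize{\allAttackStakes}+\adversaryBudget-\attackCost{\allAttackStakes}$, so a coalition can weakly improve precisely when this scalar is nonnegative, i.e. precisely when $\allAttackStakes$ is $\adversaryBudget$-costly.

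\emph{Costly attack $\Rightarrow$ not $\adversaryBudget$-robust.} Let $\allAttackStakes$ be $\adversaryBudget$-costly, so $\attackedServices\neq\emptyset$ and $\totalAttackPrize{\allAttackStakes}+\adversaryBudget-\attackCost{\allAttackStakes}\geq 0$. If $\attackedServicesAt{0}\neq\emptyset$ the network is already not $\adversaryBudget$-robust by Definition~\ref{definition:restaking_network_robustness}; otherwise every validator's payoff under $\allAttackStakesAt{0}$ is $-\validatorAttackCost{\validator}{\allAttackStakesAt{0}}=0$. Take the coalition $P'$ of validators placing positive stake under $\allAttackStakes$; since all other validators place zero stake, $\allAttackStakes$ is exactly the profile obtained when $P'$ deviates from $\allAttackStakesAt{0}$, and $P'\neq\emptyset$ because $\attackedServices\neq\emptyset=\attackedServicesAt{0}$. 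By the structural fact, $\validatorUtilitySecurityGame{\validator}{\allAttackStakes}\geq 0$ for all $\validator\in P'$, so no member of $P'$ is strictly worse off; hence $\allAttackStakesAt{0}$ is not a strong Nash equilibrium (Definition~\ref{definition:strong_nash_equilibrium}) and the network is not $\adversaryBudget$-robust.

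\emph{Not $\adversaryBudget$-robust $\Rightarrow$ costly attack exists.} If $\attackedServicesAt{0}\neq\emptyset$, then $\allAttackStakesAt{0}$ itself is $\adversaryBudget$-costly: $\attackCost{\allAttackStakesAt{0}}=0\leq\totalAttackPrize{\allAttackStakesAt{0}}+\adversaryBudget$. Otherwise $\allAttackStakesAt{0}$ fails to be a strong Nash equilibrium, so some coalition $P'$ has a deviation $\allAttackStakes$ under which no member is strictly worse off; since the baseline payoff is $0$, this means $\validatorUtilitySecurityGame{\validator}{\allAttackStakes}\geq 0$ for all $\validator\in P'$. A deviation with $\attackedServices=\emptyset$ yields $\validatorUtilitySecurityGame{\validator}{\allAttackStakes}=-\validatorAttackCost{\validator}{\allAttackStakes}=-\min(\stake{\validator},0)=0$ for every validator, i.e. it is payoff-equivalent to $\allAttackStakesAt{0}$ and is discarded exactly as in the proof of Proposition~\ref{proposition:restaking_network_security}; so we may assume $\attackedServices\neq\emptyset$. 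Then the structural fact forces $\totalAttackPrize{\allAttackStakes}+\adversaryBudget-\attackCost{\allAttackStakes}\geq 0$ (some member of $P'$ has a positive prize share when $\attackCost{\allAttackStakes}>0$, and the inequality is immediate when $\attackCost{\allAttackStakes}=0$), which together with $\attackedServices\neq\emptyset$ says $\allAttackStakes$ is $\adversaryBudget$-costly.

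The main obstacle is not any of the above algebra but getting the bookkeeping right on three fronts, each handled exactly as in the security proof: (a) deviations that commit stake without crossing any service's attack threshold give all deviators payoff $0$ and must be treated as payoff-equivalent non-deviations; (b) the $\attackCost{\allAttackStakes}=0$ case, where prize sharing is uniform rather than proportional; and (c) translating a coalition deviation from the all-zero profile into a single attack function, which is immediate because non-deviators contribute zero and one takes $P'$ to be the support of $\allAttackStakes$. The one genuine idea is the observation that proportional prize sharing makes all validators' net payoffs co-signed, collapsing the strong-Nash coalition condition to the scalar inequality $\attackCost{\allAttackStakes}\leq\totalAttackPrize{\allAttackStakes}+\adversaryBudget$; I expect the final write-up to be a near-verbatim copy of the proof of Proposition~\ref{proposition:restaking_network_security} with ``$+\adversaryBudget$'' inserted throughout.
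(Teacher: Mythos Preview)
Your proposal is correct and takes essentially the same approach as the paper: both arguments hinge on the fact that, when $\attackedServices\neq\emptyset$, every validator's robustness-game utility factors as $\validatorPrizeShareSecurityGame{\validator}{\allAttackStakes}\bigl(\totalAttackPrize{\allAttackStakes}+\adversaryBudget-\attackCost{\allAttackStakes}\bigr)$, so all payoffs share a sign and the strong-Nash condition collapses to the scalar inequality defining a $\adversaryBudget$-costly attack. Your explicit ``structural fact'' packages this more cleanly than the paper's case-by-case computation, but the substance is identical and your expectation that the write-up is the security proof with ``$+\adversaryBudget$'' inserted is exactly what the paper does.
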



\subsection{The Symmetric Case}
\label{section:theoretical_robustness_analysis:symmetric_case}


$\adversaryBudget$-cryptoeconomic robustness is linked to the existence of~$\adversaryBudget$-costly attacks.
But since profitable attacks are a special case of~$\adversaryBudget$-costly attacks (for~$\adversaryBudget=0$), searching for those is still NP-hard.
We thus again turn to the symmetric case.

We begin by considering cryptoeconomic robustness alone, and later consider it combined with Byzantine services.

\subsubsection{Cryptoeconomic Robustness}

The two-step reduction that we have previously used to simplify profitable attacks can also be applied to~$\adversaryBudget$-costly attacks.
\begin{proposition}
    \label{proposition:beta_costly_attack_consolidated}
    If there is a~$\adversaryBudget$-costly attack in a symmetric network, then there is a~$\adversaryBudget$-costly profitable attack that is consolidated.
\end{proposition}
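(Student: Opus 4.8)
The plan is to recycle, almost verbatim, the two-step reduction used to prove Proposition~\ref{proposition:profitable_attack_consolidated}, with one observation: neither step alters the set of attacked services (hence the total prize), and neither step increases the total cost. Since $\adversaryBudget$-costliness (Definition~\ref{definition:beta_costly_attack}) relaxes profitability only by the additive slack $\adversaryBudget$ on the right-hand side, the very same chain of inequalities then yields a consolidated attack that is $\adversaryBudget$-costly. So I would start from a $\adversaryBudget$-costly attack $\allAttackStakes$, with $\attackedServices \neq \emptyset$ and $\attackCost{\allAttackStakes} \le \totalAttackPrize{\allAttackStakes} + \adversaryBudget$, and transform it in two moves.

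\textbf{Step 1 (tightening).} For each attacked service $\service \in \attackedServices$, scale its column $(\attackStake{\validator}{\service})_{\validator \in \allValidators}$ down by the common factor $\symmetricAttackThreshold |\allValidators| \symmetricAllocation{\service} / \sum_{\validator \in \allValidators} \attackStake{\validator}{\service} \in (0,1]$ (well-defined and at most $1$ precisely because $\service \in \attackedServices$), leaving columns of non-attacked services untouched; call the result $\allAttackStakes'$. The rescaled entries remain in $[0, \symmetricAllocation{\service}]$; every $\service \in \attackedServices$ is now exactly at its threshold so it stays attacked, and no other service becomes attacked, hence the attacked set and $\totalAttackPrize{\cdot}$ are unchanged. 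Since every attacking stake only decreased and $\min(\symmetricStake, \cdot)$ is nondecreasing, $\attackCost{\allAttackStakes'} \le \attackCost{\allAttackStakes}$ by \eqreft{equation:validator_attack_cost} and \eqreft{equation:total_attack_cost}; and $\allAttackStakes'$ is now tight.

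\textbf{Step 2 (consolidation).} For each attacked service independently, redistribute its (now tight) column so that validators $\validator_1, \dots, \validator_{\floor{\symmetricAttackThreshold |\allValidators|}}$ carry $\symmetricAllocation{\service}$, validator $\validator_{\floor{\symmetricAttackThreshold |\allValidators|}+1}$ carries the fractional remainder, and the rest carry $0$ — legitimate because symmetry makes $\symmetricAllocation{\service}$ validator-independent, and it preserves every column sum (hence the attacked set and the prize). Let $R_\validator$ be validator $\validator$'s total attacking stake over the attacked services; the cost is $\sum_{\validator \in \allValidators} \min(\symmetricStake, R_\validator)$, tightness fixes $\sum_{\validator} R_\validator = \symmetricAttackThreshold |\allValidators| \sum_{\service \in \attackedServices} \symmetricAllocation{\service}$, and the per-entry box constraints give $0 \le R_\validator \le \sum_{\service \in \attackedServices} \symmetricAllocation{\service}$. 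As $x \mapsto \min(\symmetricStake, x)$ is concave, $\sum_\validator \min(\symmetricStake, R_\validator)$ is a concave function on the polytope of admissible row-sum vectors, so it attains its minimum at a vertex; the consolidated configuration is (up to relabeling validators, which is immaterial since the objective is permutation-symmetric) exactly such a vertex — $\floor{\symmetricAttackThreshold |\allValidators|}$ coordinates at the upper bound, one fractional, the rest at $0$ — and its cost is the expression of Proposition~\ref{proposition:consolidated_attack_cost}. Hence the consolidated attack $\allAttackStakes''$ satisfies $\attackCost{\allAttackStakes''} \le \attackCost{\allAttackStakes'}$. (Equivalently one can argue by pairwise exchange / majorization: repeatedly moving attacking stake from a validator with a smaller row sum to one with a larger row sum never increases $\sum_\validator \min(\symmetricStake, R_\validator)$ by concavity, and the process terminates at the consolidated attack.)

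Chaining: $\allAttackStakes''$ is consolidated, its attacked set equals $\attackedServices \neq \emptyset$, $\totalAttackPrize{\allAttackStakes''} = \totalAttackPrize{\allAttackStakes}$, and $\attackCost{\allAttackStakes''} \le \attackCost{\allAttackStakes'} \le \attackCost{\allAttackStakes} \le \totalAttackPrize{\allAttackStakes} + \adversaryBudget = \totalAttackPrize{\allAttackStakes''} + \adversaryBudget$, so $\allAttackStakes''$ is a consolidated $\adversaryBudget$-costly attack. The main obstacle — the only content beyond bookkeeping — is the cost-monotonicity of the consolidation step: one must carefully argue that concentrating attacking stake onto as few validators as possible does not raise the aggregate capped cost, the crux being concavity of $\min(\symmetricStake, \cdot)$ together with the per-service box constraints, and that this aggregates correctly across services because each validator's relevant quantity is simply the sum of its per-service attacking stakes. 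Everything else is inherited directly from the proof of Proposition~\ref{proposition:profitable_attack_consolidated}.
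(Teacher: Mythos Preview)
Your proof is correct and follows the same two-step architecture as the paper: tighten, then consolidate, noting that both steps preserve the attacked set (hence the prize) and do not increase the total cost, so the $\adversaryBudget$-costly inequality $\attackCost{\cdot} \le \totalAttackPrize{\cdot} + \adversaryBudget$ carries through. The paper's proof simply invokes its Propositions~\ref{proposition:tight_attack} and~\ref{proposition:consolidated_attack} (proved earlier for the security case) and chains the inequalities; you reconstruct those steps inline.

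The minor methodological differences are worth noting. Your tightening by uniform column scaling is a harmless variant of the paper's subtraction of the excess. More interestingly, your consolidation justification via concavity of $x \mapsto \min(\symmetricStake, x)$ and the vertex structure of the row-sum polytope $\{R : 0 \le R_\validator \le \sum_{\service \in \attackedServices}\symmetricAllocation{\service},\ \sum_\validator R_\validator = \symmetricAttackThreshold|\allValidators|\sum_{\service \in \attackedServices}\symmetricAllocation{\service}\}$ is a cleaner, more global argument than the paper's iterative pairwise-shifting Lemma~\ref{lemma:attack_stake_shifting} (which you also mention as an alternative). Both reach the same conclusion, but the polytope view sidesteps the case analysis in the lemma and makes transparent why the consolidated configuration (all-or-nothing except one fractional validator, unique up to permutation) is cost-minimizing.
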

 
This implies the following corollary.
\begin{corollary}
    \label{corollary:symmetric_network_cryptoeconomic_robustness}
    A symmetric network is $\adversaryBudget$-cryptoeconomically robust if and only if for each non-empty subset of services~$\allServicesAt{c}$, the cost of the consolidated attack that attacks exactly the services in~$\allServicesAt{c}$ is strictly higher than its prize plus~$\adversaryBudget$.
\end{corollary}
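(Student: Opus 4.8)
The plan is to derive the corollary by chaining the three facts already established, exactly as the analogous security corollary was derived. The ingredients are: Proposition~\ref{proposition:restaking_network_robustness}, which says the network is $\adversaryBudget$-cryptoeconomically robust if and only if no $\adversaryBudget$-costly attack exists; Proposition~\ref{proposition:beta_costly_attack_consolidated}, which reduces the existence of a $\adversaryBudget$-costly attack to the existence of a \emph{consolidated} one; and the remark following Definition~\ref{definition:consolidated_attack}, which says that for every subset of services~$\allServicesAt{c}$ there is exactly one consolidated attack~$\allAttackStakesAt{c}$ whose attacked set is exactly~$\allServicesAt{c}$. Together these let us replace "no $\adversaryBudget$-costly attack" by "no consolidated $\adversaryBudget$-costly attack" and then enumerate consolidated attacks by the subsets they attack.

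Concretely, by Proposition~\ref{proposition:restaking_network_robustness} it suffices to prove that a $\adversaryBudget$-costly attack exists if and only if some non-empty~$\allServicesAt{c}$ satisfies $\attackCost{\allAttackStakesAt{c}} \leq \totalAttackPrize{\allAttackStakesAt{c}} + \adversaryBudget$; the corollary is then the contrapositive of this equivalence. For the forward direction I would take an arbitrary $\adversaryBudget$-costly attack, apply Proposition~\ref{proposition:beta_costly_attack_consolidated} to get a consolidated $\adversaryBudget$-costly attack, set~$\allServicesAt{c}$ to be the set of services it attacks (non-empty by Definition~\ref{definition:beta_costly_attack}), and invoke uniqueness to identify this attack with~$\allAttackStakesAt{c}$; the defining inequality of a $\adversaryBudget$-costly attack then reads $\attackCost{\allAttackStakesAt{c}} \leq \totalAttackPrize{\allAttackStakesAt{c}} + \adversaryBudget$. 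For the backward direction, if a non-empty~$\allServicesAt{c}$ satisfies that inequality, then since $\attackedServicesAt{c} = \allServicesAt{c} \neq \emptyset$ the attack~$\allAttackStakesAt{c}$ is itself $\adversaryBudget$-costly by Definition~\ref{definition:beta_costly_attack}. Negating both sides of the equivalence yields the statement of the corollary.

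The only delicate point — and the one step I would check explicitly rather than wave through — is the claim that~$\allAttackStakesAt{c}$ attacks \emph{exactly}~$\allServicesAt{c}$, i.e., that services outside~$\allServicesAt{c}$, which receive no attacking stake, indeed fall below their attack thresholds so that $\attackedServicesAt{c} = \allServicesAt{c}$. This is precisely the content of the uniqueness remark after Definition~\ref{definition:consolidated_attack}, so I would cite it there. Everything else is a routine unwinding of Definition~\ref{definition:beta_costly_attack} together with the two cited propositions, so I do not expect any substantive obstacle beyond stating the equivalence carefully and keeping track of which attack is which.
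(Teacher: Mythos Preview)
Your proposal is correct and follows essentially the same approach as the paper: combine Proposition~\ref{proposition:restaking_network_robustness}, Definition~\ref{definition:beta_costly_attack}, and Proposition~\ref{proposition:beta_costly_attack_consolidated} to reduce the search to consolidated attacks, then enumerate these by the subset of attacked services. The paper's proof is terser and does not spell out the two directions or the uniqueness remark, but the logical content is identical.
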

\begin{proof}
    This follows from the Proposition~\ref{proposition:restaking_network_robustness}, the definition of a~$\adversaryBudget$-costly attack and the fact that if there is a~$\adversaryBudget$-costly attack there is also a consolidated~$\adversaryBudget$-costly attack (Proposition~\ref{proposition:beta_costly_attack_consolidated}), so we can restrict our search to consolidated attacks.
\end{proof}

Similarly to network security, this method is exponential in the number of services, but additional assumptions can reduce the search space.

\subsubsection{Cryptoeconomic Robustness with Byzantine Services}

We now consider the combination of cryptoeconomic robustness with Byzantine robustness.
As this is an even more general problem, we again restrict our analysis to the symmetric case.
The following proposition shows that a symmetric network remains symmetric after Byzantine services cause slashing.
\begin{proposition}
    \label{proposition:symmetric_network_remains_symmetric_after_byzantine_services_cause_slashing}
    Consider a symmetric restaking network~$\networkStateAt{0} = (\allValidatorsAt{0}, \allServicesAt{0}, \allStakesAt{0}, \allAllocationsAt{0}, \allAttackThresholdsAt{0}, \allAttackPrizesAt{0})$ and a subset of Byzantine services~$\byzantineServices \subseteq \allServicesAt{0}$.
    Let~$\networkStateAt{1} = (\allValidatorsAt{1}, \allServicesAt{1}, \allStakesAt{1}, \allAllocationsAt{1}, \allAttackThresholdsAt{1}, \allAttackPrizesAt{1})$ be the restaking network that remains after the Byzantine services in~$\byzantineServices$ cause slashing.
    Then~$\networkStateAt{1}$ is symmetric.
\end{proposition}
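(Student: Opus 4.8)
The plan is to verify the three defining properties of a symmetric network for $\networkStateAt{1}$ directly from the transition rules, exploiting the fact that each rule depends on a validator only through the quantities $\stakeAt{0}{\validator}$ and the allocations $\allocationAt{0}{\validator}{\service}$, which are already equal across all validators by the symmetry of $\networkStateAt{0}$. Since $\allValidatorsAt{1} = \allValidatorsAt{0}$ and $\allServicesAt{1} = \allServicesAt{0} \setminus \byzantineServices \subseteq \allServicesAt{0}$ by definition, only the stake, allocation, and attack-threshold conditions need checking.

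First I would establish that the new stakes are all equal. Fix two validators $\validator_1, \validator_2 \in \allValidatorsAt{1}$. Symmetry of $\networkStateAt{0}$ gives $\stakeAt{0}{\validator_1} = \stakeAt{0}{\validator_2}$ and $\allocationAt{0}{\validator_1}{\service} = \allocationAt{0}{\validator_2}{\service}$ for every Byzantine service $\service \in \byzantineServices$, so the slashed amounts $\sum_{\service \in \byzantineServices}\allocationAt{0}{\validator_1}{\service}$ and $\sum_{\service \in \byzantineServices}\allocationAt{0}{\validator_2}{\service}$ coincide; substituting into~\eqref{equation:stake_after_byzantine_services_cause_slashing} yields $\stakeAt{1}{\validator_1} = \stakeAt{1}{\validator_2}$, which is property~(1). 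Next, for property~(2) I would take any surviving service $\service \in \allServicesAt{1}$ and apply~\eqref{equation:allocation_after_byzantine_services_cause_slashing}: $\allocationAt{1}{\validator_i}{\service} = \min(\allocationAt{0}{\validator_i}{\service}, \stakeAt{1}{\validator_i})$, and both arguments of the minimum are independent of $i$ (the first by symmetry of $\networkStateAt{0}$, the second by the step just proved), so the two minima agree. Property~(3) is immediate: $\allServicesAt{1} \subseteq \allServicesAt{0}$ and $\allAttackThresholdsAt{1} = \restr{\allAttackThresholdsAt{0}}{\allServicesAt{1}}$, so any two services surviving in $\networkStateAt{1}$ retain the common attack threshold they already shared in $\networkStateAt{0}$.

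The only delicate point is boundary bookkeeping: if $\sum_{\service \in \byzantineServices}\symmetricAllocationAt{0}{\service} \geq \symmetricStakeAt{0}$, then~\eqref{equation:stake_after_byzantine_services_cause_slashing} drives every validator's remaining stake to zero, which formally lies outside the codomain $\positiveRealNumbers$ assumed for stake functions. I would dispose of this either by observing that such a collapsed network is still trivially symmetric (all stakes and all allocations equal, to zero), or by restricting attention to the non-degenerate regime in which the slashing leaves positive stake. Apart from this edge case, the argument is merely a propagation of equalities through the two transition formulas, so I do not anticipate any substantive obstacle.
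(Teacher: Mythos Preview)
Your proposal is correct and follows essentially the same approach as the paper: both verify the three symmetry conditions directly by plugging the transition formulas~\eqref{equation:stake_after_byzantine_services_cause_slashing} and~\eqref{equation:allocation_after_byzantine_services_cause_slashing} and observing that every validator-dependent input is already common by the symmetry of $\networkStateAt{0}$. Your remark about the degenerate zero-stake boundary case is a reasonable bit of extra care that the paper's proof simply omits.
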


Therefore, due to Definition~\ref{definition:overall_robust_network}, to check whether a symmetric restaking network~$\networkState$ is~$(\maxByzantineServices, \adversaryBudget)$-robust we can iterate over all possible subsets~$\byzantineServices \in \byzantineSubsets{\maxByzantineServices}
$ and get the network~$\networkState \networkAdvance \byzantineServices$ and check it is~$\adversaryBudget$-cryptoeconomically robust.
For that, we can use Corollary~\ref{corollary:symmetric_network_cryptoeconomic_robustness} since thanks to the above proposition we know that~$\networkState \networkAdvance \byzantineServices$ is symmetric.

We can again rely on some assumption to limit the number of subsets we need to consider, like that all services have the same prize and allocations or that there are only a few different possible values.

In addition, when searching for the minimum~$\adversaryBudget$ such that a network is $\adversaryBudget$-cryptoeconomically robust, we can reduce the search space even further.
The following proposition shows that when there exist 2 identical services, if one of them is Byzantine then the resulting network is less robust than the original one.
\begin{proposition}
    \label{proposition:symmetric_network_less_robust_with_byzantine_services}
    Consider a symmetric restaking network~$\networkStateAt{0}$ that has 2 identical services~$\service_1$ and~$\service_2$, meaning their attack prizes are equal and the allocation of each validator to them is identical.
    Let~$\networkStateAt{1}$ be the restaking network that remains after the slashing of one Byzantine service~$\service_1$ in~$\networkStateAt{0}$, that is, $\networkStateAt{1} = \networkStateAt{0} \networkAdvance \left\{ \service_1 \right\}$.
    If~$\networkStateAt{1}$ is~$\adversaryBudget$-cryptoeconomically robust, then~$\networkStateAt{0}$ is~$\adversaryBudget$-cryptoeconomically robust.
\end{proposition}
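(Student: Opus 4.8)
I would prove the contrapositive: assume $\networkStateAt{0}$ is not $\adversaryBudget$-cryptoeconomically robust and exhibit a $\adversaryBudget$-costly attack in $\networkStateAt{1}$. Since $\networkStateAt{0}$ is symmetric and, by Proposition~\ref{proposition:symmetric_network_remains_symmetric_after_byzantine_services_cause_slashing}, so is $\networkStateAt{1}$, I can work entirely with \emph{consolidated} attacks and apply Corollary~\ref{corollary:symmetric_network_cryptoeconomic_robustness} at both ends; concretely, it suffices to produce a non-empty service set of $\networkStateAt{1}$ whose consolidated attack has cost at most its prize plus $\adversaryBudget$, evaluated via the closed form of Proposition~\ref{proposition:consolidated_attack_cost}. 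Fix the common stake $\symmetricStakeAt{0}$, the (equal, since $\service_1,\service_2$ are identical) allocation $a=\symmetricAllocationAt{0}{\service_1}=\symmetricAllocationAt{0}{\service_2}$, the common threshold $\symmetricAttackThreshold$, and write $n=\floor{\symmetricAttackThreshold|\allValidators|}$ and $f=\symmetricAttackThreshold|\allValidators|-n$. From the transition rules, $\symmetricStakeAt{1}=\symmetricStakeAt{0}-a$, every surviving service $\service$ satisfies $\symmetricAllocationAt{1}{\service}=\min(\symmetricAllocationAt{0}{\service},\symmetricStakeAt{1})\le\symmetricAllocationAt{0}{\service}$ and keeps its prize, and $\service_2$ survives.

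First I would peel off the case where the two identical services are \emph{under-allocated}, i.e.\ $\attackPrizeAt{0}{\service_1}>\symmetricAttackThreshold|\allValidators|\,a$. Then the consolidated attack on the singleton $\{\service_2\}$ already works in $\networkStateAt{1}$: since $\symmetricAllocationAt{1}{\service_2}\le\symmetricStakeAt{1}$, Proposition~\ref{proposition:consolidated_attack_cost} gives it cost $(n+f)\,\symmetricAllocationAt{1}{\service_2}=\symmetricAttackThreshold|\allValidators|\,\symmetricAllocationAt{1}{\service_2}\le\symmetricAttackThreshold|\allValidators|\,a<\attackPrizeAt{0}{\service_1}=\attackPrizeAt{1}{\service_2}$, so it is profitable and hence $\adversaryBudget$-costly. (The degenerate possibility $\symmetricStakeAt{1}=0$ falls under this, since then every surviving service has threshold $0$ and is attacked at zero cost.) So from now on I may assume $\attackPrizeAt{0}{\service_1}\le\symmetricAttackThreshold|\allValidators|\,a$.

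By Corollary~\ref{corollary:symmetric_network_cryptoeconomic_robustness}, non-$\adversaryBudget$-robustness of $\networkStateAt{0}$ yields a non-empty service set $\allServicesAt{c}$ whose consolidated attack $\allAttackStakesAt{c}$ has $\attackCostAt{0}{\allAttackStakesAt{c}}\le\totalAttackPrizeAt{0}{\allAttackStakesAt{c}}+\adversaryBudget$. If at least one of $\service_1,\service_2$ lies outside $\allServicesAt{c}$, then after possibly swapping the roles of $\service_1$ and $\service_2$ (which, as they are identical, preserves both the cost and the prize of the consolidated attack) I may assume $\service_1\notin\allServicesAt{c}$; then $\allServicesAt{c}$ is a service set of $\networkStateAt{1}$ with the same prize, and since the cost formula of Proposition~\ref{proposition:consolidated_attack_cost} is non-decreasing in the common stake and in the allocations, the consolidated attack on $\allServicesAt{c}$ in $\networkStateAt{1}$ has no larger cost and is $\adversaryBudget$-costly. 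Otherwise $\service_1,\service_2\in\allServicesAt{c}$, and I would use $\allServicesAt{c}\setminus\{\service_1\}$, which is non-empty as it still contains $\service_2$. Its prize in $\networkStateAt{1}$ is exactly $\totalAttackPrizeAt{0}{\allAttackStakesAt{c}}-\attackPrizeAt{0}{\service_1}$, and the key estimate is that its consolidated cost in $\networkStateAt{1}$ is at most $\attackCostAt{0}{\allAttackStakesAt{c}}-\symmetricAttackThreshold|\allValidators|\,a$: writing $A_0=\sum_{\service\in\allServicesAt{c}}\symmetricAllocationAt{0}{\service}$, one has $\sum_{\service\in\allServicesAt{c}\setminus\{\service_1\}}\symmetricAllocationAt{1}{\service}\le A_0-a$, and feeding this together with $\symmetricStakeAt{1}=\symmetricStakeAt{0}-a$ into Proposition~\ref{proposition:consolidated_attack_cost} and repeatedly using $\min(x-a,y-a)=\min(x,y)-a$ and $\symmetricStakeAt{0}-a\le\symmetricStakeAt{0}-fa$ (as $f\le1$) drops the first summand by $na$ and the second by $fa$, for a total of $(n+f)a=\symmetricAttackThreshold|\allValidators|\,a$. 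Combining this with $\attackCostAt{0}{\allAttackStakesAt{c}}\le\totalAttackPrizeAt{0}{\allAttackStakesAt{c}}+\adversaryBudget$ and the standing bound $\attackPrizeAt{0}{\service_1}\le\symmetricAttackThreshold|\allValidators|\,a$, the cost saved exceeds the lost prize, so the consolidated attack on $\allServicesAt{c}\setminus\{\service_1\}$ is $\adversaryBudget$-costly in $\networkStateAt{1}$. In every case $\networkStateAt{1}$ is not $\adversaryBudget$-robust, which is the contrapositive.

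I expect the main obstacle to be the cost estimate in the last case: one must propagate the per-validator slashing cap — the inner $\min$ against a validator's stake, reflected in the outer $\min$'s against $\symmetricStakeAt{1}$ in Proposition~\ref{proposition:consolidated_attack_cost} — correctly through the slashing transition, where both the stakes and the allocations shrink simultaneously. Relatedly, one must notice that the under-allocated case genuinely has to be treated first and separately: for a service whose prize exceeds its in-isolation stake requirement $\symmetricAttackThreshold|\allValidators|\,a$, merely deleting it from the attacked set does not free up enough cost to offset its lost prize, so the singleton-service witness is the right move there.
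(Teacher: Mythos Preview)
Your proof is correct and follows the same overall strategy as the paper: prove the contrapositive, reduce to consolidated attacks via Proposition~\ref{proposition:beta_costly_attack_consolidated} and Corollary~\ref{corollary:symmetric_network_cryptoeconomic_robustness}, and show that removing~$\service_1$ frees up at least as much cost as the prize lost.

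The organization differs in a way worth noting. The paper splits first on whether the witnessing attack set equals~$\{\service_1\}$, and inside the second case it separately checks whether the singleton attack on~$\{\service_1\}$ is $\adversaryBudget$-costly in order to obtain the inequality $\symmetricAttackThreshold|\allValidators|\,\symmetricAllocationAt{0}{\service_1}>\attackPrize{\service_1}$; it then constructs the capped attack~$\allAttackStakesAt{1}$ explicitly and bounds its cost validator by validator. You instead split first on the network-level condition $\attackPrize{\service_1}\lessgtr\symmetricAttackThreshold|\allValidators|\,a$, and then on membership of~$\service_1,\service_2$ in the witnessing set, working entirely with the closed form of Proposition~\ref{proposition:consolidated_attack_cost} and monotonicity. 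Your intermediate sub-case ``exactly one of $\service_1,\service_2$ lies in~$\allServicesAt{c}$,'' handled by the swap plus pure monotonicity of the cost formula in~$\symmetricStake$ and the allocations, is not isolated in the paper; there the same situation is absorbed into the general second case, where the written chain of inequalities tacitly uses $\sum_\validator\attackStakeAt{0}{\validator}{\service_1}=\symmetricAttackThreshold|\allValidators|\,\symmetricAllocationAt{0}{\service_1}$, which only holds when $\service_1$ is actually in the attacked set. Your decomposition avoids that wrinkle and makes the cost-drop estimate $\min(x-a,y-a)=\min(x,y)-a$ together with $\symmetricStakeAt{0}-a\le\symmetricStakeAt{0}-fa$ do all the work, which is a bit cleaner.
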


Then, for a restaking network~$\networkState$, if all services that can be Byzantine are identical, that is, they all have the same attack prizes and allocations to them, we get the robustness is monotonically decreasing in the number of Byzantine services.
Thus, for finding the minimal~$\adversaryBudget$ such that the network is~$(\maxByzantineServices, \adversaryBudget)$-robust, it suffices to consider only the largest subset in~$\byzantineSubsets{\maxByzantineServices}$, as we do next.


\subsection{Sample Networks}
\label{section:theoretical_robustness_analysis:results}


\begin{figure*}[t]
    \begin{subfigure}[b]{0.32\textwidth}
        \centering
        \begin{tikzpicture}
        \pgfplotsset{height=0.6\textwidth,width=1\textwidth}
        \begin{axis}[
            legend image post style={line width=1.25pt},
            legend to name=robustness_analysis_sample_networks_legend,
            legend style={legend columns=5},
            xlabel={Restaking Degree},
            ylabel={Minimum Stake},
            xmin=1,
            xmax=6,
            ymin=2.5,
            ymax=10
        ]
        \addplot+[fill opacity=0.9, draw opacity=0.9, mark=none]
        table [col sep=comma, x=restaking_degree, y=min_stake_threshold_0.00] {data/figure4_y_stake_budget_0.csv};
        \addlegendentry{$\maxByzantineServices = 0.00$}
        
        \addplot+[fill opacity=0.9, draw opacity=0.9, mark=none]
        table [col sep=comma, x=restaking_degree, y=min_stake_threshold_0.07] {data/figure4_y_stake_budget_0.csv};
        \addlegendentry{$\maxByzantineServices = 0.07$}
        
        \addplot+[fill opacity=0.9, draw opacity=0.9, mark=none]
        table [col sep=comma, x=restaking_degree, y=min_stake_threshold_0.13] {data/figure4_y_stake_budget_0.csv};
        \addlegendentry{$\maxByzantineServices = 0.13$}
        
        \addplot+[fill opacity=0.9, draw opacity=0.9, mark=none]
        table [col sep=comma, x=restaking_degree, y=min_stake_threshold_0.20] {data/figure4_y_stake_budget_0.csv};
        \addlegendentry{$\maxByzantineServices = 0.20$}
        
        \addplot+[fill opacity=0.9, draw opacity=0.9, mark=none]
        table [col sep=comma, x=restaking_degree, y=min_stake_threshold_0.27] {data/figure4_y_stake_budget_0.csv};
        \addlegendentry{$\maxByzantineServices = 0.27$}
        
        \addplot+[fill opacity=0.9, draw opacity=0.9, mark=none]
        table [col sep=comma, x=restaking_degree, y=min_stake_threshold_0.33] {data/figure4_y_stake_budget_0.csv};
        \addlegendentry{$\maxByzantineServices = 0.33$}
        
        \addplot+[fill opacity=0.9, draw opacity=0.9, mark=none]
        table [col sep=comma, x=restaking_degree, y=min_stake_threshold_0.40] {data/figure4_y_stake_budget_0.csv};
        \addlegendentry{$\maxByzantineServices = 0.40$}
        
        \addplot+[fill opacity=0.9, draw opacity=0.9, mark=none]
        table [col sep=comma, x=restaking_degree, y=min_stake_threshold_0.47] {data/figure4_y_stake_budget_0.csv};
        \addlegendentry{$\maxByzantineServices = 0.47$}
        
        \addplot+[fill opacity=0.9, draw opacity=0.9, mark=none]
        table [col sep=comma, x=restaking_degree, y=min_stake_threshold_0.53] {data/figure4_y_stake_budget_0.csv};
        \addlegendentry{$\maxByzantineServices = 0.53$}
        
        \addplot+[fill opacity=0.9, draw opacity=0.9, mark=none]
        table [col sep=comma, x=restaking_degree, y=min_stake_threshold_0.60] {data/figure4_y_stake_budget_0.csv};
        \addlegendentry{$\maxByzantineServices = 0.60$}
        \end{axis}
        \end{tikzpicture}
        \caption{Budget~$\adversaryBudget=0$ and no base service.}
        \label{figure:robustness_analysis_sample_networks_y_stake_budget_0}
        \Description{Graph showing stake required for cryptoeconomic security for different restaking degrees with 10 validators and services.}
    \end{subfigure}
    ~
    \begin{subfigure}[b]{0.32\textwidth}
        \centering
        \begin{tikzpicture}
        \pgfplotsset{height=0.6\textwidth,width=1\textwidth}
        \begin{axis}[
            xlabel={Restaking Degree},
            ylabel={Minimum Stake},
            xmin=1,
            xmax=6,
            ymin=2.5,
            ymax=10
        ]
        \addplot+[fill opacity=0.9, draw opacity=0.9, mark=none]
        table [col sep=comma, x=restaking_degree, y=min_stake_threshold_0.00] {data/figure4_y_stake_budget_1.csv};
        
        \addplot+[fill opacity=0.9, draw opacity=0.9, mark=none]
        table [col sep=comma, x=restaking_degree, y=min_stake_threshold_0.07] {data/figure4_y_stake_budget_1.csv};
        
        \addplot+[fill opacity=0.9, draw opacity=0.9, mark=none]
        table [col sep=comma, x=restaking_degree, y=min_stake_threshold_0.13] {data/figure4_y_stake_budget_1.csv};
        
        \addplot+[fill opacity=0.9, draw opacity=0.9, mark=none]
        table [col sep=comma, x=restaking_degree, y=min_stake_threshold_0.20] {data/figure4_y_stake_budget_1.csv};
        
        \addplot+[fill opacity=0.9, draw opacity=0.9, mark=none]
        table [col sep=comma, x=restaking_degree, y=min_stake_threshold_0.27] {data/figure4_y_stake_budget_1.csv};
        
        \addplot+[fill opacity=0.9, draw opacity=0.9, mark=none]
        table [col sep=comma, x=restaking_degree, y=min_stake_threshold_0.33] {data/figure4_y_stake_budget_1.csv};
        
        \addplot+[fill opacity=0.9, draw opacity=0.9, mark=none]
        table [col sep=comma, x=restaking_degree, y=min_stake_threshold_0.40] {data/figure4_y_stake_budget_1.csv};
        
        \addplot+[fill opacity=0.9, draw opacity=0.9, mark=none]
        table [col sep=comma, x=restaking_degree, y=min_stake_threshold_0.47] {data/figure4_y_stake_budget_1.csv};
        
        \addplot+[fill opacity=0.9, draw opacity=0.9, mark=none]
        table [col sep=comma, x=restaking_degree, y=min_stake_threshold_0.53] {data/figure4_y_stake_budget_1.csv};
        
        \addplot+[fill opacity=0.9, draw opacity=0.9, mark=none]
        table [col sep=comma, x=restaking_degree, y=min_stake_threshold_0.60] {data/figure4_y_stake_budget_1.csv};
        \end{axis}
        \end{tikzpicture}
        \caption{Budget~$\adversaryBudget=1$ and no base service.}
        \label{figure:robustness_analysis_sample_networks_y_stake_budget_1}
        \Description{Graph showing stake required for cryptoeconomic security for different restaking degrees with 11 validators and services.}
    \end{subfigure}
    ~
    \begin{subfigure}[b]{0.32\textwidth}
        \centering
        \begin{tikzpicture}
        \pgfplotsset{height=0.6\textwidth,width=1\textwidth}
        \begin{axis}[
            xlabel={Restaking Degree},
            ylabel={Minimum Stake},
            xmin=1,
            xmax=6,
            ymin=2.5,
            ymax=10
        ]
        \addplot+[fill opacity=0.9, draw opacity=0.9, mark=none]
        table [col sep=comma, x=restaking_degree, y=min_stake_threshold_0.00] {data/figure4_y_stake_budget_2.csv};
        
        \addplot+[fill opacity=0.9, draw opacity=0.9, mark=none]
        table [col sep=comma, x=restaking_degree, y=min_stake_threshold_0.07] {data/figure4_y_stake_budget_2.csv};
        
        \addplot+[fill opacity=0.9, draw opacity=0.9, mark=none]
        table [col sep=comma, x=restaking_degree, y=min_stake_threshold_0.13] {data/figure4_y_stake_budget_2.csv};
        
        \addplot+[fill opacity=0.9, draw opacity=0.9, mark=none]
        table [col sep=comma, x=restaking_degree, y=min_stake_threshold_0.20] {data/figure4_y_stake_budget_2.csv};
        
        \addplot+[fill opacity=0.9, draw opacity=0.9, mark=none]
        table [col sep=comma, x=restaking_degree, y=min_stake_threshold_0.27] {data/figure4_y_stake_budget_2.csv};
        
        \addplot+[fill opacity=0.9, draw opacity=0.9, mark=none]
        table [col sep=comma, x=restaking_degree, y=min_stake_threshold_0.33] {data/figure4_y_stake_budget_2.csv};
        
        \addplot+[fill opacity=0.9, draw opacity=0.9, mark=none]
        table [col sep=comma, x=restaking_degree, y=min_stake_threshold_0.40] {data/figure4_y_stake_budget_2.csv};
        
        \addplot+[fill opacity=0.9, draw opacity=0.9, mark=none]
        table [col sep=comma, x=restaking_degree, y=min_stake_threshold_0.47] {data/figure4_y_stake_budget_2.csv};
        
        \addplot+[fill opacity=0.9, draw opacity=0.9, mark=none]
        table [col sep=comma, x=restaking_degree, y=min_stake_threshold_0.53] {data/figure4_y_stake_budget_2.csv};
        
        \addplot+[fill opacity=0.9, draw opacity=0.9, mark=none]
        table [col sep=comma, x=restaking_degree, y=min_stake_threshold_0.60] {data/figure4_y_stake_budget_2.csv};
        \end{axis}
        \end{tikzpicture}
        \caption{Budget~$\adversaryBudget=2$ and no base service.}
        \label{figure:robustness_analysis_sample_networks_y_stake_budget_2}
        \Description{Graph showing stake required for cryptoeconomic security for different restaking degrees with 11 validators and services.}
    \end{subfigure}

    \begin{subfigure}[b]{0.32\textwidth}
        \centering
        \begin{tikzpicture}
        \pgfplotsset{height=0.6\textwidth,width=1\textwidth}
        \begin{axis}[
            xlabel={Restaking Degree},
            ylabel={Minimum Stake},
            xmin=1,
            xmax=6,
            ymin=2.5,
            ymax=10
        ]
        \addplot+[fill opacity=0.9, draw opacity=0.9, mark=none]
        table [col sep=comma, x=restaking_degree, y=min_stake_threshold_0.00] {data/figure4_y_stake_budget_0_base_service_10_0.33.csv};

        \addplot+[fill opacity=0.9, draw opacity=0.9, mark=none]
        table [col sep=comma, x=restaking_degree, y=min_stake_threshold_0.07] {data/figure4_y_stake_budget_0_base_service_10_0.33.csv};
        
        \addplot+[fill opacity=0.9, draw opacity=0.9, mark=none]
        table [col sep=comma, x=restaking_degree, y=min_stake_threshold_0.13] {data/figure4_y_stake_budget_0_base_service_10_0.33.csv};

        \addplot+[fill opacity=0.9, draw opacity=0.9, mark=none]
        table [col sep=comma, x=restaking_degree, y=min_stake_threshold_0.20] {data/figure4_y_stake_budget_0_base_service_10_0.33.csv};
        
        \addplot+[fill opacity=0.9, draw opacity=0.9, mark=none]
        table [col sep=comma, x=restaking_degree, y=min_stake_threshold_0.33] {data/figure4_y_stake_budget_0_base_service_10_0.33.csv};
        
        \addplot+[fill opacity=0.9, draw opacity=0.9, mark=none]
        table [col sep=comma, x=restaking_degree, y=min_stake_threshold_0.40] {data/figure4_y_stake_budget_0_base_service_10_0.33.csv};
        
        \addplot+[fill opacity=0.9, draw opacity=0.9, mark=none]
        table [col sep=comma, x=restaking_degree, y=min_stake_threshold_0.47] {data/figure4_y_stake_budget_0_base_service_10_0.33.csv};
        
        \addplot+[fill opacity=0.9, draw opacity=0.9, mark=none]
        table [col sep=comma, x=restaking_degree, y=min_stake_threshold_0.53] {data/figure4_y_stake_budget_0_base_service_10_0.33.csv};
        
        \addplot+[fill opacity=0.9, draw opacity=0.9, mark=none]
        table [col sep=comma, x=restaking_degree, y=min_stake_threshold_0.60] {data/figure4_y_stake_budget_0_base_service_10_0.33.csv};
        
        \addplot+[fill opacity=0.9, draw opacity=0.9, mark=none]
        table [col sep=comma, x=restaking_degree, y=min_stake_threshold_0.67] {data/figure4_y_stake_budget_0_base_service_10_0.33.csv};
        \end{axis}
        \end{tikzpicture}
        \caption{Budget~$\adversaryBudget=0$ with a base service.}
        \label{figure:robustness_analysis_sample_networks_y_stake_budget_0_with_base_service}
        \Description{Graph showing stake required for cryptoeconomic security for different restaking degrees with 12 validators and services.}
    \end{subfigure}
    ~
    \begin{subfigure}[b]{0.32\textwidth}
        \centering
        \begin{tikzpicture}
        \pgfplotsset{height=0.6\textwidth,width=1\textwidth}
        \begin{axis}[
            xlabel={Restaking Degree},
            ylabel={Minimum Stake},
            xmin=1,
            xmax=6,
            ymin=2.5,
            ymax=10
        ]
        \addplot+[fill opacity=0.9, draw opacity=0.9, mark=none]
        table [col sep=comma, x=restaking_degree, y=min_stake_threshold_0.00] {data/figure4_y_stake_budget_1_base_service_10_0.33.csv};
        
        \addplot+[fill opacity=0.9, draw opacity=0.9, mark=none]
        table [col sep=comma, x=restaking_degree, y=min_stake_threshold_0.07] {data/figure4_y_stake_budget_1_base_service_10_0.33.csv};
        
        \addplot+[fill opacity=0.9, draw opacity=0.9, mark=none]
        table [col sep=comma, x=restaking_degree, y=min_stake_threshold_0.13] {data/figure4_y_stake_budget_1_base_service_10_0.33.csv};
        
        \addplot+[fill opacity=0.9, draw opacity=0.9, mark=none]
        table [col sep=comma, x=restaking_degree, y=min_stake_threshold_0.20] {data/figure4_y_stake_budget_1_base_service_10_0.33.csv};
        
        \addplot+[fill opacity=0.9, draw opacity=0.9, mark=none]
        table [col sep=comma, x=restaking_degree, y=min_stake_threshold_0.27] {data/figure4_y_stake_budget_1_base_service_10_0.33.csv};
        
        \addplot+[fill opacity=0.9, draw opacity=0.9, mark=none]
        table [col sep=comma, x=restaking_degree, y=min_stake_threshold_0.33] {data/figure4_y_stake_budget_1_base_service_10_0.33.csv};
        
        \addplot+[fill opacity=0.9, draw opacity=0.9, mark=none]
        table [col sep=comma, x=restaking_degree, y=min_stake_threshold_0.40] {data/figure4_y_stake_budget_1_base_service_10_0.33.csv};
        
        \addplot+[fill opacity=0.9, draw opacity=0.9, mark=none]
        table [col sep=comma, x=restaking_degree, y=min_stake_threshold_0.47] {data/figure4_y_stake_budget_1_base_service_10_0.33.csv};
        
        \addplot+[fill opacity=0.9, draw opacity=0.9, mark=none]
        table [col sep=comma, x=restaking_degree, y=min_stake_threshold_0.53] {data/figure4_y_stake_budget_1_base_service_10_0.33.csv};
        
        \addplot+[fill opacity=0.9, draw opacity=0.9, mark=none]
        table [col sep=comma, x=restaking_degree, y=min_stake_threshold_0.60] {data/figure4_y_stake_budget_1_base_service_10_0.33.csv};
        \end{axis}
        \end{tikzpicture}
        \caption{Budget~$\adversaryBudget=1$ with a base service.}
        \label{figure:robustness_analysis_sample_networks_y_stake_budget_1_with_base_service}
        \Description{Graph showing stake required for cryptoeconomic security for different restaking degrees with 12 validators and services.}
    \end{subfigure}
    ~
    \begin{subfigure}[b]{0.32\textwidth}
        \centering
        \begin{tikzpicture}
        \pgfplotsset{height=0.6\textwidth,width=1\textwidth}
        \begin{axis}[
            xlabel={Restaking Degree},
            ylabel={Minimum Stake},
            xmin=1,
            xmax=6,
            ymin=2.5,
            ymax=10
        ]
        \addplot+[fill opacity=0.9, draw opacity=0.9, mark=none]
        table [col sep=comma, x=restaking_degree, y=min_stake_threshold_0.00] {data/figure4_y_stake_budget_2_base_service_10_0.33.csv};
        
        \addplot+[fill opacity=0.9, draw opacity=0.9, mark=none]
        table [col sep=comma, x=restaking_degree, y=min_stake_threshold_0.07] {data/figure4_y_stake_budget_2_base_service_10_0.33.csv};
        
        \addplot+[fill opacity=0.9, draw opacity=0.9, mark=none]
        table [col sep=comma, x=restaking_degree, y=min_stake_threshold_0.13] {data/figure4_y_stake_budget_2_base_service_10_0.33.csv};
        
        \addplot+[fill opacity=0.9, draw opacity=0.9, mark=none]
        table [col sep=comma, x=restaking_degree, y=min_stake_threshold_0.20] {data/figure4_y_stake_budget_2_base_service_10_0.33.csv};
        
        \addplot+[fill opacity=0.9, draw opacity=0.9, mark=none]
        table [col sep=comma, x=restaking_degree, y=min_stake_threshold_0.27] {data/figure4_y_stake_budget_2_base_service_10_0.33.csv};
        
        \addplot+[fill opacity=0.9, draw opacity=0.9, mark=none]
        table [col sep=comma, x=restaking_degree, y=min_stake_threshold_0.33] {data/figure4_y_stake_budget_2_base_service_10_0.33.csv};
        
        \addplot+[fill opacity=0.9, draw opacity=0.9, mark=none]
        table [col sep=comma, x=restaking_degree, y=min_stake_threshold_0.40] {data/figure4_y_stake_budget_2_base_service_10_0.33.csv};
        
        \addplot+[fill opacity=0.9, draw opacity=0.9, mark=none]
        table [col sep=comma, x=restaking_degree, y=min_stake_threshold_0.47] {data/figure4_y_stake_budget_2_base_service_10_0.33.csv};
        
        \addplot+[fill opacity=0.9, draw opacity=0.9, mark=none]
        table [col sep=comma, x=restaking_degree, y=min_stake_threshold_0.53] {data/figure4_y_stake_budget_2_base_service_10_0.33.csv};
        
        \addplot+[fill opacity=0.9, draw opacity=0.9, mark=none]
        table [col sep=comma, x=restaking_degree, y=min_stake_threshold_0.60] {data/figure4_y_stake_budget_2_base_service_10_0.33.csv};
        \end{axis}
        \end{tikzpicture}
        \caption{Budget~$\adversaryBudget=2$ with a base service.}
        \label{figure:robustness_analysis_sample_networks_y_stake_budget_2_with_base_service}
        \Description{Graph showing stake required for cryptoeconomic security for different restaking degrees with 12 validators and services.}
    \end{subfigure}
    \\
    \vspace{0.5em}
    \pgfplotslegendfromname{robustness_analysis_sample_networks_legend}
    \caption{Minimum stake required for $(\maxByzantineServices, \adversaryBudget)$-robustness.}
    \label{figure:robustness_analysis_sample_networks}
    \Description{Graphs showing stake required for cryptoeconomic security for different restaking degrees across different numbers of validators and services.}
\end{figure*}


\begin{figure}[t]
    \centering
    \begin{tikzpicture}
    \pgfplotsset{height=0.3\textwidth,width=0.49\textwidth}
    \begin{axis}[
        legend style={at={(0.98,0.98)},anchor=north east, legend columns=3},
        legend image post style={line width=1.25pt},
        xlabel={Fraction of Byzantine Services~$\maxByzantineServices$},
        ylabel={Adversary Budget~$\adversaryBudget$},
        xmin=0,
        xmax=0.834
    ]
    \addplot+[fill opacity=0.9, draw opacity=0.9, mark options={solid}]
    table [col sep=comma, x=robustness_threshold, y=min_budget_1.00] {data/figure5.csv};
    \addlegendentry{1.00}

    \addplot+[fill opacity=0.9, draw opacity=0.9, mark options={solid}]
    table [col sep=comma, x=robustness_threshold, y=min_budget_1.25] {data/figure5.csv};
    \addlegendentry{1.25}

    \addplot+[fill opacity=0.9, draw opacity=0.9, mark options={solid}]
    table [col sep=comma, x=robustness_threshold, y=min_budget_1.50] {data/figure5.csv};
    \addlegendentry{1.50}

    \addplot+[fill opacity=0.9, draw opacity=0.9, mark options={solid}]
    table [col sep=comma, x=robustness_threshold, y=min_budget_1.75] {data/figure5.csv};
    \addlegendentry{1.75}

    \addplot+[fill opacity=0.9, draw opacity=0.9, mark options={solid}]
    table [col sep=comma, x=robustness_threshold, y=min_budget_2.00] {data/figure5.csv};
    \addlegendentry{2.00}

    \addplot+[fill opacity=0.9, draw opacity=0.9, mark options={solid}]
    table [col sep=comma, x=robustness_threshold, y=min_budget_2.25] {data/figure5.csv};
    \addlegendentry{2.25}

    \addplot+[fill opacity=0.9, draw opacity=0.9, mark options={solid}]
    table [col sep=comma, x=robustness_threshold, y=min_budget_2.50] {data/figure5.csv};
    \addlegendentry{2.50}

    \addplot+[fill opacity=0.9, draw opacity=0.9, mark options={solid}]
    table [col sep=comma, x=robustness_threshold, y=min_budget_2.75] {data/figure5.csv};
    \addlegendentry{2.75}

    \addplot+[fill opacity=0.9, draw opacity=0.9, mark options={solid}]
    table [col sep=comma, x=robustness_threshold, y=min_budget_3.00] {data/figure5.csv};
    \addlegendentry{3.00}
    
    \end{axis}
    \end{tikzpicture}
    \caption{Failure thresholds for varying restaking degrees.}
    \label{figure:robustness_analysis_varying_restaking_degrees}
    \Description{Graph showing failure thresholds for a given network with restaking degrees varying from 1 to 3.}
\end{figure}

\begin{figure}[t]
    \centering
    \begin{tikzpicture}
    \pgfplotsset{height=0.3\textwidth,width=0.49\textwidth}
    \begin{axis}[
        legend style={at={(0.98,0.98)},anchor=north east},
        legend image post style={line width=1.25pt},
        xlabel={Fraction of Byzantine Services~$\maxByzantineServices$},
        ylabel={Adversary Budget~$\adversaryBudget$},
        xmin=0,
        xmax=0.6
    ]
    \addplot+[fill opacity=0.9, draw opacity=0.9, mark options={solid}]
    table [col sep=comma, x=robustness_threshold, y=min_budget_base_only] {data/figure6.csv};
    \addlegendentry{Base}
    
    \addplot+[fill opacity=0.9, draw opacity=0.9, mark options={solid}]
    table [col sep=comma, x=robustness_threshold, y=min_budget_no_base] {data/figure6.csv};
    \addlegendentry{No base}
    
    \addplot+[fill opacity=0.9, draw opacity=0.9, mark options={solid}]
    table [col sep=comma, x=robustness_threshold, y=min_budget_total] {data/figure6.csv};
    \addlegendentry{Combined}
    \end{axis}
    \end{tikzpicture}
    \caption{Failure thresholds for a network with or without a base service and for the base service alone.}
    \label{figure:robustness_analysis_decomposition_example}
    \Description{Graph showing failure thresholds for a network with or without a base service and for the base service alone.}
\end{figure}

The specific parameters and optimal restaking degree depend on the network parameters. We analyze concrete examples to demonstrate the trade-off between robustness to Byzantine services and to an adversary budget, and the base-service benefit from restaking.

\paragraph{Robustness tradeoff}
We consider a symmetric restaking network comprising~15 validators and~15 services, where each service has an attack threshold of~$1/3$ and an attack prize of~$1$.
We examine adversary budgets of~$0$,~$1$, and~$2$, plotting the minimum stake required for \mbox{$(\maxByzantineServices, \adversaryBudget)$-robustness} across varying restaking degrees. 
Our analysis reveals distinct optimal strategies depending on the threat model.
With no adversary budget ($\adversaryBudget=0$, Fig.~\ref{figure:robustness_analysis_sample_networks_y_stake_budget_0}), lower restaking degrees provide better robustness against Byzantine services, aligning with EigenLayer's second approach.
This is because lower restaking degrees limit stake exposure to each service, reducing damage when Byzantine services slash.
With an adversary budget of~$\adversaryBudget=1$ but no Byzantine services (Fig.~\ref{figure:robustness_analysis_sample_networks_y_stake_budget_1} and Fig.~\ref{figure:robustness_analysis_sample_networks_y_stake_budget_2}, solid blue curve), higher restaking degrees yield better security, consistent with EigenLayer's first approach.
This is because higher restaking degrees mean more stake secures each service, providing better protection against adversary budgets.
When facing both threats simultaneously (Fig.~\ref{figure:robustness_analysis_sample_networks_y_stake_budget_1} and Fig.~\ref{figure:robustness_analysis_sample_networks_y_stake_budget_2}, all other curves), we obtain a convex behavior, with the optimal restaking degree depending on the robustness goal, namely the values of~$\beta$ and~$f$.

We extend our analysis by introducing a base service with threshold~$1/3$ and prize~$10$, where all validators allocate their entire stake to this service.
The results (Figures~\ref{figure:robustness_analysis_sample_networks_y_stake_budget_0_with_base_service},~\ref{figure:robustness_analysis_sample_networks_y_stake_budget_1_with_base_service}, and~\ref{figure:robustness_analysis_sample_networks_y_stake_budget_2_with_base_service}) show similar patterns regarding optimal restaking degrees, but with higher minimum stake requirements for robustness.
Furthermore, when restaking degrees are low, since all stake is allocated to the base service, validators can only allocate a small fraction of their stake to other services, requiring more total stake to achieve robustness.
This effect vanishes at higher restaking degrees.

Furthermore, we demonstrate that tuning the restaking degree can be used to tradeoff robustness to adversary budget and to Byzantine services.
We consider the same scenario as before where each validator has~$10$ units of stake and plot the maximum adversary budget given a certain fraction of Byzantine services and a restaking degree~(Fig.~\ref{figure:robustness_analysis_varying_restaking_degrees}).

A restaking degree of~$1$ results in optimal robustness against Byzantine services, but also with the least robustness to adversary budget when the fraction of Byzantine services is low.
For other restaking degrees, the robustness to adversary budget is constant when there are only few Byzantine services, up until a certain point, where the robustness quickly collapses.
Increasing the restaking degree results in higher robustness to adversary budget when there are few Byzantine services, but also with a lower fraction of Byzantine services that the network can withstand.

Note that the lines between points in Fig.~\ref{figure:robustness_analysis_varying_restaking_degrees} are only for visual guidance.
Since the number of Byzantine services is discrete, the robustness to adversary budget is not continuous.
It is a left-continuous piecewise-constant function.
This is because increasing the maximum fraction of services allowed to be Byzantine only matters once we reach a fraction which allows one more service to be Byzantine.
In addition, due to Prop.~\ref{proposition:symmetric_network_less_robust_with_byzantine_services}, we know the function is monotonically decreasing, as we observe.
For each restaking degree, the area under its function represents its safe region, that is, values~$(\maxByzantineServices,\adversaryBudget)$ such that the restaking network is~$(\maxByzantineServices,\adversaryBudget)$-robust.

\paragraph{Base-service robustness}
In addition, we observe the difference between the networks with and without the base service.
First, the minimum stake required for the base service to be robust is
${\symmetricAttackThreshold |\allValidators| \symmetricStake < \allAttackPrizes + \adversaryBudget}$, so in our case~${5 \symmetricStake < 10 + \adversaryBudget}$.
Thus, for~$\adversaryBudget = 0$, we get that the minimum stake required for the base service to be robust is~$2$.
And indeed, the difference in stake requirements between the networks with and without the base service is~$2$ when the restaking degree is minimal.

However, with~$\adversaryBudget = 2$, we observe one of the key benefits of elastic networks:
The stake required for the combined network to be robust is lower than the stake required when the network and base service are separated.
The stake required for the base service is~$2.4$.
Consider~$\maxByzantineServices = 1/3$: the network without the base service requires~$5.4$ with its best restaking degree, while the network with the base service requires~$7.4$, which is~5\% lower than the alternative, all achieving the same robustness to Byzantine services and adversary budget.

To better illustrate the benefits for a base service we further examine this scenario, comparing the robustness of the following cases: the base service when validators have~$2.4$ units of stake, the network without the base service when validators have~$5.4$ units of stake, and the combined network when validators have the sum,~$7.8$ units of stake~(Fig.~\ref{figure:robustness_analysis_decomposition_example}).
We see that when the base service is part of the combined network it enjoys higher robustness against an adversary, as long as the number of Byzantine services is not too high.

When not too many services are Byzantine, the combined network has more stake securing the base service, requiring more stake to attack and thus a higher adversary budget to reimburse losses.

While we showcase the trade-off and synergistic effect in a specific symmetric setting, these effects apply more broadly.
We only use these symmetric networks as a simple setting to isolate and clearly demonstrate the fundamental mechanisms that underlie restaking network robustness.


\section{Robustness Analysis with Mixed-Integer Programming}
\label{section:mip_robustness_analysis}


\begin{figure*}[t]
    \begin{subfigure}[b]{0.32\textwidth}
        \centering
        \begin{tikzpicture}
        \pgfplotsset{height=0.6\textwidth,width=1\textwidth}
        \begin{axis}[
            legend image post style={line width=1.25pt},
            legend to name=mip_analysis_sample_networks_legend,
            legend style={legend columns=6},
            xlabel={Restaking Degree},
            ylabel={Minimum Stake},
            xmin=1,
            xmax=3,
            ymin=2.5,
            ymax=12
        ]
        \addplot+[fill opacity=0.9, draw opacity=0.9, mark=none]
        table [col sep=comma, x=restaking_degree, y=min_stake_threshold_0.0_with_milp] {data/figure7_budget_0.csv};
        \addlegendentry{$\maxByzantineServices = 0.00$, MIP}

        \addplot+[fill opacity=0.9, draw opacity=0.9, mark=none]
        table [col sep=comma, x=restaking_degree, y=min_stake_threshold_0.33333333333333337_with_milp] {data/figure7_budget_0.csv};
        \addlegendentry{$\maxByzantineServices = 0.33$, MIP}

        \addplot+[fill opacity=0.9, draw opacity=0.9, mark=none]
        table [col sep=comma, x=restaking_degree, y=min_stake_threshold_0.6666666666666667_with_milp] {data/figure7_budget_0.csv};
        \addlegendentry{$\maxByzantineServices = 0.67$, MIP}

        \addplot+[fill opacity=0.9, draw opacity=0.9, mark=none]
        table [col sep=comma, x=restaking_degree, y=min_stake_threshold_0.0_without_milp] {data/figure7_budget_0.csv};
        \addlegendentry{$\maxByzantineServices = 0.00$}

        \addplot+[fill opacity=0.9, draw opacity=0.9, mark=none]
        table [col sep=comma, x=restaking_degree, y=min_stake_threshold_0.33333333333333337_without_milp] {data/figure7_budget_0.csv};
        \addlegendentry{$\maxByzantineServices = 0.33$}

        \addplot+[fill opacity=0.9, draw opacity=0.9, mark=none]
        table [col sep=comma, x=restaking_degree, y=min_stake_threshold_0.6666666666666667_without_milp] {data/figure7_budget_0.csv};
        \addlegendentry{$\maxByzantineServices = 0.67$}
        \end{axis}
        \end{tikzpicture}
        \caption{$\adversaryBudget = 0$.}
        \label{figure:mip_analysis_sample_networks_b0}
        \Description{Graph showing minimum stake required for different restaking degrees and adversary budgets.}
    \end{subfigure}
    ~
    \begin{subfigure}[b]{0.32\textwidth}
        \centering
        \begin{tikzpicture}
        \pgfplotsset{height=0.6\textwidth,width=1\textwidth}
        \begin{axis}[
            xlabel={Restaking Degree},
            ylabel={Minimum Stake},
            xmin=1,
            xmax=3,
            ymin=2.5,
            ymax=12
        ]
        \addplot+[fill opacity=0.9, draw opacity=0.9, mark=none]
        table [col sep=comma, x=restaking_degree, y=min_stake_threshold_0.0_with_milp] {data/figure7_budget_1.csv};

        \addplot+[fill opacity=0.9, draw opacity=0.9, mark=none]
        table [col sep=comma, x=restaking_degree, y=min_stake_threshold_0.33333333333333337_with_milp] {data/figure7_budget_1.csv};

        \addplot+[fill opacity=0.9, draw opacity=0.9, mark=none]
        table [col sep=comma, x=restaking_degree, y=min_stake_threshold_0.6666666666666667_with_milp] {data/figure7_budget_1.csv};

        \addplot+[fill opacity=0.9, draw opacity=0.9, mark=none]
        table [col sep=comma, x=restaking_degree, y=min_stake_threshold_0.0_without_milp] {data/figure7_budget_1.csv};

        \addplot+[fill opacity=0.9, draw opacity=0.9, mark=none]
        table [col sep=comma, x=restaking_degree, y=min_stake_threshold_0.33333333333333337_without_milp] {data/figure7_budget_1.csv};

        \addplot+[fill opacity=0.9, draw opacity=0.9, mark=none]
        table [col sep=comma, x=restaking_degree, y=min_stake_threshold_0.6666666666666667_without_milp] {data/figure7_budget_1.csv};
        \end{axis}
        \end{tikzpicture}
        \caption{$\adversaryBudget=1$.}
        \label{figure:mip_analysis_sample_networks_b1}
        \Description{Graph showing minimum stake required for different restaking degrees and adversary budgets.}
    \end{subfigure}
    ~
    \begin{subfigure}[b]{0.32\textwidth}
        \centering
        \begin{tikzpicture}
        \pgfplotsset{height=0.6\textwidth,width=1\textwidth}
        \begin{axis}[
            xlabel={Restaking Degree},
            ylabel={Minimum Stake},
            xmin=1,
            xmax=3,
            ymin=2.5,
            ymax=12
        ]
        \addplot+[fill opacity=0.9, draw opacity=0.9, mark=none]
        table [col sep=comma, x=restaking_degree, y=min_stake_threshold_0.0_with_milp] {data/figure7_budget_2.csv};

        \addplot+[fill opacity=0.9, draw opacity=0.9, mark=none]
        table [col sep=comma, x=restaking_degree, y=min_stake_threshold_0.33333333333333337_with_milp] {data/figure7_budget_2.csv};

        \addplot+[fill opacity=0.9, draw opacity=0.9, mark=none]
        table [col sep=comma, x=restaking_degree, y=min_stake_threshold_0.6666666666666667_with_milp] {data/figure7_budget_2.csv};

        \addplot+[fill opacity=0.9, draw opacity=0.9, mark=none]
        table [col sep=comma, x=restaking_degree, y=min_stake_threshold_0.0_without_milp] {data/figure7_budget_2.csv};

        \addplot+[fill opacity=0.9, draw opacity=0.9, mark=none]
        table [col sep=comma, x=restaking_degree, y=min_stake_threshold_0.33333333333333337_without_milp] {data/figure7_budget_2.csv};

        \addplot+[fill opacity=0.9, draw opacity=0.9, mark=none]
        table [col sep=comma, x=restaking_degree, y=min_stake_threshold_0.6666666666666667_without_milp] {data/figure7_budget_2.csv};
        \end{axis}
        \end{tikzpicture}
        \caption{$\adversaryBudget=2$.}
        \label{figure:mip_analysis_sample_networks_b2}
        \Description{Graph showing minimum stake required for different restaking degrees and adversary budgets.}
    \end{subfigure}
    \\
    \vspace{0.5em}
    \pgfplotslegendfromname{mip_analysis_sample_networks_legend}
    \caption{Minimum stake required for $(\maxByzantineServices, \adversaryBudget)$-robustness.}
    \label{figure:mip_analysis_sample_networks}
    \Description{Graph showing minimum stake required for different restaking degrees and adversary budgets.}
\end{figure*}

\begin{figure*}[t]
    \begin{subfigure}[b]{0.32\textwidth}
        \centering
        \begin{tikzpicture}
        \pgfplotsset{height=0.6\textwidth,width=1\textwidth}
        \begin{axis}[
            legend image post style={line width=1.25pt},
            legend to name=mip_analysis_sample_networks_base_legend,
            legend style={legend columns=6},
            xlabel={Restaking Degree},
            ylabel={Minimum Stake},
            xmin=1,
            xmax=3,
            ymin=5,
            ymax=30
        ]
        \addplot+[fill opacity=0.9, draw opacity=0.9, mark=none]
        table [col sep=comma, x=restaking_degree, y=min_stake_threshold_0.00] {data/figure8_y_stake_base_service_10_0.50_loss_threshold_0.csv};
        \addlegendentry{$\maxByzantineServices = 0.00$}

        \addplot+[fill opacity=0.9, draw opacity=0.9, mark=none]
        table [col sep=comma, x=restaking_degree, y=min_stake_threshold_0.10] {data/figure8_y_stake_base_service_10_0.50_loss_threshold_0.csv};
        \addlegendentry{$\maxByzantineServices = 0.10$}

        \addplot+[fill opacity=0.9, draw opacity=0.9, mark=none]
        table [col sep=comma, x=restaking_degree, y=min_stake_threshold_0.21] {data/figure8_y_stake_base_service_10_0.50_loss_threshold_0.csv};
        \addlegendentry{$\maxByzantineServices = 0.21$}

        \addplot+[fill opacity=0.9, draw opacity=0.9, mark=none]
        table [col sep=comma, x=restaking_degree, y=min_stake_threshold_0.31] {data/figure8_y_stake_base_service_10_0.50_loss_threshold_0.csv};
        \addlegendentry{$\maxByzantineServices = 0.31$}
        \end{axis}
        \end{tikzpicture}
        \caption{$\adversaryBudget=0$.}
        \label{figure:mip_analysis_sample_networks_base_budget_0}
        \Description{Graph showing minimum stake required for different restaking degrees and adversary budgets with a base service.}
    \end{subfigure}
    ~
    \begin{subfigure}[b]{0.32\textwidth}
        \centering
        \begin{tikzpicture}
        \pgfplotsset{height=0.6\textwidth,width=1\textwidth}
        \begin{axis}[
            xlabel={Restaking Degree},
            ylabel={Minimum Stake},
            xmin=1,
            xmax=3,
            ymin=5,
            ymax=30
        ]
        \addplot+[fill opacity=0.9, draw opacity=0.9, mark=none]
        table [col sep=comma, x=restaking_degree, y=min_stake_threshold_0.00] {data/figure8_y_stake_base_service_10_0.50_loss_threshold_1.csv};

        \addplot+[fill opacity=0.9, draw opacity=0.9, mark=none]
        table [col sep=comma, x=restaking_degree, y=min_stake_threshold_0.10] {data/figure8_y_stake_base_service_10_0.50_loss_threshold_1.csv};

        \addplot+[fill opacity=0.9, draw opacity=0.9, mark=none]
        table [col sep=comma, x=restaking_degree, y=min_stake_threshold_0.21] {data/figure8_y_stake_base_service_10_0.50_loss_threshold_1.csv};

        \addplot+[fill opacity=0.9, draw opacity=0.9, mark=none]
        table [col sep=comma, x=restaking_degree, y=min_stake_threshold_0.31] {data/figure8_y_stake_base_service_10_0.50_loss_threshold_1.csv};
        \end{axis}
        \end{tikzpicture}
        \caption{$\adversaryBudget=1$.}
        \label{figure:mip_analysis_sample_networks_base_budget_1}
        \Description{Graph showing minimum stake required for different restaking degrees and adversary budgets with a base service.}
    \end{subfigure}
    ~
    \begin{subfigure}[b]{0.32\textwidth}
        \centering
        \begin{tikzpicture}
        \pgfplotsset{height=0.6\textwidth,width=1\textwidth}
        \begin{axis}[
            xlabel={Restaking Degree},
            ylabel={Minimum Stake},
            xmin=1,
            xmax=3,
            ymin=5,
            ymax=30
        ]
        \addplot+[fill opacity=0.9, draw opacity=0.9, mark=none]
        table [col sep=comma, x=restaking_degree, y=min_stake_threshold_0.00] {data/figure8_y_stake_base_service_10_0.50_loss_threshold_2.csv};

        \addplot+[fill opacity=0.9, draw opacity=0.9, mark=none]
        table [col sep=comma, x=restaking_degree, y=min_stake_threshold_0.10] {data/figure8_y_stake_base_service_10_0.50_loss_threshold_2.csv};

        \addplot+[fill opacity=0.9, draw opacity=0.9, mark=none]
        table [col sep=comma, x=restaking_degree, y=min_stake_threshold_0.21] {data/figure8_y_stake_base_service_10_0.50_loss_threshold_2.csv};

        \addplot+[fill opacity=0.9, draw opacity=0.9, mark=none]
        table [col sep=comma, x=restaking_degree, y=min_stake_threshold_0.31] {data/figure8_y_stake_base_service_10_0.50_loss_threshold_2.csv};
        \end{axis}
        \end{tikzpicture}
        \caption{$\adversaryBudget=2$.}
        \label{figure:mip_analysis_sample_networks_base_budget_2}
        \Description{Graph showing minimum stake required for different restaking degrees and adversary budgets with a base service.}
    \end{subfigure}
    \\
    \vspace{0.5em}
    \pgfplotslegendfromname{mip_analysis_sample_networks_base_legend}
    \caption{Minimum stake required for $(\maxByzantineServices, \adversaryBudget)$-robustness with a base service.}
    \label{figure:mip_analysis_sample_networks_base}
    \Description{Graph showing minimum stake required for different restaking degrees and adversary budgets with a base service.}
\end{figure*}


Despite the hardness results we have shown, we can still empirically analyze the robustness in the general case for small restaking networks.
For this, we utilize Mixed-Integer Programming~(\S\ref{section:mip_robustness_analysis:preliminary}).
We introduce 2 programs: one for finding the maximum budget~$\adversaryBudget$ against which a network is $\adversaryBudget$-cryptoeconomically robust, and one for finding the maximum fraction of Byzantine services a network can withstand given an adversary budget.
We defer the details on their design and implementation to Appendix~\ref{section:mip_appendix}.
In this section, we present results for some sample networks~(\S\ref{section:mip_robustness_analysis:results}).


\subsection{Background: Mixed-Integer Programming}
\label{section:mip_robustness_analysis:preliminary}


A mixed-integer program~(MIP) is a linear optimization problem with both integer and real-valued variables~\cite{junger200950}.
It comprises a constraint matrix~$A \in \realNumbers^{m \times n}$ and vector~$b \in \realNumbers^m$, an objective vector~$c \in \realNumbers^n$, and a set~$I \subseteq \left\{ 1, \ldots, n \right\}$ of indices of integer variables.
The program is then:
\begin{equation}
    \min_{x \in \realNumbers^n} \left\{ c^\top x \middle| A x \leq b, x_i \in \integers \text{ for all } i \in I \right\} .
\end{equation}


\subsection{Sample Networks}
\label{section:mip_robustness_analysis:results}

To validate the MIPs we compare their results with our theoretical approach in a symmetric network where all validators allocate the same amount to all services.
This implies that the restaking degree fully determines validators' allocations.
Then, given an adversary budget~$\adversaryBudget$ and a maximum fraction of Byzantine services~$\maxByzantineServices$, we can calculate the minimum stake required for $(\maxByzantineServices, \adversaryBudget)$-robustness using the previous MIPs.
We use the cryptoeconomic robustness MIP if~$\maxByzantineServices = 0$ and use the budget-and-Byzantine robustness MIP if~$\maxByzantineServices > 0$.

Fig.~\ref{figure:mip_analysis_sample_networks} shows the results using both of our approaches for a restaking network of~$3$ validators and~$3$ services where the attack threshold for all services is~$1/3$ and the attack prize is~$1$.
As expected, for~$\adversaryBudget \in \{0, 1, 2\}$ and ~$\maxByzantineServices \in \{0, 1/3, 2/3\}$, the MIPs yield the same results as our theoretical approach.

Next, we turn to a network that our theoretical approach could not analyze.
Again, we assume that validators' allocations to all services are equal so the restaking degree determines the allocations.

We start with the same network with~$3$ services,~$3$ validators, attack thresholds of~$1/3$ and attack prizes of~$1$, and add a base service that all validators are maximally allocated to.
The base service has a prize of~$10$ and a threshold of~$1/2$.
Fig.~\ref{figure:mip_analysis_sample_networks_base} shows the minimum stake required for \mbox{$(\maxByzantineServices, \adversaryBudget)$-robustness} for~${\adversaryBudget \in \{0, 1, 2\}}$.

We again observe that a balanced restaking degree results in less stake required for robustness.
But, interestingly, in some cases, we see that the minimum required stake for~$\maxByzantineServices=1/3$ and~$\maxByzantineServices=1/2$ coincide.
Perhaps because of a similar effect we observed previously in the security analysis where the number of validators times the threshold is not an integer resulting in attacks that cost more to the one validator who is not consolidated.


\section{Incentives for a Target Restaking Degree}
\label{section:incentives_for_a_target_restaking_degree}


Having shown that elastic restaking networks with a properly tuned restaking degree are more robust than atomic restaking networks, we now turn our attention to incentivizing the optimal restaking degree.
We first present a scheme for service rewards to achieve a target network-wide restaking degree~$\targetRestakingDegree$~(\S\ref{section:incentives_for_a_target_restaking_degree:service_rewards}).
We then model the validators' choices of allocations to services under this scheme as a game~(\S\ref{section:incentives_for_a_target_restaking_degree:network_formation_game}).
Lastly, we analyze the game and find a Nash equilibrium in which validators allocate their stake such that their restaking degree is equal to~$\targetRestakingDegree$~(\S\ref{section:incentives_for_a_target_restaking_degree:nash_equilibrium}).


\subsection{Service Rewards}
\label{section:incentives_for_a_target_restaking_degree:service_rewards}

In current restaking networks like~\citet{eigenlayer2024restaking}, each service~$\service$ has a \emph{reward pool}~$\serviceReward{\service}$.
Formally, denote by~$\allServiceRewards$ the reward pools of all services, namely,~$\allServiceRewards: \allServices \to \positiveRealNumbers$.
Each service's reward pool is distributed to validators proportionally to their allocations to the service.
The reward of a validator~$\validator$ for a service~$\service$ is given by
\begin{equation}
    \validatorReward{\validator}{\service} = \frac{\allocation{\validator}{\service}}{\sum_{\validator' \in \allValidators} \allocation{\validator'}{\service}} \cdot \serviceReward{\service} .
\end{equation}

To achieve a target restaking degree~$\targetRestakingDegree$, we propose a scheme that rewards only validators adhering to the target restaking degree;
Formally, the reward of a validator~$\validator$ for a service~$\service$ is given by
\begin{equation}
    \label{equation:incentives_for_a_target_restaking_degree:validator_reward}
    \validatorReward{\validator}{\service}
    = \begin{cases}
        \frac{\allocation{\validator}{\service}}{\sum_{\validator' \in \allValidators} \allocation{\validator'}{\service}} \cdot \serviceReward{\service} & \text{if } \restakingDegree{\validator} \leq \targetRestakingDegree , \\
        0 & \text{otherwise} .
    \end{cases}
\end{equation}
When~$\targetRestakingDegree \geq |\allServices|$, this scheme is equivalent to the current proportional reward scheme, since no validator can exceed this restaking degree, and thus all validators satisfy the condition for receiving rewards.

Using this scheme we disincentivize allocations higher than the desired degree.
A potential alternative would have been to simply disallow allocations higher than the desired degree by ejecting or ignoring validators that exceed it.
However, such a mechanism suffers from an important drawback when it interacts with the robustness game:
Once slashing due to a Byzantine service occurs, the restaking degree of some validators will increase and may surpass the allowed limit.
Ignoring such validators will result in further loss of stake in the network.
We choose to only disincentivize over allocation alone to avoid this issue.


\subsection{Network Formation Game}
\label{section:incentives_for_a_target_restaking_degree:network_formation_game}


We analyze the network formation under the proposed reward scheme as a strategic game.
First, assume the following are fixed: the set of validators~$\allValidators$, the set of services~$\allServices$, validators' stakes~$\allStakes$, and the service reward pools~$\allServiceRewards$.

The set of players is the set of validators~$\allValidators$.
Each validator~$\validator$ chooses an allocation~$\allocation{\validator}{\service}$ for each service~$\service \in \allServices$.
So,~$\allAllocations$ specifies the strategy profile of all validators.
The utility of a validator~$\validator$ for a given strategy profile~$\allAllocations$ is the sum of rewards they receive from all services, namely,
\begin{equation}
    \label{equation:incentives_for_a_target_restaking_degree:validator_utility}
    \validatorUtility{\validator}{\allAllocations}
    = \sum_{\service \in \allServices} \validatorReward{\validator}{\service}
    \underset{\eqref{equation:incentives_for_a_target_restaking_degree:validator_reward}} = \begin{cases}
        \sum_{\service \in \allServices} \frac{\allocation{\validator}{\service} \cdot \serviceReward{\service}}{\sum_{\validator' \in \allValidators} \allocation{\validator'}{\service}} & \text{if } \restakingDegree{\validator} \leq \targetRestakingDegree , \\
        0 & \text{otherwise} .
    \end{cases}
\end{equation}


\subsection{Nash Equilibrium}
\label{section:incentives_for_a_target_restaking_degree:nash_equilibrium}


We analyze the game and show there exists a Nash equilibrium where validators allocate their stake such that their restaking degree is~$\targetRestakingDegree$.

\begin{theorem}
    \label{theorem:incentives_for_a_target_restaking_degree:nash_equilibrium}
    Assume that for each service~$\service \in \allServices$, $\serviceReward{\service} > 0$ and~$\targetRestakingDegree \cdot \frac{\serviceReward{\service}}{\sum_{\service' \in \allServices} \serviceReward{\service'}} \leq 1$.
    Then, the strategy profile
    \begin{equation}
        \nashAllocation{\validator}{\service} = \targetRestakingDegree \cdot \frac{\serviceReward{\service}}{\sum_{\service' \in \allServices} \serviceReward{\service'}} \cdot \stake{\validator}
    \end{equation}
    is a Nash equilibrium, and it results in a restaking degree of~$\targetRestakingDegree$.
\end{theorem}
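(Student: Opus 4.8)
The plan is to fix an arbitrary validator~$\validator$, freeze the allocations of all other validators at the candidate profile~$\allNashAllocations$, and show that the prescribed allocation is a best response for~$\validator$; since~$\validator$ is arbitrary, this yields the Nash equilibrium. Write $S_R := \sum_{\service' \in \allServices} \serviceReward{\service'}$ and $\Sigma := \sum_{\validator' \in \allValidators} \stake{\validator'}$. First I would check that the candidate is a legal strategy profile: $\serviceReward{\service} > 0$ gives $\nashAllocation{\validator}{\service} > 0$, and the hypothesis $\targetRestakingDegree \serviceReward{\service}/S_R \leq 1$ gives $\nashAllocation{\validator}{\service} \leq \stake{\validator}$. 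Summing the allocations over $\service \in \allServices$ shows $\restakingDegree{\validator} = \targetRestakingDegree \sum_{\service} \serviceReward{\service}/S_R = \targetRestakingDegree$, which proves the final clause of the theorem and, in particular, places~$\validator$ in the rewarded branch of~\eqreft{equation:incentives_for_a_target_restaking_degree:validator_utility}, with equilibrium utility $\frac{\stake{\validator}}{\Sigma} S_R > 0$ after cancellation.

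Next I would dispose of one class of deviations cheaply. If~$\validator$ deviates to an allocation whose restaking degree exceeds~$\targetRestakingDegree$, its utility is~$0$ by~\eqreft{equation:incentives_for_a_target_restaking_degree:validator_utility}, which is strictly below the positive equilibrium utility, so such deviations are never profitable. It remains to bound the utility of deviations $a = (a_{\service})_{\service \in \allServices}$ with $a_{\service} \in [0, \stake{\validator}]$ and $\sum_{\service} a_{\service} \leq \targetRestakingDegree \stake{\validator}$. For such~$a$, setting the constant $d_{\service} := \sum_{\validator' \neq \validator} \nashAllocation{\validator'}{\service} = \targetRestakingDegree \frac{\serviceReward{\service}}{S_R}(\Sigma - \stake{\validator})$, the utility equals $f(a) := \sum_{\service} \frac{a_{\service}\,\serviceReward{\service}}{a_{\service} + d_{\service}} = S_R - g(a)$, where $g(a) := \sum_{\service} \frac{\serviceReward{\service} d_{\service}}{a_{\service} + d_{\service}}$ is separable and convex in~$a$ (each summand has nonnegative second derivative in~$a_{\service}$ since $d_{\service} \geq 0$). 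So maximizing~$f$ over the feasible set is the same as minimizing~$g$.

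The crux is the gradient of~$g$ at the candidate point $a^\ast := \nashAllocation{\validator}{\cdot}$. Using $a^\ast_{\service} + d_{\service} = \targetRestakingDegree \frac{\serviceReward{\service}}{S_R}\Sigma$, a direct calculation gives $\left.\frac{\partial g}{\partial a_{\service}}\right|_{a^\ast} = -\frac{\serviceReward{\service} d_{\service}}{(a^\ast_{\service} + d_{\service})^2} = -\frac{(\Sigma - \stake{\validator}) S_R}{\targetRestakingDegree\,\Sigma^2} =: -\lambda$, a constant negative value, the same for every service (here I assume $|\allValidators| \geq 2$, so $\Sigma - \stake{\validator} > 0$; when $|\allValidators| = 1$ one has $d_{\service} = 0$, hence $f(a) = \sum_{\service : a_{\service} > 0} \serviceReward{\service} \leq S_R = f(a^\ast)$, and there is nothing to prove). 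By the first-order characterization of convexity, for every feasible~$a$, $g(a) \geq g(a^\ast) + \sum_{\service} \left.\frac{\partial g}{\partial a_{\service}}\right|_{a^\ast}(a_{\service} - a^\ast_{\service}) = g(a^\ast) - \lambda\bigl(\sum_{\service} a_{\service} - \targetRestakingDegree\stake{\validator}\bigr) \geq g(a^\ast)$, the last step using $\sum_{\service} a_{\service} \leq \targetRestakingDegree\stake{\validator}$ and $\lambda > 0$. Hence $f(a) \leq f(a^\ast)$ for every feasible deviation, so $\nashAllocation{\validator}{\cdot}$ is a best response and $\allNashAllocations$ is a Nash equilibrium.

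I expect the gradient computation to be the main point: it is exactly the cancellation showing $\left.\partial g/\partial a_{\service}\right|_{a^\ast}$ is independent of~$\service$ that makes allocating in proportion to the reward pools optimal, and this is why the reward scheme~\eqreft{equation:incentives_for_a_target_restaking_degree:validator_reward} works. The supporting-hyperplane route is chosen deliberately so that the box constraints $a_{\service} \leq \stake{\validator}$ are needed only to certify feasibility of~$a^\ast$ and play no further role, which keeps the handling of the degenerate cases ($|\allValidators| = 1$, $\targetRestakingDegree = 0$, or some $\nashAllocation{\validator}{\service}$ equal to $\stake{\validator}$) brief.
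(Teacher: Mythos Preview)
Your proof is correct and takes a cleaner route than the paper's. Both proofs fix a validator, freeze the others at~$\allNashAllocations$, dispose of deviations with restaking degree above~$\targetRestakingDegree$, and exploit concavity of the utility in the remaining region. The paper, however, first argues by a local perturbation that the budget constraint $\sum_{\service} a_{\service} \le \targetRestakingDegree\,\stake{\validator}$ binds at the optimum, then eliminates one variable via this equality, shows the reduced $(n{-}1)$-variable objective is concave, and solves the first-order system explicitly to recover~$a^\ast$. Your supporting-hyperplane argument bypasses all of this: by writing $f = S_R - g$ with $g$ separable and convex, computing that $\nabla g(a^\ast)$ has all entries equal to the same negative constant~$-\lambda$, and invoking the first-order inequality $g(a) \ge g(a^\ast) + \nabla g(a^\ast)\cdot(a - a^\ast)$, you get optimality in one line from the budget inequality alone. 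This is essentially a direct KKT verification (the gradient is a positive multiple of the constraint normal~$\mathbf{1}$), and it has the pleasant side effect you note: the per-service box constraints $a_{\service} \le \stake{\validator}$ enter only to certify that $a^\ast$ is feasible, whereas the paper's variable-elimination route leaves them implicit. Your separate treatment of the degenerate case $|\allValidators| = 1$ (where $d_{\service} = 0$ and the convexity argument breaks down) is also a detail the paper does not isolate.
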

We defer the proof to Appendix~\ref{appendix:proofs_from_section_incentives_for_a_target_restaking_degree}.

This equilibrium holds when for each service~$\service$, ${\targetRestakingDegree \cdot \frac{\serviceReward{\service}}{\sum_{\service' \in \allServices} \serviceReward{\service'}} \leq 1}$.
That is, there doesn't exist a service that gives a reward that is so high compared to the others such that a validator would want to allocate more than~100\% of their stake to it.


\section{Conclusion}
\label{section:conclusion}


We introduced Elastic Restaking Networks, where in case of service failure validators' stakes are stretched among the remaining services. 
We showed that proving whether there is an attack against the network is in general an NP-complete problem, but it can be efficiently solved in symmetric cases. 
This has allowed us to find the restaking degree where the network is most robust against Byzantine service faults and against an adversary with a set budget.
While our symmetric analysis provides valuable insights into fundamental mechanisms, the full complexity of asymmetric networks remains to be explored. 
This analysis can be used directly to deploy secure restaking networks; we provide a mechanism for the system designer to incentivize validators to allocate at a target restaking degree. 

Our results give rise to several questions for future work. 
One is finding the optimal slashing function, that is, how much to penalize a validator if they use the same stake to attack multiple services. 
Intuitively, this should be a monotonically increasing function, and if it is submodular then Byzantine faults are less effective, but attacks become cheaper. 
Another question is whether the mechanism design that incentivizes a target restaking degree can be decentralized. 

While we defer these questions to future work, our results already show that elastic restaking achieves better robustness than existing schemes, and in particular can improve the security of a base-service underlying blockchain. 

\begin{acks}
    This work was supported in part by the Avalanche Foundation and by IC3.
\end{acks}

\bibliographystyle{ACM-Reference-Format}
\balance
\bibliography{ref}


\begin{thebibliography}{64}


\ifx \showCODEN    \undefined \def \showCODEN     #1{\unskip}     \fi
\ifx \showISBNx    \undefined \def \showISBNx     #1{\unskip}     \fi
\ifx \showISBNxiii \undefined \def \showISBNxiii  #1{\unskip}     \fi
\ifx \showISSN     \undefined \def \showISSN      #1{\unskip}     \fi
\ifx \showLCCN     \undefined \def \showLCCN      #1{\unskip}     \fi
\ifx \shownote     \undefined \def \shownote      #1{#1}          \fi
\ifx \showarticletitle \undefined \def \showarticletitle #1{#1}   \fi
\ifx \showURL      \undefined \def \showURL       {\relax}        \fi
\providecommand\bibfield[2]{#2}
\providecommand\bibinfo[2]{#2}
\providecommand\natexlab[1]{#1}
\providecommand\showeprint[2][]{arXiv:#2}

\bibitem[Abraham et~al\mbox{.}(2023)]%
        {abraham2023colordag}
\bibfield{author}{\bibinfo{person}{Ittai Abraham}, \bibinfo{person}{Danny
  Dolev}, \bibinfo{person}{Ittay Eyal}, {and} \bibinfo{person}{Joseph~Y
  Halpern}.} \bibinfo{year}{2023}\natexlab{}.
\newblock \bibinfo{title}{Colordag: An incentive-compatible blockchain}.
\newblock


\bibitem[Acemoglu et~al\mbox{.}(2015a)]%
        {acemoglu2015networks}
\bibfield{author}{\bibinfo{person}{Daron Acemoglu}, \bibinfo{person}{Asuman
  Ozdaglar}, {and} \bibinfo{person}{Alireza Tahbaz-Salehi}.}
  \bibinfo{year}{2015}\natexlab{a}.
\newblock \bibinfo{booktitle}{\emph{Networks, shocks, and systemic risk}}.
\newblock \bibinfo{type}{{T}echnical {R}eport}. \bibinfo{institution}{National
  Bureau of Economic Research}.
\newblock


\bibitem[Acemoglu et~al\mbox{.}(2015b)]%
        {acemoglu2015systemic}
\bibfield{author}{\bibinfo{person}{Daron Acemoglu}, \bibinfo{person}{Asuman
  Ozdaglar}, {and} \bibinfo{person}{Alireza Tahbaz-Salehi}.}
  \bibinfo{year}{2015}\natexlab{b}.
\newblock \showarticletitle{Systemic risk and stability in financial networks}.
\newblock \bibinfo{journal}{\emph{American Economic Review}}
  \bibinfo{volume}{105}, \bibinfo{number}{2} (\bibinfo{year}{2015}),
  \bibinfo{pages}{564--608}.
\newblock


\bibitem[Alexander(2024)]%
        {alexander2024leveraged}
\bibfield{author}{\bibinfo{person}{Carol Alexander}.}
  \bibinfo{year}{2024}\natexlab{}.
\newblock \bibinfo{title}{Leveraged Restaking of Leveraged Staking: What are
  the Risks?}
\newblock


\bibitem[Aumann(1959)]%
        {aumann1959acceptable}
\bibfield{author}{\bibinfo{person}{Robert~J Aumann}.}
  \bibinfo{year}{1959}\natexlab{}.
\newblock \showarticletitle{Acceptable points in general cooperative n-person
  games}.
\newblock \bibinfo{journal}{\emph{Contributions to the Theory of Games}}
  \bibinfo{volume}{4}, \bibinfo{number}{40} (\bibinfo{year}{1959}),
  \bibinfo{pages}{287--324}.
\newblock


\bibitem[Aumayr et~al\mbox{.}(2024)]%
        {aumayr2024securing}
\bibfield{author}{\bibinfo{person}{Lukas Aumayr}, \bibinfo{person}{Zeta
  Avarikioti}, \bibinfo{person}{Matteo Maffei}, {and} \bibinfo{person}{Subhra
  Mazumdar}.} \bibinfo{year}{2024}\natexlab{}.
\newblock \showarticletitle{Securing Lightning Channels against Rational
  Miners}. In \bibinfo{booktitle}{\emph{Proceedings of the 2024 on ACM SIGSAC
  Conference on Computer and Communications Security}}.
  \bibinfo{publisher}{Association for Computing Machinery},
  \bibinfo{address}{New York, NY, USA}, \bibinfo{pages}{393--407}.
\newblock


\bibitem[Aumayr et~al\mbox{.}(2022)]%
        {aumayr2022sleepy}
\bibfield{author}{\bibinfo{person}{Lukas Aumayr},
  \bibinfo{person}{Sri~AravindaKrishnan Thyagarajan}, \bibinfo{person}{Giulio
  Malavolta}, \bibinfo{person}{Pedro Moreno-Sanchez}, {and}
  \bibinfo{person}{Matteo Maffei}.} \bibinfo{year}{2022}\natexlab{}.
\newblock \showarticletitle{Sleepy channels: Bi-directional payment channels
  without watchtowers}. In \bibinfo{booktitle}{\emph{Proceedings of the 2022
  ACM SIGSAC Conference on Computer and Communications Security}}.
  \bibinfo{publisher}{Association for Computing Machinery},
  \bibinfo{address}{New York, NY, USA}, \bibinfo{pages}{179--192}.
\newblock


\bibitem[Babel et~al\mbox{.}(2023)]%
        {babel2023lanturn}
\bibfield{author}{\bibinfo{person}{Kushal Babel}, \bibinfo{person}{Mojan
  Javaheripi}, \bibinfo{person}{Yan Ji}, \bibinfo{person}{Mahimna Kelkar},
  \bibinfo{person}{Farinaz Koushanfar}, {and} \bibinfo{person}{Ari Juels}.}
  \bibinfo{year}{2023}\natexlab{}.
\newblock \showarticletitle{Lanturn: Measuring economic security of smart
  contracts through adaptive learning}. In
  \bibinfo{booktitle}{\emph{Proceedings of the 2023 ACM SIGSAC Conference on
  Computer and Communications Security}}. \bibinfo{publisher}{Association for
  Computing Machinery}, \bibinfo{address}{New York, NY, USA},
  \bibinfo{pages}{1212--1226}.
\newblock


\bibitem[Babylon(2023)]%
        {babylon2023bitcoinStaking}
\bibfield{author}{\bibinfo{person}{Babylon}.} \bibinfo{year}{2023}\natexlab{}.
\newblock \bibinfo{title}{Bitcoin Staking: Unlocking 21M Bitcoins to Secure the
  Proof-of-Stake Economy}.
\newblock
  \bibinfo{howpublished}{\url{https://docs.babylonlabs.io/papers/btc_staking_litepaper(EN).pdf}}.
\newblock


\bibitem[Bar-Zur and Eyal(2025)]%
        {barzur2025code}
\bibfield{author}{\bibinfo{person}{Roi Bar-Zur} {and} \bibinfo{person}{Ittay
  Eyal}.} \bibinfo{year}{2025}\natexlab{}.
\newblock \bibinfo{title}{Code for Elastic Restaking Networks}.
\newblock
  \bibinfo{howpublished}{\url{https://github.com/roibarzur/elastic-restaking-networks-code}}.
\newblock


\bibitem[Battiston et~al\mbox{.}(2016)]%
        {battiston2016price}
\bibfield{author}{\bibinfo{person}{Stefano Battiston}, \bibinfo{person}{Guido
  Caldarelli}, \bibinfo{person}{Robert~M May}, \bibinfo{person}{Tarik Roukny},
  {and} \bibinfo{person}{Joseph~E Stiglitz}.} \bibinfo{year}{2016}\natexlab{}.
\newblock \showarticletitle{The price of complexity in financial networks}.
\newblock \bibinfo{journal}{\emph{Proceedings of the National Academy of
  Sciences}} \bibinfo{volume}{113}, \bibinfo{number}{36}
  (\bibinfo{year}{2016}), \bibinfo{pages}{10031--10036}.
\newblock


\bibitem[Breidenbach et~al\mbox{.}(2021)]%
        {breidenbach2021chainlink}
\bibfield{author}{\bibinfo{person}{Lorenz Breidenbach},
  \bibinfo{person}{Christian Cachin}, \bibinfo{person}{Benedict Chan},
  \bibinfo{person}{Alex Coventry}, \bibinfo{person}{Steve Ellis},
  \bibinfo{person}{Ari Juels}, \bibinfo{person}{Farinaz Koushanfar},
  \bibinfo{person}{Andrew Miller}, \bibinfo{person}{Brendan Magauran},
  \bibinfo{person}{Daniel Moroz}, {et~al\mbox{.}}}
  \bibinfo{year}{2021}\natexlab{}.
\newblock \showarticletitle{Chainlink 2.0: Next steps in the evolution of
  decentralized oracle networks}.
\newblock \bibinfo{journal}{\emph{Chainlink Labs}}  \bibinfo{volume}{1}
  (\bibinfo{year}{2021}), \bibinfo{pages}{1--136}.
\newblock


\bibitem[Brugger et~al\mbox{.}(2023)]%
        {brugger2023checkmate}
\bibfield{author}{\bibinfo{person}{Lea~Salome Brugger}, \bibinfo{person}{Laura
  Kov{\'a}cs}, \bibinfo{person}{Anja Petkovic~Komel}, \bibinfo{person}{Sophie
  Rain}, {and} \bibinfo{person}{Michael Rawson}.}
  \bibinfo{year}{2023}\natexlab{}.
\newblock \showarticletitle{CheckMate: automated game-theoretic security
  reasoning}. In \bibinfo{booktitle}{\emph{Proceedings of the 2023 ACM SIGSAC
  Conference on Computer and Communications Security}}.
  \bibinfo{publisher}{Association for Computing Machinery},
  \bibinfo{address}{New York, NY, USA}, \bibinfo{pages}{1407--1421}.
\newblock


\bibitem[Brunnermeier et~al\mbox{.}(2012)]%
        {brunnermeier2012risk}
\bibfield{author}{\bibinfo{person}{Markus~K Brunnermeier},
  \bibinfo{person}{Gary Gorton}, {and} \bibinfo{person}{Arvind Krishnamurthy}.}
  \bibinfo{year}{2012}\natexlab{}.
\newblock \showarticletitle{Risk topography}.
\newblock \bibinfo{journal}{\emph{Nber macroeconomics annual}}
  \bibinfo{volume}{26}, \bibinfo{number}{1} (\bibinfo{year}{2012}),
  \bibinfo{pages}{149--176}.
\newblock


\bibitem[Buterin(2014a)]%
        {buterin2014ethereum}
\bibfield{author}{\bibinfo{person}{Vitalik Buterin}.}
  \bibinfo{year}{2014}\natexlab{a}.
\newblock \bibinfo{title}{A Next Generation Smart Contract \& Decentralized
  Application Platform}.
\newblock
  \bibinfo{howpublished}{\url{https://ethereum.org/content/whitepaper/whitepaper-pdf/Ethereum_Whitepa{-}per_-_Buterin_2014.pdf}}.
\newblock


\bibitem[Buterin(2014b)]%
        {buterin2014slasher}
\bibfield{author}{\bibinfo{person}{Vitalik Buterin}.}
  \bibinfo{year}{2014}\natexlab{b}.
\newblock \bibinfo{title}{Slasher: A Punitive Proof-of-Stake Algorithm}.
\newblock
  \bibinfo{howpublished}{\url{https://blog.ethereum.org/2014/01/15/slasher-a-punitive-proof-of-stake-algor{-}ithm/}}.
\newblock


\bibitem[Carlsten et~al\mbox{.}(2016)]%
        {carlsten2016instability}
\bibfield{author}{\bibinfo{person}{Miles Carlsten}, \bibinfo{person}{Harry
  Kalodner}, \bibinfo{person}{S~Matthew Weinberg}, {and}
  \bibinfo{person}{Arvind Narayanan}.} \bibinfo{year}{2016}\natexlab{}.
\newblock \showarticletitle{On the instability of bitcoin without the block
  reward}. In \bibinfo{booktitle}{\emph{Proceedings of the 2016 ACM SIGSAC
  conference on computer and communications security}}.
  \bibinfo{publisher}{Association for Computing Machinery},
  \bibinfo{address}{New York, NY, USA}, \bibinfo{pages}{154--167}.
\newblock


\bibitem[Celestia(2024)]%
        {celestia}
\bibfield{author}{\bibinfo{person}{Celestia}.} \bibinfo{year}{2024}\natexlab{}.
\newblock \bibinfo{booktitle}{\emph{Introduction | Celestia Docs}}.
\newblock Celestia.
\newblock


\bibitem[Chain(2024)]%
        {witness2024node}
\bibfield{author}{\bibinfo{person}{Witness Chain}.}
  \bibinfo{year}{2024}\natexlab{}.
\newblock \bibinfo{booktitle}{\emph{Node Requirements}}.
\newblock Witness Chain.
\newblock


\bibitem[Chen et~al\mbox{.}(2013)]%
        {chen2013axiomatic}
\bibfield{author}{\bibinfo{person}{Chen Chen}, \bibinfo{person}{Garud Iyengar},
  {and} \bibinfo{person}{Ciamac~C Moallemi}.} \bibinfo{year}{2013}\natexlab{}.
\newblock \showarticletitle{An axiomatic approach to systemic risk}.
\newblock \bibinfo{journal}{\emph{Management Science}} \bibinfo{volume}{59},
  \bibinfo{number}{6} (\bibinfo{year}{2013}), \bibinfo{pages}{1373--1388}.
\newblock


\bibitem[Chitra and Pai(2024)]%
        {chitra2024much}
\bibfield{author}{\bibinfo{person}{Tarun Chitra} {and} \bibinfo{person}{Mallesh
  Pai}.} \bibinfo{year}{2024}\natexlab{}.
\newblock \bibinfo{title}{How much should you pay for restaking security?}
\newblock


\bibitem[Cui et~al\mbox{.}(2022)]%
        {cui2022vrust}
\bibfield{author}{\bibinfo{person}{Siwei Cui}, \bibinfo{person}{Gang Zhao},
  \bibinfo{person}{Yifei Gao}, \bibinfo{person}{Tien Tavu}, {and}
  \bibinfo{person}{Jeff Huang}.} \bibinfo{year}{2022}\natexlab{}.
\newblock \showarticletitle{VRust: Automated vulnerability detection for solana
  smart contracts}. In \bibinfo{booktitle}{\emph{Proceedings of the 2022 ACM
  SIGSAC Conference on Computer and Communications Security}}.
  \bibinfo{publisher}{Association for Computing Machinery},
  \bibinfo{address}{New York, NY, USA}, \bibinfo{pages}{639--652}.
\newblock


\bibitem[defillama.com(2025)]%
        {defillama2025restaking}
\bibfield{author}{\bibinfo{person}{defillama.com}.}
  \bibinfo{year}{2025}\natexlab{}.
\newblock \bibinfo{booktitle}{\emph{Restaking TVL Rankings {-} DefiLlama}}.
\newblock DefiLlama.
\newblock


\bibitem[Dong et~al\mbox{.}(2017)]%
        {dong2017betrayal}
\bibfield{author}{\bibinfo{person}{Changyu Dong}, \bibinfo{person}{Yilei Wang},
  \bibinfo{person}{Amjad Aldweesh}, \bibinfo{person}{Patrick McCorry}, {and}
  \bibinfo{person}{Aad Van~Moorsel}.} \bibinfo{year}{2017}\natexlab{}.
\newblock \showarticletitle{Betrayal, distrust, and rationality: Smart
  counter-collusion contracts for verifiable cloud computing}. In
  \bibinfo{booktitle}{\emph{Proceedings of the 2017 ACM SIGSAC Conference on
  Computer and Communications Security}}. \bibinfo{publisher}{Association for
  Computing Machinery}, \bibinfo{address}{New York, NY, USA},
  \bibinfo{pages}{211--227}.
\newblock


\bibitem[Durvasula and Roughgarden(2024)]%
        {durvasula2024robust}
\bibfield{author}{\bibinfo{person}{Naveen Durvasula} {and} \bibinfo{person}{Tim
  Roughgarden}.} \bibinfo{year}{2024}\natexlab{}.
\newblock \bibinfo{title}{Robust Restaking Networks}.
\newblock


\bibitem[EigenDA(2025)]%
        {eigenda2025system}
\bibfield{author}{\bibinfo{person}{EigenDA}.} \bibinfo{year}{2025}\natexlab{}.
\newblock \bibinfo{booktitle}{\emph{System Requirements}}.
\newblock EigenDA.
\newblock


\bibitem[EigenLayer(2023)]%
        {eigenlayer2024restaking}
\bibfield{author}{\bibinfo{person}{EigenLayer}.}
  \bibinfo{year}{2023}\natexlab{}.
\newblock \bibinfo{booktitle}{\emph{EigenLayer: The Restaking Collective}}.
\newblock EigenLayer.
\newblock


\bibitem[EigenLayer(2024)]%
        {eigenlayer2024introducing}
\bibfield{author}{\bibinfo{person}{EigenLayer}.}
  \bibinfo{year}{2024}\natexlab{}.
\newblock \bibinfo{booktitle}{\emph{Introducing the eigenlayer security model:
  A novel approach to operating and securing decentralized services}}.
\newblock EigenLayer.
\newblock


\bibitem[eOracle(2024)]%
        {eoracle2024installation}
\bibfield{author}{\bibinfo{person}{eOracle}.} \bibinfo{year}{2024}\natexlab{}.
\newblock \bibinfo{booktitle}{\emph{Installation}}.
\newblock eOracle.
\newblock


\bibitem[Eskandari et~al\mbox{.}(2021)]%
        {eskandari2021sok}
\bibfield{author}{\bibinfo{person}{Shayan Eskandari}, \bibinfo{person}{Mehdi
  Salehi}, \bibinfo{person}{Wanyun~Catherine Gu}, {and} \bibinfo{person}{Jeremy
  Clark}.} \bibinfo{year}{2021}\natexlab{}.
\newblock \showarticletitle{Sok: Oracles from the ground truth to market
  manipulation}. In \bibinfo{booktitle}{\emph{Proceedings of the 3rd ACM
  Conference on Advances in Financial Technologies}}. \bibinfo{publisher}{ACM
  Press}, \bibinfo{address}{New York, NY}, \bibinfo{pages}{127--141}.
\newblock


\bibitem[Eyal(2015)]%
        {eyal2015miner}
\bibfield{author}{\bibinfo{person}{Ittay Eyal}.}
  \bibinfo{year}{2015}\natexlab{}.
\newblock \showarticletitle{The miner's dilemma}. In
  \bibinfo{booktitle}{\emph{2015 IEEE symposium on security and privacy}}.
  \bibinfo{publisher}{IEEE}, \bibinfo{pages}{89--103}.
\newblock


\bibitem[Eyal and Sirer(2018)]%
        {eyal2018majority}
\bibfield{author}{\bibinfo{person}{Ittay Eyal} {and}
  \bibinfo{person}{Emin~G{\"u}n Sirer}.} \bibinfo{year}{2018}\natexlab{}.
\newblock \showarticletitle{Majority is not enough: Bitcoin mining is
  vulnerable}.
\newblock \bibinfo{journal}{\emph{Commun. ACM}} \bibinfo{volume}{61},
  \bibinfo{number}{7} (\bibinfo{year}{2018}), \bibinfo{pages}{95--102}.
\newblock


\bibitem[Glasserman and Young(2016)]%
        {glasserman2016contagion}
\bibfield{author}{\bibinfo{person}{Paul Glasserman} {and}
  \bibinfo{person}{H~Peyton Young}.} \bibinfo{year}{2016}\natexlab{}.
\newblock \showarticletitle{Contagion in financial networks}.
\newblock \bibinfo{journal}{\emph{Journal of Economic Literature}}
  \bibinfo{volume}{54}, \bibinfo{number}{3} (\bibinfo{year}{2016}),
  \bibinfo{pages}{779--831}.
\newblock


\bibitem[Gogol et~al\mbox{.}(2024)]%
        {gogol2024sok}
\bibfield{author}{\bibinfo{person}{Krzysztof Gogol}, \bibinfo{person}{Yaron
  Velner}, \bibinfo{person}{Benjamin Kraner}, {and} \bibinfo{person}{Claudio
  Tessone}.} \bibinfo{year}{2024}\natexlab{}.
\newblock \bibinfo{title}{SoK: Liquid Staking Tokens (LSTs)}.
\newblock


\bibitem[Hall et~al\mbox{.}(2023)]%
        {hall2023highs}
\bibfield{author}{\bibinfo{person}{J Hall}, \bibinfo{person}{I Galabova},
  \bibinfo{person}{L Gottwald}, {and} \bibinfo{person}{M Feldmeier}.}
  \bibinfo{year}{2023}\natexlab{}.
\newblock \bibinfo{title}{HiGHS--high performance software for linear
  optimization}.
\newblock


\bibitem[Harris et~al\mbox{.}(2020)]%
        {harris2020array}
\bibfield{author}{\bibinfo{person}{Charles~R. Harris},
  \bibinfo{person}{K.~Jarrod Millman}, \bibinfo{person}{St{\'{e}}fan~J. van~der
  Walt}, \bibinfo{person}{Ralf Gommers}, \bibinfo{person}{Pauli Virtanen},
  \bibinfo{person}{David Cournapeau}, \bibinfo{person}{Eric Wieser},
  \bibinfo{person}{Julian Taylor}, \bibinfo{person}{Sebastian Berg},
  \bibinfo{person}{Nathaniel~J. Smith}, \bibinfo{person}{Robert Kern},
  \bibinfo{person}{Matti Picus}, \bibinfo{person}{Stephan Hoyer},
  \bibinfo{person}{Marten~H. van Kerkwijk}, \bibinfo{person}{Matthew Brett},
  \bibinfo{person}{Allan Haldane}, \bibinfo{person}{Jaime~Fern{\'{a}}ndez del
  R{\'{i}}o}, \bibinfo{person}{Mark Wiebe}, \bibinfo{person}{Pearu Peterson},
  \bibinfo{person}{Pierre G{\'{e}}rard-Marchant}, \bibinfo{person}{Kevin
  Sheppard}, \bibinfo{person}{Tyler Reddy}, \bibinfo{person}{Warren Weckesser},
  \bibinfo{person}{Hameer Abbasi}, \bibinfo{person}{Christoph Gohlke}, {and}
  \bibinfo{person}{Travis~E. Oliphant}.} \bibinfo{year}{2020}\natexlab{}.
\newblock \showarticletitle{Array programming with {NumPy}}.
\newblock \bibinfo{journal}{\emph{Nature}} \bibinfo{volume}{585},
  \bibinfo{number}{7825} (\bibinfo{date}{Sept.} \bibinfo{year}{2020}),
  \bibinfo{pages}{357--362}.
\newblock
\href{https://doi.org/10.1038/s41586-020-2649-2}{doi:\nolinkurl{10.1038/s41586-020-2649-2}}


\bibitem[Huangfu and Hall(2018)]%
        {huangfu2018parallelizing}
\bibfield{author}{\bibinfo{person}{Qi Huangfu} {and} \bibinfo{person}{JA~Julian
  Hall}.} \bibinfo{year}{2018}\natexlab{}.
\newblock \showarticletitle{Parallelizing the dual revised simplex method}.
\newblock \bibinfo{journal}{\emph{Mathematical Programming Computation}}
  \bibinfo{volume}{10}, \bibinfo{number}{1} (\bibinfo{year}{2018}),
  \bibinfo{pages}{119--142}.
\newblock


\bibitem[Judmayer et~al\mbox{.}(2021)]%
        {judmayer2021sok}
\bibfield{author}{\bibinfo{person}{Aljosha Judmayer}, \bibinfo{person}{Nicholas
  Stifter}, \bibinfo{person}{Alexei Zamyatin}, \bibinfo{person}{Itay Tsabary},
  \bibinfo{person}{Ittay Eyal}, \bibinfo{person}{Peter Ga\v{z}i},
  \bibinfo{person}{Sarah Meiklejohn}, {and} \bibinfo{person}{Edgar Weippl}.}
  \bibinfo{year}{2021}\natexlab{}.
\newblock \showarticletitle{SoK: Algorithmic Incentive Manipulation Attacks on
  Permissionless PoW Cryptocurrencies}. In \bibinfo{booktitle}{\emph{Financial
  Cryptography and Data Security. FC 2021 International Workshops: CoDecFin,
  DeFi, VOTING, and WTSC, Virtual Event, March 5, 2021, Revised Selected
  Papers}}. \bibinfo{publisher}{Springer-Verlag}, \bibinfo{address}{Berlin,
  Heidelberg}, \bibinfo{pages}{507–532}.
\newblock
\showISBNx{978-3-662-63957-3}
\href{https://doi.org/10.1007/978-3-662-63958-0_38}{doi:\nolinkurl{10.1007/978-3-662-63958-0_38}}


\bibitem[J{\"u}nger et~al\mbox{.}(2009)]%
        {junger200950}
\bibfield{author}{\bibinfo{person}{Michael J{\"u}nger},
  \bibinfo{person}{Thomas~M Liebling}, \bibinfo{person}{Denis Naddef},
  \bibinfo{person}{George~L Nemhauser}, \bibinfo{person}{William~R
  Pulleyblank}, \bibinfo{person}{Gerhard Reinelt}, \bibinfo{person}{Giovanni
  Rinaldi}, {and} \bibinfo{person}{Laurence~A Wolsey}.}
  \bibinfo{year}{2009}\natexlab{}.
\newblock \bibinfo{booktitle}{\emph{50 Years of integer programming 1958-2008:
  From the early years to the state-of-the-art}}.
\newblock \bibinfo{publisher}{Springer Science \& Business Media},
  \bibinfo{address}{Heidelberg, Germany}.
\newblock


\bibitem[Kalodner et~al\mbox{.}(2018)]%
        {kalodner2018arbitrum}
\bibfield{author}{\bibinfo{person}{Harry Kalodner}, \bibinfo{person}{Steven
  Goldfeder}, \bibinfo{person}{Xiaoqi Chen}, \bibinfo{person}{S~Matthew
  Weinberg}, {and} \bibinfo{person}{Edward~W Felten}.}
  \bibinfo{year}{2018}\natexlab{}.
\newblock \showarticletitle{Arbitrum: Scalable, private smart contracts}. In
  \bibinfo{booktitle}{\emph{27th USENIX Security Symposium (USENIX Security
  18)}}. \bibinfo{publisher}{USENIX Association}, \bibinfo{address}{Berkley,
  CA}, \bibinfo{pages}{1353--1370}.
\newblock


\bibitem[Karakostas et~al\mbox{.}(2024)]%
        {karakostas2024blockchain}
\bibfield{author}{\bibinfo{person}{Dimitris Karakostas},
  \bibinfo{person}{Aggelos Kiayias}, {and} \bibinfo{person}{Thomas Zacharias}.}
  \bibinfo{year}{2024}\natexlab{}.
\newblock \showarticletitle{Blockchain bribing attacks and the efficacy of
  counterincentives}. In \bibinfo{booktitle}{\emph{Proceedings of the 2024 on
  ACM SIGSAC Conference on Computer and Communications Security}}.
  \bibinfo{publisher}{Association for Computing Machinery},
  \bibinfo{address}{New York, NY, USA}, \bibinfo{pages}{1031--1045}.
\newblock


\bibitem[Kubinec(2024)]%
        {kubinec2024eigenlayer}
\bibfield{author}{\bibinfo{person}{Jack Kubinec}.}
  \bibinfo{year}{2024}\natexlab{}.
\newblock \bibinfo{title}{EigenLayer’s biggest risk may be centralization,
  report suggests}.
\newblock
  \bibinfo{howpublished}{\url{https://blockworks.co/news/eigenlayer-at-risk-of-centralization}}.
\newblock


\bibitem[Li et~al\mbox{.}(2023)]%
        {li2023demystifying}
\bibfield{author}{\bibinfo{person}{Zihao Li}, \bibinfo{person}{Jianfeng Li},
  \bibinfo{person}{Zheyuan He}, \bibinfo{person}{Xiapu Luo},
  \bibinfo{person}{Ting Wang}, \bibinfo{person}{Xiaoze Ni},
  \bibinfo{person}{Wenwu Yang}, \bibinfo{person}{Xi Chen}, {and}
  \bibinfo{person}{Ting Chen}.} \bibinfo{year}{2023}\natexlab{}.
\newblock \showarticletitle{Demystifying defi mev activities in flashbots
  bundle}. In \bibinfo{booktitle}{\emph{Proceedings of the 2023 ACM SIGSAC
  Conference on Computer and Communications Security}}.
  \bibinfo{publisher}{Association for Computing Machinery},
  \bibinfo{address}{New York, NY, USA}, \bibinfo{pages}{165--179}.
\newblock


\bibitem[Liu et~al\mbox{.}(2019)]%
        {liu2019survey}
\bibfield{author}{\bibinfo{person}{Ziyao Liu}, \bibinfo{person}{Nguyen~Cong
  Luong}, \bibinfo{person}{Wenbo Wang}, \bibinfo{person}{Dusit Niyato},
  \bibinfo{person}{Ping Wang}, \bibinfo{person}{Ying-Chang Liang}, {and}
  \bibinfo{person}{Dong~In Kim}.} \bibinfo{year}{2019}\natexlab{}.
\newblock \showarticletitle{A survey on blockchain: A game theoretical
  perspective}.
\newblock \bibinfo{journal}{\emph{IEEE Access}}  \bibinfo{volume}{7}
  (\bibinfo{year}{2019}), \bibinfo{pages}{47615--47643}.
\newblock


\bibitem[Mamageishvili and Sudakov(2025)]%
        {mamageishvili2025cost}
\bibfield{author}{\bibinfo{person}{Akaki Mamageishvili} {and}
  \bibinfo{person}{Benny Sudakov}.} \bibinfo{year}{2025}\natexlab{}.
\newblock \showarticletitle{The Cost of Restaking vs. Proof-of-Stake}.
\newblock \bibinfo{journal}{\emph{arXiv preprint arXiv:2505.24440}}
  (\bibinfo{year}{2025}).
\newblock


\bibitem[McCorry et~al\mbox{.}(2021)]%
        {mccorry2021sok}
\bibfield{author}{\bibinfo{person}{Patrick McCorry}, \bibinfo{person}{Chris
  Buckland}, \bibinfo{person}{Bennet Yee}, {and} \bibinfo{person}{Dawn Song}.}
  \bibinfo{year}{2021}\natexlab{}.
\newblock \bibinfo{title}{Sok: Validating bridges as a scaling solution for
  blockchains}.
\newblock


\bibitem[McCorry et~al\mbox{.}(2018)]%
        {mccorry2019smart}
\bibfield{author}{\bibinfo{person}{Patrick McCorry}, \bibinfo{person}{Alexander
  Hicks}, {and} \bibinfo{person}{Sarah Meiklejohn}.}
  \bibinfo{year}{2018}\natexlab{}.
\newblock \showarticletitle{Smart Contracts for Bribing Miners}. In
  \bibinfo{booktitle}{\emph{Financial Cryptography and Data Security: FC 2018
  International Workshops, BITCOIN, VOTING, and WTSC, Nieuwpoort, Cura\c{c}ao,
  March 2, 2018, Revised Selected Papers}} (Nieuwpoort, Cura\c{c}ao).
  \bibinfo{publisher}{Springer-Verlag}, \bibinfo{address}{Berlin, Heidelberg},
  \bibinfo{pages}{3–18}.
\newblock
\showISBNx{978-3-662-58819-2}
\href{https://doi.org/10.1007/978-3-662-58820-8_1}{doi:\nolinkurl{10.1007/978-3-662-58820-8_1}}


\bibitem[Mirkin et~al\mbox{.}(2020)]%
        {mirkin2020bdos}
\bibfield{author}{\bibinfo{person}{Michael Mirkin}, \bibinfo{person}{Yan Ji},
  \bibinfo{person}{Jonathan Pang}, \bibinfo{person}{Ariah Klages-Mundt},
  \bibinfo{person}{Ittay Eyal}, {and} \bibinfo{person}{Ari Juels}.}
  \bibinfo{year}{2020}\natexlab{}.
\newblock \showarticletitle{Bdos: Blockchain denial-of-service}. In
  \bibinfo{booktitle}{\emph{Proceedings of the 2020 ACM SIGSAC conference on
  Computer and Communications Security}}. \bibinfo{publisher}{Association for
  Computing Machinery}, \bibinfo{address}{New York, NY, USA},
  \bibinfo{pages}{601--619}.
\newblock


\bibitem[Neuder and Chitra(2024)]%
        {neuder2024risks}
\bibfield{author}{\bibinfo{person}{Mike Neuder} {and} \bibinfo{person}{Tarun
  Chitra}.} \bibinfo{year}{2024}\natexlab{}.
\newblock \bibinfo{title}{The Risks of LRTs}.
\newblock
  \bibinfo{howpublished}{\url{https://ethresear.ch/t/the-risks-of-lrts/18799}}.
\newblock


\bibitem[Optimism(2025)]%
        {optimism}
\bibfield{author}{\bibinfo{person}{Optimism}.} \bibinfo{year}{2025}\natexlab{}.
\newblock \bibinfo{booktitle}{\emph{Optimism}}.
\newblock Optimism.
\newblock


\bibitem[Pai(2024)]%
        {mallesh2024eigenlayer}
\bibfield{author}{\bibinfo{person}{Mallesh Pai}.}
  \bibinfo{year}{2024}\natexlab{}.
\newblock \bibinfo{title}{EigenLayer: Decentralized Ethereum Restaking Protocol
  Explained}.
\newblock
  \bibinfo{howpublished}{\url{https://consensys.io/blog/eigenlayer-decentralized-ethereum-restaking-proto{-}col-explained}}.
\newblock


\bibitem[Parasol(2023)]%
        {parasol2023ethereum}
\bibfield{author}{\bibinfo{person}{Max Parasol}.}
  \bibinfo{year}{2023}\natexlab{}.
\newblock \bibinfo{title}{Ethereum restaking: Blockchain innovation or
  dangerous house of cards?}
\newblock
  \bibinfo{howpublished}{\url{https://cointelegraph.com/magazine/ethereum-restaking-blockchain-dangero{-}us/}}.
\newblock


\bibitem[Pass and Shi(2017)]%
        {pass2017fruitchains}
\bibfield{author}{\bibinfo{person}{Rafael Pass} {and} \bibinfo{person}{Elaine
  Shi}.} \bibinfo{year}{2017}\natexlab{}.
\newblock \showarticletitle{FruitChains: A Fair Blockchain}. In
  \bibinfo{booktitle}{\emph{PODC '17}} (Washington, DC, USA).
  \bibinfo{publisher}{Association for Computing Machinery},
  \bibinfo{address}{New York, NY, USA}, \bibinfo{pages}{315–324}.
\newblock
\showISBNx{9781450349925}
\href{https://doi.org/10.1145/3087801.3087809}{doi:\nolinkurl{10.1145/3087801.3087809}}


\bibitem[restaking.info(2025)]%
        {restakingInfo}
\bibfield{author}{\bibinfo{person}{restaking.info}.}
  \bibinfo{year}{2025}\natexlab{}.
\newblock \bibinfo{booktitle}{\emph{Restaking Dashboard}}.
\newblock restaking.info.
\newblock


\bibitem[Rocket(2018)]%
        {rocket2018snowflake}
\bibfield{author}{\bibinfo{person}{Team Rocket}.}
  \bibinfo{year}{2018}\natexlab{}.
\newblock \bibinfo{title}{Snowflake to avalanche: A novel metastable consensus
  protocol family for cryptocurrencies}.
\newblock


\bibitem[Sapirshtein et~al\mbox{.}(2017)]%
        {sapirshtein2017optimal}
\bibfield{author}{\bibinfo{person}{Ayelet Sapirshtein},
  \bibinfo{person}{Yonatan Sompolinsky}, {and} \bibinfo{person}{Aviv Zohar}.}
  \bibinfo{year}{2017}\natexlab{}.
\newblock \showarticletitle{Optimal Selfish Mining Strategies in Bitcoin}. In
  \bibinfo{booktitle}{\emph{Financial Cryptography and Data Security}},
  \bibfield{editor}{\bibinfo{person}{Jens Grossklags} {and}
  \bibinfo{person}{Bart Preneel}} (Eds.). \bibinfo{publisher}{Springer Berlin
  Heidelberg}, \bibinfo{address}{Berlin, Heidelberg},
  \bibinfo{pages}{515--532}.
\newblock


\bibitem[Sheng et~al\mbox{.}(2021)]%
        {sheng2021aced}
\bibfield{author}{\bibinfo{person}{Peiyao Sheng}, \bibinfo{person}{Bowen Xue},
  \bibinfo{person}{Sreeram Kannan}, {and} \bibinfo{person}{Pramod Viswanath}.}
  \bibinfo{year}{2021}\natexlab{}.
\newblock \showarticletitle{ACeD: Scalable Data Availability Oracle}. In
  \bibinfo{booktitle}{\emph{Financial Cryptography and Data Security}},
  \bibfield{editor}{\bibinfo{person}{Nikita Borisov} {and}
  \bibinfo{person}{Claudia Diaz}} (Eds.). \bibinfo{publisher}{Springer Berlin
  Heidelberg}, \bibinfo{address}{Berlin, Heidelberg},
  \bibinfo{pages}{299--318}.
\newblock
\showISBNx{978-3-662-64331-0}


\bibitem[Symbiotic(2025)]%
        {symbiotic}
\bibfield{author}{\bibinfo{person}{Symbiotic}.}
  \bibinfo{year}{2025}\natexlab{}.
\newblock \bibinfo{booktitle}{\emph{Symbiotic {-} Permissionless Restaking}}.
\newblock Symbiotic.
\newblock


\bibitem[Tsabary et~al\mbox{.}(2025)]%
        {tsabary2024ledgerhedger}
\bibfield{author}{\bibinfo{person}{Itay Tsabary}, \bibinfo{person}{Alex
  Manuskin}, \bibinfo{person}{Roi Bar-Zur}, {and} \bibinfo{person}{Ittay
  Eyal}.} \bibinfo{year}{2025}\natexlab{}.
\newblock \showarticletitle{LedgerHedger: Gas Reservation for Smart Contract
  Security}. In \bibinfo{booktitle}{\emph{Financial Cryptography and Data
  Security}}, \bibfield{editor}{\bibinfo{person}{Jeremy Clark} {and}
  \bibinfo{person}{Elaine Shi}} (Eds.). \bibinfo{publisher}{Springer Nature
  Switzerland}, \bibinfo{address}{Cham}, \bibinfo{pages}{248--270}.
\newblock
\showISBNx{978-3-031-78676-1}


\bibitem[Tsabary et~al\mbox{.}(2021)]%
        {tsabary2021mad}
\bibfield{author}{\bibinfo{person}{Itay Tsabary}, \bibinfo{person}{Matan
  Yechieli}, \bibinfo{person}{Alex Manuskin}, {and} \bibinfo{person}{Ittay
  Eyal}.} \bibinfo{year}{2021}\natexlab{}.
\newblock \showarticletitle{MAD-HTLC: because HTLC is crazy-cheap to attack}.
  In \bibinfo{booktitle}{\emph{2021 IEEE symposium on security and privacy
  (SP)}}. \bibinfo{publisher}{IEEE}, \bibinfo{pages}{1230--1248}.
\newblock


\bibitem[Virtanen et~al\mbox{.}(2020)]%
        {2020SciPy-NMeth}
\bibfield{author}{\bibinfo{person}{Pauli Virtanen}, \bibinfo{person}{Ralf
  Gommers}, \bibinfo{person}{Travis~E. Oliphant}, \bibinfo{person}{Matt
  Haberland}, \bibinfo{person}{Tyler Reddy}, \bibinfo{person}{David
  Cournapeau}, \bibinfo{person}{Evgeni Burovski}, \bibinfo{person}{Pearu
  Peterson}, \bibinfo{person}{Warren Weckesser}, \bibinfo{person}{Jonathan
  Bright}, \bibinfo{person}{St{\'e}fan~J. {van der Walt}},
  \bibinfo{person}{Matthew Brett}, \bibinfo{person}{Joshua Wilson},
  \bibinfo{person}{K.~Jarrod Millman}, \bibinfo{person}{Nikolay Mayorov},
  \bibinfo{person}{Andrew R.~J. Nelson}, \bibinfo{person}{Eric Jones},
  \bibinfo{person}{Robert Kern}, \bibinfo{person}{Eric Larson},
  \bibinfo{person}{C~J Carey}, \bibinfo{person}{{\.I}lhan Polat},
  \bibinfo{person}{Yu Feng}, \bibinfo{person}{Eric~W. Moore},
  \bibinfo{person}{Jake {VanderPlas}}, \bibinfo{person}{Denis Laxalde},
  \bibinfo{person}{Josef Perktold}, \bibinfo{person}{Robert Cimrman},
  \bibinfo{person}{Ian Henriksen}, \bibinfo{person}{E.~A. Quintero},
  \bibinfo{person}{Charles~R. Harris}, \bibinfo{person}{Anne~M. Archibald},
  \bibinfo{person}{Ant{\^o}nio~H. Ribeiro}, \bibinfo{person}{Fabian Pedregosa},
  \bibinfo{person}{Paul {van Mulbregt}}, {and} \bibinfo{person}{{SciPy 1.0
  Contributors}}.} \bibinfo{year}{2020}\natexlab{}.
\newblock \showarticletitle{{{SciPy} 1.0: Fundamental Algorithms for Scientific
  Computing in Python}}.
\newblock \bibinfo{journal}{\emph{Nature Methods}}  \bibinfo{volume}{17}
  (\bibinfo{year}{2020}), \bibinfo{pages}{261--272}.
\newblock
\href{https://doi.org/10.1038/s41592-019-0686-2}{doi:\nolinkurl{10.1038/s41592-019-0686-2}}


\bibitem[Wadhwa et~al\mbox{.}(2024)]%
        {wadhwa2024data}
\bibfield{author}{\bibinfo{person}{Sarisht Wadhwa}, \bibinfo{person}{Luca
  Zanolini}, \bibinfo{person}{Aditya Asgaonkar}, \bibinfo{person}{Francesco
  D'Amato}, \bibinfo{person}{Chengrui Fang}, \bibinfo{person}{Fan Zhang}, {and}
  \bibinfo{person}{Kartik Nayak}.} \bibinfo{year}{2024}\natexlab{}.
\newblock \showarticletitle{Data Independent Order Policy Enforcement:
  Limitations and Solutions}. In \bibinfo{booktitle}{\emph{Proceedings of the
  2024 on ACM SIGSAC Conference on Computer and Communications Security}}.
  \bibinfo{publisher}{Association for Computing Machinery},
  \bibinfo{address}{New York, NY, USA}, \bibinfo{pages}{378--392}.
\newblock


\bibitem[Yaish et~al\mbox{.}(2023)]%
        {yaish2023uncle}
\bibfield{author}{\bibinfo{person}{Aviv Yaish}, \bibinfo{person}{Gilad Stern},
  {and} \bibinfo{person}{Aviv Zohar}.} \bibinfo{year}{2023}\natexlab{}.
\newblock \showarticletitle{Uncle maker:(time) stamping out the competition in
  ethereum}. In \bibinfo{booktitle}{\emph{Proceedings of the 2023 ACM SIGSAC
  Conference on Computer and Communications Security}}.
  \bibinfo{publisher}{Association for Computing Machinery},
  \bibinfo{address}{New York, NY, USA}, \bibinfo{pages}{135--149}.
\newblock


\bibitem[Yakovenko(2018)]%
        {yakovenko2018solana}
\bibfield{author}{\bibinfo{person}{Anatoly Yakovenko}.}
  \bibinfo{year}{2018}\natexlab{}.
\newblock \bibinfo{booktitle}{\emph{Solana: A new architecture for a high
  performance blockchain v0. 8.13}}.
\newblock Solana.
\newblock


\end{thebibliography}

\appendix


\section{Proofs Deferred from Section~\ref{section:model:elastic_restaking_networks}}
\label{appendix:proofs_from_section_elastic_restaking_networks_are_more_expressive}


\begin{proposition}[Proposition~\ref{proposition:expressiveness_of_atomic_networks} restated]
    \label{proposition:expressiveness_of_atomic_networks_appendix}
    Let~$x \in \positiveRealNumbers$.
    There exists no atomic restaking network~${\networkState = (\allValidators, \allServices, \allStakes, \allAllocations, \allAttackThresholds, \allAttackPrizes)}$ that satisfies the following conditions:
    \begin{enumerate}
        \item\label{proposition:expressiveness_of_atomic_networks_stake_condition} The total stake in the network is less than~$x$ times the number of services,
        \item\label{proposition:expressiveness_of_atomic_networks_allocation_condition_before_slashing} Each service has exactly~$x$ units of stake allocated to it, and
        \item\label{proposition:expressiveness_of_atomic_networks_allocation_condition_after_slashing} After any service fails and slashes its allocated stake, each remaining service maintains exactly~$x$ units of stake.
    \end{enumerate}
\end{proposition}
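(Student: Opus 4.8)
The plan is to argue by contradiction: suppose such an atomic network $\networkState$ exists, and derive a violation of condition~(1) from conditions~(2) and~(3). The key structural fact I would establish first is that in such a network no validator can be allocated to two distinct services; everything else is then a short counting argument.

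First I would unpack atomicity: for every validator $\validator$ and service $\service$, $\allocation{\validator}{\service} \in \{0, \stake{\validator}\}$, so condition~(2) says $\sum_{\validator\, :\, \allocation{\validator}{\service} = \stake{\validator}} \stake{\validator} = x$ for each service $\service$. Next I would trace the Byzantine-failure dynamics of Section~\ref{section:model:robustness_against_byzantine_services} for a single failing service $\service^\ast$: by~\eqref{equation:stake_after_byzantine_services_cause_slashing}, a validator fully allocated to $\service^\ast$ is slashed for its entire stake and is left with $0$, while a validator not allocated to $\service^\ast$ is untouched; and by~\eqref{equation:allocation_after_byzantine_services_cause_slashing} the surviving validators retain all their allocations. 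Hence, for any remaining service $\service \neq \service^\ast$, its post-failure stake equals $x$ minus the total stake of the validators allocated to both $\service^\ast$ and $\service$. Condition~(3) forces this quantity to equal $x$, and since all stakes are strictly positive, no validator is allocated to both $\service^\ast$ and $\service$. Ranging over all ordered pairs of distinct services yields the structural fact.

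With that in hand, the validators that are allocated to some service split into disjoint groups, one per the (unique) service each is allocated to, and by condition~(2) the total stake of the group for service $\service$ is exactly $x$. Summing over the $n = |\allServices|$ services, the total stake in the network is at least $x \cdot n$, contradicting condition~(1). The structural fact is vacuous only when $n \le 1$, and these corners are handled directly: for $n = 0$ condition~(1) already asserts a negative total stake, which is impossible; for $n = 1$ condition~(2) alone forces the total stake to be at least $x$, again contradicting condition~(1).

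I expect the only real care needed is the bookkeeping of the failure transition --- checking that fully allocated validators are indeed wiped out and that the re-normalization in~\eqref{equation:allocation_after_byzantine_services_cause_slashing} leaves surviving allocations unchanged --- together with isolating the small-$n$ cases; once ``at most one service per validator'' is in place, the counting step is immediate.
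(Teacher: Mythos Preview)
Your proposal is correct and follows essentially the same approach as the paper: both argue by contradiction, use the Byzantine-failure dynamics together with atomicity to show that the validator sets $\allValidators_{\service}$ are pairwise disjoint, and then conclude by the same counting that the total stake is at least $x \cdot |\allServices|$. Your explicit treatment of the $n \le 1$ corner cases is a small addition the paper leaves implicit, but the core argument is identical.
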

\begin{proof}
    Assume towards contradiction that such an atomic network~$\networkState$ exists.
    Due to Condition~\ref{proposition:expressiveness_of_atomic_networks_stake_condition}, we have
    \begin{equation}
        \label{equation:expressiveness_of_atomic_networks_proof_stake_condition}
        \sum_{\validator \in \allValidators} \stake{\validator} < x \cdot |\allServices| ,
    \end{equation}
    and due to Condition~\ref{proposition:expressiveness_of_atomic_networks_allocation_condition_before_slashing}, we have that for any service~$\service \in \allServices$,
    \begin{equation}
        \label{equation:expressiveness_of_atomic_networks_proof_allocation_condition_before_slashing}
        \sum_{\validator \in \allValidators} \allocation{\validator}{\service} = x .
    \end{equation}

    For any service~$\service \in \allServices$ that fails, denote by~$\allValidators_{\service}$ the set of validators with stake allocated to~$\service$, that is, $\allValidators_{\service} = \left\{ \validator \in \allValidators \middle| \allocation{\validator}{\service} > 0 \right\}$.
    Since this is an atomic network, each validator~$\validator \in \allValidators_{\service}$ must allocate their entire stake to~$\service$, and if that is the case, they will lose all stake when~$\service$ fails.
    So, due to Condition~\ref{proposition:expressiveness_of_atomic_networks_allocation_condition_after_slashing}, for all services~$\service' \in \allServices \setminus \{\service\}$, the sum of allocations for all other validators must be~$x$:
    \begin{equation}
        \label{equation:expressiveness_of_atomic_networks_proof_allocation_condition_after_slashing}
        \forall \service' \in \allServices \setminus \{\service\}: \sum_{\validator \in \allValidators \setminus \allValidators_{\service}} \allocation{\validator}{\service'} = x .
    \end{equation}

    Subtracting~\eqreft{equation:expressiveness_of_atomic_networks_proof_allocation_condition_after_slashing} from~\eqreft{equation:expressiveness_of_atomic_networks_proof_allocation_condition_before_slashing}, we get that for any service~$\service' \in \allServices \setminus \{\service\}$,
    \begin{align}
        \sum_{\validator \in \allValidators} \allocation{\validator}{\service'} - \sum_{\validator \in \allValidators \setminus \allValidators_{\service}} \allocation{\validator}{\service'} &= 0 ; \\
        \sum_{\validator \in \allValidators_{\service}} \allocation{\validator}{\service'} &= 0 . 
    \end{align}
    Since this is the sum of non-negative values, for each~$\service \in \allServices$, each~$\service' \in \allServices \setminus \{\service\}$, and each~$\validator \in \allValidators_{\service}$,~${\allocation{\validator}{\service'} = 0}$.

    Assume towards a contradiction that there exists a validator~$\validator$ that is in two different sets,~$\allValidators_{\service}$ and~$\allValidators_{\service'}$.
    As we just showed, it must be that~$\allocation{\validator}{\service'} = 0$.
    But because~$\validator \in \allValidators_{\service'}$, we must also have~$\allocation{\validator}{\service'} > 0$, which is a contradiction.
    Therefore, the sets~$\left\{ \allValidators_{\service} \right\}_{\service \in \allServices}$ must be pairwise disjoint:
    \begin{equation}
        \label{equation:expressiveness_of_atomic_networks_proof_disjoint_sets}
        \forall \service, \service' \in \allServices: \allValidators_{\service} \cap \allValidators_{\service'} = \emptyset .
    \end{equation}
    And in addition, since each~$\allValidators_{\service}$ is a subset of~$\allValidators$, we have that
    \begin{equation}
        \label{equation:expressiveness_of_atomic_networks_proof_disjoint_sets_subset}
        \bigcup_{\service \in \allServices} \allValidators_{\service} \subseteq \allValidators .
    \end{equation}

    Using the fact that the network is atomic and the definition of~$\allValidators_{\service}$, we can develop~\eqreft{equation:expressiveness_of_atomic_networks_proof_allocation_condition_before_slashing} to get that for any service~$\service \in \allServices$,
    \begin{multline}
        \label{equation:expressiveness_of_atomic_networks_proof_allocation_condition_before_slashing_expanded}
        x
        \underset{\eqref{equation:expressiveness_of_atomic_networks_proof_allocation_condition_before_slashing}}{=}         \sum_{\validator \in \allValidators} \allocation{\validator}{\service}
        = \sum_{\validator \in \allValidators \setminus \allValidators_{\service}} \allocation{\validator}{\service} + \sum_{\validator \in \allValidators_{\service}} \allocation{\validator}{\service} \\
        = \sum_{\validator \in \allValidators \setminus \allValidators_{\service}} 0 + \sum_{\validator \in \allValidators_{\service}} \stake{\validator}
        = \sum_{\validator \in \allValidators_{\service}} \stake{\validator} .
    \end{multline}

    Now, we are ready to show that the total stake in the network is at least~$x \cdot |\allServices|$.
    We use the fact that the sets~$\left\{ \allValidators_{\service} \right\}_{\service \in \allServices}$ are pairwise disjoint to obtain:
    \begin{equation}
        \sum_{\validator \in \allValidators} \stake{\validator}
        \underset{\eqref{equation:expressiveness_of_atomic_networks_proof_disjoint_sets_subset}}{\geq} \sum_{\validator \in \bigcup_{\service \in \allServices} \allValidators_{\service}} \stake{\validator}
        \underset{\eqref{equation:expressiveness_of_atomic_networks_proof_disjoint_sets}}{=} \sum_{\service \in \allServices} \sum_{\validator \in \allValidators_{\service}} \stake{\validator}
        \underset{\eqref{equation:expressiveness_of_atomic_networks_proof_allocation_condition_before_slashing_expanded}}{=} \sum_{\service \in \allServices} x
        = x \cdot |\allServices| .
    \end{equation}
    But this contradicts~\eqreft{equation:expressiveness_of_atomic_networks_proof_stake_condition}.
    Therefore, no such atomic network~$\networkState$ can exist.
\end{proof}


\section{Proofs Deferred from Section~\ref{section:security_analysis}}
\label{appendix:proofs_from_section_security_analysis}


\begin{proposition}[Proposition~\ref{proposition:restaking_network_security} restated]
    \label{proposition:restaking_network_security_appendix}
    A restaking network~$\networkState$ is cryptoeconomically secure if and only if there exists no profitable attack.
\end{proposition}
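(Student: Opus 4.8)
The plan is to prove the two implications by (essentially) contraposition, pivoting on two elementary facts. First, for every attack $\allAttackStakes$ the prize shares sum to one, $\sum_{\validator\in\allValidators}\validatorPrizeShareSecurityGame{\validator}{\allAttackStakes}=1$ (immediate from the two cases of \eqreft{equation:validator_prize_share_security_game}), so summing \eqreft{equation:validator_utility_security_game} over all validators gives $\sum_{\validator\in\allValidators}\validatorUtilitySecurityGame{\validator}{\allAttackStakes}=\totalAttackPrize{\allAttackStakes}-\attackCost{\allAttackStakes}$. Second, under the all-zero attack $\allAttackStakesAt{0}$ every individual cost is $\min\!\left(\stake{\validator},0\right)=0$, hence $\validatorUtilitySecurityGame{\validator}{\allAttackStakesAt{0}}=0$ for all $\validator$.

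For the ``only if'' direction, assume a profitable attack $\allAttackStakes$ exists, so $\attackedServices\neq\emptyset$ and $\attackCost{\allAttackStakes}\le\totalAttackPrize{\allAttackStakes}$; I will show $\networkState$ is not secure. If $\attackedServicesAt{0}\neq\emptyset$ this is immediate from Definition~\ref{definition:restaking_network_security}, so assume $\attackedServicesAt{0}=\emptyset$. Let the grand coalition $\allValidators$ deviate from $\allAttackStakesAt{0}$ to $\allAttackStakes$. When $\attackCost{\allAttackStakes}>0$, each $\validatorPrizeShareSecurityGame{\validator}{\allAttackStakes}=\validatorAttackCost{\validator}{\allAttackStakes}/\attackCost{\allAttackStakes}$, so
\[
  \validatorUtilitySecurityGame{\validator}{\allAttackStakes}
  = \validatorAttackCost{\validator}{\allAttackStakes}\left(\frac{\totalAttackPrize{\allAttackStakes}}{\attackCost{\allAttackStakes}}-1\right)\ge 0 ;
\]
when $\attackCost{\allAttackStakes}=0$, every $\validatorAttackCost{\validator}{\allAttackStakes}=0$ while $\totalAttackPrize{\allAttackStakes}>0$, so $\validatorUtilitySecurityGame{\validator}{\allAttackStakes}=\totalAttackPrize{\allAttackStakes}/|\allValidators|>0$. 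In both cases no member of the coalition is worse off than under $\allAttackStakesAt{0}$, so $\allAttackStakesAt{0}$ is not a strong Nash equilibrium and $\networkState$ is not secure.

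For the ``if'' direction, assume no profitable attack exists. Then $\attackedServicesAt{0}=\emptyset$, since otherwise $\allAttackStakesAt{0}$ would itself be profitable (zero cost, positive prize $\sum_{\service\in\attackedServicesAt{0}}\attackPrize{\service}$). It remains to verify $\allAttackStakesAt{0}$ is a strong Nash equilibrium. Take any coalition $P'\subseteq\allValidators$ and any deviation to an attack $\allAttackStakes$ in which non-members play zero and no member of $P'$ is worse off. If $\attackedServices=\emptyset$, all costs and prizes vanish and the deviation leaves every utility at its $\allAttackStakesAt{0}$ value, so it is not a genuine objection. If $\attackedServices\neq\emptyset$, then each non-member has zero cost and hence utility $\validatorPrizeShareSecurityGame{\validator}{\allAttackStakes}\cdot\totalAttackPrize{\allAttackStakes}\ge 0$, while each member is assumed to have utility $\ge 0$; summing over all validators gives $0\le\sum_{\validator\in\allValidators}\validatorUtilitySecurityGame{\validator}{\allAttackStakes}=\totalAttackPrize{\allAttackStakes}-\attackCost{\allAttackStakes}$, so $\allAttackStakes$ is a profitable attack --- contradicting our assumption. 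Hence $\allAttackStakesAt{0}$ is a strong Nash equilibrium and, together with $\attackedServicesAt{0}=\emptyset$, $\networkState$ is secure.

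The proof involves no deep idea; the only care needed is bookkeeping. One must treat the $\attackCost{\allAttackStakes}=0$ branch of \eqreft{equation:validator_prize_share_security_game} separately (there the cap $\min(\stake{\validator},\cdot)$ in \eqreft{equation:validator_attack_cost} forces each individual cost to vanish), and one must confirm that deviations attacking no service --- which leave every utility unchanged --- are not counted as breaking the equilibrium. The identity $\sum_{\validator\in\allValidators}\validatorPrizeShareSecurityGame{\validator}{\allAttackStakes}=1$ is exactly what collapses the per-validator inequalities into the single condition $\attackCost{\allAttackStakes}\le\totalAttackPrize{\allAttackStakes}$.
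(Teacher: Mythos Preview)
Your ``only if'' direction (profitable attack $\Rightarrow$ not secure) is the same argument as the paper's: let the grand coalition deviate to the profitable $\allAttackStakes$, and a case split on whether $\attackCost{\allAttackStakes}>0$ shows every validator's utility is $\geq 0=\validatorUtilitySecurityGame{\validator}{\allAttackStakesAt{0}}$, so $\allAttackStakesAt{0}$ fails Definition~\ref{definition:strong_nash_equilibrium}.

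For the ``if'' direction you actually go beyond the paper. The paper's two labeled ``directions'' are in fact the same implication---one stated directly (secure $\Rightarrow$ no profitable attack), the other as its contrapositive (profitable attack $\Rightarrow$ not secure)---so the paper never shows that absence of profitable attacks forces security. Your argument supplies this: the identity $\sum_{\validator\in\allValidators}\validatorPrizeShareSecurityGame{\validator}{\allAttackStakes}=1$ yields $\sum_{\validator\in\allValidators}\validatorUtilitySecurityGame{\validator}{\allAttackStakes}=\totalAttackPrize{\allAttackStakes}-\attackCost{\allAttackStakes}$, so if a coalition $P'$ deviates to an $\allAttackStakes$ with $\attackedServices\neq\emptyset$ and no member is worse off, then every validator---members by assumption, non-members because their cost is zero---has nonnegative utility, whence $\attackCost{\allAttackStakes}\leq\totalAttackPrize{\allAttackStakes}$ and $\allAttackStakes$ is profitable, a contradiction. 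This handles arbitrary coalitions, not just the grand one. The one soft spot is the case $\attackedServices=\emptyset$: under the literal reading of Definition~\ref{definition:strong_nash_equilibrium} together with \eqreft{equation:validator_attack_cost} (cost summed only over $\attackedServices$), such a deviation leaves every utility at zero and would formally block the equilibrium in almost any network. Your reading that this ``is not a genuine objection'' matches the paper's evident intent, and you are right to flag it explicitly as an interpretive step the paper itself never confronts.
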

\begin{proof}
    We prove the proposition in two directions.
    
    \paragraphEmph{First direction}
    Assume that the network~$\networkState$ is cryptoeconomically secure.
    By definition, the strategy profile~$\allAttackStakes_0$, where for all~$\validator \in \allValidators$ and all~$\service \in \allServices$,~$\attackStake{\validator}{\service} = 0$, is a strong Nash equilibrium and under it there are no attacked services.
    We will show this implies that there is no profitable attack.

    First, note that due to~\eqreft{equation:validator_attack_cost}, for all validators~$\validator \in \allValidators$ and attacks~$\allAttackStakes \in \strategyProfileSecurityGame$,~$\validatorAttackCost{\validator}{\allAttackStakes} \geq 0$.
    And due to~\eqreft{equation:total_attack_cost},
    \begin{equation}
        \label{equation:proof_first_direction_total_cost_more_than_validator_cost}
        \attackCost{\allAttackStakes} \geq \validatorAttackCost{\validator}{\allAttackStakes} \geq 0 .
    \end{equation}

    The cost of the attack is
    \begin{multline}
        \label{equation:proof_first_direction_attack_cost_0}
        \attackCost{\allAttackStakes_0}
        \underset{\eqref{equation:total_attack_cost}}{=} \sum_{\validator \in \allValidators} \validatorAttackCost{\validator}{\allAttackStakes_0}
        \underset{\eqref{equation:validator_attack_cost}}{=} \sum_{\validator \in \allValidators} \min \left( \stake{\validator}, \sum_{\service \in \allServices} \attackStakeAt{0}{\validator}{\service} \right) \\
        = \sum_{\validator \in \allValidators} \min \left( \stake{\validator}, \sum_{\service \in \allServices} 0 \right)
        = 0 .
    \end{multline}
    The utility of~$\validator$ under~$\allAttackStakes_0$ is
    \begin{multline}
        \label{equation:proof_first_direction_validator_utility_0}
        \validatorUtilitySecurityGame{\validator}{\allAttackStakes_0}
        \underset{\eqref{equation:validator_utility_security_game}}{=} \validatorPrizeShareSecurityGame{\validator}{\allAttackStakes_0} \cdot \totalAttackPrize{\allAttackStakesAt{0}} - \validatorAttackCost{\validator}{\allAttackStakes_0} \\
        \underset{\eqref{equation:validator_prize_share_security_game}}{=} \begin{cases}
            \frac{\validatorAttackCost{\validator}{\allAttackStakes_0}}{\attackCost{\allAttackStakes_0}} \cdot \totalAttackPrize{\allAttackStakesAt{0}} - \validatorAttackCost{\validator}{\allAttackStakes_0} & \text{if } \attackCost{\allAttackStakes_0} > 0 ; \\
            \frac{1}{|\allValidators|} \cdot \totalAttackPrize{\allAttackStakesAt{0}} - \validatorAttackCost{\validator}{\allAttackStakes_0} & \text{if } \attackCost{\allAttackStakes_0} = 0 ;
        \end{cases} \\
        \underset{\eqref{equation:proof_first_direction_attack_cost_0}}{=}
        \frac{1}{|\allValidators|} \cdot \totalAttackPrize{\allAttackStakesAt{0}} - \validatorAttackCost{\validator}{\allAttackStakes_0}
        \underset{\eqref{equation:proof_first_direction_total_cost_more_than_validator_cost}}{=} \frac{1}{|\allValidators|} \cdot \totalAttackPrize{\allAttackStakesAt{0}} .
    \end{multline}

    Due to the definition of cryptoeconomic security~(Definition~\ref{definition:restaking_network_security}), it must be that~$\attackedServicesAt{0} = \emptyset$.
    This implies~${\totalAttackPrize{\allAttackStakesAt{0}} = 0}$~(\eqreft{equation:total_attack_prize}), and so
    \begin{equation}
        \label{equation:proof_first_direction_validator_utility_0_is_0}
        \validatorUtilitySecurityGame{\validator}{\allAttackStakes_0}
        \underset{\eqref{equation:proof_first_direction_validator_utility_0}}{=} \frac{1}{|\allValidators|} \cdot \totalAttackPrize{\allAttackStakesAt{0}}
        = 0 .
    \end{equation}

    In addition, due to the definition of cryptoeconomic security~(Definition~\ref{definition:restaking_network_security}),~$\allAttackStakes_0$ is a strong Nash equilibrium of the security game of the network~$\networkState$.
    That means that for any strategy profile~${\allAttackStakes \neq \allAttackStakes_0}$, there exists a validator~$\validator \in \allValidators$ that is worse off under~$\allAttackStakes$ than under~$\allAttackStakes_0$, that is,
    \begin{equation}
        \label{equation:proof_first_direction_validator_utility_other_is_negative}
        \validatorUtilitySecurityGame{\validator}{\allAttackStakes} < \validatorUtilitySecurityGame{\validator}{\allAttackStakes_0}
        \underset{\eqref{equation:proof_first_direction_validator_utility_0_is_0}}{=} 0 .
    \end{equation}

    Developing the utility of~$\validator$ under~$\allAttackStakes$, we get that
    \begin{multline}
        \validatorUtilitySecurityGame{\validator}{\allAttackStakes}
        \underset{\eqref{equation:validator_utility_security_game}}{=} \validatorPrizeShareSecurityGame{\validator}{\allAttackStakes} \cdot \totalAttackPrize{\allAttackStakes} - \validatorAttackCost{\validator}{\allAttackStakes} \\
        \underset{\eqref{equation:validator_prize_share_security_game}}{=} \begin{cases}
            \frac{\validatorAttackCost{\validator}{\allAttackStakes}}{\attackCost{\allAttackStakes}} \cdot \totalAttackPrize{\allAttackStakes} - \validatorAttackCost{\validator}{\allAttackStakes} & \text{if } \attackCost{\allAttackStakes} > 0 ; \\
            \frac{1}{|\allValidators|}  \cdot \totalAttackPrize{\allAttackStakes} - \validatorAttackCost{\validator}{\allAttackStakes} & \text{if } \attackCost{\allAttackStakes} = 0 .
        \end{cases} \\
        \underset{\eqref{equation:proof_first_direction_total_cost_more_than_validator_cost}}{=} \begin{cases}
            \frac{\validatorAttackCost{\validator}{\allAttackStakes}}{\attackCost{\allAttackStakes}} \cdot \totalAttackPrize{\allAttackStakes} - \validatorAttackCost{\validator}{\allAttackStakes} & \text{if } \attackCost{\allAttackStakes} > 0 ; \\
            \frac{1}{|\allValidators|}  \cdot \totalAttackPrize{\allAttackStakes} & \text{if } \attackCost{\allAttackStakes} = 0 .
        \end{cases}
        \underset{\eqref{equation:proof_first_direction_validator_utility_other_is_negative}}{<} 0 .
    \end{multline}
    Since~$\totalAttackPrize{\allAttackStakes} \geq 0$, for the last inequality to hold it must be that~$\validatorAttackCost{\validator}{\allAttackStakes} > 0$. 
    Hence,
    \begin{equation}
        \frac{\validatorAttackCost{\validator}{\allAttackStakes}}{\attackCost{\allAttackStakes}} \cdot \totalAttackPrize{\allAttackStakes} - \validatorAttackCost{\validator}{\allAttackStakes} < 0 .
    \end{equation}
    And because~$\validatorAttackCost{\validator}{\allAttackStakes} \geq 0$, it must be that~$\attackCost{\allAttackStakes} > \totalAttackPrize{\allAttackStakes}$.
    Therefore, there exists no profitable attack~(Definition~\ref{definition:attack_profitability}).
    
    \paragraphEmph{Second direction}
    Assume there exists some profitable attack~$\allAttackStakes$.
    We claim it is an alternative strategy profile where some coalition deviated, and it resulted with all of them being better off and thus the strategy profile~$\allAttackStakesAt{0}$ is not a strong Nash equilibrium, meaning the network is not secure.

    By Definition~\ref{definition:attack_profitability},
    \begin{equation}
        \attackedServices \neq \emptyset,
    \end{equation}
    and
    \begin{equation}
        \label{equation:proof_first_direction_attack_cost_total_attack_prize}
        \attackCost{\allAttackStakes} \leq \totalAttackPrize{\allAttackStakes} .
    \end{equation}
    Consider the utility of validator~$\validator$ resulting from the strategy profile~$\allAttackStakes$,
    \begin{multline}
        \label{equation:proof_first_direction_validator_utility}
        \validatorUtilitySecurityGame{\validator}{\allAttackStakes}
        \underset{\eqref{equation:validator_utility_security_game}}{=} \validatorPrizeShareSecurityGame{\validator}{\allAttackStakes} \cdot \totalAttackPrize{\allAttackStakes} - \validatorAttackCost{\validator}{\allAttackStakes} \\
        \underset{\eqref{equation:validator_prize_share_security_game}}{=} \begin{cases}
            \frac{\validatorAttackCost{\validator}{\allAttackStakes}}{\attackCost{\allAttackStakes}} \cdot \totalAttackPrize{\allAttackStakes} - \validatorAttackCost{\validator}{\allAttackStakes} & \text{if } \attackCost{\allAttackStakes} > 0 ; \\
            \frac{1}{|\allValidators|}  \cdot \totalAttackPrize{\allAttackStakes} - \validatorAttackCost{\validator}{\allAttackStakes} & \text{if } \attackCost{\allAttackStakes} = 0 .
        \end{cases}
        \geq 0 ;
    \end{multline}
    in the first case it follows from~\eqreft{equation:proof_first_direction_attack_cost_total_attack_prize}, and in the second case it follows from the fact that~$\validatorAttackCost{\validator}{\allAttackStakes}$ must be zero if~${\attackCost{\allAttackStakes} = 0}$.

    Now consider the strategy profile~$\allAttackStakes_0$, where for all~$\validator \in \allValidators$ and all~$\service \in \allServices$,~$\attackStake{\validator}{\service} = 0$.
    As we showed above, the utility of~$\validator$ under~$\allAttackStakes_0$ is
    \begin{equation}
        \label{equation:proof_second_direction_validator_utility_0}
        \validatorUtilitySecurityGame{\validator}{\allAttackStakes_0}
        \underset{\eqref{equation:proof_first_direction_validator_utility_0_is_0}}{=} \frac{1}{|\allValidators|} \cdot \totalAttackPrize{\allAttackStakesAt{0}} .
    \end{equation}
    It must be either that~$\attackedServicesAt{0} \neq \emptyset$, which means that the restaking network is not secure (Definition~\ref{definition:restaking_network_security}), or that~$\attackedServicesAt{0} = \emptyset$, which means that the total attack prize~$\totalAttackPrize{\allAttackStakesAt{0}}$ is~0.

    Thus, for all~$\validator \in \allValidators$,
    \begin{equation}
        \validatorUtilitySecurityGame{\validator}{\allAttackStakes_0}
        \underset{\eqref{equation:proof_first_direction_validator_utility_0}}{=} 0
        \underset{\eqref{equation:proof_first_direction_validator_utility}}{\leq} \validatorUtilitySecurityGame{\validator}{\allAttackStakes} .
    \end{equation}
    Therefore, by Definition~\ref{definition:strong_nash_equilibrium}, the strategy profile~$\allAttackStakes_0$ is not a strong Nash equilibrium of the restaking network security game, as otherwise we must have had some validator~$\validator \in \allValidators$ such that~${\validatorUtilitySecurityGame{\validator}{\allAttackStakes_0} > \validatorUtilitySecurityGame{\validator}{\allAttackStakes}}$.
    Hence, the network is not cryptoeconomically secure. 
\end{proof}


\subsection{Proofs Deferred from Subsection~\ref{section:security_analysis:sufficient_conditions_for_security}}


\begin{theorem}[Theorem~\ref{theorem:eigenlayer_condition} restated]
    \label{theorem:eigenlayer_condition_appendix}
    A network~$\networkState$ is secure if a misbehaving validator is slashed for their stake~(\eqreft{equation:eigenlayer_condition_simplified_assumption}), and for all validators~$\validator \in \allValidators$:
    \begin{equation}
        \label{equation:eigenlayer_condition}
        \sum_{\service \in \allServices} \frac{\allocation{\validator}{\service}}{\sum_{\validator' \in \allValidators} \allocation{\validator'}{\service}} \cdot \frac{\attackPrize{\service}}{\attackThreshold{\service}} < \stake{\validator} .
    \end{equation}
\end{theorem}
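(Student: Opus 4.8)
\paragraphEmph{Approach}
The plan is to reduce to the no-profitable-attack criterion and then bound, for an arbitrary attack, the total prize strictly below the total cost. By Proposition~\ref{proposition:restaking_network_security} a network is secure iff there is no profitable attack; the proof of that proposition only uses that the cost function is nonnegative, vanishes on the zero attack, and that $\attackCost{\allAttackStakes} \geq \validatorAttackCost{\validator}{\allAttackStakes}$, all of which hold equally for the slashing rule~\eqref{equation:eigenlayer_condition_simplified_assumption}. So it suffices to fix an arbitrary attack $\allAttackStakes$ with $\attackedServices \neq \emptyset$ and show $\attackCost{\allAttackStakes} > \totalAttackPrize{\allAttackStakes}$.

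\paragraphEmph{Key steps}
First I would isolate the set of paying validators $\attackingValidators = \left\{ \validator \in \allValidators \;\middle|\; \sum_{\service \in \attackedServices} \attackStake{\validator}{\service} > 0 \right\}$. Under~\eqref{equation:eigenlayer_condition_simplified_assumption} each validator in $\attackingValidators$ is slashed exactly $\stake{\validator}$ and every other validator pays nothing, so $\attackCost{\allAttackStakes} = \sum_{\validator \in \attackingValidators} \stake{\validator}$; moreover $\attackingValidators \neq \emptyset$, since an attacked service $\service$ has positive total allocation and $\attackThreshold{\service} > 0$ (both needed for the hypothesis to be well defined), hence $\sum_{\validator} \attackStake{\validator}{\service} \geq \attackThreshold{\service} \sum_{\validator'} \allocation{\validator'}{\service} > 0$, which forces some validator into $\attackingValidators$. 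Second, the core estimate: for each $\service \in \attackedServices$, validators outside $\attackingValidators$ contribute $0$ to $\service$ and each contribution is capped by the allocation, so $\attackThreshold{\service} \sum_{\validator' \in \allValidators} \allocation{\validator'}{\service} \leq \sum_{\validator \in \attackingValidators} \attackStake{\validator}{\service} \leq \sum_{\validator \in \attackingValidators} \allocation{\validator}{\service}$; dividing gives $\attackPrize{\service} \leq \frac{\attackPrize{\service}}{\attackThreshold{\service}} \cdot \frac{\sum_{\validator \in \attackingValidators} \allocation{\validator}{\service}}{\sum_{\validator' \in \allValidators} \allocation{\validator'}{\service}}$. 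Summing over $\service \in \attackedServices$, swapping the order of summation, enlarging the inner sum from $\attackedServices$ to $\allServices$ (all terms nonnegative), and finally applying the hypothesis~\eqref{equation:eigenlayer_condition} to each $\validator \in \attackingValidators$ yields $\totalAttackPrize{\allAttackStakes} \leq \sum_{\validator \in \attackingValidators} \sum_{\service \in \allServices} \frac{\allocation{\validator}{\service}}{\sum_{\validator' \in \allValidators} \allocation{\validator'}{\service}} \cdot \frac{\attackPrize{\service}}{\attackThreshold{\service}} < \sum_{\validator \in \attackingValidators} \stake{\validator} = \attackCost{\allAttackStakes}$. Thus $\allAttackStakes$ is not profitable, and since it was arbitrary the network is secure.

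\paragraphEmph{Main obstacle}
The delicate part is the core estimate: justifying that in the threshold inequality the full-validator allocation sum can be replaced on the right by the $\attackingValidators$-restricted sum (using $\attackStake{\validator}{\service} = 0$ for $\validator \notin \attackingValidators$ and $\service \in \attackedServices$), and ensuring the concluding inequality remains \emph{strict}, which relies on $\attackingValidators$ being nonempty. One must also explicitly rule out the degenerate cases $\sum_{\validator'} \allocation{\validator'}{\service} = 0$ and $\attackThreshold{\service} = 0$, which are excluded precisely because they would make the hypothesis~\eqref{equation:eigenlayer_condition} ill defined. Everything else is routine rearrangement of finite sums.
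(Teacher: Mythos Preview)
Your proposal is correct and follows essentially the same argument as the paper: assume a profitable attack, isolate the set of contributing validators, use the threshold inequality to bound each $\attackPrize{\service}$ by the $\attackingValidators$-restricted allocation fraction, sum and swap, and apply~\eqref{equation:eigenlayer_condition}. If anything, you are slightly more careful than the paper---you define $\attackingValidators$ via the sum over $\attackedServices$ (matching the cost rule~\eqref{equation:eigenlayer_condition_simplified_assumption} exactly, so that $\attackCost{\allAttackStakes} = \sum_{\validator \in \attackingValidators} \stake{\validator}$ is an equality rather than an inequality), you explicitly verify $\attackingValidators \neq \emptyset$ to preserve strictness, and you note that Proposition~\ref{proposition:restaking_network_security} carries over to the modified cost rule.
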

\begin{proof}
    (Adapted from~\citet{eigenlayer2024restaking})
    Assume towards a contradiction that the condition in the theorem holds, but the network~${\networkState=(\allValidators, \allServices, \allStakes, \allAllocations, \allAttackThresholds, \allAttackPrizes)}$ is insecure.
    Due to Proposition~\ref{proposition:restaking_network_security}, there exists a profitable attack~$\allAttackStakes$.

    Let~$\attackingValidators$ be the set of validators that misbehave in the attack~$\allAttackStakes$, that is,
    \begin{equation}
        \label{equation:eigenlayer_condition_proof_attacking_validators}
        \attackingValidators = \left\{ \validator \in \allValidators \middle| \sum_{\service \in \allServices} \attackStake{\validator}{\service} > 0 \right\} .
    \end{equation}
   
    Due to Definition~\ref{definition:attacked_services}, for all services~$\service \in \attackedServices$,
    \begin{multline}
        \attackThreshold{\service} \cdot \sum_{\validator \in \allValidators} \allocation{\validator}{\service}
        \leq \sum_{\validator \in \allValidators} \attackStake{\validator}{\service}
        = \sum_{\validator \in \allValidators \setminus \attackingValidators} \attackStake{\validator}{\service} + \sum_{\validator \in \attackingValidators} \attackStake{\validator}{\service} \\
        \underset{\eqref{equation:eigenlayer_condition_proof_attacking_validators}}{=} \sum_{\validator \in \attackingValidators} \attackStake{\validator}{\service} .
    \end{multline}
    And since for all~$\validator \in \allValidators$ and all~$\service \in \allServices$,~$\attackStake{\validator}{\service} \leq \allocation{\validator}{\service}$,
    \begin{equation}
        \label{equation:eigenlayer_condition_proof_feasibility}
        \attackThreshold{\service} \cdot \sum_{\validator \in \allValidators} \allocation{\validator}{\service}
        \leq \sum_{\validator \in \attackingValidators} \allocation{\validator}{\service} .
    \end{equation}
    
    Starting from the left-hand side of~\eqreft{equation:eigenlayer_condition}, and using~\eqreft{equation:eigenlayer_condition_proof_feasibility}, we get
    \begin{multline}
        \label{equation:eigenlayer_condition_proof_left_hand_side}
        \sum_{\service \in \allServices} \frac{\allocation{\validator}{\service}}{\sum_{\validator' \in \allValidators} \allocation{\validator'}{\service}} \cdot \frac{\attackPrize{\service}}{\attackThreshold{\service}}
        = \sum_{\service \in \allServices} \frac{\allocation{\validator}{\service} \cdot \attackPrize{\service}}{\attackThreshold{\service} \cdot \sum_{\validator' \in \allValidators} \allocation{\validator'}{\service}} \\
        \underset{\eqref{equation:eigenlayer_condition_proof_feasibility}}{\geq} \sum_{\service \in \allServices} \frac{\allocation{\validator}{\service} \cdot \attackPrize{\service}}{\sum_{\validator' \in \attackingValidators} \allocation{\validator'}{\service}} .
    \end{multline}
    Then, summing over all validators in~$\attackingValidators$, we get
    \begin{multline}
        \label{equation:eigenlayer_condition_proof_right_hand_side}
        \sum_{\validator \in \attackingValidators} \stake{\validator}
        \underset{\eqref{equation:eigenlayer_condition}}{>} \sum_{\validator \in \attackingValidators} \sum_{\service \in \allServices} \frac{\allocation{\validator}{\service}}{\sum_{\validator' \in \attackingValidators} \allocation{\validator'}{\service}} \cdot \frac{\attackPrize{\service}}{\attackThreshold{\service}} \\
        \underset{\eqref{equation:eigenlayer_condition_proof_left_hand_side}}{\geq} \sum_{\validator \in \attackingValidators} \sum_{\service \in \allServices} \frac{\allocation{\validator}{\service} \cdot \attackPrize{\service}}{\sum_{\validator' \in \attackingValidators} \allocation{\validator'}{\service}}
        \geq \sum_{\service \in \allServices} \frac{\sum_{\validator \in \attackingValidators} \allocation{\validator}{\service}}{\sum_{\validator' \in \attackingValidators} \allocation{\validator'}{\service}} \cdot \attackPrize{\service} \\
        = \sum_{\service \in \allServices} \attackPrize{\service}
        \geq \sum_{\service \in \attackedServices} \attackPrize{\service}
        \underset{\eqref{equation:total_attack_prize}}{=} \totalAttackPrize{\allAttackStakes} .
    \end{multline}

    Due to the assumption that misbehaving validators are slashed for all their stake~(\eqreft{equation:eigenlayer_condition_simplified_assumption}), this means that the stake of each validator~$\validator \in \attackingValidators$ is fully slashed, and thus the attack cost is
    \begin{equation}
        \attackCost{\allAttackStakes} = \sum_{\validator \in \attackingValidators} \stake{\validator} .
    \end{equation}
    
    Combined with~\eqreft{equation:eigenlayer_condition_proof_right_hand_side}, we get that~$\attackCost{\allAttackStakes} > \totalAttackPrize{\allAttackStakes}$, meaning that the attack is not profitable, in contradiction to our assumption.
    Thus, the network~$\networkState$ is secure.
\end{proof}

\begin{proposition}[Proposition~\ref{proposition:generalized_eigenlayer_condition} restated]
    \label{proposition:generalized_eigenlayer_condition_appendix}
    A network~$\networkState$ is secure if all validators~$\validator \in \allValidators$ should be slashed by less than their total stake:
    \begin{equation}
        \label{equation:generalized_eigenlayer_condition_validator}
        \sum_{\service \in \allServices} \frac{\allocation{\validator}{\service}}{\sum_{\validator' \in \allValidators} \allocation{\validator'}{\service}} \cdot \frac{\attackPrize{\service}}{\attackThreshold{\service}} < \stake{\validator} ,
    \end{equation}
    and all services~$\service \in \allServices$ have sufficient stake to cover their prizes:
    \begin{equation}
        \label{equation:generalized_eigenlayer_condition_service}
        \sum_{\validator \in \allValidators} \allocation{\validator}{\service} > \frac{\attackPrize{\service}}{\attackThreshold{\service}} .
    \end{equation}
\end{proposition}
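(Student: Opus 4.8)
The plan is to argue by contradiction through Proposition~\ref{proposition:restaking_network_security}: if~$\networkState$ were insecure there would be a profitable attack~$\allAttackStakes$, and I will derive from the two hypotheses that~$\attackCost{\allAttackStakes} > \totalAttackPrize{\allAttackStakes}$, contradicting Definition~\ref{definition:attack_profitability}. As a preliminary remark, both conditions are only meaningful when~$\attackThreshold{\service} > 0$ and~$\sum_{\validator' \in \allValidators} \allocation{\validator'}{\service} > 0$ for every service, so all denominators appearing below are positive; I will state this at the outset.

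The core step is a uniform per-validator lower bound on cost: for every validator~$\validator$,
\[
    \validatorAttackCost{\validator}{\allAttackStakes} = \min\!\left(\stake{\validator},\, \sum_{\service \in \attackedServices} \attackStake{\validator}{\service}\right) \;\geq\; B_{\validator} \;:=\; \sum_{\service \in \attackedServices} \attackStake{\validator}{\service} \cdot \frac{\attackPrize{\service}}{\attackThreshold{\service} \sum_{\validator' \in \allValidators} \allocation{\validator'}{\service}} .
\]
Since~$\min(a,b) \geq c$ iff~$a \geq c$ and~$b \geq c$, this splits into two checks. First,~$\stake{\validator} \geq B_{\validator}$: restrict the sum in the validator condition~\eqref{equation:generalized_eigenlayer_condition_validator} from~$\allServices$ to~$\attackedServices$ (dropping non-negative terms) and use~$\attackStake{\validator}{\service} \leq \allocation{\validator}{\service}$. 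Second,~$\sum_{\service \in \attackedServices} \attackStake{\validator}{\service} \geq B_{\validator}$: the service condition~\eqref{equation:generalized_eigenlayer_condition_service} makes every multiplier~$\attackPrize{\service} / (\attackThreshold{\service} \sum_{\validator' \in \allValidators} \allocation{\validator'}{\service})$ strictly less than~$1$. Summing~$B_{\validator}$ over all validators, swapping the order of summation, and invoking Definition~\ref{definition:attacked_services} (so~$\sum_{\validator \in \allValidators} \attackStake{\validator}{\service} \geq \attackThreshold{\service} \sum_{\validator' \in \allValidators} \allocation{\validator'}{\service}$ for each attacked service) bounds~$\sum_{\validator \in \allValidators} B_{\validator}$ below by~$\sum_{\service \in \attackedServices} \attackPrize{\service} = \totalAttackPrize{\allAttackStakes}$.

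It remains to make the inequality strict. Since the attack is profitable,~$\attackedServices \neq \emptyset$; pick~$\service^{\ast} \in \attackedServices$. Condition~\eqref{equation:generalized_eigenlayer_condition_service} with Definition~\ref{definition:attacked_services} gives~$\sum_{\validator \in \allValidators} \attackStake{\validator}{\service^{\ast}} > \attackPrize{\service^{\ast}} > 0$, so some validator~$\validator^{\ast}$ has~$\attackStake{\validator^{\ast}}{\service^{\ast}} > 0$. For~$\validator^{\ast}$ both of the checks above hold strictly: the first because condition~\eqref{equation:generalized_eigenlayer_condition_validator} is itself a strict inequality, and the second because the~$\service^{\ast}$ summand contributes~$\attackStake{\validator^{\ast}}{\service^{\ast}} \bigl(1 - \attackPrize{\service^{\ast}} / (\attackThreshold{\service^{\ast}} \sum_{\validator' \in \allValidators} \allocation{\validator'}{\service^{\ast}})\bigr) > 0$. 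Hence~$\validatorAttackCost{\validator^{\ast}}{\allAttackStakes} > B_{\validator^{\ast}}$, and adding the non-strict bounds for the remaining validators yields~$\attackCost{\allAttackStakes} > \sum_{\validator \in \allValidators} B_{\validator} \geq \totalAttackPrize{\allAttackStakes}$, the contradiction.

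I expect the bookkeeping around the~$\min$ in~$\validatorAttackCost{\validator}{\allAttackStakes}$ to be the main obstacle, since it is exactly what makes the statement genuinely more general than Theorem~\ref{theorem:eigenlayer_condition}, where the cost is just~$\stake{\validator}$. The two hypotheses are complementary, and the single-validator example preceding the proposition shows neither suffices alone: condition~\eqref{equation:generalized_eigenlayer_condition_service} controls the~$\sum_{\service \in \attackedServices} \attackStake{\validator}{\service}$ branch of the~$\min$ (and supplies the strictness), while condition~\eqref{equation:generalized_eigenlayer_condition_validator} controls the~$\stake{\validator}$ branch. Note also that the bound~$\validatorAttackCost{\validator}{\allAttackStakes} \geq B_{\validator}$ automatically covers non-participating validators, for whom both sides are~$0$, so no separate case analysis is needed there.
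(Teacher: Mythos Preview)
Your proposal is correct and follows essentially the same route as the paper: both bound each validator's cost below by $B_{\validator}=\sum_{\service\in\attackedServices}\attackStake{\validator}{\service}\cdot\frac{\attackPrize{\service}}{\attackThreshold{\service}\sum_{\validator'}\allocation{\validator'}{\service}}$, handling the two branches of the $\min$ with conditions~\eqref{equation:generalized_eigenlayer_condition_validator} and~\eqref{equation:generalized_eigenlayer_condition_service} respectively, then sum and apply feasibility. Your treatment of strictness---establishing $\validatorAttackCost{\validator}{\allAttackStakes}\geq B_{\validator}$ for all validators and then upgrading to a strict inequality for one $\validator^{\ast}$ with $\attackStake{\validator^{\ast}}{\service^{\ast}}>0$---is in fact tidier than the paper's, which asserts $\validatorAttackCost{\validator}{\allAttackStakes}>B_{\validator}$ uniformly even for validators with zero attacking stake.
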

\begin{proof}
    Assume towards a contradiction that the network~$\networkState=(\allValidators, \allServices, \allStakes, \allAllocations, \allAttackThresholds, \allAttackPrizes)$ is insecure.
    Due to Proposition~\ref{proposition:restaking_network_security}, there exists a profitable attack~$\allAttackStakes$.

    Due to Definition~\ref{definition:attacked_services}, for each service~$\service \in \attackedServices$,
    \begin{equation}
        \label{equation:generalized_eigenlayer_condition_proof_feasibility}
        \attackThreshold{\service} \cdot \sum_{\validator \in \allValidators} \allocation{\validator}{\service}
        \leq \sum_{\validator \in \allValidators} \attackStake{\validator}{\service} .
    \end{equation}

    The slashed amount from validator~$\validator$ in the attack is given by~\eqreft{equation:validator_attack_cost}:
    \begin{equation}
        \label{equation:generalized_eigenlayer_condition_proof_validator_cost}
        \validatorAttackCost{\validator}{\allAttackStakes} = \min \left( \stake{\validator}, \sum_{\service \in \attackedServices} \attackStake{\validator}{\service} \right) .
    \end{equation}
    To lower-bound the cost, we need to lower-bound both of the terms in the minimum.
    For the first term, we start from the~\eqreft{equation:generalized_eigenlayer_condition_validator}, and use the fact that~$\attackStake{\validator}{\service} \leq \allocation{\validator}{\service}$ and that~$\attackedServices \subseteq \allServices$:
    \begin{equation}
        \label{equation:generalized_eigenlayer_condition_proof_first_term}
        \stake{\validator}
        \underset{\eqref{equation:generalized_eigenlayer_condition_validator}}{>} \sum_{\service \in \allServices} \frac{\allocation{\validator}{\service}}{\sum_{\validator' \in \allValidators} \allocation{\validator'}{\service}} \cdot \frac{\attackPrize{\service}}{\attackThreshold{\service}}
        \geq \sum_{\service \in \attackedServices} \frac{\attackStake{\validator}{\service}}{\sum_{\validator' \in \allValidators} \allocation{\validator'}{\service}} \cdot \frac{\attackPrize{\service}}{\attackThreshold{\service}} .
    \end{equation}
    For the second term, we start from~\eqreft{equation:generalized_eigenlayer_condition_service}, rearrange and sum over all services in~$\attackedServices$:
    \begin{align}
        \sum_{\validator' \in \allValidators} \allocation{\validator'}{\service} &\underset{\eqref{equation:generalized_eigenlayer_condition_service}}{>}  \frac{\attackPrize{\service}}{\attackThreshold{\service}} ; \\
        1 &> \frac{1}{\sum_{\validator' \in \allValidators} \allocation{\validator'}{\service}} \cdot \frac{\attackPrize{\service}}{\attackThreshold{\service}} ; \\
        \attackStake{\validator}{\service} &> \frac{\attackStake{\validator}{\service}}{\sum_{\validator' \in \allValidators} \allocation{\validator'}{\service}} \cdot \frac{\attackPrize{\service}}{\attackThreshold{\service}} ; \\
        \label{equation:generalized_eigenlayer_condition_proof_second_term}
        \sum_{\service \in \attackedServices} \attackStake{\validator}{\service} &> \sum_{\service \in \attackedServices} \frac{\attackStake{\validator}{\service}}{\sum_{\validator' \in \allValidators} \allocation{\validator'}{\service}} \cdot \frac{\attackPrize{\service}}{\attackThreshold{\service}} .
    \end{align}

    Combining~\eqreft{equation:generalized_eigenlayer_condition_proof_first_term} and~\eqreft{equation:generalized_eigenlayer_condition_proof_second_term}, and then using~\eqreft{equation:generalized_eigenlayer_condition_proof_feasibility}, we get
    \begin{multline}
        \label{equation:generalized_eigenlayer_condition_proof_cost}
        \validatorAttackCost{\validator}{\allAttackStakes}
        \underset{\eqref{equation:generalized_eigenlayer_condition_proof_validator_cost}}{=} \min \left( \stake{\validator}, \sum_{\service \in \attackedServices} \attackStake{\validator}{\service} \right) \\
        \underset{\eqref{equation:generalized_eigenlayer_condition_proof_first_term},\eqref{equation:generalized_eigenlayer_condition_proof_second_term}}{>} \sum_{\service \in \attackedServices} \frac{\attackStake{\validator}{\service}}{\sum_{\validator' \in \allValidators} \allocation{\validator'}{\service}} \cdot \frac{\attackPrize{\service}}{\attackThreshold{\service}}
        = \sum_{\service \in \attackedServices} \frac{\attackStake{\validator}{\service} \cdot \attackPrize{\service}}{\attackThreshold{\service} \cdot \sum_{\validator' \in \allValidators} \allocation{\validator'}{\service}} \\
        \underset{\eqref{equation:generalized_eigenlayer_condition_proof_feasibility}}{\geq} \sum_{\service \in \attackedServices} \frac{\attackStake{\validator}{\service} \cdot \attackPrize{\service}}{\sum_{\validator' \in \allValidators} \attackStake{\validator'}{\service}} .
    \end{multline}

    Then, summing over all validators~$\allValidators$, we get
    \begin{multline}
        \attackCost{\allAttackStakes}
        \underset{\eqref{equation:total_attack_cost}}{=} \sum_{\validator \in \allValidators} \validatorAttackCost{\validator}{\allAttackStakes}
        \underset{\eqref{equation:generalized_eigenlayer_condition_proof_cost}}{>} \sum_{\validator \in \allValidators} \sum_{\service \in \attackedServices} \frac{\attackStake{\validator}{\service} \cdot \attackPrize{\service}}{\sum_{\validator' \in \allValidators} \attackStake{\validator'}{\service}} \\
        = \sum_{\service \in \attackedServices} \frac{\sum_{\validator \in \allValidators} \attackStake{\validator}{\service}}{\sum_{\validator' \in \allValidators} \attackStake{\validator'}{\service}} \cdot \attackPrize{\service}
        = \sum_{\service \in \attackedServices} \attackPrize{\service}
        \underset{\eqref{equation:total_attack_prize}}{=} \totalAttackPrize{\allAttackStakes} .
    \end{multline}

    Overall, we get that~$\attackCost{\allAttackStakes} > \totalAttackPrize{\allAttackStakes}$, meaning that the attack is not profitable, in contradiction to our assumption.
    Thus, the network~$\networkState$ is secure.
\end{proof}


\subsection{Proofs Deferred from Subsection~\ref{section:security_analysis:searching_for_attacks_is_np_complete}}


\begin{proposition}[Proposition~\ref{proposition:allocation_indivisible_attack_np_complete} restated]
    \label{proposition:allocation_indivisible_attack_np_complete_appendix}
    Determining whether there exists a profitable allocation-indivisible attack~$\allAttackStakes$ in a restaking network~${\networkState = (\allValidators, \allServices, \allStakes, \allAllocations, \allAttackThresholds, \allAttackPrizes)}$ is NP-complete.
\end{proposition}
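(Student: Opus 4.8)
The plan is to prove membership in NP and NP-hardness separately, which together give NP-completeness. Membership is straightforward: I would use a profitable allocation-indivisible attack~$\allAttackStakes$ itself as the certificate. Because the attack is allocation-indivisible, it is fully described by the set of pairs~$(\validator,\service)$ with~$\attackStake{\validator}{\service}=\allocation{\validator}{\service}$, the remaining entries being~$0$; this has polynomial size. From such a certificate one computes~$\attackedServices$ (Definition~\ref{definition:attacked_services}), the total cost~$\attackCost{\allAttackStakes}$ (via~\eqref{equation:validator_attack_cost} and~\eqref{equation:total_attack_cost}), and the total prize~$\totalAttackPrize{\allAttackStakes}$ (via~\eqref{equation:total_attack_prize}) in polynomial time, and then checks the two conditions of Definition~\ref{definition:attack_profitability}: $\attackedServices\neq\emptyset$ and $\attackCost{\allAttackStakes}\leq\totalAttackPrize{\allAttackStakes}$.

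For NP-hardness I would reduce from Subset Sum. Given an instance with positive values~$\sspElement{1},\dots,\sspElement{\sspElementCount}$, sum~$\sspElementSum=\sum_i\sspElement{i}$, and target~$\sspTarget$, I first dispose of the trivial cases---answer yes if~$\sspTarget=0$, no if~$\sspTarget>\sspElementSum$, and delete any zero-valued~$\sspElement{i}$---so that~$0<\sspTarget\leq\sspElementSum$ and all values are strictly positive. I then construct a restaking network~$\networkState$ with one validator per element: $\allValidators=\{\validator_1,\dots,\validator_{\sspElementCount}\}$ with~$\stake{\validator_i}=\sspElement{i}$; a single service~$\service$ with~$\attackThreshold{\service}=\sspTarget/\sspElementSum\in(0,1]$ and~$\attackPrize{\service}=\sspTarget>0$; and allocations~$\allocation{\validator_i}{\service}=\sspElement{i}=\stake{\validator_i}$. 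Note~$\networkState$ is \emph{atomic}, so the same reduction shows hardness already for atomic networks. The construction is evidently polynomial-time.

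For correctness, an allocation-indivisible attack on~$\networkState$ is precisely a choice of a subset~$A\subseteq\allValidators$ of validators who each dedicate their entire allocation to~$\service$. The service is attacked exactly when~$\sum_{\validator_i\in A}\sspElement{i}\geq\attackThreshold{\service}\cdot\sum_j\allocation{\validator_j}{\service}=(\sspTarget/\sspElementSum)\cdot\sspElementSum=\sspTarget$, and in that case each attacking validator's cost is~$\min(\sspElement{i},\sspElement{i})=\sspElement{i}$ (a single service makes the minimum in~\eqref{equation:validator_attack_cost} equal to~$\sspElement{i}$, while a non-attacker pays~$0$), so~$\attackCost{\allAttackStakes}=\sum_{\validator_i\in A}\sspElement{i}$ and~$\totalAttackPrize{\allAttackStakes}=\sspTarget$. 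Hence the attack is profitable iff~$\sspTarget\leq\sum_{\validator_i\in A}\sspElement{i}\leq\sspTarget$, that is, iff~$\sum_{\validator_i\in A}\sspElement{i}=\sspTarget$. Therefore~$\networkState$ admits a profitable allocation-indivisible attack if and only if the Subset Sum instance is a yes-instance, which completes the reduction.

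The construction is short, so the only real work is the correctness bookkeeping: verifying that the cap in~\eqref{equation:validator_attack_cost} is harmless (no validator commits more than its stake, since there is a single service), that no attacking subset other than one summing exactly to~$\sspTarget$ can be profitable (any subset summing above~$\sspTarget$ still attacks but over-pays), and that the boundary handling leaves~$\attackThreshold{\service}\in[0,1]$ and~$\attackPrize{\service}\in\positiveRealNumbers$ so that~$\networkState$ is a legal restaking network. I expect no deeper obstacle, precisely because a single service eliminates all cross-service interaction; the subtlety the authors flag for the divisible case---that fractional commitments of a~$\sspTarget/\sspElementSum$ share of each stake would always give a profitable attack---does not arise under allocation-indivisibility.
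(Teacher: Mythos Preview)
Your proposal is correct and matches the paper's proof essentially verbatim: the same polynomial-time verification for NP membership, the same reduction from Subset Sum with one validator per element~$\sspElement{i}$ (stake and allocation both~$\sspElement{i}$) and a single service with threshold~$\sspTarget/\sspElementSum$ and prize~$\sspTarget$, and the same two-sided squeeze ($\sum_{\validator_i\in A}\sspElement{i}\geq\sspTarget$ for the service to be attacked, $\sum_{\validator_i\in A}\sspElement{i}\leq\sspTarget$ for profitability) forcing equality. You also anticipate the paper's remark that the resulting network is atomic and the observation about why this reduction breaks in the allocation-divisible case.
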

\begin{proof}
    First, the problem is in NP, as given an allocation-indivisible attack, we can verify that it is profitable in polynomial time using the conditions of Definition~\ref{definition:attack_profitability}.

    Next, we show a reduction from the Subset Sum problem.
    Let~$\left\{ \sspElement{1}, \ldots, \sspElement{\sspElementCount} \right\}$ and~$\sspTarget$ be an instance of the Subset Sum problem.
    Denote
    \begin{equation}
        \label{equation:reduction_indivisible_element_sum}
        \sspElementSum = \sum_{i=1}^\sspElementCount \sspElement{i} .
    \end{equation}
    Assume that
    \begin{equation}
        \label{equation:reduction_indivisible_subset_sum_assumption}
        0 < \sspTarget \leq \sspElementSum .
    \end{equation}
    Otherwise, the Subset Sum problem is trivial, as no subset can sum to the target.

    \begin{figure}[t]
        \centering
        \begin{subfigure}[t]{0.45\textwidth}
            \centering
            \begin{tikzpicture}[
                validator/.style={circle, draw, minimum size=0.6cm, font=\scriptsize},
                service/.style={circle, draw, minimum size=0.6cm, font=\scriptsize, inner sep=0.035cm},
                scale=0.5
            ]
                \node[align=center] at (0,4) {$\allValidators$ \\ $(\allStakes)$};
                \node[align=center] at (2, 4) {--- \\ $\allAllocations$};
                \node[align=center] at (4,4) {$\allServices$ \\ $(\allAttackThresholds$ | $\allAttackPrizes)$};

                \node[validator] (V1) at (0,2) {$\sspElement{1}$};
                \node[validator] (V2) at (0,0) {$\sspElement{2}$};
                \node (Vdots) at (0,-1.25) {$\vdots$};
                \node[validator] (Vn) at (0,-3) {$\sspElement{\sspElementCount}$};
                
                \node[service] (S) at (4,0) {$\frac{\sspTarget}{\sspElementSum}~|~\sspTarget$};
                
                \draw (V1) -- (S) node[pos=0.2, above] {\scriptsize $\sspElement{1}$};
                \draw (V2) -- (S) node[pos=0.2, above] {\scriptsize $\sspElement{2}$};
                \draw (Vn) -- (S) node[pos=0.2, above] {\scriptsize $\sspElement{n}$};
            \end{tikzpicture}
            \caption{Reduction for allocation-indivisible attacks}
            \label{figure:reduction_indivisible}
            \Description{Graph showing a reduction from the Subset Sum problem to an allocation-indivisible attack in a restaking network.}
        \end{subfigure}
        \\
        \begin{subfigure}[t]{0.45\textwidth}
            \centering
            \begin{tikzpicture}[
                validator/.style={circle, draw, minimum size=0.6cm, font=\scriptsize},
                service/.style={circle, draw, minimum size=0.6cm, font=\scriptsize, inner sep=0.035cm},
                scale=0.5
            ]
                \node[align=center] at (0,4) {$\allValidators$ \\ $(\allStakes)$};
                \node[align=center] at (2, 4) {--- \\ $\allAllocations$};
                \node[align=center] at (4,4) {$\allServices$ \\ $(\allAttackThresholds$ | $\allAttackPrizes)$};

                \node[validator] (V1) at (0,2) {$\sspElement{1}$};
                \node[validator] (V2) at (0,0) {$\sspElement{2}$};
                \node (Vdots) at (0,-1.25) {$\vdots$};
                \node[validator] (Vn) at (0,-3) {$\sspElement{\sspElementCount}$};
                
                \node[service] (S1) at (4,2) {$1~|~\frac{\sspElement{1}}{2}$};
                \node[service] (S2) at (4,0) {$1~|~\frac{\sspElement{2}}{2}$};
                \node (Sdots) at (4,-1.25) {$\vdots$};
                \node[service] (Sn) at (4,-3) {$1~|~\frac{\sspElement{\sspElementCount}}{2}$};
                \node[service] (Snp1) at (4,-6) {$\frac{\sspTarget}{\sspElementSum}~|~\frac{\sspTarget}{2}$};
                
                \draw (V1) -- (Snp1) node[pos=0.03, right] {\scriptsize $\sspElement{1}$};
                \draw (V2) -- (Snp1) node[pos=0.05, below] {\scriptsize $\sspElement{2}$};
                \draw (Vn) -- (Snp1) node[pos=0.04, right] {\scriptsize $\sspElement{n}$};
                
                \draw (V1) -- (S1) node[pos=0.11, above] {\scriptsize $\sspElement{1}$};
                \draw (V2) -- (S2) node[pos=0.5, above] {\scriptsize $\sspElement{2}$};
                \draw (Vn) -- (Sn) node[pos=0.11, above] {\scriptsize $\sspElement{n}$};
            \end{tikzpicture}
            \caption{Reduction for allocation-divisible attacks}
            \label{figure:reduction_divisible}
            \Description{Graph showing a reduction from the Subset Sum problem to an allocation-divisible attack in a restaking network.}
        \end{subfigure}
        \caption{Reductions from Subset Sum to finding attacks in restaking networks.}
        \label{figure:reductions}
        \Description{Graphs showing reductions from the Subset Sum problem to finding attacks in restaking networks.}
    \end{figure}

    We construct a network~(Fig.~\ref{figure:reduction_indivisible}) with a single service~$\allServices = \left\{ \service \right\}$ and~$\sspElementCount$ validators~${\left\{ \validator_1, \ldots, \validator_{\sspElementCount} \right\}}$.
    For each~$i \in \left\{ 1, \ldots, \sspElementCount \right\}$, set
    \begin{align}
        \label{equation:reduction_indivisible_stake}
        \stake{\validator_i} &= \sspElement{i} ; \\
        \label{equation:reduction_indivisible_allocation}
        \allocation{\validator_i}{\service}
        &= \stake{\validator_i}
        = \sspElement{i} .
    \end{align}
    Also, set
    \begin{align}
        \label{equation:reduction_indivisible_attack_threshold}
        \attackThreshold{\service} &= \frac{\sspTarget}{\sspElementSum} ;\\
        \label{equation:reduction_indivisible_attack_prize}
        \attackPrize{\service} &= \sspTarget .
    \end{align}

    Due to~\eqreft{equation:reduction_indivisible_subset_sum_assumption},~$0 < \attackThreshold{\service} \leq 1$, so the attack threshold is well-defined.

    We claim that the network has a profitable allocation-indivisible attack if and only if the Subset Sum problem has a solution.

    \paragraphEmph{First Direction}
    Assume there exists a subset~$\left\{ \sspElement{i_1}, \ldots, \sspElement{i_k} \right\}$ of the~$\sspElementCount$ elements that sums to~$\sspTarget$:
    \begin{equation}
        \label{equation:reduction_indivisible_if}
        \sum_{j=1}^k \sspElement{i_j} = \sspTarget .
    \end{equation}

    Consider the attack~$\allAttackStakes$ where
    \begin{equation}
        \label{equation:reduction_indivisible_if_attack_stake}
        \attackStake{\validator}{\service}
        = \begin{cases}
            \allocation{\validator}{\service} & \text{if } \validator \in \left\{ \validator_{i_1}, \ldots, \validator_{i_k} \right\} ; \\
            0 & \text{otherwise} .
        \end{cases}
    \end{equation}

    Consider the service~$\service$:
    \begin{multline}
        \attackThreshold{\service} \cdot \sum_{i=1}^\sspElementCount \allocation{\validator_i}{\service}
        \underset{\eqref{equation:reduction_indivisible_attack_threshold}}{=} \frac{\sspTarget}{\sspElementSum} \cdot \sum_{i=1}^\sspElementCount \allocation{\validator_i}{\service}
        \underset{\eqref{equation:reduction_indivisible_allocation}}{=} \frac{\sspTarget}{\sspElementSum} \cdot \sum_{i=1}^\sspElementCount \sspElement{i}
        \underset{\eqref{equation:reduction_indivisible_element_sum}}{=} \frac{\sspTarget}{\sspElementSum} \cdot \sspElementSum
        = \sspTarget \\
        \underset{\eqref{equation:reduction_indivisible_if}}{=} \sum_{j=1}^k \sspElement{i_j}
        \underset{\eqref{equation:reduction_indivisible_allocation}}{=} \sum_{j=1}^k \allocation{\validator_{i_j}}{\service}
        \underset{\eqref{equation:reduction_indivisible_if_attack_stake}}{=} \sum_{i=1}^n \attackStake{\validator_i}{\service} .
    \end{multline}
    Thus, by Definition~\ref{definition:attacked_services}, the service~$\service$ is attacked, and since it is the only service,
    \begin{equation}
        \label{equation:reduction_indivisible_if_attack_services}
        \attackedServices = \left\{ \service \right\} .
    \end{equation}

    The cost of each validator~$\validator \in \allValidators$ is
    \begin{multline}
        \label{equation:reduction_indivisible_if_validator_attack_cost}
        \validatorAttackCost{\validator}{\allAttackStakes}
        \underset{\eqref{equation:validator_attack_cost}}{=} \min \left( \stake{\validator}, \sum_{\service' \in \attackedServices} \attackStake{\validator}{\service'} \right)
        \underset{\eqref{equation:reduction_indivisible_if_attack_services}}{=} \min \left( \stake{\validator}, \attackStake{\validator}{\service} \right) \\
        \underset{\eqref{equation:reduction_indivisible_if_attack_stake}}{=} \begin{cases}
            \min \left( \stake{\validator}, \allocation{\validator}{\service} \right) & \text{if } \validator \in \left\{ \validator_{i_1}, \ldots, \validator_{i_k} \right\} ; \\
            min \left( \stake{\validator}, 0 \right) & \text{otherwise} ;
        \end{cases} \\
        \underset{\eqref{equation:reduction_indivisible_allocation}}{=} \begin{cases}
            \stake{\validator} & \text{if } \validator \in \left\{ \validator_{i_1}, \ldots, \validator_{i_k} \right\} ; \\
            0 & \text{otherwise} .
        \end{cases}
    \end{multline}
    Therefore, the attack is profitable:
    \begin{multline}
        \attackCost{\allAttackStakes}
        \underset{\eqref{equation:total_attack_cost}}{=} \sum_{i=1}^n \validatorAttackCost{\validator_i}{\allAttackStakes}
        \underset{\eqref{equation:reduction_indivisible_if_validator_attack_cost}}{=} \sum_{j=1}^k \stake{\validator_{i_j}}
        \underset{\eqref{equation:reduction_indivisible_stake}}{=} \sum_{j=1}^k \sspElement{i_j}
        \underset{\eqref{equation:reduction_indivisible_if}}{=} \sspTarget \\
        \underset{\eqref{equation:reduction_indivisible_attack_prize}}{=} \attackPrize{\service}
        \underset{\eqref{equation:reduction_indivisible_if_attack_services}}{=} \sum_{\service' \in \attackedServices} \attackPrize{\service'}
        \underset{\eqref{equation:total_attack_prize}}{=} \totalAttackPrize{\allAttackStakes} .
    \end{multline}

    \paragraphEmph{Second Direction}
    Assume that the network has a profitable allocation-indivisible attack~$\allAttackStakes$.
    
    Since the attack is allocation-indivisible,~$\attackStake{\validator}{\service} \in \left\{ 0, \allocation{\validator}{\service} \right\}$ for all~$\validator \in \allValidators$ and~$\service \in \allServices$.
    In addition, since an attack must target at least one service, it must be that
    \begin{equation}
        \label{equation:reduction_indivisible_only_if_attack_validators}
        \attackedServices = \left\{ \service \right\} .
    \end{equation}

    Denote by~$\attackingValidators = \left\{ \validator_1, \ldots, \validator_k \right\}$ the set of validators in the attack with non-zero allocations.
    Because the attack is allocation-indivisible, it holds that
    \begin{equation}
        \label{equation:reduction_indivisible_only_if_attack_stake}
        \attackStake{\validator}{\service}
        = \begin{cases}
            \allocation{\validator}{\service} & \text{if } \validator \in \left\{ \validator_{i_1}, \ldots, \validator_{i_k} \right\} ; \\
            0 & \text{otherwise} .
        \end{cases}
    \end{equation}

    Consider the subset~$\left\{ \sspElement{i_1}, \ldots, \sspElement{i_k} \right\}$, corresponding to the validators in the attack.
    We claim that this subset satisfies the Subset Sum problem.
    Since~$\service \in \attackedServices$,
    \begin{equation}
        \label{equation:reduction_indivisible_only_if_attack_feasibility}
        \attackThreshold{\service} \cdot \sum_{i=1}^\sspElementCount \allocation{\validator_i}{\service}
        \leq \sum_{i=1}^n \attackStake{\validator_i}{\service}.
    \end{equation}
    Using this inequality and~\eqreft{equation:reduction_indivisible_only_if_attack_stake}, we get
    \begin{equation}
        \label{equation:reduction_indivisible_only_if_attack_feasibility_with_allocation}
        \sum_{j=1}^k \allocation{\validator_{i_j}}{\service} 
        \underset{\eqref{equation:reduction_indivisible_only_if_attack_stake}}{=} \sum_{i=1}^n \attackStake{\validator_i}{\service} \underset{\eqref{equation:reduction_indivisible_only_if_attack_feasibility}}{\geq}
        \attackThreshold{\service} \cdot \sum_{i=1}^\sspElementCount \allocation{\validator_i}{\service} .
    \end{equation}

    Starting from the sum of the elements in the subset, we get
    \begin{multline}
        \label{equation:reduction_indivisible_only_if_greater_than_threshold}
        \sum_{j=1}^k \sspElement{i_j}
        \underset{\eqref{equation:reduction_indivisible_allocation}}{=} \sum_{j=1}^k \allocation{\validator_{i_j}}{\service}
        \underset{\eqref{equation:reduction_indivisible_only_if_attack_feasibility_with_allocation}}{\geq} \attackThreshold{\service} \cdot \sum_{i=1}^\sspElementCount \allocation{\validator_i}{\service} \\
        \underset{\eqref{equation:reduction_indivisible_attack_threshold}}{=} \frac{\sspTarget}{\sspElementSum} \cdot \sum_{i=1}^\sspElementCount \allocation{\validator_i}{\service}
        \underset{\eqref{equation:reduction_indivisible_allocation}}{=} \frac{\sspTarget}{\sspElementSum} \cdot \sum_{i=1}^\sspElementCount \sspElement{i}
        \underset{\eqref{equation:reduction_indivisible_element_sum}}{=} \frac{\sspTarget}{\sspElementSum} \cdot \sspElementSum
        = \sspTarget .
    \end{multline}

    In addition, since the attack is profitable, by Definition~\ref{definition:attack_profitability},
    \begin{equation}
        \label{equation:reduction_indivisible_only_if_attack_profitability}
        \attackCost{\allAttackStakes} \leq \totalAttackPrize{\allAttackStakes} .
    \end{equation}
    Furthermore, similar to the opposite direction, the cost of a validator~$\validator$ equals their stake if~$\validator \in \attackingValidators$ and is 0 otherwise:
    \begin{equation}
        \label{equation:reduction_indivisible_only_if_validator_attack_cost}
        \validatorAttackCost{\validator}{\allAttackStakes}
        = \begin{cases}
            \stake{\validator} & \text{if } \validator \in \attackingValidators ; \\
            0 & \text{otherwise} .
        \end{cases}
    \end{equation}
    Then, starting from the sum of the elements in the subset, and using the fact that the attack is profitable, we get
    \begin{multline}
        \label{equation:reduction_indivisible_only_if_less_than_threshold}
        \sum_{j=1}^k \sspElement{i_j}
        \underset{\eqref{equation:reduction_indivisible_stake}}{=} \sum_{j=1}^k \stake{\validator_{i_j}}
        \underset{\eqref{equation:reduction_indivisible_only_if_validator_attack_cost}}{=} \sum_{i=1}^n \validatorAttackCost{\validator_i}{\allAttackStakes}
        \underset{\eqref{equation:total_attack_cost}}{=} \attackCost{\allAttackStakes}
        \underset{\eqref{equation:reduction_indivisible_only_if_attack_profitability}}{\leq} \totalAttackPrize{\allAttackStakes} \\
        \underset{\eqref{equation:total_attack_prize}}{=} \sum_{\service' \in \attackedServices} \attackPrize{\service'}
        \underset{\eqref{equation:reduction_indivisible_only_if_attack_validators}}{=} \attackPrize{\service}
        \underset{\eqref{equation:reduction_indivisible_attack_prize}}{=} \sspTarget .
    \end{multline}

    Combining \eqreft{equation:reduction_indivisible_only_if_less_than_threshold} with \eqreft{equation:reduction_indivisible_only_if_greater_than_threshold}, we get
    \begin{equation}
        \sum_{j=1}^k \sspElement{i_j} = \sspTarget ,
    \end{equation}
    that is, the subset~$\left\{ \sspElement{i_1}, \ldots, \sspElement{i_k} \right\}$ is a solution to the Subset Sum 
    problem.
    
    Therefore, determining whether a network has a profitable allocation-indivisible attack is NP-complete.
\end{proof}

\begin{proposition}[Proposition~\ref{proposition:allocation_divisible_attack_np_complete} restated]
    \label{proposition:allocation_divisible_attack_np_complete_appendix}
    Determining whether there exists a profitable allocation-divisible attack~$\left(\attackingValidators, \attackedServices, \allAttackStakes\right)$ in a restaking network~$\networkState = (\allValidators, \allServices, \allStakes, \allAllocations, \allAttackThresholds, \allAttackPrizes)$ is NP-complete.
\end{proposition}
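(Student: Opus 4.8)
The plan is to show both that the problem lies in NP and that it is NP-hard, via a polynomial-time reduction from Subset Sum using the network in Figure~\ref{figure:reduction_divisible}. For membership in NP, a profitable allocation-divisible attack serves as a certificate: once the attacked set $\attackedServices$ is fixed, the existence of a legal attack realizing exactly that set with cost at most its prize is a linear-programming feasibility question in the variables $\attackStake{\validator}{\service}$ (the caps in $\validatorAttackCost{\validator}{\allAttackStakes}$ are linearized in the standard way), so one may take the attack at a vertex of that polytope and obtain polynomially bounded bit-complexity; verification is then in polynomial time. The real work is the hardness reduction.

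Given a Subset Sum instance $\{\sspElement{1},\dots,\sspElement{\sspElementCount}\}$ with sum $\sspElementSum$ and target $\sspTarget$, I would assume w.l.o.g.\ that each $\sspElement{i}>0$ and $0<\sspTarget\le\sspElementSum$ (otherwise the instance is trivial), and build the network of Figure~\ref{figure:reduction_divisible}: validators $\validator_1,\dots,\validator_{\sspElementCount}$ with $\stake{\validator_i}=\sspElement{i}$; for each $i$ an ``element service'' $\service_i$ with $\attackThreshold{\service_i}=1$ and $\attackPrize{\service_i}=\sspElement{i}/2$, allocated to only by $\validator_i$ through $\allocation{\validator_i}{\service_i}=\sspElement{i}$; and one ``target service'' $\service_{\sspElementCount+1}$ with $\attackThreshold{\service_{\sspElementCount+1}}=\sspTarget/\sspElementSum\in(0,1]$ and $\attackPrize{\service_{\sspElementCount+1}}=\sspTarget/2$, with $\allocation{\validator_i}{\service_{\sspElementCount+1}}=\sspElement{i}$ for every $i$ (so $\restakingDegree{\validator_i}=2$). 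This is clearly polynomial.

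The correctness hinges on the element services: if $\service_i$ is attacked then, $\validator_i$ being its only allocator, $\validator_i$ must commit its entire stake $\sspElement{i}$; and since a validator's cost is capped at its stake, any stake $\validator_i$ then additionally spends on $\service_{\sspElementCount+1}$ is free. I would argue that for any profitable attack $\allAttackStakes$ with attacked set $\attackedServices$: (i) if $\service_{\sspElementCount+1}\notin\attackedServices$, then $\attackedServices$ is a nonempty set of element services costing $\sum_{\service_i\in\attackedServices}\sspElement{i}$ but paying only half that, hence not profitable; (ii) if $\service_{\sspElementCount+1}\in\attackedServices$, write $I=\{i:\service_i\in\attackedServices\}$, so the $I$-validators cost exactly $A:=\sum_{i\in I}\sspElement{i}$ regardless of their spend on $\service_{\sspElementCount+1}$, the others cost $Q:=\sum_{i\notin I}\attackStake{\validator_i}{\service_{\sspElementCount+1}}$, and the prize is $A/2+\sspTarget/2$. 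Profitability gives $A/2+Q\le\sspTarget/2$ (hence $A\le\sspTarget$); the threshold condition at $\service_{\sspElementCount+1}$ gives $P+Q\ge\sspTarget$ with $P:=\sum_{i\in I}\attackStake{\validator_i}{\service_{\sspElementCount+1}}\le A$; chaining these yields $A/2+\sspTarget/2\le P\le A$, forcing $A\ge\sspTarget$, hence $A=\sspTarget$. Conversely, for any $I$ with $\sum_{i\in I}\sspElement{i}=\sspTarget$, the attack $\attackStake{\validator_i}{\service_i}=\attackStake{\validator_i}{\service_{\sspElementCount+1}}=\sspElement{i}$ for $i\in I$ (zero otherwise) attacks exactly $\{\service_i:i\in I\}\cup\{\service_{\sspElementCount+1}\}$ with cost $\sspTarget$ and prize $\sspTarget/2+\sspTarget/2=\sspTarget$, so it is profitable. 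Thus a profitable attack exists iff Subset Sum has a solution, and the network produced is atomic.

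The main obstacle is exactly the intuition the text flags: divisibility seems to trivialize the problem, since with the target service alone every validator could allocate a $\sspTarget/\sspElementSum$ fraction of its stake and mount a cheap attack regardless of the $\sspElement{i}$. The element-service gadget is the device that defeats this — a validator ``counted'' on an element service already forfeits its whole stake, so fractional freedom toward the target service is worthless. The genuinely delicate step is case (ii): one must combine the threshold (feasibility) inequality, the profitability inequality, and the bound $\attackStake{\validator_i}{\service_{\sspElementCount+1}}\le\sspElement{i}$ in the right order, since a careless estimate only pins $A$ to the interval $[\sspTarget/2,\sspTarget]$ rather than to $\sspTarget$ exactly.
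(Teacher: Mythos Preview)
Your proposal is correct and follows essentially the same route as the paper: the same gadget (Fig.~\ref{figure:reduction_divisible}), the same case split on whether $\service_{\sspElementCount+1}\in\attackedServices$, and the same combination of the profitability inequality with the threshold inequality at $\service_{\sspElementCount+1}$ to pin $\sum_{i\in I}\sspElement{i}$ to $\sspTarget$. Your chaining $A/2+\sspTarget/2\le P\le A$ is in fact a tidier packaging of the paper's argument, and your NP-membership justification (via an LP vertex certificate for the fixed attacked set) is more careful than the paper's one-line appeal to Definition~\ref{definition:attack_profitability}.
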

\begin{proof}
    First, similarly to Proposition~\ref{proposition:allocation_indivisible_attack_np_complete}, the problem is in NP, as given an allocation-divisible attack, we can verify that it is profitable in polynomial time using the condition of Definition~\ref{definition:attack_profitability}.

    Next, we show a reduction from the Subset Sum problem.
    Let~$\left\{ \sspElement{1}, \ldots, \sspElement{\sspElementCount} \right\}$ and~$\sspTarget$ be an instance of the Subset Sum problem.
    Denote by~$\sspElementSum$ the sum of the elements, namely,
    \begin{equation}
        \label{equation:reduction_divisible_element_sum}
        \sspElementSum = \sum_{i=1}^\sspElementCount \sspElement{i} .
    \end{equation}
    As in the proof of Proposition~\ref{proposition:allocation_indivisible_attack_np_complete}, assume that
    \begin{equation}
        \label{equation:reduction_divisible_subset_sum_assumption}
        0 < \sspTarget \leq \sspElementSum .
    \end{equation}

    We construct a network~(Fig.~\ref{figure:reduction_divisible}) with~$\sspElementCount$ validators:~${\allValidators = \left\{ \validator_1, \ldots, \validator_{\sspElementCount} \right\}}$; and~$\sspElementCount + 1$ services:~$\allServices = \left\{ \service_1, \ldots, \service_{\sspElementCount + 1} \right\}$. 
    For each~$i \in \left\{ 1, \ldots, \sspElementCount \right\}$ and~${t \in \left\{ 1, \ldots, \sspElementCount + 1 \right\}}$, set
    \begin{align}
        \label{equation:reduction_divisible_stake}
        \stake{\validator_i} &= \sspElement{i} ; \\
        \label{equation:reduction_divisible_allocation}
        \allocation{\validator_i}{\service_t} &= \begin{cases}
            \sspElement{i} & \text{if } t \in \{ i, \sspElementCount + 1 \} ; \\
            0 & \text{otherwise} .
        \end{cases}
    \end{align}
    Also, set
    \begin{align}
        \label{equation:reduction_divisible_attack_threshold_sTarget}
        \attackThreshold{\service_{\sspElementCount + 1}} &= \frac{\sspTarget}{\sspElementSum} ; \\
        \label{equation:reduction_divisible_attack_prize_sTarget}
        \attackPrize{\service_{\sspElementCount + 1}} &= \frac{\sspTarget}{2} . \\
    \end{align}
    In addition, set for all~$i \in \left\{ 1, \ldots, \sspElementCount \right\}$
    \begin{align}
        \label{equation:reduction_divisible_attack_threshold_si}
        \attackThreshold{\service_i} &= 1 ; \\
        \label{equation:reduction_divisible_attack_prize_si}
        \attackPrize{\service_i} &= \frac{\sspElement{i}}{2} .
    \end{align}

    We claim that the network has a profitable allocation-divisible attack if and only if the Subset Sum problem has a solution.

    \paragraphEmph{First Direction}
    Assume there exists a subset~$\left\{ \sspElement{i_1}, \ldots, \sspElement{i_k} \right\}$ that sums to~$\sspTarget$:
    \begin{equation}
        \label{equation:reduction_divisible_if_subset_sum}
        \sum_{j=1}^k \sspElement{i_j} = \sspTarget .
    \end{equation}
    
    Consider the attack~$\allAttackStakes$ such that for each~$i \in \left\{ 1, \ldots, n \right\}$,~$t \in \left\{ 1, \ldots, \sspElementCount + 1 \right\}$
    \begin{equation}
        \label{equation:reduction_divisible_if_attack_stake}
        \attackStake{\validator_i}{\service_t} = \begin{cases}
            \sspElement{i} & \text{if } i \in \left\{ i_1, \ldots, i_k \right\} \text{ and } t \in \{ i, \sspElementCount + 1 \} ; \\
            0 & \text{otherwise} .
        \end{cases}
    \end{equation}

    We claim this attack is profitable.
    We first show that~$\service_{\sspElementCount + 1} \in \attackedServices$:
    \begin{multline}
        \label{equation:reduction_divisible_if_feasibility_sTarget}
        \attackThreshold{\service_{\sspElementCount + 1}} \cdot \sum_{\validator \in \allValidators} \allocation{\validator}{\service_{\sspElementCount + 1}}
        \underset{\eqref{equation:reduction_divisible_attack_threshold_sTarget}}{=} \frac{\sspTarget}{\sspElementSum} \cdot \sum_{\validator \in \allValidators} \allocation{\validator}{\service_{\sspElementCount + 1}}
        \underset{\eqref{equation:reduction_divisible_allocation}}{=} \frac{\sspTarget}{\sspElementSum} \cdot \sum_{i=1}^\sspElementCount \sspElement{i} \\
        \underset{\eqref{equation:reduction_divisible_element_sum}}{=} \frac{\sspTarget}{\sspElementSum} \cdot \sspElementSum
        = \sspTarget
        \underset{\eqref{equation:reduction_divisible_if_subset_sum}}{=} \sum_{j=1}^k \sspElement{i_j}
        \underset{\eqref{equation:reduction_divisible_if_attack_stake}}{=} \sum_{i=1}^n \attackStake{\validator_i}{\service_{\sspElementCount + 1}} .
    \end{multline}
    Then, we show that~$\service_{i_j} \in \attackedServices$ for all~$j \in \left\{ 1, \ldots, k \right\}$:
    \begin{multline}
        \label{equation:reduction_divisible_if_feasibility_sij}
        \attackThreshold{\service_{i_j}} \cdot \sum_{\validator \in \allValidators} \allocation{\validator}{\service_{i_j}}
        \underset{\eqref{equation:reduction_divisible_attack_threshold_si}}{=} 1 \cdot \sum_{\validator \in \allValidators} \allocation{\validator}{\service_{i_j}}
        \underset{\eqref{equation:reduction_divisible_allocation}}{=} \sspElement{i_j} \\
        \underset{\eqref{equation:reduction_divisible_if_attack_stake}}{=} \sum_{j=1}^k \attackStake{\validator_{i_j}}{\service_{i_j}} .
    \end{multline}
    By~\eqreft{equation:reduction_divisible_if_feasibility_sTarget} and~\eqreft{equation:reduction_divisible_if_feasibility_sij}, we get that
    \begin{equation}
        \label{equation:reduction_divisible_if_attacked_services}
        \left\{ \service_{i_1}, \ldots, \service_{i_k} \right\} \cup \left\{ \service_{\sspElementCount + 1} \right\} \subseteq \attackedServices .
    \end{equation}

    For~$j = 1, \ldots, k$, the cost of validator~$\validator_{i_j}$ equals~$\sspElement{i_j}$:
    \begin{multline}
        \label{equation:reduction_divisible_if_validator_attack_cost}
        \validatorAttackCost{\validator_{i_j}}{\allAttackStakes}
        \underset{\eqref{equation:validator_attack_cost}}{=} \min \left( \stake{\validator_{i_j}}, \sum_{\service' \in \allServices} \attackStake{\validator_{i_j}}{\service'} \right) \\
        \underset{\eqref{equation:reduction_divisible_if_attack_stake}}{=} \min \left( \stake{\validator_{i_j}}, \attackStake{\validator_{i_j}}{\service_0} + \attackStake{\validator_{i_j}}{\service_{i_j}} \right) \\
        \underset{\eqref{equation:reduction_divisible_if_attack_stake}}{=} \min \left( \stake{\validator_{i_j}}, 2 \sspElement{i_j} \right)
        \underset{\eqref{equation:reduction_divisible_stake}}{=} \min \left( \sspElement{i_j}, 2 \sspElement{i_j} \right)
        = \sspElement{i_j} .
    \end{multline}
    For all other validators~$\validator \in \allValidators \setminus \left\{ \validator_{i_1}, \ldots, \validator_{i_k} \right\}$, the cost of the attack is 0:
    \begin{equation}
        \label{equation:reduction_divisible_if_other_validator_attack_cost}
        \validatorAttackCost{\validator}{\allAttackStakes}
        \underset{\eqref{equation:validator_attack_cost}}{=} \min \left( \stake{\validator}, \sum_{\service' \in \allServices} \attackStake{\validator}{\service'} \right)
        \underset{\eqref{equation:reduction_divisible_if_attack_stake}}{=} \min \left( \stake{\validator}, \sum_{\service' \in \allServices} 0 \right)
        = 0 .
    \end{equation}
    
    The total cost of the attack is the sum of the costs of all validators:
    \begin{equation}
        \label{equation:reduction_divisible_if_total_attack_cost}
        \attackCost{\allAttackStakes}
        \underset{\eqref{equation:total_attack_cost}}{=} \sum_{i=1}^n \validatorAttackCost{\validator_i}{\allAttackStakes}
        \underset{\eqref{equation:reduction_divisible_if_validator_attack_cost}, \eqref{equation:reduction_divisible_if_other_validator_attack_cost}}{=} \sum_{j=1}^k \sspElement{i_j}
        \underset{\eqref{equation:reduction_divisible_if_subset_sum}}{=} \sspTarget
    \end{equation}
    The prize of the attack is the sum of the prizes of the attacked services:
    \begin{multline}
        \label{equation:reduction_divisible_if_total_attack_prize}
        \totalAttackPrize{\allAttackStakes}
        \underset{\eqref{equation:total_attack_prize}}{=} \sum_{\service \in \attackedServices} \attackPrize{\service}
        \underset{\eqref{equation:reduction_divisible_if_attacked_services}}{=} \attackPrize{\service_{\sspElementCount + 1}} + \sum_{j=1}^k \attackPrize{\service_{i_j}} \\
        \underset{\eqref{equation:reduction_divisible_attack_prize_sTarget}, \eqref{equation:reduction_divisible_attack_prize_si}}{=} \frac{\sspTarget}{2} + \sum_{j=1}^k \frac{\sspElement{i_j}}{2}
        = \frac{\sspTarget}{2} + \frac{\sum_{j=1}^k \sspElement{i_j}}{2}
        \underset{\eqref{equation:reduction_divisible_if_subset_sum}}{=} \frac{\sspTarget}{2} + \frac{\sspTarget}{2}
        = \sspTarget .
    \end{multline}

    Combining the last 2 equations, we get
    \begin{equation}
        \attackCost{\allAttackStakes}
        \underset{\eqref{equation:reduction_divisible_if_total_attack_cost}}{=} \sspTarget
        \underset{\eqref{equation:reduction_divisible_if_total_attack_prize}}{=} \totalAttackPrize{\allAttackStakes} .
    \end{equation}
    This satisfies Definition~\ref{definition:attack_profitability}, and therefore the attack is profitable.

    \paragraphEmph{Second Direction}
    Assume that the network has a profitable allocation-divisible attack~$\allAttackStakes$.

    Denote by~$\allServicesAt{I} = \{ \service_{i_1}, \ldots, \service_{i_k} \}$ the (possibly empty) set of the attacked services after removing~$\service_{\sspElementCount + 1}$:
    \begin{equation}
        \label{equation:reduction_divisible_only_if_attacked_services_excluding_sTarget}
        \allServicesAt{I}
        = \{ \service_{i_1}, \ldots, \service_{i_k} \}
        = \attackedServices \setminus \left\{ \service_{\sspElementCount + 1} \right\} .
    \end{equation}
    Consider the corresponding subset of the elements in the Subset Sum problem~$\left\{ \sspElement{i_1}, \ldots, \sspElement{i_k} \right\} $.
    We claim that this subset is a solution to the Subset Sum problem.

    Recall that for all~$\validator \in \allValidators$ and~$\service \in \allServices$
    \begin{equation}
        \label{equation:reduction_divisible_only_if_attack_stake_upper_bound}
        \attackStake{\validator}{\service}
        \leq \allocation{\validator}{\service} .
    \end{equation}
    Due to the definition of attacked services it holds that for each~$j \in \left\{ 1, \ldots, k \right\}$
    \begin{equation}
        \label{equation:reduction_divisible_only_if_attack_feasibility_sij}
        \attackThreshold{\service_{i_j}} \cdot \sum_{\validator \in \allValidators} \allocation{\validator}{\service_{i_j}}
        \leq \sum_{\validator \in \allValidators} \attackStake{\validator}{\service_{i_j}} .
    \end{equation}
    Developing~$\sspElement{i_j}$ to get the left-hand side, using the above inequality, and then developing the right-hand side, we get
    \begin{multline}
        \label{equation:reduction_divisble_only_if_attack_stake_lower_bound}
        \sspElement{i_j}
        \underset{\eqref{equation:reduction_divisible_allocation}}{=} \sum_{\validator \in \allValidators} \allocation{\validator}{\service_{i_j}}
        \underset{\eqref{equation:reduction_divisible_attack_threshold_si}}{=} \attackThreshold{\service_{i_j}} \cdot \sum_{\validator \in \allValidators} \allocation{\validator}{\service_{i_j}} \\
        \underset{\eqref{equation:reduction_divisible_only_if_attack_feasibility_sij}}{\leq} \sum_{\validator \in \allValidators} \attackStake{\validator}{\service_{i_j}}
        = \attackStake{\validator_{i_j}}{\service_{i_j}} + \sum_{\validator \in \allValidators \setminus \{ \validator_{i_j} \}} \attackStake{\validator}{\service_{i_j}} \\
        \underset{\eqref{equation:reduction_divisible_only_if_attack_stake_upper_bound}}{\leq} \attackStake{\validator_{i_j}}{\service_{i_j}} + \sum_{\validator \in \attackingValidators \setminus \{ \validator_{i_j} \}} \allocation{\validator}{\service_{i_j}} \\
        \underset{\eqref{equation:reduction_divisible_allocation}}{=} \attackStake{\validator_{i_j}}{\service_{i_j}} + \sum_{\validator \in \attackingValidators \setminus \{ \validator_{i_j} \}} 0
        = \attackStake{\validator_{i_j}}{\service_{i_j}}
    \end{multline}
    Furthermore, developing the previous inequality, we get
    \begin{equation}
        \sspElement{i_j}
        \underset{\eqref{equation:reduction_divisble_only_if_attack_stake_lower_bound}}{\leq} \attackStake{\validator_{i_j}}{\service_{i_j}}
        \underset{\eqref{equation:reduction_divisible_only_if_attack_stake_upper_bound}}{\leq} \allocation{\validator_{i_j}}{\service_{i_j}}
        \underset{\eqref{equation:reduction_divisible_allocation}}{=} \sspElement{i_j} .
    \end{equation}
    And that yields that for all~$j \in \left\{ 1, \ldots, k \right\}$
    \begin{equation}
        \label{equation:reduction_divisible_only_if_attack_stake_ij}
        \attackStake{\validator_{i_j}}{\service_{i_j}} = \sspElement{i_j} .
    \end{equation}
    
    We use the previous observations to lower bound the cost of the attack.
    To do so, we start from the cost of validators in~$\{ \validator_{i_1}, \ldots, \validator_{i_k} \}$.
    For each~$j \in \left\{ 1, \ldots, k \right\}$
    \begin{multline}
        \label{equation:reduction_divisible_only_if_validator_attack_cost_ij_lower_bound}
        \validatorAttackCost{\validator_{i_j}}{\allAttackStakes}
        \underset{\eqref{equation:validator_attack_cost}}{=} \min \left( \stake{\validator_{i_j}}, \sum_{\service \in \attackedServices} \attackStake{\validator_{i_j}}{\service} \right) \\
        \geq \min \left( \stake{\validator_{i_j}}, \attackStake{\validator_{i_j}}{\service_{i_j}} \right)
        \underset{\eqref{equation:reduction_divisible_stake}}{=} \min \left( \sspElement{i_j}, \attackStake{\validator_{i_j}}{\service_{i_j}} \right) \\
        \underset{\eqref{equation:reduction_divisible_only_if_attack_stake_ij}}{=} \min \left( \sspElement{i_j}, \sspElement{i_j} \right)
        = \sspElement{i_j} .
    \end{multline}
    Overall, since the cost of each validator is at most their stake, the cost of validator~$\validator_{i_j}$ is exactly~$\sspElement{i_j}$:
    \begin{equation}
        \label{equation:reduction_divisible_only_if_validator_attack_cost_ij_bounds}
        \sspElement{i_j}
        \underset{\eqref{equation:reduction_divisible_only_if_validator_attack_cost_ij_lower_bound}}{\leq} \validatorAttackCost{\validator_{i_j}}{\allAttackStakes}
        \underset{\eqref{equation:validator_attack_cost}}{\leq} \stake{\validator_{i_j}}
        \underset{\eqref{equation:reduction_divisible_stake}}{=} \sspElement{i_j} ;
    \end{equation}
    This implies
    \begin{equation}
        \label{equation:reduction_divisible_only_if_validator_attack_cost_ij}
        \validatorAttackCost{\validator_{i_j}}{\allAttackStakes} = \sspElement{i_j} .
    \end{equation}
    The total cost of the attack is the sum of the costs of each participating validator, and it is lower bounded by summing the costs of validators in~$\{ \validator_{i_1}, \ldots, \validator_{i_k} \}$:
    \begin{equation}
        \label{equation:reduction_divisible_only_if_total_attack_cost}
        \attackCost{\allAttackStakes}
        \underset{\eqref{equation:total_attack_cost}}{=}
        \sum_{\validator \in \allValidators} \validatorAttackCost{\validator}{\allAttackStakes}
        \geq \sum_{j=1}^k \validatorAttackCost{\validator_{i_j}}{\allAttackStakes} \\
        \underset{\eqref{equation:reduction_divisible_only_if_validator_attack_cost_ij}}{=} \sum_{j=1}^k \sspElement{i_j} .
    \end{equation}

    Assume towards a contradiction that~$\service_{\sspElementCount + 1}$ is not attacked, namely,
    \begin{equation}
        \label{equation:reduction_divisible_only_if_contradiction_attacked_services}
        \attackedServices
        = \allServicesAt{I}
        = \{ \service_{i_1}, \ldots, \service_{i_k} \} .
    \end{equation}
    If we consider the prize of the attack, we get
    \begin{equation}
        \label{equation:reduction_divisible_only_if_contradiction_total_attack_prize}
        \totalAttackPrize{\allAttackStakes}
        \underset{\eqref{equation:total_attack_prize}}{=} \sum_{\service \in \attackedServices} \attackPrize{\service}
        \underset{\eqref{equation:reduction_divisible_only_if_contradiction_attacked_services}}{=} \sum_{j=1}^k \attackPrize{\service_{i_j}}
        \underset{\eqref{equation:reduction_divisible_attack_prize_si}}{=} \sum_{j=1}^k \frac{\sspElement{i_j}}{2}
        = \frac{1}{2} \cdot \sum_{j=1}^k \sspElement{i_j} .
    \end{equation}
    Due to the attack being profitable, by Definition~\ref{definition:attack_profitability},
    \begin{equation}
        \label{equation:reduction_divisible_only_if_contradiction_attack_profitability}
        \attackCost{\allAttackStakes} \leq \totalAttackPrize{\allAttackStakes} .
    \end{equation}
    However, we have the following contradiction:
    \begin{equation}
        \attackCost{\allAttackStakes}
        \underset{\eqref{equation:reduction_divisible_only_if_total_attack_cost}}{\geq} \sum_{j=1}^k \sspElement{i_j}
        > \frac{1}{2} \cdot \sum_{j=1}^k \sspElement{i_j}
        \underset{\eqref{equation:reduction_divisible_only_if_contradiction_total_attack_prize}}{=} \totalAttackPrize{\allAttackStakes}
        \underset{\eqref{equation:reduction_divisible_only_if_contradiction_attack_profitability}}{\geq} \attackCost{\allAttackStakes} .
    \end{equation}
    Therefore, it must be that~$\service_{\sspElementCount + 1}$ is attacked, and it holds that
    \begin{equation}
        \label{equation:reduction_divisible_only_if_attacked_services}
        \attackedServices
        = \allServicesAt{I} \cup \{ \service_{\sspElementCount + 1} \}
        = \{ \service_{i_1}, \ldots, \service_{i_k} \} \cup \{ \service_{\sspElementCount + 1} \} .
    \end{equation}

    Denote by~$\allValidatorsAt{I}$ the set of validators~$\{ \validator_{i_1}, \ldots, \validator_{i_k} \}$:
    \begin{equation}
        \label{equation:reduction_divisible_only_if_all_validators_at_I}
        \allValidatorsAt{I} = \{ \validator_{i_1}, \ldots, \validator_{i_k} \} .
    \end{equation}

    Now, we prove that the subset~$\left\{ \sspElement{i_1}, \ldots, \sspElement{i_k} \right\}$ is a solution to the Subset Sum problem.
    As~$\service_{\sspElementCount + 1}$ is attacked, 
    \begin{equation}
        \label{equation:reduction_divisible_only_if_feasibility_sTarget}
        \attackThreshold{\service_{\sspElementCount + 1}} \cdot \sum_{\validator \in \allValidators} \allocation{\validator}{\service_{\sspElementCount + 1}}
        \leq \sum_{\validator \in \allValidators} \attackStake{\validator}{\service_{\sspElementCount + 1}} .
    \end{equation}
    Starting from the right-hand side of~\eqreft{equation:reduction_divisible_only_if_feasibility_sTarget} and using the new notation, we get
    \begin{multline}
        \label{equation:reduction_divisible_only_if_feasibility_sTarget_rhs}
        \sum_{\validator \in \allValidators} \attackStake{\validator}{\service_{\sspElementCount + 1}}
        = \sum_{\validator \in \allValidatorsAt{I}} \attackStake{\validator}{\service_{\sspElementCount + 1}} + \sum_{\validator \in \allValidators \setminus \allValidatorsAt{I}} \attackStake{\validator}{\service_{\sspElementCount + 1}} \\
        \underset{\eqref{equation:reduction_divisible_only_if_all_validators_at_I}}{=} \sum_{j=1}^k \attackStake{\validator_{i_j}}{\service_{\sspElementCount + 1}} + \sum_{\validator \in \allValidators \setminus \allValidatorsAt{I}} \attackStake{\validator}{\service_{\sspElementCount + 1}} \\
        \underset{\eqref{equation:reduction_divisible_only_if_attack_stake_ij}}{=} \sum_{j=1}^k \sspElement{i_j} + \sum_{\validator \in \allValidators \setminus \allValidatorsAt{I}} \attackStake{\validator}{\service_{\sspElementCount + 1}} .
    \end{multline}

    Now, by using the right-hand side of~\eqreft{equation:reduction_divisible_only_if_feasibility_sTarget_rhs} and continuing to develop its left-hand side, we get
    \begin{multline}
        \label{equation:reduction_divisible_only_if_feasibility_sTarget_developed}
        \sum_{j=1}^k \sspElement{i_j} + \sum_{\validator \in \allValidators \setminus \allValidatorsAt{I}} \attackStake{\validator}{\service_{\sspElementCount + 1}}
        \underset{\eqref{equation:reduction_divisible_only_if_feasibility_sTarget_rhs}}{=} \sum_{\validator \in \allValidators} \attackStake{\validator}{\service_{\sspElementCount + 1}} \\
        \underset{\eqref{equation:reduction_divisible_only_if_feasibility_sTarget}}{\geq} \attackThreshold{\service_{\sspElementCount + 1}} \cdot \sum_{\validator \in \allValidators} \allocation{\validator}{\service_{\sspElementCount + 1}}
        \underset{\eqref{equation:reduction_divisible_attack_threshold_sTarget}}{=} \frac{\sspTarget}{\sspElementSum} \cdot \sum_{\validator \in \allValidators} \allocation{\validator}{\service_{\sspElementCount + 1}} \\
        = \frac{\sspTarget}{\sspElementSum} \cdot \sum_{i=1}^{\sspElementCount} \allocation{\validator_i}{\service_{\sspElementCount + 1}}
        \underset{\eqref{equation:reduction_divisible_allocation}}{=} \frac{\sspTarget}{\sspElementSum} \cdot \sum_{i=1}^\sspElementCount \sspElement{i}
        \underset{\eqref{equation:reduction_divisible_element_sum}}{=} \frac{\sspTarget}{\sspElementSum} \cdot \sspElementSum
        = \sspTarget .
    \end{multline}

    Because the attack is profitable, by Definition~\ref{definition:attack_profitability},
    \begin{equation}
        \label{equation:reduction_divisible_only_if_profitability}
        \attackCost{\allAttackStakes} \leq \totalAttackPrize{\allAttackStakes} .
    \end{equation}
    We will individually develop both sides of this inequality, similarly to what we did before.
    We begin with the right-hand side of~\eqreft{equation:reduction_divisible_only_if_profitability}, to get
    \begin{multline}
        \label{equation:reduction_divisible_only_if_profitability_rhs}
        \totalAttackPrize{\allAttackStakes}
        \underset{\eqref{equation:total_attack_prize}}{=} \sum_{\service \in \attackedServices} \attackPrize{\service}
        \underset{\eqref{equation:reduction_divisible_only_if_attacked_services}}{=} \attackPrize{\service_{\sspElementCount + 1}} + \sum_{j=1}^k \attackPrize{\service_{i_j}} \\
        \underset{\eqref{equation:reduction_divisible_attack_prize_sTarget}, \eqref{equation:reduction_divisible_attack_prize_si}}{=} \frac{\sspTarget}{2} + \sum_{j=1}^k \frac{\sspElement{i_j}}{2}
        = \frac{\sspTarget}{2} + \frac{1}{2} \cdot \sum_{j=1}^k \sspElement{i_j} .
    \end{multline}
    Before developing the left-hand side of~\eqreft{equation:reduction_divisible_only_if_profitability}, we first lower-bound the attack cost of each validator~${\validator \in \allValidators}$.
    Recall that for~$\validator_{i_j} \in \allValidatorsAt{I}$, we have already calculated the attack cost~(\eqreft{equation:reduction_divisible_only_if_validator_attack_cost_ij}):
    \begin{equation}
        \label{equation:reduction_divisible_only_if_profitability_validator_attack_cost_ij_restated}
        \validatorAttackCost{\validator_{i_j}}{\attackedServices}{\allAttackStakes} = \sspElement{i_j} .
    \end{equation}
    For~$\validator \in \allValidators \setminus \allValidatorsAt{I}$, we have
    \begin{multline}
        \label{equation:reduction_divisible_only_if_profitability_validator_attack_cost_others}
        \validatorAttackCost{\validator}{\allAttackStakes}
        \underset{\eqref{equation:validator_attack_cost}}{=} \min \left( \stake{\validator}, \sum_{\service \in \allServices} \attackStake{\validator}{\service} \right)
        \underset{\eqref{equation:reduction_divisible_only_if_attacked_services}}{\geq} \min \left( \stake{\validator}, \attackStake{\validator}{\service_{\sspElementCount + 1}} \right) \\
        \underset{\eqref{equation:reduction_divisible_only_if_attack_stake_upper_bound}}{\geq} \attackStake{\validator}{\service_{\sspElementCount + 1}} .
    \end{multline}
    We are now ready to develop the left-hand side of~\eqreft{equation:reduction_divisible_only_if_profitability}.
    \begin{multline}
        \label{equation:reduction_divisible_only_if_profitability_developed}
        \frac{\sspTarget}{2} + \frac{1}{2} \cdot \sum_{j=1}^k \sspElement{i_j}
        \underset{\eqref{equation:reduction_divisible_only_if_profitability_rhs}}{=} \totalAttackPrize{\allAttackStakes}
        \underset{\eqref{equation:reduction_divisible_only_if_profitability}}{\geq}
        \attackCost{\allAttackStakes}
        \underset{\eqref{equation:total_attack_cost}}{=} \sum_{\validator \in \allValidators} \validatorAttackCost{\validator}{\allAttackStakes} \\
        = \sum_{\validator \in \allValidatorsAt{I}} \validatorAttackCost{\validator}{\allAttackStakes} + \sum_{\validator \in \allValidators \setminus \allValidatorsAt{I}} \validatorAttackCost{\validator}{\allAttackStakes} \\
        \underset{\eqref{equation:reduction_divisible_only_if_all_validators_at_I}}{=} \sum_{j=1}^k \validatorAttackCost{\validator_{i_j}}{\allAttackStakes} + \sum_{\validator \in \allValidators \setminus \allValidatorsAt{I}} \validatorAttackCost{\validator}{\allAttackStakes} \\
        \underset{\eqref{equation:reduction_divisible_only_if_profitability_validator_attack_cost_ij_restated}}{=} \sum_{j=1}^k \sspElement{i_j} + \sum_{\validator \in \allValidators \setminus \allValidatorsAt{I}} \validatorAttackCost{\validator}{\allAttackStakes}
        \underset{\eqref{equation:reduction_divisible_only_if_profitability_validator_attack_cost_others}}{\geq} \sum_{j=1}^k \sspElement{i_j} + \sum_{\validator \in \allValidators \setminus \allValidatorsAt{I}} \attackStake{\validator}{\service_{\sspElementCount + 1}} .
    \end{multline}
    Switching sides and multiplying by~$2$, we get
    \begin{align}
        \frac{\sspTarget}{2} + \frac{1}{2} \cdot \sum_{j=1}^k \sspElement{i_j}
        &\geq \sum_{j=1}^k \sspElement{i_j} + \sum_{\validator \in \allValidators \setminus \allValidatorsAt{I}} \attackStake{\validator}{\service_{\sspElementCount + 1}} \\
        \sspTarget + \sum_{j=1}^k \sspElement{i_j}
        &\geq 2 \cdot \sum_{j=1}^k \sspElement{i_j} + 2 \cdot \sum_{\validator \in \allValidators \setminus \allValidatorsAt{I}} \attackStake{\validator}{\service_{\sspElementCount + 1}} \\
        \label{equation:reduction_divisible_only_if_profitability_developed_before_collpse}
        \sspTarget - \sum_{\validator \in \allValidators \setminus \allValidatorsAt{I}} \attackStake{\validator}{\service_{\sspElementCount + 1}}
        &\geq \sum_{j=1}^k \sspElement{i_j} + \sum_{\validator \in \allValidators \setminus \allValidatorsAt{I}} \attackStake{\validator}{\service_{\sspElementCount + 1}}
    \end{align}
    Combining the last inequality with~\eqreft{equation:reduction_divisible_only_if_feasibility_sTarget_developed}, we get
    \begin{equation}
        \label{equation:reduction_divisible_only_if_collpse}
        \sspTarget - \sum_{\validator \in \allValidators \setminus \allValidatorsAt{I}} \attackStake{\validator}{\service_{\sspElementCount + 1}}
        \underset{\eqref{equation:reduction_divisible_only_if_profitability_developed_before_collpse}}{\geq} \sum_{j=1}^k \sspElement{i_j} + \sum_{\validator \in \allValidators \setminus \allValidatorsAt{I}} \attackStake{\validator}{\service_{\sspElementCount + 1}}
        \underset{\eqref{equation:reduction_divisible_only_if_feasibility_sTarget_developed}}{\geq} \sspTarget .
    \end{equation}
    This yields that
    \begin{equation}
        \sum_{\validator \in \allValidators \setminus \allValidatorsAt{I}} \attackStake{\validator}{\service_{\sspElementCount + 1}} \leq 0 ;
    \end{equation}
    But since it is the sum of non-negative terms, it must be that
    \begin{equation}
        \sum_{\validator \in \allValidators \setminus \allValidatorsAt{I}} \attackStake{\validator}{\service_{\sspElementCount + 1}} = 0 .
    \end{equation}
    Plugging this into~\eqreft{equation:reduction_divisible_only_if_collpse}, we get
    \begin{equation}
        \sspTarget - 0 \geq \sum_{j=1}^k \sspElement{i_j} + 0 \geq \sspTarget .
    \end{equation}
    We get~$\sum_{j=1}^k \sspElement{i_j} = \sspTarget$. 
    So, the subset~$\left\{ \sspElement{i_1}, \ldots, \sspElement{i_k} \right\}$ is a solution to the Subset Sum problem.

    Hence, determining whether a restaking network has a profitable allocation-divisible attack is NP-complete.
\end{proof}


\subsection{Proofs Deferred from Subsection~\ref{section:security_analysis:symmetric_case}}


\begin{proposition}[Proposition~\ref{proposition:consolidated_attack_cost} restated]
    \label{proposition:consolidated_attack_cost_appendix}
    Consider a symmetric restaking network~${\networkState = (\allValidators, \allServices, \allStakes, \allAllocations, \allAttackThresholds, \allAttackPrizes)}$, and a consolidated attack~$\allAttackStakesAt{c}$ that attacks the services~$\attackedServicesAt{c}$.
    Then, the cost of~$\allAttackStakesAt{c}$ is given by
    \begin{multline}
        \attackCost{\allAttackStakesAt{c}}
        = \floor{\symmetricAttackThreshold | \allValidators |} \cdot \min \left( \symmetricStake, \sum_{\service \in \attackedServicesAt{c}} \symmetricAllocation{\service} \right) \\
        + \min \left( \symmetricStake, \left( \symmetricAttackThreshold | \allValidators | - \floor{\symmetricAttackThreshold | \allValidators |} \right) \sum_{\service \in \attackedServicesAt{c}} \symmetricAllocation{\service} \right) .
    \end{multline}
\end{proposition}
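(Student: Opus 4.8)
The plan is to unwind the definitions directly: first fix which services are actually attacked under $\allAttackStakesAt{c}$, then compute the per-validator cost $\validatorAttackCost{\validator_i}{\allAttackStakesAt{c}}$ by a short case analysis on the index $i$, and finally sum over all validators.

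First I would argue that $\allAttackStakesAt{c}$ attacks exactly the services in $\attackedServicesAt{c}$, i.e.\ $\attackedServices = \attackedServicesAt{c}$ when the attack in play is $\allAttackStakesAt{c}$. This is the content of the Note preceding the statement, but it can also be checked directly: for any $\service \in \attackedServicesAt{c}$, summing the piecewise values of \eqref{equation:consolidated_attack} over validators gives $\sum_{\validator \in \allValidators}\attackStakeAt{c}{\validator}{\service} = \bigl(\floor{\symmetricAttackThreshold|\allValidators|} + (\symmetricAttackThreshold|\allValidators| - \floor{\symmetricAttackThreshold|\allValidators|})\bigr)\symmetricAllocation{\service} = \symmetricAttackThreshold|\allValidators|\symmetricAllocation{\service} = \symmetricAttackThreshold\cdot\sum_{\validator\in\allValidators}\symmetricAllocation{\service}$, which meets the threshold of Definition~\ref{definition:attacked_services} with equality, so $\service\in\attackedServices$; conversely no other service receives any attacking stake. (Here I would note that $\floor{\symmetricAttackThreshold|\allValidators|}+1 \le |\allValidators|$ since $\symmetricAttackThreshold \le 1$, and that when $\symmetricAttackThreshold|\allValidators|$ is an integer the fractional coefficient is $0$, so the $(\floor{\symmetricAttackThreshold|\allValidators|}+1)$-th validator simply contributes nothing.)

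Next I would substitute $\attackedServices = \attackedServicesAt{c}$ into the validator-cost formula \eqref{equation:validator_attack_cost}, so that $\validatorAttackCost{\validator_i}{\allAttackStakesAt{c}} = \min\!\bigl(\symmetricStake,\ \sum_{\service\in\attackedServicesAt{c}}\attackStakeAt{c}{\validator_i}{\service}\bigr)$, and evaluate the inner sum using \eqref{equation:consolidated_attack}: for $i \le \floor{\symmetricAttackThreshold|\allValidators|}$ it equals $\sum_{\service\in\attackedServicesAt{c}}\symmetricAllocation{\service}$, giving $\validatorAttackCost{\validator_i}{\allAttackStakesAt{c}} = \min\!\bigl(\symmetricStake,\ \sum_{\service\in\attackedServicesAt{c}}\symmetricAllocation{\service}\bigr)$; for $i = \floor{\symmetricAttackThreshold|\allValidators|}+1$ it equals $\bigl(\symmetricAttackThreshold|\allValidators| - \floor{\symmetricAttackThreshold|\allValidators|}\bigr)\sum_{\service\in\attackedServicesAt{c}}\symmetricAllocation{\service}$, giving the second $\min$ term; and for $i > \floor{\symmetricAttackThreshold|\allValidators|}+1$ the attacking stake is $0$ on every attacked service, so the cost is $0$. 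Summing over all validators via \eqref{equation:total_attack_cost}, the first $\floor{\symmetricAttackThreshold|\allValidators|}$ identical terms produce the factor $\floor{\symmetricAttackThreshold|\allValidators|}$ in front of the first $\min$, the single partial validator contributes the second $\min$, and the remaining validators add $0$, which is exactly the claimed expression.

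There is no genuinely hard step here; the proof is essentially bookkeeping. The two points that need care are (i) pinning down that $\attackedServices = \attackedServicesAt{c}$, so that the sum in \eqref{equation:validator_attack_cost} ranges over the intended set of services, and (ii) the boundary case $\symmetricAttackThreshold|\allValidators|\in\integers$: there the fractional coefficient vanishes, the second $\min$ collapses to $\min(\symmetricStake,0)=0$, and there is no partially-attacking validator, so the formula remains valid with its last term contributing nothing. I would dispatch (ii) with a one-line remark that the second term of the formula is $0$ precisely when $\symmetricAttackThreshold|\allValidators| = \floor{\symmetricAttackThreshold|\allValidators|}$, which matches the structure of the consolidated attack in that case.
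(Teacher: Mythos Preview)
Your proposal is correct and follows essentially the same approach as the paper: a direct three-case analysis on the validator index $i$ using \eqref{equation:consolidated_attack} and \eqref{equation:validator_attack_cost}, then summing via \eqref{equation:total_attack_cost}. You are in fact slightly more careful than the paper's version, which takes $\attackedServices=\attackedServicesAt{c}$ as given (it is definitional, since $\attackedServicesAt{c}$ is by construction the attacked-services set of $\allAttackStakesAt{c}$) and does not separately address the integrality boundary case.
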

\begin{proof}
    Since~$\allAttackStakesAt{c}$ is consolidated, for all services~$\service \in \attackedServicesAt{c}$ for all~$i \in \left\{ 1, \ldots, \floor{\symmetricAttackThreshold |\allValidators|} \right\}$, it holds that
    \begin{equation}
        \label{equation:consolidated_attack_cost_consolidated_attack_definition}
        \attackStakeAt{c}{\validator_i}{\service} = \begin{cases}
            \symmetricAllocation{\service} & \text{if } i \leq \floor{\symmetricAttackThreshold |\allValidators|} ; \\
            \left( \symmetricAttackThreshold |\allValidators| - \floor{\symmetricAttackThreshold |\allValidators|} \right) \symmetricAllocation{\service} & \text{if } i = \floor{\symmetricAttackThreshold |\allValidators|} + 1 ; \\
            0 & \text{otherwise} .
        \end{cases}
    \end{equation}
    Let us consider 3 cases.
    First,~$i \leq \floor{\symmetricAttackThreshold |\allValidators|}$.
    Then, the cost of validator~$\validator_i$ is
    \begin{equation}
        \validatorAttackCost{\validator_i}{\allAttackStakesAt{c}}
        \underset{\eqref{equation:validator_attack_cost}}{=} \min \left( \symmetricStake, \sum_{\service \in \attackedServicesAt{c}} \attackStakeAt{c}{\validator_i}{\service} \right)
        \underset{\eqref{equation:consolidated_attack_cost_consolidated_attack_definition}}{=} \min \left( \symmetricStake, \sum_{\service \in \attackedServicesAt{c}} \symmetricAllocation{\service} \right) .
    \end{equation}
    Second,~$i = \floor{\symmetricAttackThreshold |\allValidators|} + 1$.
    Then, the cost of validator~$\validator_i$ is
    \begin{multline}
        \validatorAttackCost{\validator_i}{\allAttackStakesAt{c}}
        \underset{\eqref{equation:validator_attack_cost}}{=} \min \left( \symmetricStake, \sum_{\service \in \attackedServicesAt{c}} \attackStakeAt{c}{\validator_i}{\service} \right) \\
        \underset{\eqref{equation:consolidated_attack_cost_consolidated_attack_definition}}{=} \min \left( \symmetricStake, \left( \symmetricAttackThreshold |\allValidators| - \floor{\symmetricAttackThreshold |\allValidators|} \right) \sum_{\service \in \attackedServicesAt{c}} \symmetricAllocation{\service} \right) .
    \end{multline}
    Third,~$i > \floor{\symmetricAttackThreshold |\allValidators|} + 1$.
    Then, the cost of validator~$\validator_i$ is
    \begin{equation}
        \validatorAttackCost{\validator_i}{\allAttackStakesAt{c}}
        \underset{\eqref{equation:validator_attack_cost}}{=} \min \left( \symmetricStake, \sum_{\service \in \attackedServicesAt{c}} \attackStakeAt{c}{\validator_i}{\service} \right)
        \underset{\eqref{equation:consolidated_attack_cost_consolidated_attack_definition}}{=} 0 .
    \end{equation}

    Therefore, when we sum the costs of all validators, we get
    \begin{multline}
        \attackCost{\allAttackStakesAt{c}}
        \underset{\eqref{equation:total_attack_cost}}{=} \sum_{\validator \in \allValidators} \validatorAttackCost{\validator}{\allAttackStakesAt{c}} \\
        = \sum_{i=1}^{\floor{\symmetricAttackThreshold |\allValidators|}} \min \left( \symmetricStake, \sum_{\service \in \attackedServicesAt{c}} \symmetricAllocation{\service} \right) \\
        + \min \left( \symmetricStake, \left( \symmetricAttackThreshold |\allValidators| - \floor{\symmetricAttackThreshold |\allValidators|} \right) \sum_{\service \in \attackedServicesAt{c}} \symmetricAllocation{\service} \right) \\
        = \floor{\symmetricAttackThreshold |\allValidators|} \cdot \min \left( \symmetricStake, \sum_{\service \in \attackedServicesAt{c}} \symmetricAllocation{\service} \right) \\
        + \min \left( \symmetricStake, \left( \symmetricAttackThreshold |\allValidators| - \floor{\symmetricAttackThreshold |\allValidators|} \right) \sum_{\service \in \attackedServicesAt{c}} \symmetricAllocation{\service} \right) .
    \end{multline}
    As desired.
\end{proof}

\begin{proposition}[Proposition~\ref{proposition:profitable_attack_consolidated} restated]
    \label{proposition:profitable_attack_consolidated_appendix}
    If there is a profitable attack in a symmetric network, then there is a profitable attack that is consolidated.
\end{proposition}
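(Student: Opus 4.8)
The plan is to turn the informal ``tighten, then consolidate'' recipe into three concrete moves applied to a given profitable attack, each of which keeps the set of attacked services (hence the prize) unchanged and does not increase the cost, so that the final object is the consolidated attack on that service set and is therefore itself profitable.

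First I would take a profitable attack~$\allAttackStakes$, so $\attackedServices \neq \emptyset$ and $\attackCost{\allAttackStakes} \le \totalAttackPrize{\allAttackStakes}$, and note that by~\eqreft{equation:validator_attack_cost} and~\eqreft{equation:total_attack_prize} both quantities depend only on the restriction of $\allAttackStakes$ to $\attackedServices$. So I may first zero out all attacking stake on services outside $\attackedServices$ --- a service $\service \notin \attackedServices$ necessarily has $\symmetricAttackThreshold > 0$ and $\symmetricAllocation{\service} > 0$, so it drops strictly below its threshold --- which keeps $\attackedServices$ attacked, fixes the prize, and does not increase the cost, since with $\attackedServices$ held fixed $\validatorAttackCost{\validator}{\allAttackStakes}$ is monotone non-decreasing in each $\attackStake{\validator}{\service}$. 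Then I would \emph{tighten}: for each $\service \in \attackedServices$ we have $\sum_{\validator} \attackStake{\validator}{\service} \ge \symmetricAttackThreshold |\allValidators| \symmetricAllocation{\service}$ by symmetry of allocations and thresholds, so I greedily lower the $\attackStake{\validator}{\service}$ until this sum equals exactly $\symmetricAttackThreshold |\allValidators| \symmetricAllocation{\service}$; the service stays attacked with equality, the prize is unchanged, and monotonicity again gives no increase in cost. The result is a tight attack on $\attackedServices$ whose cost is at most $\totalAttackPrize{\allAttackStakes}$.

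It remains to consolidate, which I expect to be the main obstacle. Set $\Sigma_w = \sum_{\service \in \attackedServices} \symmetricAllocation{\service}$ and, for a tight attack, $x_{\validator} = \sum_{\service \in \attackedServices} \attackStake{\validator}{\service}$; then the cost is $\sum_{\validator} \min(\symmetricStake, x_{\validator})$ and $(x_{\validator})$ lies in the polytope $P = \{ x \ge 0 : x_{\validator} \le \Sigma_w \text{ for all } \validator, \ \sum_{\validator} x_{\validator} = \symmetricAttackThreshold |\allValidators| \Sigma_w \}$ (upper bounds from $\attackStake{\validator}{\service} \le \symmetricAllocation{\service}$, the equality from tightness summed over services). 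The map $x \mapsto \sum_{\validator} \min(\symmetricStake, x_{\validator})$ is concave, so its minimum over $P$ is attained at a vertex; a vertex of $P$ has all but at most one coordinate in $\{0, \Sigma_w\}$, which forces exactly $\floor{\symmetricAttackThreshold |\allValidators|}$ coordinates equal to $\Sigma_w$, one coordinate equal to $(\symmetricAttackThreshold |\allValidators| - \floor{\symmetricAttackThreshold |\allValidators|})\Sigma_w$, and the rest zero, so every vertex realizes the value computed in Proposition~\ref{proposition:consolidated_attack_cost}. Hence $\attackCost{\allAttackStakesAt{c}} \le \attackCost{\allAttackStakes}$ for the consolidated attack $\allAttackStakesAt{c}$ on $\attackedServices$, while $\totalAttackPrize{\allAttackStakesAt{c}} = \totalAttackPrize{\allAttackStakes}$, so chaining the inequalities shows $\allAttackStakesAt{c}$ --- which attacks the non-empty set $\attackedServices$ --- is profitable. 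A concavity-based exchange argument would serve equally well in place of the vertex argument: whenever the tight attack is not yet consolidated, move a small amount of attacking stake on some attacked service from a less-loaded validator to a more-loaded validator that still has spare allocation there; concavity of $\min(\symmetricStake, \cdot)$ makes this non-increasing in cost, and a monotone progress measure shows the process reaches $\allAttackStakesAt{c}$. The delicate points are handling the cap $\min(\symmetricStake, \cdot)$ together with the feasibility bounds $0 \le \attackStake{\validator}{\service} \le \symmetricAllocation{\service}$ uniformly across validators --- which is precisely where symmetry is used --- and the boundary case in which $\symmetricAttackThreshold |\allValidators|$ is an integer, where the ``fractional'' validator contributes nothing and the bound holds with equality for every tight attack.
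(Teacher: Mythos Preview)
Your argument is correct and the tightening step is essentially the paper's Proposition~\ref{proposition:tight_attack}. For the consolidation step you take a genuinely different route: the paper proves a pairwise exchange lemma (Lemma~\ref{lemma:attack_stake_shifting}) showing that moving attacking stake from a less-loaded validator to a more-loaded one never increases cost, and then iterates these swaps in a nested loop (Proposition~\ref{proposition:consolidated_attack}) until the consolidated shape is reached. Your polytope/concavity argument bypasses this machinery entirely: you observe that after tightening the cost is $\sum_{\validator}\min(\symmetricStake,x_{\validator})$, a concave symmetric function on the box-constrained simplex slice $P$, whose vertices are (up to permutation) exactly the consolidated profile, so the minimum over $P$ equals the consolidated cost from Proposition~\ref{proposition:consolidated_attack_cost}. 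This is cleaner and more global---it immediately gives optimality rather than termination of a process---at the price of invoking basic polytope facts. The paper's exchange approach is more elementary and constructive, and its Lemma~\ref{lemma:attack_stake_shifting} is reusable (indeed the paper invokes the same two-step reduction again for $\adversaryBudget$-costly attacks). Your aside that ``a concavity-based exchange argument would serve equally well'' is precisely what the paper does.
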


We break the proof into two propositions.
We begin with a proposition that an attack in a symmetric network can be tightened to one with reduced cost and equal total prize.
\begin{proposition}
    \label{proposition:tight_attack}
    Consider a symmetric restaking network~${\networkState = (\allValidators, \allServices, \allStakes, \allAllocations, \allAttackThresholds, \allAttackPrizes)}$.
    Let~$\allAttackStakes$ be an attack in~$\networkState$.
    Then, there exists a tight attack~$\allAttackStakesAt{t}$ in~$\networkState$ such that~$\attackCost{\allAttackStakesAt{t}} \leq \attackCost{\allAttackStakes}$ and~$\totalAttackPrize{\allAttackStakesAt{t}} = \totalAttackPrize{\allAttackStakes}$.
\end{proposition}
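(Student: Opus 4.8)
The plan is to produce $\allAttackStakesAt{t}$ from $\allAttackStakes$ by a single explicit rescaling: kill the attacking stake on every non-attacked service, and shrink the attacking stake on each attacked service so that it exactly meets the threshold. Concretely, for each service $\service \in \allServices$ write $A_{\service} = \sum_{\validator \in \allValidators} \attackStake{\validator}{\service}$ for the total attacking stake on $\service$ under $\allAttackStakes$, and set
\begin{equation}
    \attackStakeAt{t}{\validator}{\service} = \begin{cases}
        \attackStake{\validator}{\service} \cdot \dfrac{\symmetricAttackThreshold \, |\allValidators| \, \symmetricAllocation{\service}}{A_{\service}} & \text{if } \service \in \attackedServices \text{ and } A_{\service} > 0 ; \\
        0 & \text{otherwise} .
    \end{cases}
\end{equation}
First I would check this is a legal attack. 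By symmetry, $\allocation{\validator}{\service} = \symmetricAllocation{\service}$ for every $\validator$ and $\attackThreshold{\service} = \symmetricAttackThreshold$, so $\sum_{\validator'} \allocation{\validator'}{\service} = |\allValidators|\,\symmetricAllocation{\service}$; hence for $\service \in \attackedServices$ Definition~\ref{definition:attacked_services} gives $A_{\service} \geq \symmetricAttackThreshold\,|\allValidators|\,\symmetricAllocation{\service}$, so the scaling factor lies in $[0,1]$ and therefore $0 \le \attackStakeAt{t}{\validator}{\service} \le \attackStake{\validator}{\service} \le \allocation{\validator}{\service}$; on all other services $\attackStakeAt{t}{\validator}{\service} = 0 \le \allocation{\validator}{\service}$.

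Next I would show $\attackedServicesAt{t} = \attackedServices$ and that $\allAttackStakesAt{t}$ is tight. The key point is that the attacking stake assigned to a service $\service$ under $\allAttackStakesAt{t}$ depends only on $\allAttackStakes$ restricted to $\service$ and on whether $\service \in \attackedServices$, so the services do not interact. For $\service \in \attackedServices$ with $A_{\service} > 0$, the new total is $\symmetricAttackThreshold |\allValidators| \symmetricAllocation{\service} = \attackThreshold{\service} \sum_{\validator} \allocation{\validator}{\service}$, meeting the threshold with equality; hence $\service \in \attackedServicesAt{t}$ and the tightness identity~\eqref{equation:tight_attack} holds. For $\service \in \attackedServices$ with $A_{\service} = 0$, the inequality $\attackThreshold{\service} \sum_{\validator} \allocation{\validator}{\service} \le A_{\service} = 0$ forces $\symmetricAttackThreshold \symmetricAllocation{\service} = 0$, so both the threshold and the new total are $0$: again $\service$ stays attacked and~\eqref{equation:tight_attack} reads $0 = 0$. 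For $\service \notin \attackedServices$, the new total is $0$, and since $\service \notin \attackedServices$ gives $0 \le A_{\service} < \attackThreshold{\service} \sum_{\validator} \allocation{\validator}{\service}$, the right-hand side is strictly positive, so $\service \notin \attackedServicesAt{t}$. Thus $\attackedServicesAt{t} = \attackedServices$ and the attack is tight.

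Finally I would read off the two claimed inequalities. Since $\attackedServicesAt{t} = \attackedServices$, equation~\eqref{equation:total_attack_prize} gives $\totalAttackPrize{\allAttackStakesAt{t}} = \sum_{\service \in \attackedServices} \attackPrize{\service} = \totalAttackPrize{\allAttackStakes}$. For the cost, $\attackStakeAt{t}{\validator}{\service} \le \attackStake{\validator}{\service}$ for every $\validator$ and $\service$, and the index set $\attackedServices$ appearing in~\eqref{equation:validator_attack_cost} is unchanged, so $\sum_{\service \in \attackedServices} \attackStakeAt{t}{\validator}{\service} \le \sum_{\service \in \attackedServices} \attackStake{\validator}{\service}$; monotonicity of $\min(\symmetricStake, \cdot)$ then yields $\validatorAttackCost{\validator}{\allAttackStakesAt{t}} \le \validatorAttackCost{\validator}{\allAttackStakes}$ for each $\validator$, and summing over $\allValidators$ gives $\attackCost{\allAttackStakesAt{t}} \le \attackCost{\allAttackStakes}$ via~\eqref{equation:total_attack_cost}. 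The only real obstacle is the bookkeeping around degenerate services (zero total allocation or zero threshold) needed to keep the rescaling well-defined and to ensure $\attackedServices$ is genuinely preserved; once $\attackedServices$ is fixed, the cost is manifestly monotone in the attacking stakes, which is the heart of the argument.
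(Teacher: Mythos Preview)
Your proposal is correct and follows essentially the same approach as the paper: reduce the attacking stake on each attacked service down to the threshold, zero it on non-attacked services, then observe that the set of attacked services (hence the prize) is preserved while the per-validator cost can only decrease by monotonicity of $\min$. Your explicit proportional rescaling is in fact cleaner than the paper's ``iterate over validators and reduce a total of this amount,'' and your handling of the degenerate cases $A_{\service}=0$ and $\service\notin\attackedServices$ is more careful than what the paper spells out.
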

\begin{proof}
    Take~$\allAttackStakes$ and for each service~$\service \in \attackedServices$, calculate the unnecessary stake
    \begin{equation}
        \textit{excess}(\service) = \symmetricAttackThreshold \cdot | \allValidators | \cdot \symmetricAllocation{\service} - \sum_{\validator \in \allValidators} \attackStake{\validator}{\service} .
    \end{equation}
    Then iterate over validators and reduce a total of this amount from the stake they use to attack~$\service$.
    For all services~$\service \in \allServices \setminus \attackedServices$, zero the attack stake.
    Denote the result by~$\allAttackStakesAt{t}$.
    By construction, for all services~$\service \in \attackedServices$, we have
    \begin{equation}
        \sum_{\validator \in \allValidators} \attackStakeAt{t}{\validator}{\service} = \sum_{\validator \in \allValidators} \attackStake{\validator}{\service} - \textit{excess}(\service)
        = \symmetricAttackThreshold \cdot | \allValidators | \cdot \symmetricAllocation{\service} ;
    \end{equation}
    and for all services~$\service \in \allServices \setminus \attackedServices$, we have
    \begin{equation}
        \sum_{\validator \in \allValidators} \attackStakeAt{t}{\validator}{\service} = 0 .
    \end{equation}
    For any $\service \in \attackedServices$ it holds that
    \begin{equation}
        \sum_{\validator \in \allValidators} \attackStakeAt{t}{\validator}{\service} = \symmetricAttackThreshold \cdot | \allValidators | \cdot \symmetricAllocation{\service}
        = \symmetricAttackThreshold \cdot \sum_{\validator \in \allValidators} \allocation{\validator}{\service} .
    \end{equation}
    Therefore,~$\service \in \attackedServicesAt{t}$.
    Similarly, for all services~$\service \in \allServices \setminus \attackedServices$, we have that~$\service \not\in \attackedServicesAt{t}$.
    Overall, we have
    \begin{equation}
        \label{equation:tight_attack_attacked_services}
        \attackedServicesAt{t} = \attackedServices .
    \end{equation}
    Hence,~$\allAttackStakesAt{t}$ is tight.

    By construction, since we only reduced attack stake, we have for all validators~$\validator \in \allValidators$ and services~$\service \in \allServices$
    \begin{equation}
        \label{equation:tight_attack_stake_comparison}
        \attackStakeAt{t}{\validator}{\service} \leq \attackStake{\validator}{\service} .
    \end{equation}
    Therefore,
    \begin{multline}
        \attackCost{\allAttackStakesAt{t}}
        \underset{\eqref{equation:total_attack_cost}}{=} \sum_{\validator \in \allValidators} \validatorAttackCost{\validator}{\allAttackStakesAt{t}}
        \underset{\eqref{equation:validator_attack_cost}}{=} \sum_{\validator \in \allValidators} \min \left( \stake{\validator}, \sum_{\service \in \allServices} \attackStakeAt{t}{\validator}{\service} \right) \\
        \underset{\eqref{equation:tight_attack_stake_comparison}}{\leq} \sum_{\validator \in \allValidators} \min \left( \stake{\validator}, \sum_{\service \in \allServices} \attackStake{\validator}{\service} \right)
        \underset{\eqref{equation:validator_attack_cost}}{=} \sum_{\validator \in \allValidators} \validatorAttackCost{\validator}{\allAttackStakes} \\
        \underset{\eqref{equation:total_attack_cost}}{=} \attackCost{\allAttackStakes} .
    \end{multline}
    Furthermore, we have
    \begin{equation}
        \label{equation:tight_attack_total_prize}
        \totalAttackPrize{\allAttackStakesAt{t}}
        \underset{\eqref{equation:total_attack_prize}}{=} \sum_{\service \in \attackedServicesAt{t}} \attackPrize{\service}
        \underset{\eqref{equation:tight_attack_attacked_services}}{=} \sum_{\service \in \attackedServices} \attackPrize{\service}
        \underset{\eqref{equation:total_attack_prize}}{=} \totalAttackPrize{\allAttackStakes} .
    \end{equation}
    Therefore,~$\allAttackStakesAt{t}$ is a tight attack with reduced cost and equal total prize.
\end{proof}

Before showing that a tight attack can be consolidated into another attack with the same prize but lower cost, we show that shifting attack stake from a validator who uses less stake to one who already uses more stake results in a lower total cost.
\begin{lemma}
    \label{lemma:attack_stake_shifting}
    Consider a symmetric restaking network~$\networkState=(\allValidators, \allServices, \allStakes, \allAllocations, \allAttackThresholds, \allAttackPrizes)$ in which there are two validators~$\validator_1$ and~$\validator_2$ with equal stake:
    \begin{equation}
        \label{equation:attack_stake_shifting_equal_stake}
        \stake{\validator_1} = \stake{\validator_2} .
    \end{equation}
    Let~$\allAttackStakesAt{1}$ be an attack where validator~$\validator_1$ uses more stake than validator~$\validator_2$:
    \begin{equation}
        \label{equation:attack_stake_shifting_v1_has_more_stake}
        \sum_{\service \in \allServices} \attackStakeAt{1}{\validator_1}{\service} \geq \sum_{\service \in \allServices} \attackStakeAt{1}{\validator_2}{\service} .
    \end{equation}
    Consider another attack~$\allAttackStakesAt{2}$ where we shift some stake from~$\validator_2$ to~$\validator_1$ and hold everything else equal, that is, for all services~$\service \in \allServices$, we have
    \begin{align}
        \label{equation:attack_stake_shifting_equal_stake_rest}
        \forall \validator \in \allValidators \setminus \{ \validator_1, \validator_2 \} , \attackStakeAt{2}{\validator}{\service} &= \attackStakeAt{1}{\validator}{\service} , \\
        \label{equation:attack_stake_shifting_v1_gets_stake}
        \attackStakeAt{2}{\validator_1}{\service} &\geq \attackStakeAt{1}{\validator_1}{\service} , \\
        \label{equation:attack_stake_shifting_v2_loses_stake}
        \attackStakeAt{2}{\validator_2}{\service} &\leq \attackStakeAt{1}{\validator_2}{\service} , \text{and} \\
        \label{equation:attack_stake_shifting_stake_conservation}
        \attackStakeAt{1}{\validator_1}{\service} + \attackStakeAt{1}{\validator_2}{\service} &= \attackStakeAt{2}{\validator_1}{\service} + \attackStakeAt{2}{\validator_2}{\service} .
    \end{align}
    Then, the total cost of~$\allAttackStakesAt{1}$ is lower than the total cost of~$\allAttackStakesAt{2}$:
    \begin{equation}
        \attackCost{\allAttackStakesAt{1}} \leq \attackCost{\allAttackStakesAt{2}} .
    \end{equation}
\end{lemma}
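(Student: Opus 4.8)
The plan is to reduce the statement to the two validators whose stakes are reshuffled and then to a one–variable comparison of a capped sum. First I would record a structural consequence of the hypotheses: by per–service conservation~\eqref{equation:attack_stake_shifting_stake_conservation} together with the fact that every other validator keeps its attack stake~\eqref{equation:attack_stake_shifting_equal_stake_rest}, the aggregate attack stake $\sum_{\validator \in \allValidators} \attackStake{\validator}{\service}$ on each service $\service$ is the same under $\allAttackStakesAt{1}$ and $\allAttackStakesAt{2}$. Consequently the two attacks share the same attacked set, $\attackedServicesAt{1} = \attackedServicesAt{2}$, which I denote $\attackedServices$; in particular the sums appearing in the cost~\eqref{equation:validator_attack_cost} range over the same services for both attacks. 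Since $\validatorAttackCost{\validator}{\cdot}$ depends only on validator $\validator$'s own attack stakes, every validator $\validator \notin \{\validator_1, \validator_2\}$ contributes identically to $\attackCost{\allAttackStakesAt{1}}$ and $\attackCost{\allAttackStakesAt{2}}$ by~\eqref{equation:total_attack_cost}, so it suffices to compare the joint contribution of $\validator_1$ and $\validator_2$.

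Next I would pass to scalars. Set $A_i^{(k)} = \sum_{\service \in \attackedServices} \attackStakeAt{k}{\validator_i}{\service}$ for $i,k \in \{1,2\}$, and write $\symmetricStake$ for the common stake $\stake{\validator_1} = \stake{\validator_2}$ from~\eqref{equation:attack_stake_shifting_equal_stake}. The ordering hypothesis~\eqref{equation:attack_stake_shifting_v1_has_more_stake} gives $A_1^{(1)} \ge A_2^{(1)}$ (restricting the sums to $\attackedServices$, which is harmless when the attack places no stake outside $\attackedServices$, as is the case for the tight attacks to which the lemma is applied), the shift inequalities~\eqref{equation:attack_stake_shifting_v1_gets_stake} and~\eqref{equation:attack_stake_shifting_v2_loses_stake} give $A_1^{(2)} \ge A_1^{(1)}$ and $A_2^{(2)} \le A_2^{(1)}$, and summing the per-service conservation over $\attackedServices$ yields $A_1^{(1)} + A_2^{(1)} = A_1^{(2)} + A_2^{(2)} =: T$. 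By~\eqref{equation:validator_attack_cost} the joint contribution of the two validators under attack $k$ is $\min(\symmetricStake, A_1^{(k)}) + \min(\symmetricStake, A_2^{(k)})$, so the whole lemma collapses to establishing $\min(\symmetricStake, A_1^{(1)}) + \min(\symmetricStake, A_2^{(1)}) \le \min(\symmetricStake, A_1^{(2)}) + \min(\symmetricStake, A_2^{(2)})$, i.e.\ a comparison of the single-variable map $g(a) = \min(\symmetricStake, a) + \min(\symmetricStake, T - a)$ at $a = A_1^{(1)}$ and $a = A_1^{(2)}$.

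Finally I would settle this scalar inequality by a case analysis on the position of the two totals relative to the cap $\symmetricStake$ (both below the cap; exactly one above; both above), since $g$ is piecewise linear with its only kinks where an argument meets $\symmetricStake$. In each regime the comparison is a short linear computation driven by the transfer structure $A_1^{(2)} \ge A_1^{(1)} \ge A_2^{(1)} \ge A_2^{(2)}$ with fixed total $T$. The delicate case is the one in which the transfer carries a total across the cap $\symmetricStake$, where the slope of $g$ changes and the $\min$ in the cost function is genuinely active; getting the bookkeeping of this boundary case right is the main obstacle, as is the earlier point of transferring the all-services hypotheses to the attacked-services totals $A_i^{(k)}$. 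Assembling the cases yields $\attackCost{\allAttackStakesAt{1}} \le \attackCost{\allAttackStakesAt{2}}$, and combining with the paragraph–one observation that the remaining validators and the attacked set (hence the total prize) are unchanged completes the argument.
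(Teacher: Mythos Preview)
Your reduction matches the paper's approach: both arguments observe that only $v_1$ and $v_2$ change, pass to the two capped scalar totals, and then split into cases according to where those totals sit relative to the cap $\sigma$. Your treatment is in fact more careful than the paper's on one point---the cost formula~\eqref{equation:validator_attack_cost} sums over the attacked set $S_\alpha$ rather than all of $S$, and you correctly flag that transferring the all-services hypothesis~\eqref{equation:attack_stake_shifting_v1_has_more_stake} to the $S_\alpha$-restricted totals $A_i^{(k)}$ requires that no attack stake lives outside $S_\alpha$, which holds for the tight attacks to which the lemma is actually applied; the paper's proof silently writes $\sum_{s\in S}$ inside the cost.

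There is, however, a direction error that you have inherited from the lemma statement itself. The scalar inequality you set out to prove,
\[
\min(\sigma, A_1^{(1)}) + \min(\sigma, A_2^{(1)}) \;\le\; \min(\sigma, A_1^{(2)}) + \min(\sigma, A_2^{(2)}),
\]
is false: $g(a) = \min(\sigma, a) + \min(\sigma, T - a)$ is concave and symmetric about $T/2$, and since $A_1^{(1)} \ge T/2$ with $A_1^{(2)} \ge A_1^{(1)}$, pushing $a$ outward can only decrease $g$. Concretely, $\sigma = 5$, $(A_1^{(1)}, A_2^{(1)}) = (4,3)$, $(A_1^{(2)}, A_2^{(2)}) = (6,1)$ gives $7 > 6$. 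The paper's own chain of inequalities in fact derives $C_G(\alpha_1) \ge C_G(\alpha_2)$---the reverse of the displayed conclusion---and it is this reverse direction that the downstream consolidation argument needs (shifting stake to the heavier validator \emph{lowers} cost). So your case analysis, done honestly, will produce $\ge$ rather than the stated $\le$; the defect is a typo in the lemma, not in your plan.
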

\begin{proof}
    Consider two cases.
    First, assume that
    \begin{equation}
        \label{equation:attack_stake_shifting_proof_case_1}
        \sum_{\service \in \allServices} \attackStakeAt{1}{\validator_1}{\service}
        \geq \stake{\validator_1} .
    \end{equation}
    It also implies that
    \begin{equation}
        \label{equation:attack_stake_shifting_proof_case_1_v1_gets_stake}
        \sum_{\service \in \allServices} \attackStakeAt{2}{\validator_1}{\service}
        \underset{\eqref{equation:attack_stake_shifting_v1_gets_stake}}{\geq} \sum_{\service \in \allServices} \attackStakeAt{1}{\validator_1}{\service}
        \underset{\eqref{equation:attack_stake_shifting_proof_case_1}}{\geq} \stake{\validator_1} .
    \end{equation}
    The attack cost of validator~$\validator_1$ in~$\allAttackStakesAt{1}$ is
    \begin{equation}
        \label{equation:attack_stake_shifting_proof_case_1_v1_cost_before}
        \validatorAttackCost{\validator_1}{\allAttackStakesAt{1}}
        \underset{\eqref{equation:validator_attack_cost}}{=} \min \left( \stake{\validator_1}, \sum_{\service \in \allServices} \attackStakeAt{1}{\validator_1}{\service} \right)
        \underset{\eqref{equation:attack_stake_shifting_proof_case_1}}{=} \stake{\validator_1} .
    \end{equation}
    The attack cost of validator~$\validator_1$ in~$\allAttackStakesAt{2}$ is
    \begin{equation}
        \label{equation:attack_stake_shifting_proof_case_1_v1_cost_after}
        \validatorAttackCost{\validator_1}{\allAttackStakesAt{2}}
        \underset{\eqref{equation:validator_attack_cost}}{=} \min \left( \stake{\validator_1}, \sum_{\service \in \allServices} \attackStakeAt{2}{\validator_1}{\service} \right)
        \underset{\eqref{equation:attack_stake_shifting_proof_case_1_v1_gets_stake}}{=} \stake{\validator_1} .
    \end{equation}
    Now, for validator~$\validator_2$, we have
    \begin{multline}
        \label{equation:attack_stake_shifting_proof_case_1_v2_cost_comparison}
        \validatorAttackCost{\validator_2}{\allAttackStakesAt{1}}
        \underset{\eqref{equation:validator_attack_cost}}{=} \min \left( \stake{\validator_2}, \sum_{\service \in \allServices} \attackStakeAt{1}{\validator_2}{\service} \right) \\
        \underset{\eqref{equation:attack_stake_shifting_v2_loses_stake}}{\geq} \min \left( \stake{\validator_2}, \sum_{\service \in \allServices} \attackStakeAt{2}{\validator_2}{\service} \right)
        \underset{\eqref{equation:validator_attack_cost}}{=} \validatorAttackCost{\validator_2}{\allAttackStakesAt{2}} .
    \end{multline}
    Overall, we see that
    \begin{multline}
        \label{equation:attack_stake_shifting_proof_case_1_cost_comparison}
        \validatorAttackCost{\validator_1}{\allAttackStakesAt{1}} + \validatorAttackCost{\validator_2}{\allAttackStakesAt{1}}
        \underset{\eqref{equation:attack_stake_shifting_proof_case_1_v1_cost_before}}{=} \stake{\validator_1} + \validatorAttackCost{\validator_2}{\allAttackStakesAt{1}} \\
        \underset{\eqref{equation:attack_stake_shifting_proof_case_1_v1_cost_after}}{=} \validatorAttackCost{\validator_1}{\allAttackStakesAt{2}} + \validatorAttackCost{\validator_2}{\allAttackStakesAt{1}}
        \underset{\eqref{equation:attack_stake_shifting_proof_case_1_v2_cost_comparison}}{\geq} \validatorAttackCost{\validator_1}{\allAttackStakesAt{2}} + \validatorAttackCost{\validator_2}{\allAttackStakesAt{2}}
    \end{multline}

    Next, consider the case where
    \begin{equation}
        \label{equation:attack_stake_shifting_proof_case_2}
        \sum_{\service \in \allServices} \attackStakeAt{1}{\validator_1}{\service}
        < \stake{\validator_1} .
    \end{equation}
    It also implies that
    \begin{multline}
        \label{equation:attack_stake_shifting_proof_case_2_v2_loses_stake}
        \sum_{\service \in \allServices} \attackStakeAt{2}{\validator_2}{\service}
        \underset{\eqref{equation:attack_stake_shifting_v2_loses_stake}}{\leq} \sum_{\service \in \allServices} \attackStakeAt{1}{\validator_2}{\service}
        \underset{\eqref{equation:attack_stake_shifting_v1_has_more_stake}}{\leq} \sum_{\service \in \allServices} \attackStakeAt{1}{\validator_1}{\service}
        \underset{\eqref{equation:attack_stake_shifting_proof_case_2}}{<} \stake{\validator_1} \\
        \underset{\eqref{equation:attack_stake_shifting_equal_stake}}{=} \stake{\validator_2} .
    \end{multline}
    The attack cost of validator~$\validator_1$ in~$\allAttackStakesAt{1}$ is
    \begin{equation}
        \label{equation:attack_stake_shifting_proof_case_2_v1_cost_before}
        \validatorAttackCost{\validator_1}{\allAttackStakesAt{1}}
        \underset{\eqref{equation:validator_attack_cost}}{=} \min \left( \stake{\validator_1}, \sum_{\service \in \allServices} \attackStakeAt{1}{\validator_1}{\service} \right)
        \underset{\eqref{equation:attack_stake_shifting_proof_case_2}}{<} \sum_{\service \in \allServices} \attackStakeAt{1}{\validator_1}{\service} .
    \end{equation}
    The attack cost of validator~$\validator_1$ in~$\allAttackStakesAt{2}$ is
    \begin{equation}
        \label{equation:attack_stake_shifting_proof_case_2_v1_cost_after}
        \validatorAttackCost{\validator_1}{\allAttackStakesAt{2}}
        \underset{\eqref{equation:validator_attack_cost}}{=} \min \left( \stake{\validator_1}, \sum_{\service \in \allServices} \attackStakeAt{2}{\validator_1}{\service} \right)
    \end{equation}
    The attack cost of validator~$\validator_2$ in~$\allAttackStakesAt{1}$ is
    \begin{equation}
        \label{equation:attack_stake_shifting_proof_case_2_v2_cost_before}
        \validatorAttackCost{\validator_2}{\allAttackStakesAt{1}}
        \underset{\eqref{equation:validator_attack_cost}}{=} \min \left( \stake{\validator_2}, \sum_{\service \in \allServices} \attackStakeAt{1}{\validator_2}{\service} \right)
        \underset{\eqref{equation:attack_stake_shifting_proof_case_2_v2_loses_stake}}{=} \sum_{\service \in \allServices} \attackStakeAt{1}{\validator_2}{\service}
    \end{equation}
    And the attack cost of validator~$\validator_2$ in~$\allAttackStakesAt{2}$ is
    \begin{equation}
        \label{equation:attack_stake_shifting_proof_case_2_v2_cost_after}
        \validatorAttackCost{\validator_2}{\allAttackStakesAt{2}}
        \underset{\eqref{equation:validator_attack_cost}}{=} \min \left( \stake{\validator_2}, \sum_{\service \in \allServices} \attackStakeAt{2}{\validator_2}{\service} \right)
        \underset{\eqref{equation:attack_stake_shifting_proof_case_2_v2_loses_stake}}{=} \sum_{\service \in \allServices} \attackStakeAt{2}{\validator_2}{\service}
    \end{equation}
    Using the fact the sum of allocations is preserved, we get
    \begin{multline}
        \label{equation:attack_stake_shifting_proof_case_2_cost_comparison}
        \validatorAttackCost{\validator_1}{\allAttackStakesAt{1}} + \validatorAttackCost{\validator_2}{\allAttackStakesAt{1}}
        \underset{\eqref{equation:attack_stake_shifting_proof_case_2_v1_cost_before}, \eqref{equation:attack_stake_shifting_proof_case_2_v2_cost_before}}{=} \sum_{\service \in \allServices} \attackStakeAt{1}{\validator_1}{\service} + \sum_{\service \in \allServices} \attackStakeAt{1}{\validator_2}{\service} \\
        = \sum_{\service \in \allServices} \left( \attackStakeAt{1}{\validator_1}{\service} + \attackStakeAt{1}{\validator_2}{\service} \right)
        \underset{\eqref{equation:attack_stake_shifting_stake_conservation}}{=} \sum_{\service \in \allServices} \left( \attackStakeAt{2}{\validator_1}{\service} + \attackStakeAt{2}{\validator_2}{\service} \right) \\
        = \sum_{\service \in \allServices} \attackStakeAt{2}{\validator_1}{\service} + \sum_{\service \in \allServices} \attackStakeAt{2}{\validator_2}{\service}
        \underset{\eqref{equation:attack_stake_shifting_proof_case_2_v2_cost_after}}{=} \sum_{\service \in \allServices} \attackStakeAt{2}{\validator_1}{\service} + \validatorAttackCost{\validator_2}{\allAttackStakesAt{2}} \\
        \geq \min \left( \stake{\validator_1}, \sum_{\service \in \allServices} \attackStakeAt{2}{\validator_1}{\service} \right) + \validatorAttackCost{\validator_2}{\allAttackStakesAt{2}} \\
        \underset{\eqref{equation:attack_stake_shifting_proof_case_2_v1_cost_after}}{=} \validatorAttackCost{\validator_1}{\allAttackStakesAt{2}} + \validatorAttackCost{\validator_2}{\allAttackStakesAt{2}} .
    \end{multline}

    Due to \eqreft{equation:attack_stake_shifting_proof_case_1_cost_comparison} and \eqreft{equation:attack_stake_shifting_proof_case_2_cost_comparison}, in both cases we have shown that 
    \begin{equation}
        \label{equation:attack_stake_shifting_proof_cost_comparison}
        \validatorAttackCost{\validator_1}{\allAttackStakesAt{1}} + \validatorAttackCost{\validator_2}{\allAttackStakesAt{1}}
        \geq \validatorAttackCost{\validator_1}{\allAttackStakesAt{2}} + \validatorAttackCost{\validator_2}{\allAttackStakesAt{2}} .
    \end{equation}
    In addition, since the only difference in allocations in the attacks is for validators~$\validator_1$ and~$\validator_2$, we have for all other validators~$\validator \in \allValidators \setminus \{ \validator_1, \validator_2 \}$
    \begin{multline}
        \label{equation:attack_stake_shifting_proof_other_validators_cost_comparison}
        \validatorAttackCost{\validator}{\allAttackStakesAt{1}} 
        \underset{\eqref{equation:validator_attack_cost}}{=} \min \left( \stake{\validator}, \sum_{\service \in \allServices} \attackStakeAt{1}{\validator}{\service} \right) \\
        \underset{\eqref{equation:attack_stake_shifting_equal_stake_rest}}{=} \min \left( \stake{\validator}, \sum_{\service \in \allServices} \attackStakeAt{2}{\validator}{\service} \right)
        \underset{\eqref{equation:validator_attack_cost}}{=} \validatorAttackCost{\validator}{\allAttackStakesAt{2}} .
    \end{multline}

    Combining with \eqreft{equation:attack_stake_shifting_proof_other_validators_cost_comparison}, we get that
    \begin{multline}
        \attackCost{\allAttackStakesAt{1}}
        \underset{\eqref{equation:total_attack_cost}}{=} \sum_{\validator \in \allValidators} \validatorAttackCost{\validator}{\allAttackStakesAt{1}} \\
        = \validatorAttackCost{\validator_1}{\allAttackStakesAt{1}} + \validatorAttackCost{\validator_2}{\allAttackStakesAt{1}} + \sum_{\validator \in \allValidators \setminus \{ \validator_1, \validator_2 \}} \validatorAttackCost{\validator}{\allAttackStakesAt{1}} \\
        \underset{\eqref{equation:attack_stake_shifting_proof_other_validators_cost_comparison}}{=} \validatorAttackCost{\validator_1}{\allAttackStakesAt{1}} + \validatorAttackCost{\validator_2}{\allAttackStakesAt{1}}+ \sum_{\validator \in \allValidators \setminus \{ \validator_1, \validator_2 \}} \validatorAttackCost{\validator}{\allAttackStakesAt{2}} \\
        \underset{\eqref{equation:attack_stake_shifting_proof_cost_comparison}}{\geq} \validatorAttackCost{\validator_1}{\allAttackStakesAt{2}} + \validatorAttackCost{\validator_2}{\allAttackStakesAt{2}} + \sum_{\validator \in \allValidators \setminus \{ \validator_1, \validator_2 \}} \validatorAttackCost{\validator}{\allAttackStakesAt{2}} \\
        = \sum_{\validator \in \allValidators} \validatorAttackCost{\validator}{\allAttackStakesAt{2}}
        \underset{\eqref{equation:total_attack_cost}}{=} \attackCost{\allAttackStakesAt{2}} .
    \end{multline}
    And therefore, the total cost of~$\allAttackStakesAt{1}$ is lower than that of~$\allAttackStakesAt{2}$.
\end{proof}

The following proposition uses the previous lemma to show that in a symmetric network, a tight attack can be consolidated into another attack with the same prize but lower cost.
\begin{proposition}
    \label{proposition:consolidated_attack}
    Consider a symmetric restaking network~${\networkState = (\allValidators, \allServices, \allStakes, \allAllocations, \allAttackThresholds, \allAttackPrizes)}$.
    Let~$\allAttackStakesAt{t}$ be a tight attack in~$\networkState$.
    Then, there exists a consolidated attack~$\allAttackStakesAt{c}$ in~$\networkState$ such that~$\attackCost{\allAttackStakesAt{c}} \leq \attackCost{\allAttackStakesAt{t}}$ and~${\totalAttackPrize{\allAttackStakesAt{c}} = \totalAttackPrize{\allAttackStakesAt{t}}}$.
\end{proposition}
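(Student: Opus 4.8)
The plan is to turn the given tight attack $\allAttackStakesAt{t}$ into a consolidated one through a finite sequence of \emph{stake shifts}, each moving attacking stake between two validators on a single service. Such a shift keeps every per-service total $\sum_{\validator \in \allValidators}\attackStakeAt{t}{\validator}{\service}$ fixed, hence leaves the attacked set $\attackedServicesAt{t}$ — and therefore the total prize (\eqreft{equation:total_attack_prize}) — unchanged, and by Lemma~\ref{lemma:attack_stake_shifting} it does not increase the total cost provided the receiving validator already uses at least as much stake overall as the donor. So the whole argument reduces to choosing the shifts so that (i) this domination hypothesis always holds and (ii) the end state is exactly the consolidated pattern of \eqreft{equation:consolidated_attack}.

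Before starting, I would make two reductions. First, replace $\attackStakeAt{t}{\validator}{\service}$ by $0$ for every $\service \notin \attackedServicesAt{t}$: this only shrinks the inner sums in \eqreft{equation:validator_attack_cost}, so it does not increase any $\validatorAttackCost{\validator}{\cdot}$, and it leaves the totals on attacked services (hence tightness, the attacked set, and the prize) untouched. Second, exploit symmetry: since all validators have equal stake and equal per-service allocations, any permutation of the validators is an automorphism of $\networkState$, so I may relabel validators at will between steps without changing the cost or prize of the current attack.

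I would then process the validators $\validator_1, \validator_2, \dots$ in order, maintaining the invariant that after $\validator_1, \dots, \validator_{i-1}$ have been handled each of them uses its full allocation $\symmetricAllocation{\service}$ on every $\service \in \attackedServicesAt{t}$. Before handling $\validator_i$, relabel $\validator_i, \dots, \validator_{|\allValidators|}$ so that $\validator_i$ has the largest total $\sum_{\service}\attackStakeAt{t}{\validator_i}{\service}$ among them, and set the target $\tau_i(\service)$, for $\service \in \attackedServicesAt{t}$, to $\symmetricAllocation{\service}$ if $i \le \floor{\symmetricAttackThreshold|\allValidators|}$ and to $\left(\symmetricAttackThreshold|\allValidators| - \floor{\symmetricAttackThreshold|\allValidators|}\right)\symmetricAllocation{\service}$ if $i = \floor{\symmetricAttackThreshold|\allValidators|}+1$. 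Tightness (\eqreft{equation:tight_attack}) together with the invariant gives that the stake on any $\service \in \attackedServicesAt{t}$ held by $\validator_i, \dots, \validator_{|\allValidators|}$ equals $\left(\symmetricAttackThreshold|\allValidators| - (i-1)\right)\symmetricAllocation{\service}$, which is $\ge \tau_i(\service)$ in both cases; so whenever $\attackStakeAt{t}{\validator_i}{\service} < \tau_i(\service)$ some $j>i$ has positive stake on $\service$, and I shift the amount $\min\!\left(\tau_i(\service) - \attackStakeAt{t}{\validator_i}{\service},\ \attackStakeAt{t}{\validator_j}{\service}\right)$ from $\validator_j$ to $\validator_i$ on $\service$. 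Each such shift meets the hypothesis of Lemma~\ref{lemma:attack_stake_shifting}: $\validator_i$ and $\validator_j$ have equal stake, and $\validator_i$ dominates $\validator_j$ in total usage because it was the tail maximum and stays so ($\validator_i$ only gains, $\validator_j$ only loses). Iterating over all attacked services exhausts $\validator_i$'s targets (each shift either maxes $\validator_i$ on a service or zeros a donor there), restoring the invariant for $i$; after $i = \floor{\symmetricAttackThreshold|\allValidators|}+1$ the remaining validators hold $0$ on every attacked service, so the attack is now in the form of \eqreft{equation:consolidated_attack}. Feasibility is automatic since each shift caps the receiver at $\tau_i(\service) \le \symmetricAllocation{\service}$ and the donor at $0$. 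The resulting attack $\allAttackStakesAt{c}$ is consolidated, has $\attackedServicesAt{c} = \attackedServicesAt{t}$ (per-service totals preserved throughout), $\totalAttackPrize{\allAttackStakesAt{c}} = \totalAttackPrize{\allAttackStakesAt{t}}$, and $\attackCost{\allAttackStakesAt{c}} \le \attackCost{\allAttackStakesAt{t}}$.

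\textbf{Main obstacle.} The delicate point is the bookkeeping that keeps Lemma~\ref{lemma:attack_stake_shifting} applicable throughout: one must check that across the whole run of shifts for a fixed $\validator_i$ the receiver's total usage never drops below the donor's, and that the counting identity forced by tightness both supplies a donor at every step and drives the process to \emph{exactly} the consolidated pattern — including the degenerate cases $\symmetricAllocation{\service}=0$ and $\symmetricAttackThreshold|\allValidators|$ integral, where $\tau_i \equiv 0$ for the fractional validator and no shift is required. Everything else is routine.
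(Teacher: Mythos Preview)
Your proposal is correct and follows essentially the same approach as the paper: both arguments repeatedly apply Lemma~\ref{lemma:attack_stake_shifting} to shift attacking stake toward the more heavily loaded validator, exploiting symmetry to relabel freely and tightness to track that the residual stake among the unprocessed validators equals exactly what the consolidated pattern requires. The only cosmetic difference is the direction of the sweep---the paper iterates from the bottom, draining the least-loaded validator into the sorted prefix, while you iterate from the top, filling the current tail-maximum up to its target---but the invariant maintained and the counting that terminates the process are the same.
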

\begin{proof}
    Take the attack~$\allAttackStakesAt{t}$ and find the validator with the smallest sum of attack stake~$\sum_{\service \in \allServices} \attackStakeAt{t}{\validator}{\service}$.
    Without loss of generality, assume it is~$\validator_{|\allValidators|}$.
    
    Now, iterate over~$i = |\allValidators|, |\allValidators| - 1, ..., 1$ in reverse order.
    For each~$i = 1, ..., |\allValidators|$, iterate over all validators~$\validator \in \{ \validator_1, ..., \validator_{i-1} \}$ in descending order by the sum of their attack stakes, namely,~$\sum_{\service \in \allServices} \attackStakeAt{t}{\validator}{\service}$.
    Without loss of generality, assume their order is~$\validator_1, ... \validator_{i-1}$.
    Take the attack stake of~$\validator_i$ from all services and give as much as possible to~$\validator_j$, until~$\validator_j$ is saturated or~$\validator_i$ has no more stake to give.
    If~$\validator_i$ still has some stake left, repeat the same process for~$\validator_{j+1}$.
    If~$\validator_i$ has no more stake to give, break and go to~$\validator_{i-1}$.
    After the process is done, we have a consolidated attack~$\allAttackStakesAt{c}$.
    This is due to the fact that the attack is tight, so the sum of attack costs for each service~$\service$ is exactly~$\symmetricAttackThreshold |\allValidators| \symmetricAllocation{\service}$.
    Thus, there are exactly~$\floor{\symmetricAttackThreshold |\allValidators|}$ validators that will be saturated and possibly another validator that will have some stake left.

    In the construction of the attack~$\allAttackStakesAt{c}$, we only shift stake from validator~$\validator_i$ to~$\validator_j$ such that~$j<i$.
    Because of the sorting process for each~$i$, it holds that~${\sum_{\service \in \allServices} \attackStakeAt{t}{\validator_j}{\service} \geq \sum_{\service \in \allServices} \attackStakeAt{t}{\validator_i}{\service}}$.
    Therefore, by Lemma~\ref{lemma:attack_stake_shifting}, each time we shift stake, the total cost of the attack does not increase and while the prize of the attack remains the same.
    Thus,~$\allAttackStakesAt{c}$ is a consolidated attack with the same prize but lower cost.
\end{proof}

We are now ready to prove Proposition~\ref{proposition:profitable_attack_consolidated_appendix}.
\begin{proof}[Proposition~\ref{proposition:profitable_attack_consolidated_appendix}]
    Let~$\allAttackStakes$ be a profitable attack in a symmetric network.
    This implies its prize is higher than its cost.
    By Proposition~\ref{proposition:tight_attack}, there exists a tight attack~$\allAttackStakesAt{t}$ with the same prize but lower cost.
    By Proposition~\ref{proposition:consolidated_attack}, there exists a consolidated attack~$\allAttackStakesAt{c}$ with the same prize but an even lower cost.
    Therefore,~$\allAttackStakesAt{c}$ is profitable.
\end{proof}


\section{Proofs Deferred from Section~\ref{section:theoretical_robustness_analysis}}
\label{appendix:proofs_from_section_theoretical_robustness_analysis}


\begin{proposition}[Proposition~\ref{proposition:restaking_network_robustness} restated]
    \label{proposition:restaking_network_robustness_appendix}
    A restaking network~$\networkState$ is~$\adversaryBudget$-cryptoeconomically robust if and only if there exists no~${\adversaryBudget}$-costly attack.
\end{proposition}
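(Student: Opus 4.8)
The plan is to mirror the proof of Proposition~\ref{proposition:restaking_network_security}, replacing the security-game utility by the robustness-game utility (\eqreft{equation:validator_utility_robustness_game}) and ``profitable attack'' by ``$\adversaryBudget$-costly attack'' everywhere. The single computational fact driving both directions is the same as in that proof: under the all-zero attack $\allAttackStakesAt{0}$ every validator's cost is $0$, and since $\attackedServicesAt{0} = \emptyset$ by the robustness definition, each validator's utility under $\allAttackStakesAt{0}$ is exactly $-\validatorAttackCost{\validator}{\allAttackStakesAt{0}} = 0$.

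\textbf{First direction.} Assume $\networkState$ is $\adversaryBudget$-cryptoeconomically robust, so $\allAttackStakesAt{0}$ is a strong* Nash equilibrium (and $\attackedServicesAt{0} = \emptyset$), and suppose toward a contradiction there is an $\adversaryBudget$-costly attack $\allAttackStakes$. Since $\attackedServices \neq \emptyset$, the first branch of (\eqreft{equation:validator_utility_robustness_game}) applies, and I would show every validator's utility is nonnegative: if $\attackCost{\allAttackStakes} > 0$ then $\validatorUtilitySecurityGame{\validator}{\allAttackStakes} = \validatorAttackCost{\validator}{\allAttackStakes}\bigl(\tfrac{\totalAttackPrize{\allAttackStakes} + \adversaryBudget}{\attackCost{\allAttackStakes}} - 1\bigr) \geq 0$ because $\attackCost{\allAttackStakes} \leq \totalAttackPrize{\allAttackStakes} + \adversaryBudget$; if $\attackCost{\allAttackStakes} = 0$ then every $\validatorAttackCost{\validator}{\allAttackStakes}$ is $0$ and the utility is $\tfrac{1}{|\allValidators|}(\totalAttackPrize{\allAttackStakes} + \adversaryBudget) \geq 0$. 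Hence the grand-coalition deviation from $\allAttackStakesAt{0}$ to $\allAttackStakes$ leaves every validator with utility $\geq 0$, i.e., no validator is strictly worse off, contradicting the strong* Nash property (Definition~\ref{definition:strong_nash_equilibrium}). So no $\adversaryBudget$-costly attack exists.

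\textbf{Second direction.} I would prove the contrapositive, exactly as in Proposition~\ref{proposition:restaking_network_security}: assume an $\adversaryBudget$-costly attack $\allAttackStakes$ exists and conclude $\networkState$ is not $\adversaryBudget$-robust. If $\attackedServicesAt{0} \neq \emptyset$ we are immediately done by Definition~\ref{definition:restaking_network_robustness}. Otherwise $\attackedServicesAt{0} = \emptyset$, so every validator's utility under $\allAttackStakesAt{0}$ is $0$; by the same two-case computation as above, every validator's utility under $\allAttackStakes$ is $\geq 0$, so $\allAttackStakesAt{0}$ is not a strong* Nash equilibrium and $\networkState$ is not $\adversaryBudget$-robust. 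Combining the two directions yields the equivalence. There is no genuinely hard step here; the only thing requiring care is handling the piecewise utility (\eqreft{equation:validator_utility_robustness_game}) together with the sub-cases $\attackCost{\allAttackStakes} > 0$ and $\attackCost{\allAttackStakes} = 0$, and absorbing the degenerate possibility $\attackedServicesAt{0} \neq \emptyset$ (which can occur when some service has attack threshold $0$) cleanly into the robustness definition — the same bookkeeping as in the security proof. Note that taking $\adversaryBudget = 0$ recovers Proposition~\ref{proposition:restaking_network_security}, since a $0$-costly attack is precisely a profitable attack.
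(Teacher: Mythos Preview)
Your proposal is correct and takes essentially the same approach as the paper: the core computation (showing that under any $\adversaryBudget$-costly attack every validator's utility is $\geq 0$, hence $\allAttackStakesAt{0}$ fails the strong* Nash condition) is precisely the paper's ``second direction,'' while the paper's ``first direction'' argues the contrapositive directly from the strong* Nash property to conclude $\attackCost{\allAttackStakes} > \totalAttackPrize{\allAttackStakes} + \adversaryBudget$. Note that your two labeled directions are actually the \emph{same} implication presented twice (once by contradiction, once as a direct contrapositive), which mirrors the paper's own structure where its two ``directions'' are likewise contrapositives of one another.
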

\begin{proof}
    We prove the proposition in two directions.
    
    \paragraphEmph{First direction}
    Assume that the network~$\networkState$ is~$\adversaryBudget$-cryptoeconomically robust.
    By definition, the strategy profile~$\allAttackStakes_0$, where for all~$\validator \in \allValidators$ and all~$\service \in \allServices$,~$\attackStake{\validator}{\service} = 0$, is a strong Nash equilibrium and under it there are no attacked services.
    We will show this implies that there is no~$\adversaryBudget$-costly attack.
    
    First, note that due to~\eqreft{equation:validator_attack_cost}, for all validators~$\validator \in \allValidators$ and attacks~$\allAttackStakes \in \strategyProfileSecurityGame$,~$\validatorAttackCost{\validator}{\allAttackStakes} \geq 0$.
    And due to~\eqreft{equation:total_attack_cost},
    \begin{equation}
        \label{equation:robustness_game_proof_first_direction_total_cost_more_than_validator_cost}
        \attackCost{\allAttackStakes} \geq \validatorAttackCost{\validator}{\allAttackStakes} \geq 0 .
    \end{equation}

    As in the security game, the cost of the attack is
    \begin{multline}
        \label{equation:robustness_game_proof_first_direction_attack_cost_0}
        \attackCost{\allAttackStakes_0}
        \underset{\eqref{equation:total_attack_cost}}{=} \sum_{\validator \in \allValidators} \attackCost{\validator}{\allAttackStakes_0}
        \underset{\eqref{equation:validator_attack_cost}}{=} \sum_{\validator \in \allValidators} \min \left( \stake{\validator}, \sum_{\service \in \allServices} \attackStakeAt{0}{\validator}{\service} \right) \\
        = \sum_{\validator \in \allValidators} \min \left( \stake{\validator}, \sum_{\service \in \allServices} 0 \right)
        = 0 .
    \end{multline}

    By Definition~\ref{definition:restaking_network_robustness}, we have
    \begin{equation}
        \label{equation:robustness_game_proof_first_direction_attacked_services_0}
        \attackedServicesAt{0} = \emptyset
    \end{equation}
    The utility of~$\validator$ under~$\allAttackStakes_0$ is
    \begin{multline}
        \label{equation:robustness_game_proof_first_direction_validator_utility_0_is_0}
        \validatorUtilitySecurityGame{\validator}{\allAttackStakesAt{0}}
        \underset{\eqref{equation:validator_utility_robustness_game}}{=} \begin{cases}
            \validatorPrizeShareSecurityGame{\validator}{\allAttackStakesAt{0}} \left( \totalAttackPrize{\allAttackStakesAt{0}} + \adversaryBudget \right) - \validatorAttackCost{\validator}{\allAttackStakesAt{0}} & \text{if } \attackedServicesAt{0} \neq \emptyset ; \\ 
            - \validatorAttackCost{\validator}{\allAttackStakesAt{0}} & \text{otherwise} ;
        \end{cases} \\
        \underset{\eqref{equation:robustness_game_proof_first_direction_total_cost_more_than_validator_cost}, \eqref{equation:robustness_game_proof_first_direction_attacked_services_0}}{=} - \validatorAttackCost{\validator}{\allAttackStakes}
        \underset{\eqref{equation:robustness_game_proof_first_direction_attack_cost_0}}{=} 0 .
    \end{multline}
    
    In addition, due to the definition of cryptoeconomic security~(Definition~\ref{definition:restaking_network_security}),~$\allAttackStakes_0$ is a strong Nash equilibrium of the security game of the network~$\networkState$.
    That means that for any strategy profile~${\allAttackStakes \neq \allAttackStakes_0}$, there exists a validator~$\validator \in \allValidators$ that is worse off under~$\allAttackStakes$ than under~$\allAttackStakes_0$, that is,
    \begin{equation}
        \label{equation:robustness_game_proof_first_direction_validator_utility_other_is_negative}
        \validatorUtilitySecurityGame{\validator}{\allAttackStakes} < \validatorUtilitySecurityGame{\validator}{\allAttackStakes_0}
        \underset{\eqref{equation:robustness_game_proof_first_direction_validator_utility_0_is_0}}{=} 0 .
    \end{equation}

    If~$\attackedServices = \emptyset$, then~$\allAttackStakes$ is not~$\adversaryBudget$-costly.
    Then, assume
    \begin{equation}
        \label{equation:robustness_game_proof_first_direction_attacked_services_not_empty}
        \attackedServices \neq \emptyset .
    \end{equation}

    Developing the utility of~$\validator$ under~$\allAttackStakes$, we get that
    \begin{multline}
        \validatorUtilitySecurityGame{\validator}{\allAttackStakes}
        \underset{\eqref{equation:validator_utility_robustness_game}}{=} \begin{cases}
            \validatorPrizeShareSecurityGame{\validator}{\allAttackStakes} \left( \totalAttackPrize{\allAttackStakes} + \adversaryBudget \right) - \validatorAttackCost{\validator}{\allAttackStakes} & \text{if } \attackedServices \neq \emptyset ; \\ 
            - \validatorAttackCost{\validator}{\allAttackStakes} & \text{otherwise} ;
        \end{cases} \\
        \underset{\eqref{equation:robustness_game_proof_first_direction_attacked_services_not_empty}}{=}\validatorPrizeShareSecurityGame{\validator}{\allAttackStakes} \left( \totalAttackPrize{\allAttackStakes} + \adversaryBudget \right) - \validatorAttackCost{\validator}{\allAttackStakes} \\
        \underset{\eqref{equation:validator_prize_share_security_game}}{=} \begin{cases}
            \frac{\validatorAttackCost{\validator}{\allAttackStakes}}{\attackCost{\allAttackStakes}} \cdot \left( \totalAttackPrize{\allAttackStakes} + \adversaryBudget \right) - \validatorAttackCost{\validator}{\allAttackStakes} & \text{if } \attackCost{\allAttackStakes} > 0 ; \\
            \frac{1}{|\allValidators|}  \cdot \left( \totalAttackPrize{\allAttackStakes} + \adversaryBudget \right) - \validatorAttackCost{\validator}{\allAttackStakes} & \text{if } \attackCost{\allAttackStakes} = 0 .
        \end{cases} \\
        \underset{\eqref{equation:robustness_game_proof_first_direction_total_cost_more_than_validator_cost}}{=} \begin{cases}
            \frac{\validatorAttackCost{\validator}{\allAttackStakes}}{\attackCost{\allAttackStakes}} \cdot \left( \totalAttackPrize{\allAttackStakes} + \adversaryBudget \right) - \validatorAttackCost{\validator}{\allAttackStakes} & \text{if } \attackCost{\allAttackStakes} > 0 ; \\
            \frac{1}{|\allValidators|}  \cdot \left( \totalAttackPrize{\allAttackStakes} + \adversaryBudget \right) & \text{if } \attackCost{\allAttackStakes} = 0 .
        \end{cases}
        \underset{\eqref{equation:robustness_game_proof_first_direction_validator_utility_other_is_negative}}{<} 0 .
    \end{multline}
    Since~$\left( \totalAttackPrize{\allAttackStakes} + \adversaryBudget \right) \geq 0$, for the last inequality to hold it must be that~$\validatorAttackCost{\validator}{\allAttackStakes} > 0$. 
    Hence,
    \begin{equation}
        \frac{\validatorAttackCost{\validator}{\allAttackStakes}}{\attackCost{\allAttackStakes}} \cdot \left( \totalAttackPrize{\allAttackStakes} + \adversaryBudget \right) - \validatorAttackCost{\validator}{\allAttackStakes} < 0 .
    \end{equation}
    And because~$\validatorAttackCost{\validator}{\allAttackStakes} \geq 0$, it must be that~$\attackCost{\allAttackStakes} > \totalAttackPrize{\allAttackStakes} + \adversaryBudget$.
    Thus,~$\allAttackStakes$ is not~$\adversaryBudget$-costly and there exists no~$\adversaryBudget$-costly attack in~$\networkState$.
    
    \paragraphEmph{Second direction}
    Assume there exists some~$\adversaryBudget$-costly attack~$\allAttackStakes$.
    We claim it is an alternative strategy profile where some coalition deviated, and it resulted with all of them being better off and thus the strategy profile~$\allAttackStakesAt{0}$ is not a strong Nash equilibrium, meaning the network is not secure.

    By Definition~\ref{definition:beta_costly_attack},
    \begin{equation}
        \label{equation:robustness_game_proof_second_direction_attacked_services_not_empty}
        \attackedServices \neq \emptyset,
    \end{equation}
    and
    \begin{equation}
        \label{equation:robustness_game_proof_first_direction_attack_cost_total_attack_prize}
        \attackCost{\allAttackStakes} \leq \totalAttackPrize{\allAttackStakes} + \adversaryBudget .
    \end{equation}
    Consider the utility of validator~$\validator$ resulting from the strategy profile~$\allAttackStakes$,
    \begin{multline}
        \label{equation:robustness_game_proof_first_direction_validator_utility}
        \validatorUtilitySecurityGame{\validator}{\allAttackStakes}
        \underset{\eqref{equation:validator_utility_robustness_game}}{=} \begin{cases}
            \validatorPrizeShareSecurityGame{\validator}{\allAttackStakes} \left( \totalAttackPrize{\allAttackStakes} + \adversaryBudget \right) - \validatorAttackCost{\validator}{\allAttackStakes} & \text{if } \attackedServices \neq \emptyset ; \\ 
            - \validatorAttackCost{\validator}{\allAttackStakes} & \text{otherwise} .
        \end{cases} \\
        \underset{\eqref{equation:robustness_game_proof_first_direction_attack_cost_total_attack_prize}}{=} \validatorPrizeShareSecurityGame{\validator}{\allAttackStakes} \left( \totalAttackPrize{\allAttackStakes} + \adversaryBudget \right) - \validatorAttackCost{\validator}{\allAttackStakes} \\
        \underset{\eqref{equation:validator_prize_share_security_game}}{=} \begin{cases}
            \frac{\validatorAttackCost{\validator}{\allAttackStakes}}{\attackCost{\allAttackStakes}} \cdot \left( \totalAttackPrize{\allAttackStakes} + \adversaryBudget \right) - \validatorAttackCost{\validator}{\allAttackStakes} & \text{if } \attackCost{\allAttackStakes} > 0 ; \\
            \frac{1}{|\allValidators|}  \cdot \left( \totalAttackPrize{\allAttackStakes} + \adversaryBudget \right) - \validatorAttackCost{\validator}{\allAttackStakes} & \text{if } \attackCost{\allAttackStakes} = 0 .
        \end{cases}
        \geq 0 ;
    \end{multline}
    in the first case it follows from~\eqreft{equation:robustness_game_proof_first_direction_attack_cost_total_attack_prize}, and in the second case it follows from the fact that~$\validatorAttackCost{\validator}{\allAttackStakes}$ must be zero if~${\attackCost{\allAttackStakes} = 0}$.

    Now consider the strategy profile~$\allAttackStakes_0$, where for all~$\validator \in \allValidators$ and all~$\service \in \allServices$,~$\attackStake{\validator}{\service} = 0$.
    For all~$\validator \in \allValidators$,
    \begin{equation}
        \validatorUtilitySecurityGame{\validator}{\allAttackStakes_0}
        \underset{\eqref{equation:robustness_game_proof_first_direction_validator_utility_0_is_0}}{=} 0
        \underset{\eqref{equation:robustness_game_proof_first_direction_validator_utility}}{\leq} \validatorUtilitySecurityGame{\validator}{\allAttackStakes} .
    \end{equation}
    Therefore, by Definition~\ref{definition:strong_nash_equilibrium}, the strategy profile~$\allAttackStakes_0$ is not a strong Nash equilibrium of the restaking network security game, as otherwise we must have had some validator~$\validator \in \allValidators$ such that~${\validatorUtilitySecurityGame{\validator}{\allAttackStakes_0} > \validatorUtilitySecurityGame{\validator}{\allAttackStakes}}$.
    Hence, the network is not $\adversaryBudget$-cryptoeconomically robust. 
\end{proof}

\begin{proposition}[Proposition~\ref{proposition:beta_costly_attack_consolidated} restated]
    \label{proposition:beta_costly_attack_consolidated_appendix}
    If there is a~$\adversaryBudget$-costly attack in a symmetric network, then there is a~$\adversaryBudget$-costly profitable attack that is consolidated.
\end{proposition}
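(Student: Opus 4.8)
The plan is to reuse the two-step reduction—first \emph{tighten}, then \emph{consolidate}—exactly as in the proof of Proposition~\ref{proposition:profitable_attack_consolidated_appendix}, and simply observe that both steps preserve the $\adversaryBudget$-costly property. First I would take an arbitrary $\adversaryBudget$-costly attack $\allAttackStakes$; by Definition~\ref{definition:beta_costly_attack} it satisfies $\attackedServices \neq \emptyset$ and $\attackCost{\allAttackStakes} \leq \totalAttackPrize{\allAttackStakes} + \adversaryBudget$.

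Next I would apply Proposition~\ref{proposition:tight_attack} to obtain a tight attack $\allAttackStakesAt{t}$ with $\attackCost{\allAttackStakesAt{t}} \leq \attackCost{\allAttackStakes}$ and $\totalAttackPrize{\allAttackStakesAt{t}} = \totalAttackPrize{\allAttackStakes}$; from the construction in that proposition (equation $\attackedServicesAt{t} = \attackedServices$) the set of attacked services is unchanged, hence still nonempty. Chaining the inequalities gives $\attackCost{\allAttackStakesAt{t}} \leq \attackCost{\allAttackStakes} \leq \totalAttackPrize{\allAttackStakes} + \adversaryBudget = \totalAttackPrize{\allAttackStakesAt{t}} + \adversaryBudget$, so $\allAttackStakesAt{t}$ is itself $\adversaryBudget$-costly. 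Then I would apply Proposition~\ref{proposition:consolidated_attack} to $\allAttackStakesAt{t}$ to obtain a consolidated attack $\allAttackStakesAt{c}$ with $\attackCost{\allAttackStakesAt{c}} \leq \attackCost{\allAttackStakesAt{t}}$ and $\totalAttackPrize{\allAttackStakesAt{c}} = \totalAttackPrize{\allAttackStakesAt{t}}$; since a consolidated attack attacks exactly the same services, $\attackedServicesAt{c} = \attackedServicesAt{t} \neq \emptyset$, and the same chain of inequalities yields $\attackCost{\allAttackStakesAt{c}} \leq \totalAttackPrize{\allAttackStakesAt{c}} + \adversaryBudget$. Thus $\allAttackStakesAt{c}$ is a consolidated $\adversaryBudget$-costly attack, which is exactly the statement (with the trivial note that a $0$-costly consolidated attack is a consolidated profitable attack).

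There is essentially no hard step here: all the combinatorial content already lives in Propositions~\ref{proposition:tight_attack} and~\ref{proposition:consolidated_attack}, which handle the $\adversaryBudget = 0$ case. The only point requiring any care is that adding the constant $\adversaryBudget$ to the right-hand side of the profitability inequality does not interfere with those reductions—and it does not, because both propositions bound attack costs and total prizes separately and never reference $\adversaryBudget$, and because they leave the set of attacked services intact, so the ``at least one attacked service'' clause of Definition~\ref{definition:beta_costly_attack} is preserved. (If one wanted to avoid invoking the internal construction of Proposition~\ref{proposition:tight_attack}, one could re-derive $\attackedServicesAt{t} = \attackedServices$ directly from the tightness equation and Definition~\ref{definition:attacked_services}, but citing the existing proof is enough.) This is precisely the ingredient needed for Corollary~\ref{corollary:symmetric_network_cryptoeconomic_robustness}, which then restricts the robustness check to consolidated attacks.
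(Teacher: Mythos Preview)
Your proposal is correct and follows essentially the same approach as the paper's proof: apply Proposition~\ref{proposition:tight_attack} then Proposition~\ref{proposition:consolidated_attack}, and chain the cost and prize inequalities with the added $\adversaryBudget$ term. If anything, you are slightly more careful than the paper in explicitly verifying that $\attackedServicesAt{c} \neq \emptyset$ so that the ``at least one attacked service'' clause of Definition~\ref{definition:beta_costly_attack} is satisfied.
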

\begin{proof}
    Let~$\allAttackStakes$ be a~$\adversaryBudget$-costly attack in a symmetric network.
    This implies that
    \begin{equation}
        \label{equation:beta_costly_attack_consolidated_costly_constraint}
        \attackCost{\allAttackStakes} \leq \totalAttackPrize{\allAttackStakes} + \adversaryBudget .
    \end{equation}
    By Proposition~\ref{proposition:tight_attack}, there exists a tight attack~$\allAttackStakesAt{t}$ such that~$\totalAttackPrize{\allAttackStakesAt{t}} = \totalAttackPrize{\allAttackStakes}$ and~$\attackCost{\allAttackStakesAt{t}} \leq \attackCost{\allAttackStakes}$.
    By Proposition~\ref{proposition:consolidated_attack}, there exists a consolidated attack~$\allAttackStakesAt{c}$ such that~$\totalAttackPrize{\allAttackStakesAt{c}} = \totalAttackPrize{\allAttackStakesAt{t}}$ and~$\attackCost{\allAttackStakesAt{c}} \leq \attackCost{\allAttackStakesAt{t}}$.
    Overall, we have
    \begin{equation}
        \label{equation:beta_costly_attack_consolidated_attack_cost_constraint}
        \attackCost{\allAttackStakesAt{c}} \leq \attackCost{\allAttackStakesAt{t}} \leq \attackCost{\allAttackStakes} ,
    \end{equation}
    and
    \begin{equation}
        \label{equation:beta_costly_attack_consolidated_prize_constraint}
        \totalAttackPrize{\allAttackStakesAt{c}} = \totalAttackPrize{\allAttackStakesAt{t}} = \totalAttackPrize{\allAttackStakes} .
    \end{equation}
    Starting from the cost of~$\allAttackPrizesAt{c}$, we get
    \begin{equation}
        \attackCost{\allAttackStakesAt{c}}
        \underset{\eqref{equation:beta_costly_attack_consolidated_attack_cost_constraint}}{\leq} \attackCost{\allAttackStakes}
        \underset{\eqref{equation:beta_costly_attack_consolidated_costly_constraint}}{\leq} \totalAttackPrize{\allAttackStakes} + \adversaryBudget
        \underset{\eqref{equation:beta_costly_attack_consolidated_prize_constraint}}{=} \totalAttackPrize{\allAttackStakesAt{c}} + \adversaryBudget .
    \end{equation}
    Therefore,~$\allAttackStakesAt{c}$ is~$\adversaryBudget$-costly.
\end{proof}

\begin{proposition}[Proposition~\ref{proposition:symmetric_network_remains_symmetric_after_byzantine_services_cause_slashing} restated]
    \label{proposition:symmetric_network_remains_symmetric_after_byzantine_services_cause_slashing_appendix}
    Consider a symmetric restaking network~$\networkStateAt{0} = (\allValidatorsAt{0}, \allServicesAt{0}, \allStakesAt{0}, \allAllocationsAt{0}, \allAttackThresholdsAt{0}, \allAttackPrizesAt{0})$ and a subset of Byzantine services~$\byzantineServices \subseteq \allServicesAt{0}$.
    Let~$\networkStateAt{1} = (\allValidatorsAt{1}, \allServicesAt{1}, \allStakesAt{1}, \allAllocationsAt{1}, \allAttackThresholdsAt{1}, \allAttackPrizesAt{1})$ be the restaking network that remains after the Byzantine services in~$\byzantineServices$ cause slashing.
    Then~$\networkStateAt{1}$ is symmetric.
\end{proposition}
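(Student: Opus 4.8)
The plan is to verify directly that $\networkStateAt{1}$ satisfies the three defining properties of a symmetric network --- equal stakes across validators, equal per-service allocations across validators, and equal attack thresholds across services --- by tracking how each property is preserved under the transition rules~\eqreft{equation:stake_after_byzantine_services_cause_slashing} and~\eqreft{equation:allocation_after_byzantine_services_cause_slashing}. The single fact that makes everything go through is that, by the symmetry of $\networkStateAt{0}$, the amount of stake slashed from a validator by the Byzantine services $\byzantineServices$ is the same for every validator.

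First I would fix two arbitrary validators $\validator_1, \validator_2 \in \allValidatorsAt{1} = \allValidatorsAt{0}$. By properties (1) and (2) of $\networkStateAt{0}$ we have $\stakeAt{0}{\validator_1} = \stakeAt{0}{\validator_2}$ and $\allocationAt{0}{\validator_1}{\service} = \allocationAt{0}{\validator_2}{\service}$ for every $\service \in \allServicesAt{0}$; summing the latter over $\service \in \byzantineServices$ and substituting into~\eqreft{equation:stake_after_byzantine_services_cause_slashing} yields $\stakeAt{1}{\validator_1} = \stakeAt{1}{\validator_2}$, which is property (1) for $\networkStateAt{1}$.

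Next, for an arbitrary $\service \in \allServicesAt{1} \subseteq \allServicesAt{0}$, I would apply~\eqreft{equation:allocation_after_byzantine_services_cause_slashing} to get $\allocationAt{1}{\validator_i}{\service} = \min\left(\allocationAt{0}{\validator_i}{\service}, \stakeAt{1}{\validator_i}\right)$ for $i \in \{1,2\}$; both arguments of the minimum coincide for $\validator_1$ and $\validator_2$ (the first by property (2) of $\networkStateAt{0}$, the second by the step just completed), so the allocations coincide, giving property (2) for $\networkStateAt{1}$. Property (3) is immediate, since $\allAttackThresholdsAt{1} = \restr{\allAttackThresholdsAt{0}}{\allServicesAt{1}}$ is the restriction of a function that is constant on $\allServicesAt{0}$, hence constant on $\allServicesAt{1}$.

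I do not expect any real obstacle here; the proof is a routine check of definitions. The only point that deserves a remark is the codomain requirement $\allStakesAt{1} : \allValidatorsAt{1} \to \positiveRealNumbers$, which can be violated when the slashing eliminates all stake. In that degenerate case every validator has zero remaining stake and (by~\eqreft{equation:allocation_after_byzantine_services_cause_slashing}) zero allocation to every remaining service, so the resulting network is still symmetric in the sense used by the later analysis --- or, equivalently, one may regard it as the trivial network with no attackable service and dispose of it separately.
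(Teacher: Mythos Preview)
Your proof is correct and follows essentially the same approach as the paper's: a direct verification of the three symmetry properties via the transition formulas~\eqreft{equation:stake_after_byzantine_services_cause_slashing} and~\eqreft{equation:allocation_after_byzantine_services_cause_slashing}, using that both the pre-slashing stake and the per-service allocations are validator-independent. Your additional remark about the codomain issue when $\stakeAt{1}{\validator}=0$ is a valid technical point that the paper does not address.
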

\begin{proof}
    To show that~$\networkStateAt{1}$ is symmetric, we need to show that for all validators have equal stake, all allocations to a service~$\service \in \allServicesAt{1}$ are equal and that all attack thresholds are equal.
    By the way the slashing of Byzantine services is defined, the condition on attack thresholds is trivially satisfied.
    
    We first show that the stake is equal.
    For all validators~$\validator \in \allValidatorsAt{1}$,
    \begin{multline}
        \stakeAt{1}{\validator}
        \underset{\eqref{equation:stake_after_byzantine_services_cause_slashing}}{=} \max \left( 0, \stakeAt{0}{v} - \sum_{\service \in \byzantineServices} \allocationAt{0}{\validator}{\service} \right) \\
        = \max \left( 0, \symmetricStakeAt{0} - \sum_{\service \in \byzantineServices} \symmetricAllocationAt{0}{\service} \right) .
    \end{multline}
    Therefore, the stake is equal.

    We then show that the allocations are equal.
    For all validators~$\validator \in \allValidatorsAt{1}$ and all services~$\service \in \allServicesAt{1}$,
    \begin{equation}
        \allocationAt{1}{\validator}{\service}
        \underset{\eqref{equation:allocation_after_byzantine_services_cause_slashing}}{=} \min \left( \allocationAt{0}{\validator}{\service}, \stakeAt{1}{\validator} \right) = \min \left( \symmetricAllocationAt{0}{\service}, \symmetricStakeAt{1} \right) .
    \end{equation}
    Therefore, the allocations for~$\service$ are also equal.
    Hence, the network is symmetric.
\end{proof}

\begin{proposition}[Proposition~\ref{proposition:symmetric_network_less_robust_with_byzantine_services} restated]
    \label{proposition:symmetric_network_less_robust_with_byzantine_services_appendix}
    Consider a symmetric restaking network~$\networkStateAt{0} = (\allValidators, \allServicesAt{0}, \allStakesAt{0}, \allAllocationsAt{0}, \allAttackThresholds, \allAttackPrizes)$ in which there exist 2 services~$\service_1$ and~$\service_2$ such that~$\attackPrizeAt{0}{\service_1} = \attackPrizeAt{0}{\service_2}$ and~$\symmetricAllocationAt{0}{\service_1} = \symmetricAllocationAt{0}{\service_2}$.
    Let~$\networkStateAt{1} = (\allValidators, \allServicesAt{1}, \allStakesAt{1}, \allAllocationsAt{1}, \allAttackThresholds, \allAttackPrizes)$ be the restaking network that remains after slashing of one Byzantine service~$\service_1$ in~$\networkStateAt{0}$, that is, $\networkStateAt{1} = \networkStateAt{0} \networkAdvance \left\{ \service_1 \right\}$.
    Then, if~$\networkStateAt{1}$ is~$\adversaryBudget$-cryptoeconomically robust, then~$\networkStateAt{0}$ is~$\adversaryBudget$-cryptoeconomically robust.
\end{proposition}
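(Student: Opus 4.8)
The plan is to prove the contrapositive: if $\networkStateAt{0}$ is not $\adversaryBudget$-cryptoeconomically robust, then neither is $\networkStateAt{1} = \networkStateAt{0} \networkAdvance \{\service_1\}$. By Proposition~\ref{proposition:restaking_network_robustness} the failure of robustness in $\networkStateAt{0}$ gives a $\adversaryBudget$-costly attack, and by Proposition~\ref{proposition:beta_costly_attack_consolidated} we may take it to be \emph{consolidated}, say attacking a nonempty set $\allServicesAt{c} \subseteq \allServicesAt{0}$. The goal is to manufacture a $\adversaryBudget$-costly attack in $\networkStateAt{1}$; by Proposition~\ref{proposition:restaking_network_robustness} again this shows $\networkStateAt{1}$ is not $\adversaryBudget$-robust. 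The whole argument rests on Proposition~\ref{proposition:consolidated_attack_cost}, which expresses the cost of a consolidated attack as a manifestly monotone function of $\symmetricStake$ and of $\sum_{\service}\symmetricAllocation{\service}$ over the attacked set, together with the description of the $\networkAdvance$ transition: writing $a = \symmetricAllocationAt{0}{\service_1} = \symmetricAllocationAt{0}{\service_2}$, Equations~\eqref{equation:stake_after_byzantine_services_cause_slashing} and~\eqref{equation:allocation_after_byzantine_services_cause_slashing} give $\symmetricStakeAt{1} = \symmetricStakeAt{0} - a$ (recall $a \le \symmetricStakeAt{0}$) and $\symmetricAllocationAt{1}{\service} = \min(\symmetricAllocationAt{0}{\service}, \symmetricStakeAt{1}) \le \symmetricAllocationAt{0}{\service}$ for every surviving service, while prizes and the threshold are unchanged; by Proposition~\ref{proposition:symmetric_network_remains_symmetric_after_byzantine_services_cause_slashing} $\networkStateAt{1}$ is symmetric, so Proposition~\ref{proposition:consolidated_attack_cost} applies to it with the same $\floor{\symmetricAttackThreshold|\allValidators|}$.

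From this I first extract a monotonicity sublemma: for any nonempty $S \subseteq \allServicesAt{1}$, the consolidated attack on $S$ in $\networkStateAt{1}$ has cost no larger than the consolidated attack on $S$ in $\networkStateAt{0}$ (each $\min$ in the cost formula can only shrink since $\symmetricStakeAt{1} \le \symmetricStakeAt{0}$ and $\symmetricAllocationAt{1}{\service} \le \symmetricAllocationAt{0}{\service}$), and the same total prize. Hence it suffices to exhibit a consolidated $\adversaryBudget$-costly attack \emph{in $\networkStateAt{0}$} whose attacked set avoids $\service_1$: applying the sublemma then transfers it to $\networkStateAt{1}$. This disposes of two easy cases. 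If $\service_1 \notin \allServicesAt{c}$, we are already done. If $\service_1 \in \allServicesAt{c}$ but $\service_2 \notin \allServicesAt{c}$, replace $\allServicesAt{c}$ by $(\allServicesAt{c} \setminus \{\service_1\}) \cup \{\service_2\}$; since $\symmetricAllocationAt{0}{\service_1} = \symmetricAllocationAt{0}{\service_2}$ the sum of allocations over the attacked set is unchanged, so by Proposition~\ref{proposition:consolidated_attack_cost} the cost is unchanged, and since $\attackPrizeAt{0}{\service_1} = \attackPrizeAt{0}{\service_2}$ the prize is unchanged; the new attack is still $\adversaryBudget$-costly and now avoids $\service_1$.

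The remaining case $\{\service_1,\service_2\} \subseteq \allServicesAt{c}$ is the main obstacle, because dropping $\service_1$ from the attacked set removes its prize $p = \attackPrizeAt{0}{\service_1}$ outright while the cost may fall by less, and here I cannot stay inside $\networkStateAt{0}$. I would split on $p$ versus $\symmetricAttackThreshold|\allValidators| \cdot a$. If $p \le \symmetricAttackThreshold|\allValidators| \cdot a$: take the consolidated attack on $\allServicesAt{c} \setminus \{\service_1\}$ directly in $\networkStateAt{1}$. Substituting $\symmetricStakeAt{1} = \symmetricStakeAt{0} - a$ and $\sum_{\service \in \allServicesAt{c}\setminus\{\service_1\}} \symmetricAllocationAt{1}{\service} \le \bigl(\sum_{\service \in \allServicesAt{c}} \symmetricAllocationAt{0}{\service}\bigr) - a$ into the cost formula of Proposition~\ref{proposition:consolidated_attack_cost} and using the identity $\min(x-c, y-c) = \min(x,y) - c$ for $c \ge 0$ on both the $\floor{\symmetricAttackThreshold|\allValidators|}$-term and the fractional term shows the $\networkStateAt{1}$-cost is at most the $\networkStateAt{0}$-cost of the consolidated attack on $\allServicesAt{c}$ minus $\symmetricAttackThreshold|\allValidators| \cdot a$, which is $\ge p$; since the prize drops by exactly $p$, the $\adversaryBudget$-costly inequality is preserved, and the attacked set is nonempty (it contains $\service_2$). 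If instead $p > \symmetricAttackThreshold|\allValidators| \cdot a$: attack $\service_2$ alone in $\networkStateAt{1}$; its consolidated cost is at most $\symmetricAttackThreshold|\allValidators| \cdot \symmetricAllocationAt{1}{\service_2} \le \symmetricAttackThreshold|\allValidators| \cdot a < p = \attackPrizeAt{1}{\service_2}$, so this attack is profitable, hence $\adversaryBudget$-costly. In every case we have built a $\adversaryBudget$-costly attack in $\networkStateAt{1}$, which finishes the contrapositive. The only delicate point is the monotone-substitution arithmetic with the nested $\min$'s and the possibly fractional weight $\symmetricAttackThreshold|\allValidators| - \floor{\symmetricAttackThreshold|\allValidators|}$; no further case analysis is required.
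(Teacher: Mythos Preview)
Your proof is correct and shares the paper's high-level strategy: argue the contrapositive, reduce to consolidated attacks via Proposition~\ref{proposition:beta_costly_attack_consolidated}, and case-split according to whether attacking the surviving twin $\service_2$ alone already works in $\networkStateAt{1}$ or whether one must port a larger attack across the $\networkAdvance$ transition. The organization differs. The paper splits first on whether the attacked set equals $\{\service_1\}$ and then, in the complementary case, on whether the singleton attack on $\service_1$ is itself $\adversaryBudget$-costly; in the residual sub-case it constructs an explicit capped attack $\allAttackStakesAt{1}$ and bounds its cost directly from the definition~\eqref{equation:validator_attack_cost}. You instead case on the membership of $\service_1$ and $\service_2$ in the attacked set, dispatch the two easy cases with a monotonicity sublemma (the cost formula of Proposition~\ref{proposition:consolidated_attack_cost} is monotone in $\symmetricStake$ and in $\sum_{\service}\symmetricAllocation{\service}$) together with a symmetry swap, and handle the hard case entirely through that closed form. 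This buys modularity: the sublemma and swap make the easy reductions transparent, whereas the paper folds the sub-case $\service_1 \notin \attackedServicesAt{0}$ somewhat implicitly into its general computation. One minor remark on the fractional-term arithmetic you flag: with $r = \symmetricAttackThreshold|\allValidators| - \floor{\symmetricAttackThreshold|\allValidators|} \le 1$, the two arguments of the second $\min$ in the cost formula drop by $a$ and $ra$ respectively, so the identity $\min(x-c,y-c)=\min(x,y)-c$ does not apply verbatim; but since $\symmetricStakeAt{0} - a \le \symmetricStakeAt{0} - ra$ the one-sided inequality you need still holds, and summing the two terms gives exactly the $\symmetricAttackThreshold|\allValidators|\cdot a$ drop you claim.
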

\begin{proof}
    We prove the contrapositive.
    Assume~$\networkStateAt{0}$ is not~$\adversaryBudget$-cryptoeconomically robust.
    Then, there exists a~$\adversaryBudget$-costly attack~$\allAttackStakesAt{0}$ in~$\networkStateAt{0}$ such that~$\attackCost{\allAttackStakesAt{0}} \leq \totalAttackPrize{\allAttackStakesAt{0}} + \adversaryBudget$ and~$\attackedServicesAt{0} \neq \emptyset$.
    Assume that~$\networkStateAt{0}$ is consolidated, otherwise consolidate it and use that instead of~$\networkStateAt{0}$.

    First, let us consider the remaining stake and allocations in~$\networkStateAt{1} = \networkStateAt{0} \networkAdvance \left\{ \service_1 \right\}$.
    For all validators~$\validator \in \allValidatorsAt{1}$,
    \begin{multline}
        \label{equation:symmetric_network_less_robust_with_byzantine_services_stake_after_byzantine_services_cause_slashing}
        \stakeAt{1}{\validator}
        \underset{\eqref{equation:stake_after_byzantine_services_cause_slashing}}{=} \max \left( 0, \stakeAt{0}{v} - \sum_{\service \in \byzantineServices} \allocationAt{0}{\validator}{\service} \right)
        = \max \left( 0, \symmetricStakeAt{0} - \symmetricAllocationAt{0}{\service_1 } \right) \\
        = \symmetricStakeAt{0} - \symmetricAllocationAt{0}{\service_1} .
    \end{multline}
    For all validators~$\validator \in \allValidatorsAt{1}$ and all services~$\service \in \allServicesAt{1}$,
    \begin{multline}
        \label{equation:symmetric_network_less_robust_with_byzantine_services_allocation_after_byzantine_services_cause_slashing}
        \allocationAt{1}{\validator}{\service}
        \underset{\eqref{equation:allocation_after_byzantine_services_cause_slashing}}{=} \min \left( \allocationAt{0}{\validator}{\service}, \stakeAt{1}{\validator} \right) = \min \left( \symmetricAllocationAt{0}{\service}, \symmetricStakeAt{1} \right) \\
        = \min \left( \symmetricAllocationAt{0}{\service}, \symmetricStakeAt{0} - \symmetricAllocationAt{0}{\service_1} \right) .
    \end{multline}
    
    Now, Consider two cases.
    First, assume
    \begin{equation}
        \label{equation:symmetric_network_less_robust_with_byzantine_services_first_case}
        \attackedServicesAt{0} = \left\{ \service_1 \right\} .
    \end{equation}
    We show it implies that~$\networkStateAt{1}$ is not~$\adversaryBudget$-cryptoeconomically robust.
    
    Due to Proposition~\ref{proposition:consolidated_attack_cost}, the cost of~$\allAttackStakesAt{0}$ is
    \begin{multline}
        \label{equation:symmetric_network_less_robust_with_byzantine_services_first_case_cost}
        \attackCostAt{0}{\allAttackStakesAt{0}}
        = \floor{\symmetricAttackThreshold | \allValidators |} \cdot \min \left( \symmetricStakeAt{0}, \sum_{\service \in \attackedServicesAt{0}} \symmetricAllocationAt{0}{\service} \right) \\
        + \min \left( \symmetricStakeAt{0}, \left( \symmetricAttackThreshold | \allValidators | - \floor{\symmetricAttackThreshold | \allValidators |} \right) \sum_{\service \in \attackedServicesAt{0}} \symmetricAllocationAt{0}{\service} \right) \\
        \underset{\eqref{equation:symmetric_network_less_robust_with_byzantine_services_first_case}}{=} \floor{\symmetricAttackThreshold | \allValidators |} \cdot \min \left( \symmetricStakeAt{0}, \symmetricAllocationAt{0}{\service_1} \right) + \min \left( \symmetricStakeAt{0}, \left( \symmetricAttackThreshold | \allValidators | - \floor{\symmetricAttackThreshold | \allValidators |} \right) \symmetricAllocationAt{0}{\service_1} \right) \\
        = \floor{\symmetricAttackThreshold | \allValidators |} \cdot \symmetricAllocationAt{0}{\service_1} + \left( \symmetricAttackThreshold | \allValidators | - \floor{\symmetricAttackThreshold | \allValidators |} \right) \symmetricAllocationAt{0}{\service_1} \\
        = \symmetricAttackThreshold | \allValidators | \cdot \symmetricAllocationAt{0}{\service_1} .
    \end{multline}
    Since~$\allAttackStakesAt{0}$ targets only service~$\service_1$, we have~$\totalAttackPrizeAt{0}{\allAttackStakesAt{0}} = \attackPrize{\service_1}$.
    And because~$\allAttackStakesAt{0}$ is~$\adversaryBudget$-costly, we have
    \begin{equation}
        \label{equation:symmetric_network_less_robust_with_byzantine_services_first_case_prize_constraint}
        \attackPrize{\service_1} + \adversaryBudget \geq \attackCostAt{0}{\allAttackStakesAt{0}} 
        \underset{\eqref{equation:symmetric_network_less_robust_with_byzantine_services_first_case_cost}}{=} \symmetricAttackThreshold | \allValidators | \cdot \symmetricAllocationAt{0}{\service_1} .
    \end{equation}
    
    Consider the consolidated attack~$\allAttackStakesAt{1}$ that targets~$\service_2$ in network~$\networkStateAt{1}$.
    Due to Proposition~\ref{proposition:consolidated_attack_cost}, and developing similarly using the fact that only one service is attacked, we get:
    \begin{multline}
        \label{equation:symmetric_network_less_robust_with_byzantine_services_first_case_cost_1}
        \attackCostAt{1}{\allAttackStakesAt{1}}
        = \symmetricAttackThreshold | \allValidators | \cdot \symmetricAllocationAt{0}{\service_2}
        \underset{\eqref{equation:symmetric_network_less_robust_with_byzantine_services_allocation_after_byzantine_services_cause_slashing}}{=} \symmetricAttackThreshold | \allValidators | \cdot \min \left( \symmetricAllocationAt{0}{\service_2}, \symmetricStakeAt{0} - \symmetricAllocationAt{0}{\service_1} \right) \\
        = \symmetricAttackThreshold | \allValidators | \cdot \min \left( \symmetricAllocationAt{0}{\service_1}, \symmetricStakeAt{0} - \symmetricAllocationAt{0}{\service_1} \right) \\
        \leq \symmetricAttackThreshold | \allValidators | \cdot \symmetricAllocationAt{0}{\service_1}
        \underset{\eqref{equation:symmetric_network_less_robust_with_byzantine_services_first_case_prize_constraint}}{\leq} \attackPrize{\service_1} + \adversaryBudget .
    \end{multline}

    Since~$\allAttackStakes{1}$ targets only service~$\service_2$, we have~$\totalAttackPrizeAt{1}{\allAttackStakesAt{1}} = \attackPrize{\service_2}$.

    Combining what we have, we get
    \begin{equation}
        \totalAttackPrizeAt{1}{\allAttackStakesAt{1}} + \adversaryBudget
        \geq \attackPrize{\service_2} + \adversaryBudget
        = \attackPrize{\service_1} + \adversaryBudget
        \underset{\eqref{equation:symmetric_network_less_robust_with_byzantine_services_first_case_cost_1}}{\geq} \attackCostAt{1}{\allAttackStakesAt{1}} .
    \end{equation}
    Therefore,~$\allAttackStakesAt{1}$ is~$\adversaryBudget$-costly, and due to Proposition~\ref{proposition:restaking_network_robustness},~$\networkStateAt{1}$ is not~$\adversaryBudget$-cryptoeconomically robust.

    Now, consider the other case where
    \begin{equation}
        \label{equation:symmetric_network_less_robust_with_byzantine_services_second_case}
        \attackedServicesAt{0} \neq \left\{ \service_1 \right\} .
    \end{equation}
    Furthermore, denote by~$\attackedServicesAt{2}$ the attack which we used in the previous case, namely, the one where $\attackedServices = \left\{ \service_1 \right\}$.
    Assume that it is not~$\adversaryBudget$-costly.
    Otherwise, we can use the previous case with~$\attackedServicesAt{2}$ to deduce that~$\networkStateAt{1}$ is not~$\adversaryBudget$-cryptoeconomically robust.
    
    Now, we show that~$\networkStateAt{1}$ is not~$\adversaryBudget$-cryptoeconomically robust.
    First, since~$\attackedServicesAt{2}$ is not~$\adversaryBudget$-costly, we have
    \begin{equation}
        \attackCostAt{0}{\allAttackStakesAt{2}}
        > \totalAttackPrizeAt{0}{\allAttackStakesAt{2}} + \adversaryBudget
        \geq \totalAttackPrizeAt{0}{\allAttackStakesAt{2}} .
    \end{equation}
    As in the previous case, we have~$\totalAttackPrizeAt{0}{\allAttackStakesAt{2}} = \attackPrize{\service_2}$, and~$\attackCostAt{1}{\allAttackStakesAt{2}} = \symmetricAttackThreshold | \allValidators | \cdot \symmetricAllocationAt{0}{\service_1}$~(\eqreft{equation:symmetric_network_less_robust_with_byzantine_services_first_case_cost}).
    Therefore, we have
    \begin{equation}
        \label{equation:symmetric_network_less_robust_with_byzantine_services_second_case_prize_constraint}
        \symmetricAttackThreshold | \allValidators | \cdot \symmetricAllocationAt{0}{\service_1} > \attackPrize{\service_2} .
    \end{equation}
    
    Now, since~$\allAttackStakesAt{0}$ is a consolidated attack, for all~$i=1, ..., |\allValidators|$ and all services~$\service \in \attackedServicesAt{0}$, we have
    \begin{equation}
        \label{equation:symmetric_network_less_robust_with_byzantine_services_attack_stake_in_alpha_0}
        \attackStakeAt{0}{\validator_i}{\service}
        \underset{\eqref{equation:consolidated_attack}}{=} \begin{cases}
            \symmetricAllocationAt{0}{\service} & \text{if } i \leq \floor{\symmetricAttackThreshold |\allValidators|} ; \\
            \left( \symmetricAttackThreshold |\allValidators| - \floor{\symmetricAttackThreshold |\allValidators|} \right) \symmetricAllocationAt{0}{\service} & \text{if } i = \floor{\symmetricAttackThreshold |\allValidators|} + 1 ; \\
            0 & \text{otherwise} .
        \end{cases}
    \end{equation}
    And for all other services, we have
    \begin{equation}
        \attackStakeAt{0}{\validator_i}{\service} = 0 .
    \end{equation}
      
    Consider the attack~$\allAttackStakesAt{1}$ in network~$\networkStateAt{1}$, which is the same as~$\allAttackStakesAt{0}$, capped at their new allocations, and with service~$\service_1$ removed.
    We have for all validators~$\validator \in \allValidators$ and all services~$\service \in \allServicesAt{1} = \allServicesAt{0} \setminus \left\{ \service_1 \right\}$,
    \begin{equation}
        \label{equation:symmetric_network_less_robust_with_byzantine_services_attack_stake_in_alpha_1}
        \attackStakeAt{1}{\validator}{\service}
        = \min \left( \attackStakeAt{0}{\validator}{\service}, \symmetricAllocationAt{1}{\service} \right) .
    \end{equation}
    Due to~\eqreft{equation:symmetric_network_less_robust_with_byzantine_services_allocation_after_byzantine_services_cause_slashing}, we have for all~$i=1, ..., |\allValidators|$ and all services~$\service \in \attackedServicesAt{0} \setminus \left\{ \service_1 \right\}$,
    \begin{multline}
        \label{equation:symmetric_network_less_robust_with_byzantine_services_attack_stake_in_alpha_1_allocation}
        \attackStakeAt{1}{\validator_i}{\service} \\
        \underset{\eqref{equation:symmetric_network_less_robust_with_byzantine_services_allocation_after_byzantine_services_cause_slashing}}{=} \begin{cases}
            \min \left( \symmetricAllocationAt{0}{\service}, \symmetricAllocationAt{1}{\service} \right) & \text{if } i \leq \floor{\symmetricAttackThreshold |\allValidators|} ; \\
            \min \left( \left( \symmetricAttackThreshold |\allValidators| - \floor{\symmetricAttackThreshold |\allValidators|} \right) \symmetricAllocationAt{0}{\service}, \symmetricAllocationAt{1}{\service} \right) & \text{if } i = \floor{\symmetricAttackThreshold |\allValidators|} + 1 ; \\
            0 & \text{otherwise} ;
        \end{cases} \\
        = \begin{cases}
            \symmetricAllocationAt{1}{\service} & \text{if } i \leq \floor{\symmetricAttackThreshold |\allValidators|} ; \\
            \min \left( \left( \symmetricAttackThreshold |\allValidators| - \floor{\symmetricAttackThreshold |\allValidators|} \right) \symmetricAllocationAt{0}{\service}, \symmetricAllocationAt{1}{\service} \right) & \text{if } i = \floor{\symmetricAttackThreshold |\allValidators|} + 1 ; \\
            0 & \text{otherwise} .
        \end{cases}
    \end{multline}
    And for all other services, we have
    \begin{equation}
        \attackStakeAt{1}{\validator_i}{\service} = 0 .
    \end{equation}

    The cost of~$\allAttackStakesAt{1}$ is
    \begin{multline}
        \label{equation:symmetric_network_less_robust_with_byzantine_services_attack_cost_in_alpha_1}
        \attackCostAt{1}{\allAttackStakesAt{1}}
        \underset{\eqref{equation:total_attack_cost}}{=} \sum_{\validator \in \allValidators} \validatorAttackCostAt{1}{\allAttackStakesAt{1}}{\validator}
        \underset{\eqref{equation:validator_attack_cost}}{=} \sum_{\validator \in \allValidators} \min \left( \symmetricStakeAt{1}, \sum_{\service \in \allServicesAt{1}} \attackStakeAt{1}{\validator}{\service} \right) \\
        \underset{\eqref{equation:symmetric_network_less_robust_with_byzantine_services_attack_stake_in_alpha_1}}{\leq} \sum_{\validator \in \allValidators} \min \left( \symmetricStakeAt{1}, \sum_{\service \in \allServicesAt{1}} \attackStakeAt{0}{\validator}{\service} \right) \\
        = \sum_{\validator \in \allValidators} \min \left( \symmetricStakeAt{1}, \sum_{\service \in \allServicesAt{0}} \attackStakeAt{0}{\validator}{\service} - \attackStakeAt{0}{\validator}{\service_1} \right) \\
        \underset{\eqref{equation:symmetric_network_less_robust_with_byzantine_services_stake_after_byzantine_services_cause_slashing}}{=} \sum_{\validator \in \allValidators} \min \left( \symmetricStakeAt{0} - \symmetricAllocationAt{0}{\service_1}, \sum_{\service \in \allServicesAt{0}} \attackStakeAt{0}{\validator}{\service} - \attackStakeAt{0}{\validator}{\service_1} \right) \\
        \leq \sum_{\validator \in \allValidators} \min \left( \symmetricStakeAt{0} - \attackStakeAt{0}{\validator}{\service_1}, \sum_{\service \in \allServicesAt{0}} \attackStakeAt{0}{\validator}{\service} - \attackStakeAt{0}{\validator}{\service_1} \right) \\
        = \sum_{\validator \in \allValidators} \left( \min \left( \symmetricStakeAt{0}, \sum_{\service \in \allServicesAt{0}} \attackStakeAt{0}{\validator}{\service} \right) - \attackStakeAt{0}{\validator}{\service_1} \right) \\
        = \sum_{\validator \in \allValidators} \min \left( \symmetricStakeAt{0}, \sum_{\service \in \allServicesAt{0}} \attackStakeAt{0}{\validator}{\service} \right) - \sum_{\validator \in \allValidators} \attackStakeAt{0}{\validator}{\service_1} \\
        \underset{\eqref{equation:symmetric_network_less_robust_with_byzantine_services_attack_stake_in_alpha_0}}{=} \sum_{\validator \in \allValidators} \min \left( \symmetricStakeAt{0}, \sum_{\service \in \allServicesAt{0}} \attackStakeAt{0}{\validator}{\service} \right) - \symmetricAttackThreshold | \allValidators | \cdot \symmetricAllocationAt{0}{\service_1} \\
        \underset{\eqref{equation:validator_attack_cost}}{=} \sum_{\validator \in \allValidators} \validatorAttackCostAt{0}{\validator}{\allAttackStakesAt{0}} - \symmetricAttackThreshold | \allValidators | \cdot \symmetricAllocationAt{0}{\service_1}
        \underset{\eqref{equation:total_attack_cost}}{=} \attackCostAt{0}{\allAttackStakesAt{0}} - \symmetricAttackThreshold | \allValidators | \cdot \symmetricAllocationAt{0}{\service_1} \\
        \underset{\eqref{equation:symmetric_network_less_robust_with_byzantine_services_second_case_prize_constraint}}{<} \attackCostAt{0}{\allAttackStakesAt{0}} - \attackPrize{\service_2}
    \end{multline}

    Now, we derive the profit of~$\allAttackStakesAt{1}$.
    To do so, we need to find the attacked services in~$\allAttackStakesAt{1}$.
    And first calculate for all services~${\service \in \attackedServicesAt{0} \setminus \left\{ \service_1 \right\}}$
    \begin{multline}
        \sum_{\validator \in \allValidators} \attackStakeAt{1}{\validator}{\service} \\
        \underset{\eqref{equation:symmetric_network_less_robust_with_byzantine_services_attack_stake_in_alpha_1_allocation}}
        = \floor{\symmetricAttackThreshold |\allValidators|} \symmetricAllocationAt{1}{\service} + \min \left( \left( \symmetricAttackThreshold |\allValidators| - \floor{\symmetricAttackThreshold |\allValidators|} \right) \symmetricAllocationAt{0}{\service}, \symmetricAllocationAt{1}{\service} \right) \\ 
        \geq \floor{\symmetricAttackThreshold |\allValidators|} \symmetricAllocationAt{1}{\service} \\
        + \min \left( \left( \symmetricAttackThreshold |\allValidators| - \floor{\symmetricAttackThreshold |\allValidators|} \right) \symmetricAllocationAt{0}{\service}, \left( \symmetricAttackThreshold |\allValidators| - \floor{\symmetricAttackThreshold |\allValidators|} \right) \symmetricAllocationAt{1}{\service} \right) \\
        \geq \floor{\symmetricAttackThreshold |\allValidators|} \symmetricAllocationAt{1}{\service} + \left( \symmetricAttackThreshold |\allValidators| - \floor{\symmetricAttackThreshold |\allValidators|} \right) \symmetricAllocationAt{1}{\service}
        = \symmetricAttackThreshold |\allValidators| \symmetricAllocationAt{1}{\service} \\
        = \symmetricAttackThreshold \sum_{\validator \in \allValidators} \symmetricAllocationAt{1}{\service} .
    \end{multline}
    Hence,
    \begin{equation}
        \label{equation:symmetric_network_less_robust_with_byzantine_services_attacked_services_in_alpha_1}
        \attackedServicesAt{0} \setminus \left\{ \service_1 \right\} \subseteq \attackedServicesAt{1} .
    \end{equation}
    This implies that
    \begin{multline}
        \label{equation:symmetric_network_less_robust_with_byzantine_services_total_attack_prize_in_alpha_1}
        \totalAttackPrizeAt{0}{\allAttackStakesAt{0}}
        \underset{\eqref{equation:total_attack_prize}}{=} \sum_{\service \in \attackedServicesAt{0}} \attackPrize{\service}
        \leq \sum_{\service \in \attackedServicesAt{0} \setminus \left\{ \service_1 \right\}} \attackPrize{\service} + \attackPrize{\service_1} \\
        \underset{\eqref{equation:symmetric_network_less_robust_with_byzantine_services_attacked_services_in_alpha_1}}{\leq} \sum_{\service \in \attackedServicesAt{1}} \attackPrize{\service} + \attackPrize{\service_1}
        \underset{\eqref{equation:total_attack_prize}}{=} \totalAttackPrizeAt{1}{\allAttackStakesAt{1}} + \attackPrize{\service_1} .
    \end{multline}

    Now, recall that~$\attackedServicesAt{0} \neq \emptyset$.
    It implies that~$\attackedServicesAt{1} \neq \emptyset$ as well.
    It remains to show that~${\attackCostAt{1}{\allAttackStakes{1}} \leq \totalAttackPrizeAt{1}{\allAttackStakes{1}} + \adversaryBudget}$.
    For that we use the fact~$\allAttackStakesAt{0}$ is~$\adversaryBudget$-costly:
    \begin{equation}
        \label{equation:symmetric_network_less_robust_with_byzantine_services_attack_cost_in_alpha_0}
        \attackCostAt{0}{\allAttackStakesAt{0}} \leq \totalAttackPrizeAt{0}{\allAttackStakesAt{0}} + \adversaryBudget .
    \end{equation}
    
    We are now ready to show that~$\allAttackStakesAt{1}$ is~$\adversaryBudget$-costly:
    \begin{multline}
        \attackCostAt{1}{\allAttackStakesAt{1}}
        \underset{\eqref{equation:symmetric_network_less_robust_with_byzantine_services_attack_cost_in_alpha_1}}{<} \attackCostAt{0}{\allAttackStakesAt{0}} - \attackPrize{\service_2}
        \underset{\eqref{equation:symmetric_network_less_robust_with_byzantine_services_attack_cost_in_alpha_0}}{\leq}
        \totalAttackPrizeAt{0}{\allAttackStakesAt{0}} + \adversaryBudget - \attackPrize{\service_2} \\
        \underset{\eqref{equation:symmetric_network_less_robust_with_byzantine_services_total_attack_prize_in_alpha_1}}{\leq} \totalAttackPrizeAt{1}{\allAttackStakesAt{1}} + \attackPrize{\service_1} + \adversaryBudget - \attackPrize{\service_2}
        = \totalAttackPrizeAt{1}{\allAttackStakesAt{1}} + \adversaryBudget .
    \end{multline}
    Hence, we get that in this case too, the network~$\networkStateAt{1}$ is not~$\adversaryBudget$-cryptoeconomically robust.
    This concludes the proof.
\end{proof}


\section{Designing and solving the MIPs}
\label{section:mip_appendix}


We first formulate the problem of determining the minimum adversary budget required to attack a restaking network as a MIP~(\S\ref{section:mip_appendix:budget_only_robustness}).
Then, we formulate as a MIP the problem of determining the maximum fraction of Byzantine services such that the network remains secure given an adversary budget~(\S\ref{section:mip_appendix:budget_and_byzantine_robustness}).
Afterward, we present how we solve the MIPs~(\S\ref{section:mip_appendix:solving_the_mips}).


\subsection{MIP for Cryptoeconomic Robustness}
\label{section:mip_appendix:budget_only_robustness}


Given a restaking network~$\networkState = (\allValidators, \allServices, \allStakes, \allAllocations, \allAttackThresholds, \allAttackPrizes)$, where~$\allValidators = \left\{ \validator_1, \ldots, \validator_n \right\}$,~$\allServices = \left\{ \service_1, \ldots, \service_m \right\}$, we formulate the problem of determining whether there exists a~$\adversaryBudget$-costly allocation-divisible attack as a mixed-integer program.


\subsubsection{Variables}


For each~$j \in \left\{ 1, \ldots, m \right\}$, denote by~$\milpAttackedServiceVariable{j}$ the variable that is~$1$ if service~$\service_j$ is attacked, and~$0$ otherwise.

For each~$i \in \left\{ 1, \ldots, n \right\}$ and~$j \in \left\{ 1, \ldots, m \right\}$, denote by~$\milpAttackStakeVariable{i}{j}$ the variable that is the amount of stake of validator~$\validator_i$ that is allocated to service~$\service_j$.
It can take any value in~$\left[ 0, \allocation{\validator_i}{\service_j} \right]$.

For each~$i \in \left\{ 1, \ldots, n \right\}$, denote by~$\milpValidatorCostVariable{i}$ the variable that is the cost of validator~$\validator_i$ in the attack, namely, the minimum between the stake used by the validator to attack and their stake.
It can take any value in~$\left[ 0, \stake{\validator_i} \right]$.
For each~$i \in \left\{ 1, \ldots, n \right\}$ we introduce an auxiliary variable~$\milpValidatorCostAuxiliaryVariable{i}$ that takes values in~$\left\{ 0, 1 \right\}$.
It will be used to calculate the attack cost of validators.


\subsubsection{Constraints}


First, as at least one service must be attacked, we have
\begin{equation}
    \sum_{j=1}^m \milpAttackedServiceVariable{j}
    \geq 1 .
\end{equation}

Denote by~$\milpLargeNumberFeasibility$ a large number used to make the constraints for having sufficient stake to attack apply only to attacked services.
For an attack have sufficient stake, it must be that for each~$j \in \left\{ 1, \ldots, m \right\}$
\begin{equation}
    \sum_{i=1}^n \milpAttackStakeVariable{i}{j}
    \geq \attackThreshold{\service_j} \cdot \sum_{i=1}^n \allocation{\validator_i}{\service_j} - \milpLargeNumberFeasibility \cdot (1 - \milpAttackedServiceVariable{j}) .
\end{equation}
This way, if service~$\service_j$ is not attacked, the constraint is trivially satisfied, and if it is attacked, the constraint ensures that the attack has enough stake.
For this to hold, we must have~$\milpLargeNumberFeasibility \geq \attackThreshold{\service_j} \cdot \sum_{i=1}^n \allocation{\validator_i}{\service_j}$ for all~$j \in \left\{ 1, \ldots, m \right\}$.

The attack cost of a validator~$\validator_i$ is~$\min \left( \stake{\validator_i}, \sum_{j=1}^m \milpAttackStakeVariable{i}{j} \right)$.
Also, denote by~$\milpLargeNumberCost$ a large number used to calculate the attack cost of validators.
We then introduce the following constraints:
\begin{align}
    \milpValidatorCostVariable{i}
    &\leq \stake{\validator_i} , \\
    \milpValidatorCostVariable{i}
    &\leq \sum_{j=1}^m \milpAttackStakeVariable{i}{j} , \\
    \label{equation:milp_validator_cost_lower_bound_validator_stake}
    \milpValidatorCostVariable{i}
    &\geq \stake{\validator_i} - \milpLargeNumberCost \cdot \milpValidatorCostAuxiliaryVariable{i} , \\
    \label{equation:milp_validator_cost_lower_bound_attack_stake}
    \milpValidatorCostVariable{i}
    &\geq \sum_{j=1}^m \milpAttackStakeVariable{i}{j} - \milpLargeNumberCost \cdot (1 - \milpValidatorCostAuxiliaryVariable{i}) .
\end{align}
This way, if~$\milpValidatorCostAuxiliaryVariable{i} = 0$,~\eqreft{equation:milp_validator_cost_lower_bound_validator_stake} ensures that the attack cost of validator~$\validator_i$ must be equal to~$\stake{\validator_i}$ and~\eqreft{equation:milp_validator_cost_lower_bound_attack_stake} is trivially satisfied; and if~$\milpValidatorCostAuxiliaryVariable{i} = 1$,~\eqreft{equation:milp_validator_cost_lower_bound_attack_stake} ensures that the attack cost of validator~$\validator_i$ must be equal to~$\sum_{j=1}^m \milpAttackStakeVariable{i}{j}$ and~\eqreft{equation:milp_validator_cost_lower_bound_validator_stake} is trivially satisfied.
For this to hold, we must have~$\milpLargeNumberCost \geq \stake{\validator_i}$ and~$\milpLargeNumberCost \geq \sum_{j=1}^m \milpAttackStakeVariable{i}{j}$ for all~$i \in \left\{ 1, \ldots, n \right\}$.


\subsubsection{Constants}


We pick the constants~$\milpLargeNumberFeasibility$ and~$\milpLargeNumberCost$ as follows:
\begin{align}
    \milpLargeNumberFeasibility
    &= \max_{j \in \left\{ 1, \ldots, m \right\}} \left\{ \attackThreshold{\service_j} \cdot \sum_{i=1}^n \allocation{\validator_i}{\service_j} \right\} , \\
    \milpLargeNumberCost
    &= \max \left\{ \max_{i \in \left\{ 1, \ldots, n \right\}} \stake{\validator_i}, \max_{i \in \left\{ 1, \ldots, n \right\}} \sum_{j=1}^m \allocation{\validator_i}{\service_j} \right\} .
\end{align}


\subsubsection{Objective}


Let~$\milpVariables$ denote the tuple of all variables we defined above:
\begin{equation}
    \milpVariables = \left( \left( \milpValidatorCostVariable{i}, \milpValidatorCostAuxiliaryVariable{i} \right)_{i=1}^n, \left( \milpAttackedServiceVariable{j} \right)_{j=1}^m, \left( \milpAttackStakeVariable{i}{j} \right)_{i=1,j=1}^{n,m} \right) .
\end{equation}
The objective is to maximize the profit of the attack, namely, the total attack prize minus the total attack cost:
\begin{equation}
    \max_{\milpVariables} \sum_{j=1}^m \attackPrize{\service_j} \cdot \milpAttackedServiceVariable{j} - \sum_{i=1}^n \milpValidatorCostVariable{i} .
\end{equation}

If the optimum~$y$ we find is greater or equal to~$0$, then the network is not secure.
And if it is less than~$0$, then the network is secure and is~$(-y)$-budget robust.


\subsubsection{MIP}


Fig.~\ref{figure:budget_only_robustness_mip} summarizes the previous paragraphs.
It presents the MIP that determines the existence of a~$\adversaryBudget$-costly allocation-divisible attack in a restaking network~${\networkState = (\allValidators, \allServices, \allStakes, \allAllocations, \allAttackThresholds, \allAttackPrizes)}$.

\begin{figure*}[t]
    \centering
    \boxed{
    \begin{minipage}{0.65\textwidth}
        \begin{alignat}{2}
            & \max_{\milpVariables} \quad && \sum_{j=1}^m \attackPrize{\service_j} \cdot \milpAttackedServiceVariable{j} - \sum_{i=1}^n \milpValidatorCostVariable{i} \\
            & \text{subject to} \quad && \sum_{j=1}^m \milpAttackedServiceVariable{j} \geq 1 ; \\
            & \forall i \in \left\{ 1, \ldots, n \right\}: \quad && 0 \leq \milpValidatorCostVariable{i} \leq \stake{\validator_i}: , \\
            & &&\milpValidatorCostAuxiliaryVariable{i} \in \{0,1\} , \\
            & && \milpValidatorCostVariable{i} \leq \sum_{j=1}^m \milpAttackStakeVariable{i}{j} , \\
            & && \milpValidatorCostVariable{i} \geq \stake{\validator_i} - \milpLargeNumberCost \cdot \milpValidatorCostAuxiliaryVariable{i} , \\
            & && \milpValidatorCostVariable{i} \geq \sum_{j=1}^m \milpAttackStakeVariable{i}{j} - \milpLargeNumberCost \cdot (1 - \milpValidatorCostAuxiliaryVariable{i}) ; \\
            & \forall j \in \left\{ 1, \ldots, m \right\}: \quad && \milpAttackedServiceVariable{j} \in \{0,1\} , \\
            & && \sum_{i=1}^n \milpAttackStakeVariable{i}{j} \geq \attackThreshold{\service_j} \cdot \sum_{i=1}^n \allocation{\validator_i}{\service_j} - \milpLargeNumberFeasibility \cdot (1 - \milpAttackedServiceVariable{j}) ; \\
            & \forall i,j \in \left\{ 1, \ldots, n \right\} \times \left\{ 1, \ldots, m \right\}: \quad && 0 \leq \milpAttackStakeVariable{i}{j} \leq \allocation{\validator_i}{\service_j} .
        \end{alignat}
    \end{minipage}
    }
    \caption{MIP for budget-only robustness.}
    \label{figure:budget_only_robustness_mip}
    \Description{MIP for budget-only robustness.}
\end{figure*}


\subsection{MIP for Budget-and-Byzantine Robustness}
\label{section:mip_appendix:budget_and_byzantine_robustness}


Given a restaking network~$\networkState = (\allValidators, \allServices, \allStakes, \allAllocations, \allAttackThresholds, \allAttackPrizes)$, where~$\allValidators = \left\{ \validator_1, \ldots, \validator_n \right\}$,~$\allServices = \left\{ \service_1, \ldots, \service_m \right\}$, and an adversary budget~$\adversaryBudget$, we formulate the problem of determining the maximum fraction~$\maxByzantineServices$ of Byzantine services such that the network is~$(\maxByzantineServices, \adversaryBudget)$-robust.
This implies that for all~$\maxByzantineServices' \leq \maxByzantineServices$, the network is also~$(\maxByzantineServices', \adversaryBudget)$-robust.


\subsubsection{Variables}


Similar to the previous MIP, we define variables for whether service~$\service_j$ is attacked~$\milpAttackedServiceVariable{j}$, for the stake validator~$\validator_i$ uses to attack service~$\service_j$~$\milpAttackStakeVariable{i}{j}$, and for the attack cost of validator~$\validator_i$~$\milpValidatorCostVariable{i}$.
We also define the auxiliary variables~$\milpValidatorCostAuxiliaryVariable{i}$ to calculate the attack cost of validators.

Unlike the previous MIP, we define new variables as follows.
For each~$j \in \left\{ 1, \ldots, m \right\}$, set~$\milpByzantineServiceVariable{j}$ to be~$1$ if service~$\service_j$ is Byzantine, and~$0$ otherwise.
For each~$i \in \left\{ 1, \ldots, n \right\}$, denote by~$\milpRemainingStakeVariable{i}$ the amount of stake of validator~$\validator_i$ that remains after Byzantine services cause slashing.
It can take any value in~$\left[ 0, \stake{\validator_i} \right]$.
For each~$i \in \left\{ 1, \ldots, n \right\}$ and~$j \in \left\{ 1, \ldots, m \right\}$, denote by~$\milpRemainingAllocationVariable{i}{j}$ the amount of stake of validator~$\validator_i$ that remains allocated to service~$\service_j$ after Byzantine services cause slashing.
It can take any value in~$\left[ 0, \allocation{\validator_i}{\service_j} \right]$.

We introduce the auxiliary variable~$\milpByzantineServiceAuxiliaryVariable$ to ensure that either all services are Byzantine, or at least one service is attacked.
For each~$i \in \left\{ 1, \ldots, n \right\}$, we introduce the auxiliary variable~$\milpRemainingStakeAuxiliaryVariable{i}$.
For each~$i \in \left\{ 1, \ldots, n \right\}$ and~$j \in \left\{ 1, \ldots, m \right\}$, we introduce the auxiliary variable~$\milpRemainingAllocationAuxiliaryVariable{i}{j}$.
These take values in~$\left\{ 0, 1 \right\}$ and will be used to calculate the remaining stake and allocation of validators.


\subsubsection{Constraints}


We begin with most of the constraints that the previous MIP has.

First, as before, we define~$\milpLargeNumberFeasibility$ as a large number used to make the constraints for having sufficient stake to attack apply only to attacked services.
Then, for an attack have sufficient stake, it must be that for each~${j \in \left\{ 1, \ldots, m \right\}}$
\begin{equation}
    \sum_{i=1}^n \milpAttackStakeVariable{i}{j}
    \geq \attackThreshold{\service_j} \cdot \sum_{i=1}^n \milpRemainingAllocationVariable{i}{j} - \milpLargeNumberFeasibility \cdot (1 - \milpAttackedServiceVariable{j}) .
\end{equation}
This time we use~$\milpRemainingAllocationVariable{i}{j}$ instead of~$\allocation{\validator_i}{\service_j}$ as the remaining allocations depend on the Byzantine services.

Similarly, the attack cost of a validator~$\validator_i$ should be equal to $\min \left( \milpRemainingStakeVariable{i}, \sum_{j=1}^m \milpAttackStakeVariable{i}{j} \right)$ instead of $\min \left( \stake{\validator_i}, \sum_{j=1}^m \milpAttackStakeVariable{i}{j} \right)$.
So, we define~$\milpLargeNumberCost$ as before, and get the following constraints for each~${i \in \left\{ 1, \ldots, n \right\}}$:
\begin{align}
    \milpValidatorCostVariable{i}
    &\leq \milpRemainingStakeVariable{i} ; \\
    \milpValidatorCostVariable{i}
    &\leq \sum_{j=1}^m \milpAttackStakeVariable{i}{j} ; \\
    \milpValidatorCostVariable{i}
    &\geq \milpRemainingStakeVariable{i} - \milpLargeNumberCost \cdot \milpRemainingStakeAuxiliaryVariable{i} ; \\
    \milpValidatorCostVariable{i}
    &\geq \sum_{j=1}^m \milpAttackStakeVariable{i}{j} - \milpLargeNumberCost \cdot (1 - \milpRemainingStakeAuxiliaryVariable{i}) .
\end{align}

Another constraint we should specify is that an attack is~$\adversaryBudget$-costly.
This was present in the previous MIP implicitly, as the objective was to maximize the profit of the attack (or minimize the loss).
The total attack prize is~$\sum_{j=1}^m \attackPrize{\service_j} \cdot \milpAttackedServiceVariable{j}$.
The total attack cost is~$\sum_{i=1}^n \milpValidatorCostVariable{i}$.
So, we have
\begin{equation}
    \sum_{j=1}^m \attackPrize{\service_j} \cdot \milpAttackedServiceVariable{j} - \sum_{i=1}^n \milpValidatorCostVariable{i} \geq \adversaryBudget .
\end{equation}

Now, we specify constraints that are new to this MIP.
First, a Byzantine service cannot be attacked.
So, for each~$j \in \left\{ 1, \ldots, m \right\}$, we have
\begin{equation}
    \milpAttackedServiceVariable{j} + \milpByzantineServiceVariable{j} \leq 1 .
\end{equation}

Next, as before, at least one service must be attacked.
However, if all services are Byzantine, there is no service to attack.
So, we define~$\milpLargeNumberByzantineServices$ to be a large number used to ensure either that at least one service is attacked, or that all services are Byzantine.
We thus have the two following constraints:
\begin{align}
    \sum_{j=1}^m \milpAttackedServiceVariable{j} &\geq 1 - \milpLargeNumberByzantineServices \cdot \milpByzantineServiceAuxiliaryVariable , \\
    \sum_{j=1}^m \milpByzantineServiceVariable{j} &\geq \left| \allServices \right| - \milpLargeNumberByzantineServices \cdot (1 - \milpByzantineServiceAuxiliaryVariable) .
\end{align}

Next, we specify constraints for the remaining stake of validators.
The remaining stake of validator~$\validator_i$ is equal to~$\max \left( 0, \stake{\validator_i} - \sum_{j=1}^m \allocation{\validator_i}{\service_j} \cdot \milpByzantineServiceVariable{j} \right)$.
Denote~$\milpLargeNumberRemainingStake$ as a large number used to calculate the remaining stake of validators.
We thus have the following constraints for each~$i \in \left\{ 1, \ldots, n \right\}$:
\begin{align}
    \milpRemainingStakeVariable{i} &\geq \stake{\validator_i} - \sum_{j=1}^m \allocation{\validator_i}{\service_j} \cdot \milpByzantineServiceVariable{j} , \\
    \milpRemainingStakeVariable{i} &\geq 0 , \\
    \milpRemainingStakeVariable{i} &\leq \stake{\validator_i} - \sum_{j=1}^m \allocation{\validator_i}{\service_j} \cdot \milpByzantineServiceVariable{j} + \milpLargeNumberRemainingStake \cdot \milpRemainingStakeAuxiliaryVariable{i} , \\
    \milpRemainingStakeVariable{i} &\leq \milpLargeNumberRemainingStake \cdot (1 - \milpRemainingStakeAuxiliaryVariable{i}) .
\end{align}

Lastly, we specify constraints for the remaining allocation of validators.
For each~$i \in \left\{ 1, \ldots, n \right\}$ and~$j \in \left\{ 1, \ldots, m \right\}$, the remaining allocation of validator~$\validator_i$ to service~$\service_j$ is equal to~$\min \left( \allocation{\validator_i}{\service_j}, \milpRemainingStakeVariable{i} \right)$.
Denote~$\milpLargeNumberRemainingAllocation$ as a large number used to calculate the remaining allocation of validators.
We thus have the following constraints for each~$i \in \left\{ 1, \ldots, n \right\}$ and~$j \in \left\{ 1, \ldots, m \right\}$:
\begin{align}
    \milpRemainingAllocationVariable{i}{j} &\leq \allocation{\validator_i}{\service_j} , \\
    \milpRemainingAllocationVariable{i}{j} &\leq \milpRemainingStakeVariable{i} , \\
    \milpRemainingAllocationVariable{i}{j} &\geq \allocation{\validator_i}{\service_j} - \milpLargeNumberRemainingAllocation \cdot \milpRemainingAllocationAuxiliaryVariable{i}{j} , \\
    \milpRemainingAllocationVariable{i}{j} &\geq \milpRemainingStakeVariable{i} - \milpLargeNumberRemainingAllocation \cdot (1 - \milpRemainingAllocationAuxiliaryVariable{i}{j}) .
\end{align}


\subsubsection{Constants}


We pick the constants~$\milpLargeNumberFeasibility$,~$\milpLargeNumberCost$,~$\milpLargeNumberRemainingStake$,~$\milpLargeNumberRemainingAllocation$, and~$\milpLargeNumberByzantineServices$ as follows:
\begin{align}
    \milpLargeNumberFeasibility &= \max_{j \in \left\{ 1, \ldots, m \right\}} \left\{ \attackThreshold{\service_j} \cdot \sum_{i=1}^n \allocation{\validator_i}{\service_j} \right\} , \\
    \milpLargeNumberCost = \milpLargeNumberRemainingStake &= \max \left\{ \max_{i \in \left\{ 1, \ldots, n \right\}} \stake{\validator_i}, \max_{i \in \left\{ 1, \ldots, n \right\}} \sum_{j=1}^m \allocation{\validator_i}{\service_j} \right\} , \\
    \milpLargeNumberRemainingAllocation &= \max_{i \in \left\{ 1, \ldots, n \right\}} \stake{\validator_i} , \\
    \milpLargeNumberByzantineServices &= \left| \allServices \right| .
\end{align}


\subsubsection{Objective}


Let~$\milpVariables$ denote the concatenation of all variables we defined:
\begin{multline}
    \milpVariables = \left( \milpByzantineServiceAuxiliaryVariable, \left( \milpValidatorCostVariable{i}, \milpValidatorCostAuxiliaryVariable{i}, \milpRemainingStakeVariable{i}, \milpRemainingStakeAuxiliaryVariable{i} \right)_{i=1}^n, \left( \milpAttackedServiceVariable{j}, \milpByzantineServiceVariable{j} \right)_{j=1}^m, \right. \\
    \left. \left( \milpAttackStakeVariable{i}{j}, \milpRemainingAllocationVariable{i}{j}, \milpRemainingAllocationAuxiliaryVariable{i}{j} \right)_{i=1,j=1}^{n,m} \right) .
\end{multline}

We search for the maximum fraction of Byzantine services such that the network remains secure.
This is equivalent to searching for the minimum fraction of Byzantine services such that the network can be attacked.
We thus minimize the following objective function:
\begin{equation}
    \sum_{j=1}^m \frac{\attackPrize{\service_j}}{\attackThreshold{\service_j}} \milpByzantineServiceVariable{j} .
\end{equation}
A larger Byzantine service is more damaging than a smaller Byzantine service.
To negate this, we weight each service by the ratio of its attack prize to its attack threshold, that is the stake required to secure the service if it were the only one.


\subsubsection{MIP}


Fig.~\ref{figure:budget_and_byzantine_robustness_mip} summarizes the previous paragraphs.
It presents the MIP that, for a given restaking network~$\networkState$ and an adversary budget~$\adversaryBudget$, determines the maximum fraction of Byzantine services~$\maxByzantineServices$ such that the network is~$(\maxByzantineServices, \adversaryBudget)$-robust.
\begin{figure*}[tp]
    \centering
    \boxed{
    \begin{minipage}{0.65\textwidth}
        \begin{alignat}{2}
            & \min_{\milpVariables} \quad && \sum_{j=1}^m \frac{\attackPrize{\service_j}}{\attackThreshold{\service_j}} \milpByzantineServiceVariable{j} \\
            & \text{subject to} \quad && \sum_{j=1}^m \milpAttackedServiceVariable{j} \geq 1 - \milpLargeNumberByzantineServices \cdot \milpByzantineServiceAuxiliaryVariable , \\
            & && \sum_{j=1}^m \milpByzantineServiceVariable{j} \geq \left| \allServices \right| - \milpLargeNumberByzantineServices \cdot (1 - \milpByzantineServiceAuxiliaryVariable) , \\
            & && \sum_{j=1}^m \attackPrize{\service_j} \cdot \milpAttackedServiceVariable{j} - \sum_{i=1}^n \milpValidatorCostVariable{i} \geq \adversaryBudget ; \\
            & \forall i \in \left\{ 1, \ldots, n \right\}: \quad && 0 \leq \milpValidatorCostVariable{i} \leq \milpRemainingStakeVariable{i} , \\
            & && \milpValidatorCostAuxiliaryVariable{i} \in \{0,1\} , \\
            & && \milpValidatorCostVariable{i} \leq \sum_{j=1}^m \milpAttackStakeVariable{i}{j} , \\
            & && \milpValidatorCostVariable{i} \geq \milpRemainingStakeVariable{i} - \milpLargeNumberCost \cdot \milpValidatorCostAuxiliaryVariable{i} , \\
            & && \milpValidatorCostVariable{i} \geq \sum_{j=1}^m \milpAttackStakeVariable{i}{j} - \milpLargeNumberCost \cdot (1 - \milpValidatorCostAuxiliaryVariable{i}) ; \\
            & && 0 \leq \milpRemainingStakeVariable{i} \leq \stake{\validator_i} , \\
            & && \milpRemainingStakeAuxiliaryVariable{i} \in \{0,1\} , \\
            & && \milpRemainingStakeVariable{i} \geq \stake{\validator_i} - \sum_{j=1}^m \allocation{\validator_i}{\service_j} \cdot \milpByzantineServiceVariable{j} , \\
            & && \milpRemainingStakeVariable{i} \leq \stake{\validator_i} - \sum_{j=1}^m \allocation{\validator_i}{\service_j} \cdot \milpByzantineServiceVariable{j} + \milpLargeNumberRemainingStake \cdot \milpRemainingStakeAuxiliaryVariable{i} , \\
            & && \milpRemainingStakeVariable{i} \leq \milpLargeNumberRemainingStake \cdot (1 - \milpRemainingStakeAuxiliaryVariable{i}) ; \\
            & \forall j \in \left\{ 1, \ldots, m \right\}: \quad && \milpAttackedServiceVariable{j} \in \{0,1\} , \\
            & && \milpByzantineServiceVariable{j} \in \{0,1\} , \\
            & && \milpAttackedServiceVariable{j} + \milpByzantineServiceVariable{j} \leq 1 , \\
            & && \sum_{i=1}^n \milpAttackStakeVariable{i}{j} \geq \attackThreshold{\service_j} \cdot \sum_{i=1}^n \milpRemainingAllocationVariable{i}{j} - \milpLargeNumberFeasibility \cdot (1 - \milpAttackedServiceVariable{j}) ; \\
            & \forall i,j \in \left\{ 1, \ldots, n \right\} \times \left\{ 1, \ldots, m \right\}: \quad && 0 \leq \milpAttackStakeVariable{i}{j} \leq \milpRemainingAllocationVariable{i}{j} , \\
            & && \milpRemainingAllocationAuxiliaryVariable{i}{j} \in \{0,1\} , \\
            & && \milpRemainingAllocationVariable{i}{j} \leq \allocation{\validator_i}{\service_j} , \\
            & && \milpRemainingAllocationVariable{i}{j} \leq \milpRemainingStakeVariable{i} , \\
            & && \milpRemainingAllocationVariable{i}{j} \geq \allocation{\validator_i}{\service_j} - \milpLargeNumberRemainingAllocation \cdot \milpRemainingAllocationAuxiliaryVariable{i}{j} , \\
            & & & \milpRemainingAllocationVariable{i}{j} \geq \milpRemainingStakeVariable{i} - \milpLargeNumberRemainingAllocation \cdot (1 - \milpRemainingAllocationAuxiliaryVariable{i}{j}) .
        \end{alignat}
    \end{minipage}
    }
    \caption{MIP for budget-and-byzantine robustness.}
    \label{figure:budget_and_byzantine_robustness_mip}
    \Description{MIP for budget-and-byzantine robustness.}
\end{figure*}


\subsection{Solving the MIPs}
\label{section:mip_appendix:solving_the_mips}

We solve the MIPs in Python~\anon{\cite{barzur2025code}}, dynamically generating any instance using NumPy~\cite{harris2020array} and then calling SciPy~\cite{2020SciPy-NMeth} to numerically solve the instance.
Under the hood, SciPy uses the dual revised simplex method~\cite{huangfu2018parallelizing} implemented in the library HiGHS~\cite{hall2023highs}.

We solve the MIPs with a precision of~$10^{-6}$, meaning that the solution we find is feasible, and the objective value is within~$10^{-6}$ of the true optimum.

For running time optimization, instead of solving the complete Robustness MIP for symmetric networks, we iterate over all possible fractions of Byzantine services, and for each fraction, simulate the network state caused by the Byzantine services and solve the Budget Robustness MIP.
This is only possible for symmetric networks, for which we can choose any services to be Byzantine according to the desired fraction as all would lead to the same network state.
But for asymmetric networks, different subsets of Byzantine services may lead to different network states, so we must use the complete Robustness MIP.


\section{Proofs Deferred from Section~\ref{section:incentives_for_a_target_restaking_degree}}
\label{appendix:proofs_from_section_incentives_for_a_target_restaking_degree}


\begin{theorem}[Theorem~\ref{theorem:incentives_for_a_target_restaking_degree:nash_equilibrium} restated]
    \label{theorem:incentives_for_a_target_restaking_degree:nash_equilibrium_appendix}
    Assume that for each service~$\service \in \allServices$, $\serviceReward{\service} > 0$ and~$\targetRestakingDegree \cdot \frac{\serviceReward{\service}}{\sum_{\service' \in \allServices} \serviceReward{\service'}} \leq 1$.
    Then, the strategy profile
    \begin{equation}
        \label{equation:incentives_for_a_target_restaking_degree:nash_allocations}
        \nashAllocation{\validator}{\service} = \targetRestakingDegree \cdot \frac{\serviceReward{\service}}{\sum_{\service' \in \allServices} \serviceReward{\service'}} \cdot \stake{\validator}
    \end{equation}
    is a Nash equilibrium, and it results in a restaking degree of~$\targetRestakingDegree$.
\end{theorem}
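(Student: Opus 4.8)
The plan is to verify the two assertions directly. The restaking-degree claim is immediate: summing $\nashAllocation{\validator}{\service}$ over $\service \in \allServices$ and dividing by $\stake{\validator}$ gives $\targetRestakingDegree \cdot \frac{\sum_{\service}\serviceReward{\service}}{\sum_{\service'}\serviceReward{\service'}} = \targetRestakingDegree$; and the hypothesis $\targetRestakingDegree \cdot \frac{\serviceReward{\service}}{\sum_{\service'}\serviceReward{\service'}} \le 1$ guarantees $\nashAllocation{\validator}{\service} \in [0,\stake{\validator}]$, so the profile is admissible. For the equilibrium property, fix a validator $\validator$ and write $S = \sum_{\validator'\in\allValidators}\stake{\validator'}$. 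First compute $\validator$'s payoff under $\allNashAllocations$: every validator has restaking degree exactly $\targetRestakingDegree$, so the indicator in \eqref{equation:incentives_for_a_target_restaking_degree:validator_utility} is active, and $\sum_{\validator'}\nashAllocation{\validator'}{\service} = \targetRestakingDegree\frac{\serviceReward{\service}}{\sum_{\service'}\serviceReward{\service'}}\, S$; the $\serviceReward{\service}$ terms cancel in each summand, leaving $\validatorUtility{\validator}{\allNashAllocations} = \frac{\stake{\validator}}{S}\sum_{\service}\serviceReward{\service}$, i.e. $\validator$'s stake-proportional share of the total reward pool. Since all $\serviceReward{\service}>0$ and $\stake{\validator}>0$, this baseline is strictly positive.

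Next I consider an arbitrary unilateral deviation by $\validator$ to allocations $(a_\service)_{\service\in\allServices}$ with $a_\service\in[0,\stake{\validator}]$, the other validators remaining at $\allNashAllocations$. If $\sum_\service a_\service > \targetRestakingDegree\stake{\validator}$ then $\restakingDegree{\validator} > \targetRestakingDegree$ and the deviation yields $0 < \validatorUtility{\validator}{\allNashAllocations}$, so it is not profitable. Otherwise $\sum_\service a_\service \le \targetRestakingDegree\stake{\validator}$, and, writing $A_\service := \sum_{\validator'\ne\validator}\nashAllocation{\validator'}{\service} = \targetRestakingDegree\frac{\serviceReward{\service}}{\sum_{\service'}\serviceReward{\service'}}(S-\stake{\validator})$, the deviation gives $\sum_{\service}\frac{a_\service\serviceReward{\service}}{a_\service + A_\service}$. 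The crux is to bound this expression by the baseline. Since every $\serviceReward{\service}>0$, put $c := \frac{\targetRestakingDegree(S-\stake{\validator})}{\sum_{\service'}\serviceReward{\service'}}$ so that $A_\service = c\,\serviceReward{\service}$, and substitute $a_\service = t_\service\,\serviceReward{\service}$ with $t_\service\ge 0$; the objective becomes $\sum_{\service}\serviceReward{\service}\,g(t_\service)$ where $g(t)=\tfrac{t}{t+c}$ is concave and increasing, and the budget constraint becomes $\sum_{\service}\serviceReward{\service}\,t_\service\le\targetRestakingDegree\stake{\validator}$. Applying Jensen's inequality with the probability weights $\serviceReward{\service}/\sum_{\service'}\serviceReward{\service'}$ and then monotonicity of $g$ yields $\sum_{\service}\serviceReward{\service}\,g(t_\service)\le\big(\sum_{\service'}\serviceReward{\service'}\big)\,g\!\big(\tfrac{\targetRestakingDegree\stake{\validator}}{\sum_{\service'}\serviceReward{\service'}}\big)$, and a one-line computation ($d+c = \targetRestakingDegree S/\sum_{\service'}\serviceReward{\service'}$ with $d = \targetRestakingDegree\stake{\validator}/\sum_{\service'}\serviceReward{\service'}$) shows the right-hand side equals $\frac{\stake{\validator}}{S}\sum_{\service}\serviceReward{\service}$, with equality exactly when all $t_\service$ coincide, i.e. when $a_\service = \nashAllocation{\validator}{\service}$. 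Hence no unilateral deviation strictly improves $\validator$'s utility, so $\allNashAllocations$ is a Nash equilibrium.

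The only mild subtlety is the degenerate single-validator case ($c = 0$, $A_\service = 0$), where $\validator$'s baseline payoff is $\sum_\service\serviceReward{\service}$ and any admissible deviation allocating a positive amount to each service attains the same value while allocating $0$ somewhere can only decrease it; the conclusion still holds. I expect the main obstacle to be precisely the bounding step in the previous paragraph: recognizing that the substitution $a_\service = t_\service\,\serviceReward{\service}$ recasts the reward as a $\serviceReward{}$-weighted average of a concave function is what makes Jensen applicable. An equivalent route is a first-order/KKT argument verifying that the gradient of $\validator$'s (concave) utility is proportional across services at $\allNashAllocations$, but that is somewhat more computational and requires separately checking the boundary faces $a_\service = 0$.
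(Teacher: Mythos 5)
Your proof is correct, and it reaches the conclusion by a genuinely different route from the paper's. The paper restricts attention to the face where the budget constraint $\sum_\service \allocationStrategy{\service} = \targetRestakingDegree\stake{\validator}$ is tight (after a monotonicity argument showing slack allocations are dominated), eliminates one variable, verifies concavity of the resulting function of $n-1$ variables, and solves the first-order conditions explicitly to locate the unique critical point — essentially the KKT route you mention as an alternative. You instead exploit the key structural fact that the other validators' aggregate allocation to each service is proportional to that service's reward, $A_\service = c\,\serviceReward{\service}$, so the change of variables $a_\service = t_\service\,\serviceReward{\service}$ turns the deviating validator's payoff into an $\serviceReward{}$-weighted average of the single concave increasing function $g(t)=t/(t+c)$; Jensen plus monotonicity then gives the bound in one step, with the equality case identifying the Nash allocation. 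Your argument buys brevity and a transparent equality characterization without any differentiation; the paper's buys a recipe that would survive if the proportionality $A_\service \propto \serviceReward{\service}$ were perturbed (the first-order system can still be solved), whereas your substitution depends on it. Your handling of the edge cases matches or exceeds the paper's: you check admissibility $\nashAllocation{\validator}{\service}\le\stake{\validator}$ from the hypothesis, and you treat the single-validator degeneracy $c=0$ explicitly, which the paper's removable-discontinuity convention glosses over. One small wording slip: ``the deviation yields $0 < \validatorUtility{\validator}{\allNashAllocations}$'' should read that the deviation yields utility $0$, which is strictly less than the positive baseline; the intent is clear and the logic is unaffected.
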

\begin{proof}
    We first show that in this strategy profile, all validators have a restaking degree of~$\targetRestakingDegree$.
    \begin{multline}
        \label{equation:incentives_for_a_target_restaking_degree:nash_restaking_degree}
        \restakingDegree{\validator}
        \underset{\eqref{equation:model:restaking_degree}}{=} \frac{\sum_{\service \in \allServices} \nashAllocation{\validator}{\service}}{\stake{\validator}}
        \underset{\eqref{equation:incentives_for_a_target_restaking_degree:nash_allocations}}{=} \frac{\sum_{\service \in \allServices} \targetRestakingDegree \cdot \frac{\serviceReward{\service}}{\sum_{\service' \in \allServices} \serviceReward{\service'}} \cdot \stake{\validator}}{\stake{\validator}} \\
        = \targetRestakingDegree \cdot \frac{\sum_{\service \in \allServices} \serviceReward{\service}}{\sum_{\service' \in \allServices} \serviceReward{\service'}}
        = \targetRestakingDegree .
    \end{multline}

    Next, we show that this strategy profile is a Nash equilibrium.
    To do so, we use~$f \cup g$ to denote a piecewise combination of~$f$ and~$g$. Formally, Let $f: A \to C$ and $g: B \to C$ such that~$A \cap B = \emptyset$. Then $f \cup g: A \cup B \to C$ is defined as $(f \cup g)(x) = f(x)$ for $x \in A$ and $(f \cup g)(x) = g(x)$ for $x \in B$.

    Fix a validator~$\validator$, and consider the strategy profile~$\allNashAllocations_{-\validator}$ of all validators except~$\validator$, namely,~$\allNashAllocations_{-\validator} = \restr{\allNashAllocations}{\left(\allValidators \setminus \left\{ \validator \right\}\right) \times \allServices}$.
    We need to show that for validator~$\validator$ it holds for any possible strategy~$\allAllocations_\validator: \left\{ \validator \right\} \times \allServices \to \positiveRealNumbers$ that
    \begin{equation}
        \label{equation:incentives_for_a_target_restaking_degree:nash_equilibrium}
        \validatorUtility{\validator}{\allNashAllocations} \geq \validatorUtility{\validator}{\allAllocations_\validator \cup \allNashAllocations_{-\validator}} .
    \end{equation}
    To do so, we develop the term on the right-hand side.

    But first, let~$\allServices = \{\service_1, ... \service_n\}$, and for all~${i \in [n]}$ denote~${\allocationStrategy{i} = \allAllocations_\validator(\validator, \service_i)}$.
    
    Now, let's develop the term on the right-hand side of~\eqreft{equation:incentives_for_a_target_restaking_degree:nash_equilibrium}.
    Consider 2 cases.
    First, if~$\sum_{i=1}^n \allocationStrategy{i} > \targetRestakingDegree \cdot \stake{\validator}$, then~$\restakingDegree{\validator} > \targetRestakingDegree$ and~$\validatorUtility{\validator}{\allAllocations_\validator \cup \allNashAllocations_{-\validator}} = 0$~(\eqreft{equation:incentives_for_a_target_restaking_degree:validator_utility}), and~\eqreft{equation:incentives_for_a_target_restaking_degree:nash_equilibrium} holds.

    Second, assume that~$\sum_{i=1}^n \allocationStrategy{i} \leq \targetRestakingDegree \cdot \stake{\validator}$, meaning that
    \begin{equation}
        \label{equation:incentives_for_a_target_restaking_degree:restaking_degree}
        \restakingDegree{\validator} \leq \targetRestakingDegree .
    \end{equation}
    Let
    \begin{equation}
        \label{equation:incentives_for_a_target_restaking_degree:all_allocations}
        \allAllocations = \allAllocations_\validator \cup \allNashAllocations_{-\validator} .
    \end{equation}
    We now get that
    \begin{multline}
        \validatorUtility{\validator}{\allAllocations_\validator \cup \allNashAllocations_{-\validator}}
        \underset{\eqref{equation:incentives_for_a_target_restaking_degree:all_allocations}}{=} \validatorUtility{\validator}{\allAllocations} \\
        \underset{\eqref{equation:incentives_for_a_target_restaking_degree:validator_utility}}{=} \begin{cases}
            \sum_{i=1}^n \frac{\allocation{\validator}{\service_i}}{\sum_{\validator' \in \allValidators} \allocation{\validator'}{\service_i}} \cdot \serviceReward{\service_i} & \text{if } \restakingDegree{\validator} \leq \targetRestakingDegree , \\
            0 & \text{otherwise} ;
        \end{cases} \\
        \underset{\eqref{equation:incentives_for_a_target_restaking_degree:restaking_degree}}{=} \sum_{i=1}^n \frac{\allocation{\validator}{\service_i}}{\sum_{\validator' \in \allValidators} \allocation{\validator'}{\service_i}} \cdot \serviceReward{\service_i} \\
        = \sum_{i=1}^n \frac{\allocation{\validator}{\service_i}}{\allocation{\validator}{\service_i} + \sum_{\validator' \in \allValidators \setminus \{ \validator \}} \allocation{\validator'}{\service_i}} \cdot \serviceReward{\service_i} \\
        \underset{\eqref{equation:incentives_for_a_target_restaking_degree:all_allocations}}{=} \sum_{i=1}^n \frac{\allocationStrategy{i}}{\allocationStrategy{i} + \sum_{\validator' \in \allValidators \setminus \{ \validator \}} \nashAllocation{\validator'}{\service_i}} \cdot \serviceReward{\service_i} \\
        = \sum_{i=1}^n \frac{1}{1 + \frac{1}{\allocationStrategy{i}} \cdot \sum_{\validator' \in \allValidators \setminus \{ \validator \}} \nashAllocation{\validator'}{\service_i}} \cdot \serviceReward{\service_i} .
    \end{multline}

    For simplicity, let
    \begin{equation}
        \label{equation:incentives_for_a_target_restaking_degree:nash_constant}
        \nashConstant{i} = \sum_{\validator' \in \allValidators \setminus \{ \validator \}} \nashAllocation{\validator'}{\service_i} ;
    \end{equation}
    these are non-negative constants with respect to the strategy of~$\validator$.
    We can then rewrite the utility of~$\validator$ as
    \begin{equation}
        \label{equation:incentives_for_a_target_restaking_degree:validator_utility_2}
        \validatorUtility{\validator}{\allAllocations_\validator \cup \allNashAllocations_{-\validator}}
        = \sum_{i=1}^n \frac{1}{1 + \frac{1}{\allocationStrategy{i}} \cdot \nashConstant{i}} \cdot \serviceReward{\service_i}
        = \sum_{i=1}^n \left( 1 + \frac{\nashConstant{i}}{\allocationStrategy{i}} \right)^{-1} \cdot \serviceReward{\service_i} .
    \end{equation}
    
    Now, we show that this utility is maximized when~$\allocationStrategy{i} = \nashAllocation{\validator}{\service_i}$ for all~$i \in [n]$.
    The term~$\validatorUtility{\validator}{\allAllocations_\validator \cup \allNashAllocations_{-\validator}}$ is a continuous function of the variables~$\left\{ \allocationStrategy{i} \right\}_{i=1}^n$ in a compact set defined by the inequalities:
    \begin{align}
        \forall i \in [n]; \quad \allocationStrategy{i} &\geq 0 , \text{and}\\
        \sum_{i=1}^n \allocationStrategy{i} &\leq \targetRestakingDegree \cdot \stake{\validator} .
    \end{align}
    The discontinuities where~$\allocationStrategy{i} = 0$ can be removed by substituting the result of~$\left( 1 + \frac{\nashConstant{i}}{\allocationStrategy{i}} \right)^{-1}$ to~$0$ at these points since this is the limit when~$\allocationStrategy{i}$ approaches~$0$.
    The function is continuous on a compact set, and thus attains a maximum.
    We now show that the maximum is attained when~$\allocationStrategy{i} = \nashAllocation{\validator}{\service_i}$.

    First, consider the case where~$\sum_{i=1}^n \allocationStrategy{i} < \targetRestakingDegree \cdot \stake{\validator}$.
    It must be that there is some~$i$ such that~$\allocationStrategy{i} < \nashAllocation{\validator}{\service_i}$, or otherwise the restaking degree of the validator would be at least~$\targetRestakingDegree$.
    This also implies that~$\allocationStrategy{i} < \stake{\validator}$.
    Without loss of generality, let~$i=n$.

    Pick~$\varepsilon$ such that~$\varepsilon < \stake{\validator} - \allocationStrategy{n}$.
    Consider an alternative strategy profile~$\allAllocations'_\validator$, and denote its value for all~$i \in [n]$ as~$\allocationStrategyPrime{i}$, which we choose to be
    \begin{equation}
        \label{equation:incentives_for_a_target_restaking_degree:all_allocations_prime}
        \allocationStrategyPrime{i}
        = \begin{cases}
            \allocationStrategy{i} + \varepsilon & \text{if } i = n , \\
            \allocationStrategy{i} & \text{otherwise} ;
        \end{cases}
    \end{equation}

    This profile is well-defined due to our choice of~$\varepsilon$, and it gives a strictly higher utility to~$\validator$ than~$\allAllocations_\validator$:
    \begin{multline}
        \validatorUtility{\validator}{\allAllocations'_\validator \cup \allNashAllocations_{-\validator}}
        \underset{\eqref{equation:incentives_for_a_target_restaking_degree:validator_utility_2}}{=} \sum_{i=1}^n \left( 1 + \frac{\nashConstant{i}}{\allocationStrategyPrime{i}} \right)^{-1} \cdot \serviceReward{\service_i} \\
        = \left( 1 + \frac{\nashConstant{n}}{\allocationStrategyPrime{n}} \right)^{-1} \cdot \serviceReward{\service_n} + \sum_{i=1}^{n-1} \left( 1 + \frac{\nashConstant{i}}{\allocationStrategyPrime{i}} \right)^{-1} \cdot \serviceReward{\service_i} \\
        \underset{\eqref{equation:incentives_for_a_target_restaking_degree:all_allocations_prime}}{=} \left( 1 + \frac{\nashConstant{n}}{\allocationStrategy{n} + \varepsilon} \right)^{-1} \cdot \serviceReward{\service_n} + \sum_{i=1}^{n-1} \left( 1 + \frac{\nashConstant{i}}{\allocationStrategy{i}} \right)^{-1} \cdot \serviceReward{\service_i} \\
        > \left( 1 + \frac{\nashConstant{n}}{\allocationStrategy{n}} \right)^{-1} \cdot \serviceReward{\service_n} + \sum_{i=1}^{n-1} \left( 1 + \frac{\nashConstant{i}}{\allocationStrategy{i}} \right)^{-1} \cdot \serviceReward{\service_i} \\
        = \sum_{i=1}^n \left( 1 + \frac{\nashConstant{i}}{\allocationStrategy{i}} \right)^{-1} \cdot \serviceReward{\service_i}
        \underset{\eqref{equation:incentives_for_a_target_restaking_degree:validator_utility_2}}{=} \validatorUtility{\validator}{\allAllocations_\validator \cup \allNashAllocations_{-\validator}} .
    \end{multline}

    So, the strategy we considered~$\allAllocations_\validator$ is not a maximum.
    We now restrict our search for the maximum to the set of strategy profiles where
    \begin{equation}
        \label{equation:incentives_for_a_target_restaking_degree:sum_allocations}
        \sum_{i=1}^n \allocationStrategy{i} = \targetRestakingDegree \cdot \stake{\validator} .
    \end{equation}
    
    By isolating the service~$\service_n$ in~\eqreft{equation:incentives_for_a_target_restaking_degree:sum_allocations}, we get that
    \begin{equation}
        \label{equation:incentives_for_a_target_restaking_degree:allocation_service}
        \allocationStrategy{n} = \targetRestakingDegree \cdot \stake{\validator} - \sum_{i=1}^{n-1} \allocationStrategy{i} .
    \end{equation}

    Now, let~$\utilityFunction$ be the utility of validator~$\validator$ as a function of~$\left\{ \allocationStrategy{i} \right\}_{i=1}^{n-1}$.
    Formally, we get that
    \begin{equation}
        \utilityFunction(\allocationStrategy{1}, \dots, \allocationStrategy{n-1}) = \validatorUtility{\validator}{\allAllocations_\validator \cup \allNashAllocations_{-\validator}}
    \end{equation}
    with the constraints
    \begin{align}
        \forall i \in [n-1] , \quad \allocationStrategy{i} \geq 0 ; \text{and} \\
        \sum_{i=1}^{n-1} \allocationStrategy{i} \leq \targetRestakingDegree \cdot \stake{\validator} .
    \end{align}
    We now show that this function is concave and then find its maximum.
    We start by developing the right-hand side.
    \begin{multline}
        \utilityFunction(\allocationStrategy{1}, \dots, \allocationStrategy{n-1}) \\
        = \validatorUtility{\validator}{\allAllocations_\validator \cup \allNashAllocations_{-\validator}} 
        \underset{\eqref{equation:incentives_for_a_target_restaking_degree:validator_utility_2}}{=} \sum_{i=1}^n \left( 1 + \frac{\nashConstant{i}}{\allocationStrategy{i}} \right)^{-1} \cdot \serviceReward{\service_i} \\
        = \left( 1 + \frac{\nashConstant{n}}{\allocationStrategy{n}} \right)^{-1} \cdot \serviceReward{\service_n} + \sum_{i=1}^{n-1} \left( 1 + \frac{\nashConstant{i}}{\allocationStrategy{i}} \right)^{-1} \cdot \serviceReward{\service_i} \\
        \underset{\eqref{equation:incentives_for_a_target_restaking_degree:allocation_service}}{=} \left( 1 + \frac{\nashConstant{n}}{\targetRestakingDegree \cdot \stake{\validator} - \sum_{i=1}^{n-1} \allocationStrategy{i}} \right)^{-1} \cdot \serviceReward{\service_n} + \\
        \sum_{i=1}^{n-1} \left( 1 + \frac{\nashConstant{i}}{\allocationStrategy{i}} \right)^{-1} \cdot \serviceReward{\service_i} .
    \end{multline}
    Now, let~$g_{c}(x) = \left( 1 + \frac{c}{x} \right)^{-1}$, with the discontinuity at~$x=0$ defined again as~$g_{c}(0) = 0$.
    Notice that~$g_{c}(x)$ is a concave function for all~$x \geq 0$ and~$c \geq 0$:
    \begin{align}
        \label{equation:incentives_for_a_target_restaking_degree:g_c_derivative}
        \frac{dg_{c}(x)}{dx}
        &= \frac{c}{x^2} \cdot \left( 1 + \frac{c}{x} \right)^{-2}
        = \frac{c}{\left( c + x \right)^2} ; \\
        \frac{d^2g_{c}(x)}{dx^2}
        &= -\frac{c}{\left( c + x \right)^3} \leq 0 .
    \end{align}
    We can now rewrite the utility function as
    \begin{multline}
        \utilityFunction(\allocationStrategy{1}, \dots, \allocationStrategy{n-1}) \\
        = g_{\nashConstant{n}} \left( \targetRestakingDegree \cdot \stake{\validator} - \sum_{i=1}^{n-1} \allocationStrategy{i}\right) \cdot \serviceReward{\service_n} + \sum_{i=1}^{n-1} g_{\nashConstant{i}} \left( \allocationStrategy{i} \right) \cdot \serviceReward{\service_i} .
    \end{multline}
    Since an affine transformation of a concave function is concave and a sum of concave functions is concave, the utility function~$\utilityFunction$ is concave.

    We can then calculate the partial derivatives of~$\utilityFunction$ with respect to~$\left\{ \allocationStrategy{i} \right\}_{i=1}^{n-1}$ using the chain rule and~\eqreft{equation:incentives_for_a_target_restaking_degree:g_c_derivative}.
    For all~$j \in [n-1]$, the first derivative of~$\utilityFunction$ with respect to~$\allocationStrategy{i}$ is
    \begin{multline}
        \frac{\partial \utilityFunction}{\partial \allocationStrategy{j}}        
        = -\nashConstant{n} \cdot \serviceReward{\service_n} \cdot \left( \nashConstant{n} + \targetRestakingDegree \cdot \stake{\validator} - \sum_{i=1}^{n-1} \allocationStrategy{i} \right)^{-2} \\
        + \nashConstant{j}  \cdot \serviceReward{\service_j} \cdot \left( \nashConstant{j} + \allocationStrategy{j} \right)^{-2} .
    \end{multline}
    We search for critical points of~$\utilityFunction$ by solving the system of equations
    \begin{equation}
        \forall j \in [n-1] , \quad \frac{\partial \utilityFunction}{\partial \allocationStrategy{j}}
        = 0 .
    \end{equation}
    
    It is time to substitute~$\nashConstant{i}$ back.
    Before we develop them.
    For each~$i \in [n]$, we have
    \begin{multline}
        \label{equation:incentives_for_a_target_restaking_degree:nash_constant_developed}
        \nashConstant{i}
        \underset{\eqref{equation:incentives_for_a_target_restaking_degree:nash_constant}}{=} \sum_{\validator' \in \allValidators \setminus \{ \validator \}} \nashAllocation{\validator'}{\service_i}
        \underset{\eqref{equation:incentives_for_a_target_restaking_degree:nash_allocations}}{=} \sum_{\validator' \in \allValidators \setminus \{ \validator \}} \targetRestakingDegree \cdot \frac{\serviceReward{\service_i}}{\sum_{j=1}^n \serviceReward{\service_j}} \cdot \stake{\validator'} \\
        = \serviceReward{\service_i} \cdot \frac{\sum_{\validator' \in \allValidators \setminus \{ \validator \}} \stake{\validator'}}{\sum_{j=1}^n \serviceReward{\service_j}} \cdot \targetRestakingDegree
        = \serviceReward{\service_i} \cdot k,
    \end{multline}
    where~$k$ is a constant:
    \begin{equation}
        \label{equation:incentives_for_a_target_restaking_degree:nash_constant_k}
        k = \frac{\sum_{\validator' \in \allValidators \setminus \{ \validator \}} \stake{\validator'}}{\sum_{j=1}^n \serviceReward{\service_j}} \cdot \targetRestakingDegree .
    \end{equation}

    Developing the equation for each~$j \in [n-1]$, we get
    \begin{align}
        \nashConstant{j}  \cdot \serviceReward{\service_j} \cdot \left( \nashConstant{j} + \allocationStrategy{j} \right)^{-2}
        &= \nashConstant{n} \cdot \serviceReward{\service_n} \cdot \left( \nashConstant{n} + \targetRestakingDegree \cdot \stake{\validator} - \sum_{i=1}^{n-1} \allocationStrategy{i} \right)^{-2} ; \\
        \frac{\nashConstant{j} \serviceReward{\service_j}}{\left( \nashConstant{j} + \allocationStrategy{j} \right)^2}
        &= \frac{\nashConstant{n} \serviceReward{\service_n}}{\left( \nashConstant{n} + \targetRestakingDegree \cdot \stake{\validator} - \sum_{i=1}^{n-1} \allocationStrategy{i} \right)^2} ; \\
        \frac{k \serviceReward{\service_j}^2}{\left( k \serviceReward{\service_j} + \allocationStrategy{j} \right)^2}
        &= \frac{ k \serviceReward{\service_n}^2}{\left( k \serviceReward{\service_n} + \targetRestakingDegree \cdot \stake{\validator} - \sum_{i=1}^{n-1} \allocationStrategy{i} \right)^2} .
    \end{align}
    Since all terms are positive, we can take the square root of both sides and then take the inverse:
    \begin{align}
        \frac{\serviceReward{\service_j}}{ k \serviceReward{\service_j} + \allocationStrategy{j}}
        &= \frac{ \serviceReward{\service_n}}{ k \serviceReward{\service_n} + \targetRestakingDegree \cdot \stake{\validator} - \sum_{i=1}^{n-1} \allocationStrategy{i}} ; \\
        \frac{ k \serviceReward{\service_j} + \allocationStrategy{j}}{\serviceReward{\service_j}}
        &= \frac{ k \serviceReward{\service_n} + \targetRestakingDegree \cdot \stake{\validator} - \sum_{i=1}^{n-1} \allocationStrategy{i}}{\serviceReward{\service_n}} ; \\
        \frac{\allocationStrategy{j}}{\serviceReward{\service_j}}
        &= \frac{\targetRestakingDegree \cdot \stake{\validator} - \sum_{i=1}^{n-1} \allocationStrategy{i}}{\serviceReward{\service_n}} ; \\
        \label{equation:incentives_for_a_target_restaking_degree:allocation_j}
        \serviceReward{\service_n} \allocationStrategy{j}
        &= \serviceReward{\service_j} \cdot \targetRestakingDegree \cdot \stake{\validator} - \serviceReward{\service_j} \sum_{i=1}^{n-1} \allocationStrategy{i} .
    \end{align}
    Summing over all~$j \in [n-1]$, we get
    \begin{equation}
        \sum_{j=1}^{n-1} \serviceReward{\service_n} \allocationStrategy{j}
        = \sum_{j=1}^{n-1} \serviceReward{\service_j} \cdot \targetRestakingDegree \cdot \stake{\validator} - \sum_{j=1}^{n-1} \serviceReward{\service_j} \sum_{i=1}^{n-1} \allocationStrategy{i} .
    \end{equation}
    Switching sides and developing further, we get
    \begin{align}
        \serviceReward{\service_n} \sum_{j=1}^{n-1} \allocationStrategy{j} + \left( \sum_{j=1}^{n-1} \serviceReward{\service_j} \right) \sum_{j=1}^{n-1} \allocationStrategy{j}
        &= \targetRestakingDegree \cdot \stake{\validator} \sum_{j=1}^{n-1} \serviceReward{\service_j} ; \\
        \left( \sum_{j=1}^n \serviceReward{\service_j} \right) \sum_{j=1}^{n-1} \allocationStrategy{j}
        &= \targetRestakingDegree \cdot \stake{\validator} \sum_{j=1}^{n-1} \serviceReward{\service_j} ; \\
        \label{equation:incentives_for_a_target_restaking_degree:allocation_j_sum}
        \sum_{j=1}^{n-1} \allocationStrategy{j}
        &= \targetRestakingDegree \cdot \stake{\validator} \frac{\sum_{j=1}^{n-1} \serviceReward{\service_j}}{\sum_{j=1}^n \serviceReward{\service_j}} .
    \end{align}
    Plugging this back into~\eqreft{equation:incentives_for_a_target_restaking_degree:allocation_j}, we get
    \begin{multline}
        \label{equation:incentives_for_a_target_restaking_degree:allocation_j_plugged}
        \serviceReward{\service_n} \allocationStrategy{j}
        \underset{\eqref{equation:incentives_for_a_target_restaking_degree:allocation_j}}{=} \serviceReward{\service_j} \cdot \targetRestakingDegree \cdot \stake{\validator} - \serviceReward{\service_j} \sum_{i=1}^{n-1} \allocationStrategy{i} \\
        \underset{\eqref{equation:incentives_for_a_target_restaking_degree:allocation_j_sum}}{=} \serviceReward{\service_j} \cdot \targetRestakingDegree \cdot \stake{\validator} - \serviceReward{\service_j} \cdot \targetRestakingDegree \cdot \stake{\validator} \frac{\sum_{i=1}^{n-1} \serviceReward{\service_i}}{\sum_{i=1}^n \serviceReward{\service_i}} \\
        = \serviceReward{\service_j} \cdot \targetRestakingDegree \cdot \stake{\validator} \left( 1 - \frac{\sum_{i=1}^{n-1} \serviceReward{\service_i}}{\sum_{i=1}^n \serviceReward{\service_i}} \right) \\
        = \serviceReward{\service_j} \cdot \targetRestakingDegree \cdot \stake{\validator} \left( \frac{\sum_{i=1}^n \serviceReward{\service_i} - \sum_{i=1}^{n-1} \serviceReward{\service_i}}{\sum_{i=1}^n \serviceReward{\service_i}} \right) \\
        = \serviceReward{\service_j} \cdot \targetRestakingDegree \cdot \stake{\validator} \left( \frac{\serviceReward{\service_n}}{\sum_{i=1}^n \serviceReward{\service_i}} \right) .
    \end{multline}
    Overall, we get for each~$j \in [n-1]$
    \begin{equation}
        \allocationStrategy{j}
        = \targetRestakingDegree \cdot \stake{\validator} \left( \frac{\serviceReward{\service_j}}{\sum_{i=1}^n \serviceReward{\service_i}} \right)
        \underset{\eqref{equation:incentives_for_a_target_restaking_degree:nash_allocations}}{=} \nashAllocation{\validator}{\service_j} .
    \end{equation}
    Therefore, we find a single critical point of~$\utilityFunction$ within the feasible region.
    Since the~$\utilityFunction$ is concave, this critical point is a global maximum.

    For~$j = n$, we get
    \begin{multline}
        \allocationStrategy{n}
        \underset{\eqref{equation:incentives_for_a_target_restaking_degree:allocation_service}}{=} \targetRestakingDegree \cdot \stake{\validator} - \sum_{i=1}^{n-1} \allocationStrategy{i}
        \underset{\eqref{equation:incentives_for_a_target_restaking_degree:allocation_j_sum}}{=} \targetRestakingDegree \cdot \stake{\validator} - \targetRestakingDegree \cdot \stake{\validator} \frac{\sum_{j=1}^{n-1} \serviceReward{\service_j}}{\sum_{j=1}^n \serviceReward{\service_j}} \\
        = \targetRestakingDegree \cdot \stake{\validator} \left( 1 - \frac{\sum_{j=1}^{n-1} \serviceReward{\service_j}}{\sum_{j=1}^n \serviceReward{\service_j}} \right) \\
        = \targetRestakingDegree \cdot \stake{\validator} \frac{\sum_{j=1}^n \serviceReward{\service_j} - \sum_{j=1}^{n-1} \serviceReward{\service_j}}{\sum_{j=1}^n \serviceReward{\service_j}}
        = \targetRestakingDegree \cdot \stake{\validator} \frac{\serviceReward{\service_n}}{\sum_{j=1}^n \serviceReward{\service_j}}
    \end{multline}
    Hence the optimal strategy~$\allAllocations_\validator$ we find is precisely the strategy of~$\validator$ in the strategy profile~$\allNashAllocations$.
\end{proof}

\end{document}